\newcommand{\comment}[1]{}
\newcommand{\idty}[1]{\mathbb{1}}
\newcommand{\ovsqrt}[1]{\frac{1}{\sqrt{2}}}
\newcommand{\tr}[1]{\mathrm{Tr}}
\newcommand{\Ord}[1]{O\left(#1\right)}
\newcommand{\E}{\mathcal{E}}
\newcommand{\abs}[1]{\left\lvert#1\right\rvert}
\newcommand{\norm}[1]{\left\lVert#1\right\rVert}
\newcommand{\id}[1]{\mathbb{I}}
\newcommand{\N}{\mathbb{N}}
\newcommand{\CC}{\mathbb{C}}
\newcommand{\Nystrom}{Nystr\"om}
\newcommand{\ketbra}[2]{\ket{#1}\!\bra{#2}}
\newcommand{\pnorm}[2]{\left\|#2\right\|_#1}
\newcommand{\maxnorm}[1]{\left\|#1\right\|_{\mathrm{max}}}
\newcommand{\gradtheta}[1]{\frac{\partial #1}{\partial \theta}}
\newcommand{\pr}{{\rm Pr}}
\newcommand{\partr}[2]{\mathrm{Tr}_{#1} \left[#2\right]}
\newcommand{\trh}[1]{\mathrm{Tr}_h \left[#1\right]}
\newcommand{\polylog}{\mathrm{polylog}}
\newcommand{\poly}{\mathrm{poly}}
\renewcommand{\tr}[1]{\mathrm{Tr}\left( #1 \right)}
\newcommand{\Ecal}{\mathcal{E}}
\newcommand{\Prob}{\mathrm{Pr}}
\newcommand{\Hyp}{\mathcal{H}}
\NewDocumentCommand\Exp{g}{%
    \ensuremath{\mathbb{E}\IfNoValueTF{#1}{}{\left[#1\right]} }%
}
\newtheorem{theorem}{Theorem}
\newtheorem{prop}{Proposition}
\newtheorem{corollary}[theorem]{Corollary}
\newtheorem{definition}{Definition}
\newtheorem{lemma}{Lemma}
\newtheorem{claim}{Claim}
\newtheorem{remark}{Remark}
\newtheorem{researchquestion}{Research Question}
\newtheorem{assumption}{Assumption}
\definecolor{Gray}{gray}{0.85}
\definecolor{LightCyan}{rgb}{0.88,1,1}
\let\saved@bibitem\@bibitem\makeatother
\let\@bibitem\saved@bibitem\makeatother
\begin{document}

\nobibliography*



\title{Quantum Machine Learning For Classical Data}
\author{Leonard P. Wossnig}
\department{Department of Computer Science}

\maketitle
\makedeclaration

\begin{abstract} 
In this dissertation, we study the intersection of quantum computing and supervised machine learning algorithms,
which means that we investigate quantum algorithms for supervised machine learning that operate on classical data.
This area of research falls under the umbrella of \textit{quantum machine learning},
a research area of computer science which has recently received wide attention.
In particular, we investigate to what extent quantum computers can be used to
accelerate supervised machine learning algorithms.
The aim of this is to develop a clear understanding of the promises and limitations of the current state-of-the-art
of quantum algorithms for supervised machine learning, but also to define directions for future research in this exciting field.
We start by looking at supervised quantum machine learning (QML) algorithms through the lens of statistical learning theory.
In this framework, we derive novel bounds on the computational complexities of
a large set of supervised QML algorithms under the requirement of optimal learning rates.
Next, we give a new bound for Hamiltonian simulation of dense Hamiltonians,
a major subroutine of most known supervised QML algorithms,
and then derive a classical algorithm with nearly the same complexity.
We then draw the parallels to recent `quantum-inspired' results, and will
explain the implications of these results for quantum machine learning applications.
Looking for areas which might bear larger advantages for QML algorithms,
we finally propose a novel algorithm for Quantum Boltzmann machines,
and argue that quantum algorithms for quantum data are one of the most promising applications for QML
with potentially exponential advantage over classical approaches.
\end{abstract}

\begin{acknowledgements}
I want to thank foremost my supervisor and friend Simone Severini, who has
always given me the freedom to pursue any direction I found interesting and promising,
and has served me as a guide through most of my PhD.

Next, I want to thank Aram Harrow, my secondary advisor,
who has always been readily available to answer my questions and discuss
a variety of research topics with me.

I also want to thank Carlo Ciliberto, Nathan Wiebe, and Patrick Rebentrost,
who have worked closely with me and have also taught me most of the mathematical
tricks and methods upon which my thesis is built.

I furthermore want to thank all my collaborators throughout the years.
These are in particular Chunhao Wang, Andrea Rocchetto, Marcello Benedetti,
Alessandro Rudi, Raban Iten, Mark Herbster, Massimiliano Pontil,
Maria Schuld, Zhikuan Zhao, Anupam Prakash, Shuxiang Cao,
Hongxiang Chen, Shashanka Ubaru, Haim Avron, and Ivan Rungger.
Almost in an equal contribution, I also want to thank
Fernando Brand\~ao, Youssef Mroueh, Guang Hao Low, Robin Kothari, Yuan Su, Tongyang Li,
Ewin Tang, Kanav Setia, Matthias Troyer, and Damian Steiger for many helpful discussions,
feedback, and enlightening explanations.

I am particularly grateful to Edward Grant, Miriam Cha, and Ian Horobin, who
made it possible for me to write this thesis.

I want to acknowledge UCL for giving me the opportunity to pursue this PhD thesis,
and acknowledge the kind support of align Royal Society Research grant
and the Google PhD Fellowship, which gave me the freedom to work on these
interesting topics.

Portions of the work that are included in this thesis were completed while
I was visiting the Institut Henri Poincar\'e of the Sorbonne University in Paris.

I particularly want to thank Riam Kim-McLeod for the support and help with the editing of the thesis.

I finally want to thank my family for the continued love and support.
\end{acknowledgements}

\begin{impactstatement}
Quantum machine learning bears promises for many areas, ranging from the
healthcare to the financial industry. In today's world, where data is available
in abundance, only novel algorithms and approaches are enabling us to make
reliable predictions that can enhance our life, productivity, or wealth.
While Moore's law is coming to an end, novel computational paradigms are sought after
to enable a further growth of processing power.
Quantum computing has become one of the prominent candidates, and is
maturing rapidly.
The here presented PhD thesis develops and studies this novel computational
paradigm in light of existing classical solutions and thereby develops
a path towards quantum algorithms that can outperform classical approaches.
\end{impactstatement}

\setcounter{tocdepth}{2}

\tableofcontents
\listoffigures

\chapter{Introduction and Overview}
\label{chap:introduction}

In the last twenty years, due to increased computational power and the availability of vast amounts of data, \textit{machine learning} (ML) has seen an immense success, with applications ranging from computer vision~\cite{krizhevsky2012imagenet} to playing complex games such as the Atari series~\cite{mnih2013playing} or the traditional game of Go~\cite{silver2016mastering}.
However, over the past few years, challenges have surfaced that threaten the end of this revolution.
The main two challenges are the increasingly overwhelming size of the available data sets, and the end of Moore's law~\cite{markov2014limits}.
While novel developments in hardware architectures, such as graphics processing units (GPUs) or tensor processing units (TPUs), enable orders of magnitude improved performance compared to central processing units (CPUs), they cannot significantly improve the performance any longer,
as they also reach their physical limitations.
They are therefore not offering a structural solution to the challenges posed
and new solutions are required.

On the other hand, a new technology is slowly reaching maturity.
Quantum computing, a form of computation that makes use of quantum-mechanical phenomena such as superposition and entanglement,
has been predicted to overcome these limitations on classical hardware.
Quantum algorithms, i.e., algorithms that can be executed on a quantum computer,
have been investigated since the 1980s, and have recently received increasing interest all around the world.

One area that has received particular attention is quantum machine learning
see e.g.~\cite{ciliberto2018quantum}, the combination of quantum mechanics and machine learning.
Quantum machine learning can generally be divided into four distinct areas of research:
\begin{itemize}
    \item Machine learning algorithms that are executed on classical computers (CPUs, GPUs, TPUs) and applied to classical data, the sector CC in Fig.~\ref{fig:4quadrants},
    \item Machine learning algorithms that are executed on quantum computers (QPUs) and applied to classical data, the sector QC in Fig.~\ref{fig:4quadrants},
    \item Machine learning algorithms that are executed on classical computers and applied to quantum data, the sector CQ in Fig.~\ref{fig:4quadrants}, and
    \item Machine learning algorithms that are executed on quantum computers (QPUs) and applied to quantum data, the sector QQ in Fig.~\ref{fig:4quadrants}.
\end{itemize}

\begin{figure}
    \centering
    \includegraphics[scale=0.4]{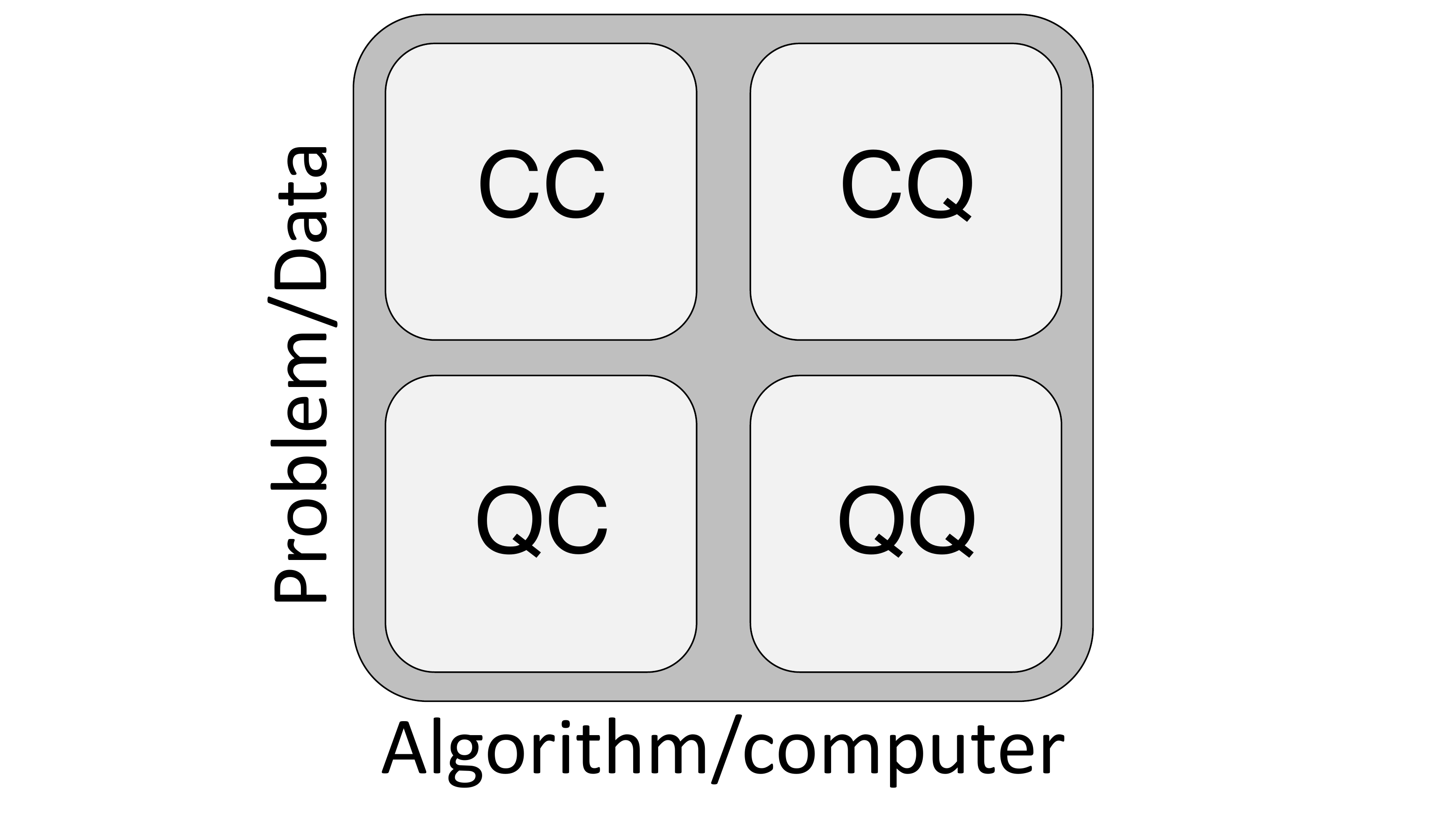}
    \caption{Different fields of study in Quantum Machine Learning. The different areas are
    related to the choice of algorithm, i.e., whether it is executed on a quantum or classical
    computer, and the choice of the target problem, i.e., whether it operates on quantum or classical data.}
    \label{fig:4quadrants}
\end{figure}

The biggest attention has been paid to quantum algorithms that perform
supervised machine learning on classical data.
The main objective in designing such quantum algorithms is to achieve
a computational advantage over any classical algorithm for the same task.
This area is the main focus of this thesis, and we will for brevity throughout this
thesis refer to this area when we speak about QML.

Although becoming an increasingly popular area of research, supervised QML has in
recent years faced two major challenges.

Firstly, initial research was aimed at designing supervised ML algorithms that have a superior performance, where
performance was measured solely in terms of the computational complexity (i.e., theoretical speed) of the algorithm with
respect to the input size (the problem dimension) and the desired error (or accuracy).
When we are speaking about the error in this context,
then we generally refer to the distance (for example in norm)
of the solution that we obtain through the algorithm to the true solution.
For example, for most algorithms we obtain guarantees that if we run it in time
$\epsilon^{-\alpha}$, then the solution will be $\epsilon$-close.
Assuming that the solution is a vector $x$, and our algorithm produces the vector
$\tilde{x}$, this would then imply that $\norm{\tilde{x}-x} \leq \epsilon$.
Note that different algorithms measure this error differently, but in general we
use the spectral or Frobenius norm as measure for the error throughout this thesis.

However, a challenge to this view is that it is today common wisdom in the classical ML community that faster
computation is not necessarily the solution to most practical problems in machine learning.
Indeed, more data and the ability to extrapolate to unseen data is the most essential
component in order to achieve good performance.
Mathematically, the behaviour of algorithms to extrapolate to unseen data and other
statistical properties have been formalised in the field of statistical learning theory.
The technical term of this ability is the so-called generalisation error of an algorithm,
and it is possible to obtain bounds on this error for many common machine learning algorithms.
The first research question of this thesis is therefore the following:

\begin{researchquestion}[QML under the lens of SLT]
\label{rq:slt1}
Under the common assumptions of statistical learning theory, what is
the performance of supervised quantum machine learning algorithms?
\end{researchquestion}

Most quantum algorithms are described in the so-called query model.
Here, the overall computational complexity, i.e., the number of steps an algorithm
requires to completion, is given in terms of a number of calls (uses) of a different
algorithm, which is called the oracle.
Indeed, most known quantum machine learning algorithms heavily rely on such oracles in order to achieve a `quantum advantage'.
A second challenge is therefore posed by the question how much of the
quantum advantage stems from the oracles and how much from the algorithms themselves.
Concretely, most supervised QML methods assume the existence of a fast ($\log(n)$ for input dimension $n$ complexity)
data preparation oracle or procedure, a quantum equivalent of a random access memory, the qRAM~\cite{giovannetti2008qram1,giovannetti2008qram2}.
A closer analysis of these algorithms indeed implies that much of their power stems from
this oracle, which indicates that classical algorithms can potentially achieve similar computational complexities
if they are given access to such a device.
The second research question of this thesis is therefore the following:

\begin{researchquestion}[QML under the lens of RandNLA]
\label{rq:randNLA1}
Under the assumption of efficient sampling processes for the data for both classical and quantum algorithms,
what is the comparative advantage of the latter?
\end{researchquestion}

The goal of this thesis is therefore to investigate the above two challenges to supervised QML, and
the resulting research questions.
We thereby aim to assess the advantages and disadvantages of such algorithms
from the perspective of statistical learning theory and randomised numerical linear algebra.

\section{Synopsis of the thesis}

The thesis is structured as follows.
We begin in Chapter~\ref{chap:prelim} with notation and mathematical preliminaries of the thesis.
In Chapter~\ref{chap:SLT}, we discuss the first challenge and associated Research Question~\ref{rq:slt1}, namely
the ability of quantum algorithms to generalise to data that is not present in the training set.
Limitations arise through the fundamental requirement of the quantum measurement, and the error-dependency of most supervised quantum algorithms.
We additionally discuss the condition number, an additional dependency
in many supervised QML algorithms.

In Chapter~\ref{chap:randNLA} we discuss the second challenge, i.e., Research Question~\ref{rq:randNLA1}
regarding input models. For this we discuss first
how to access data with a quantum computer and how these access models compare to
classical approaches, in particular to randomised classical algorithms.
This also includes a brief discussion about the feasibility of such memory models as well as
potential advantages and disadvantages.
We then propose a new algorithm for Hamiltonian simulation~\cite{lloyd1996universal},
a common subroutine of many supervised QML algorithms.
Next, we propose a randomised classical algorithm for the same purpose, and then
use this to discuss and compare the requirements of both algorithms,
and how they relate to each other.
Based on these insights, we next link our results to subsequent `dequantization' results.
This allows us to finally also make claims regarding the limitations of QML algorithms
that are based on such memory models (or more generally such oracles).

In Chapter~\ref{chap:quantumQML},
we argue that generative quantum machine learning algorithms that are
able to generate quantum data (QQ in Fig.~\ref{fig:4quadrants}),
are one of the most promising approaches for future QML research,
and we propose a novel algorithm for the training of
Quantum Boltzmann Machines (QBMs)~\cite{wiebe2014quantum} with visible and hidden nodes.
We note that quantum data is here defined as data that is generated by an arbitrarily-sized quantum circuit.
We derive error bounds and computational complexities for the training QBMs based
on two different training algorithms, which vary based on the model assumptions.

Finally, in Chapter~\ref{chap:conclusions}, we summarise our insights and discuss the
main results of the thesis, ending with a proposal for further research.

\section{Summary of our contributions}
Our responses to the above-posed research questions are summarised below.

For Research Question~\ref{rq:slt1}, we obtain the following results:
\begin{itemize}
    \item Taking into account optimal learning rates and the generalisation ability,
    we show that supervised quantum machine learning algorithms fail to achieve exponential speedups.
    \item Our analysis is based on the polynomial error-dependency and the repeated
    measurement that is required to obtain a classical output.
    We have the requirement that the (optimisation) error of the algorithm matches the
    scaling of the statistical (estimation) error that is inherent to all
    data-based algorithms and is of $\Ord{1/\sqrt{n}}$,
    for $n$ being the number of samples in the data set.
    We observe that the performance of the algorithms in question
    can in practice be worse than the fastest classical solution.
    \item Such concerns are important for the design of quantum algorithms
    but are not relevant for classical ones.
    The computational complexity of the latter scales typically poly-logarithmic
    in the error $\epsilon$ and therefore only introduces additional $O(\log(n))$
    terms.
    We however acknowledge this is not generally true and it is possible to
    trade-off speed versus accuracy, which is for example used in early stopping.
    Here, one chooses to obtain a lower accuracy in order to obtain an algorithm that converges
    faster to the solution with the best statistical error.
    \item One additional challenge for many quantum algorithms is the problem of preconditioning,
    which most state-of-the-art classical algorithms such as FALCON~\cite{rudi2017falkon}
    do take into account.
    \item While previous results~\cite{clader2013preconditioned} claimed that preconditioning is possible,
    in the quantum case, this turns out not to be true in general as, e.g.
    Harrow and La Placa~\cite{harrow2017limitations} showed.
    Our bounds on the condition number indicate that ill-conditioned QML algorithms
    are even more prone to be outperformed by classical counterparts.
\end{itemize}

For Research Question~\ref{rq:randNLA1}, we obtain the following results:
\begin{itemize}
    \item We first show that data access oracles, such as quantum random access memory~\cite{giovannetti2008qram1} can be
    used to construct faster quantum algorithms for dense Hamiltonian simulation, and obtain an algorithm
    with performance (runtime) depending on the square-root of the dimension $n$ of the input Hamiltonian,
    and linearly on the spectral norm, i.e., $\Ord{\sqrt{n}\norm{H}}$ for Hamiltonian $H \in \mathbb{C}^{n \times n}$.
    For a $s$-sparse Hamiltonian, this reduces to time $\Ord{\sqrt{s}\norm{H}}$,
    where $s$ is the maximum number of non-zero elements in the rows of $H$.
    \item We next show how we can derive a classical algorithm for the same task, which is based on
    a classical Monte-Carlo sampling method called \textit{Nystr\"om method}.
    We show that for a sparse Hamiltonian $H \in \mathbb{C}^{n \times n}$,
    there exists an algorithm that, with probability $1-\delta$ approximates any chosen amplitude of the
    state $e^{i H t} \psi$, for an efficiently describable state $\psi$ in time
    \[\Ord{sq + \frac{t^9\norm{H}^4_F \norm{H}^7}{\epsilon^4}\left(\log(n) + \log\frac{1}{\delta}\right)^2},\]
    where $\epsilon$ determines the quality of the approximation, $\norm{\cdot}_F$ is the Frobenius norm,
   and $q$ is the number of non-zero elements in $\psi$.
    \item While our algorithm is not generally efficient due to the dependency on the Frobenius norm,
    it still removes the explicit dependency on $n$ up to logarithmic factors, i.e., the system dimension.
    We therefore obtain an algorithm that only depends on the rank, sparsity,
    and the spectral norm of the problem, since $\norm{H}_F \leq \sqrt{r} \norm{H}$,
    for $r$ being the rank of $H$.
    For low-rank matrices, we therefore obtain a potentially much faster algorithm,
    and for the case $ q = s = r = \Ord{\mathrm{polylog}(n)}$ our algorithm becomes efficiently executable.
    Note that we here generally assume that $N$ grows exponentially in the number of qubits in
    the system, i.e., we only obtain an efficient algorithm if we do not have an explicit $n$ dependency.
    \item This result is interesting in two different ways:
    Firstly, it is the first known result that applies so-called sub-sampling methods for simulating (general) Hamiltonians on a classical computer.
    Secondly, and more important in context of this thesis, our result indicates that the
    advantage of so-called \textit{quantum machine learning} algorithms may not be as big as promised.
    QML algorithms such as quantum principal component analysis~\cite{lloyd2014quantum},
    or quantum support vector machines~\cite{rebentrost2014quantum} were claimed to be efficient for sparse or low rank
    input data, and our classical algorithm for Hamiltonian simulation is efficient under similar conditions. As a corollary, we indeed show that we can efficiently simulate $\exp(i\rho t)$ for
    density matrix $\rho$, if we can efficiently sample from the rows and columns of $\rho$
    according to a certain probability distribution.
    \item While we did not manage to extend our results to quantum machine learning algorithms,
    shortly after we posted our result to the ArXiv, Ewin Tang used similar methods to `dequantize'~\cite{tang2019quantum}
    the quantum recommendation systems algorithm by Kerenidis and Prakash~\cite{kerenidis2016quantum}.
    While our algorithm requires the input matrix to be efficiently row-computable,
    Tang designed a classical memory model that allows us to sample the rows or columns of the
    input matrix according to their norms.
    Under closer inspection, these requirements are fundamentally equivalent.
    Furthermore, the ability to sample efficiently from this distribution is indeed similar to the
    ability of a quantum random access memory to prepare quantum states of the rows and columns of an input matrix.
    Dequantization then refers to a classical algorithm that can perform the same task as
    a chosen quantum algorithm with an at most polynomial slowdown.
    In the subsequent few months, many other algorithms were published that achieved similar
    results for many other quantum machine learning algorithms including quantum PCA~\cite{tang2018quantum}, Quantum Linear Systems Algorithm~\cite{chia2018quantum,gilyen2018quantum}, and Quantum Semi-Definite Programming~\cite{chia2019quantum}. Most of these algorithms were unified recently in the framework of
    quantum singular value transformations~\cite{chia2020sampling}.
\end{itemize}

A conclusion from the above results and the answers to the research questions is that the hope for
exponential advantages of quantum algorithms for machine learning over their classical counterparts
might be misplaced.
While polynomial advantages may still be feasible, and most results currently indicate a gap,
understanding the limitations of classical and quantum algorithms is still an open research question.
As many of the algorithms we investigate in this thesis are of a theoretical nature, we want to mention
that the ultimate performance test for any algorithm is a benchmark, and only such will give the
final answers to questions of real advantage.
However, this will only be possible once sufficiently large and accurate quantum computers are available,
and will therefore not be possible in the near future.

As our and subsequent results indicate, quantum machine learning algorithms
for classical data to date appear to allow for at most polynomial speedups if any.
We therefore turn to an area where
classical machine learning algorithms might generally be inefficient:
modelling of quantum distributions, and therefore quantum algorithms for quantum data.
We propose a method for fully quantum generative training of quantum Boltzmann machines
that in contrast to prior art have both visible and hidden units.
We base our training on the quantum relative entropy objective function and find
efficient algorithms for training based on gradient estimations under the assumption that
Gibbs state preparation for the model Hamiltonian is efficient.
One interesting feature of these results is that such generative models can in
principle also be used for the state preparation, and might therefore be a useful
tool to overcome qRAM-related issues.

\section{Statement of authorship}

This thesis is based on the following research articles.
Notably, for many articles the order of authors is alphabetical, which is standard in theoretical computer science.
\begin{enumerate}
    \item \bibentry{dervovic2018quantum}
    \item \bibentry{ciliberto2018quantum}
    \item \bibentry{wang2018quantum}
    \item \bibentry{rudi2020approximating}
    \item \bibentry{ciliberto2020fast}
    \item \bibentry{wiebe2019generative}
\end{enumerate}

I have additionally co-authored the following articles that are not included in this thesis:
\begin{enumerate}
    \item \bibentry{chen2018universal}
    \item \bibentry{benedetti2019adversarial}
    \item \bibentry{grant2019initialization}
    \item \bibentry{cao2020cost}
    \item \bibentry{rungger2019dynamical}
    \item \bibentry{patterson2019quantum}
    \item \bibentry{tilly2020computation}
\end{enumerate}

\chapter{Notation And Mathematical Preliminaries}
\label{chap:prelim}

\section{Notation}

We denote vectors with lower-case letters.
For a vector $x\in \mathbb{C}^n$, let $x_i$ denotes the $i$-th element of $x$.
A vector is sparse if most of its entries are $0$.
For an integer $k$, let $[k]$ denotes the set $\{1,\dots, k\}$.

For a matrix $A\in \mathbb{C}^{m \times n}$ let $A^j := A_{:,j}$, $j \in [n]$
denote the $j$-th column vector of $A$,
$A_i := A_{i,:}$, $i \in [m]$ the $i$-th row vector of $A$,
and $A_{ij}:=A(i,j)$ the $(i,j)$-th element.
We denote by $A_{i:j}$ the sub-matrix of $A$ that contains the rows from $i$ to $j$.

The \textit{supremum} is denoted as $\mathrm{sup}$ and the
\textit{infimum} as $\mathrm{inf}$. For a measure space $(X,\Sigma,\mu)$,
and a measurable function $f$ an essential upper bound of $f$ is defined as
$U_f^{\text{ess}} := \{ l \in \mathbb R : \mu(f^{-1}(l, \infty)) = 0 \}$, if the
measurable set $f^{-1} (l, \infty)$ is a set of measure zero, i.e., if $f(x) \leq l$
for almost all $x \in X$.  Then the essential supremum is defined as $\text{ess sup} f := \mathrm{inf}\ U_f^{\text{ess}}$.
We let  the $\mathrm{span}$ of a set $S = \{v_i\}_1^k \subseteq \mathbb C^n$ be defined by
$\mathrm{span}\{S\} := \left\lbrace x \in \mathbb C^n \, | \, \exists \, \{\alpha_i \}_1^k \subseteq \mathbb C\text{ with } x =\sum_{i=1}^k \alpha_i v_i \right\rbrace$. The set is linearly independent if $\sum_i \alpha_i v_i = 0$ if and only if $\alpha_i=0$ for all $i$.
The range of $A \in \mathbb{C}^{m \times n}$ is defined by $\text{range}(A) = \{ y \in \mathbb R^m : y = A x \text{ for some } x \in \mathbb C^n \} = \mathrm{span}(A^1 ,\ldots, A^n)$. Equivalently the range of $A$ is the set of all linear combinations of the columns of $A$.
The nullspace $\mathrm{null}(A)$ (or kernel $\mathrm{ker}(A)$) is the set of vectors such that $A v = 0$. Given a set $S = \{ v_i \}_1^k \subseteq \mathbb C^n$.
The null space of $A$ is $\mathrm{null}(A) = \{ x \in \mathbb R ^b : A x = 0 \}$.

The \textit{rank} of a matrix $A \in \mathbb C^{m \times n}$, $\mathrm{rank}(A)$
is the dimension of $\mathrm{range}(A)$ and is equal to the number of linearly
independent columns of $A$;
Since this is equal to $\mathrm{rank}(A^H)$, $A^H$ being the complex conjugate transpose of $A$,
it also equals the number of linearly independent rows of $A$,
and satisfies $\mathrm{rank}(A) \leq \min \{m,n\}$.
The trace of a matrix is the sum of its diagonal elements $\tr{A} = \sum_i A_{ii}$.
The support of a vector $\mathrm{supp}(v)$ is the set of indices $i$ such that $v_i =0$
and we call it sparsity of the vector. For a matrix we denote the sparsity as the number of non zero entries,
while row or column sparsity refers to the number of non-zero entries per row or column.
A symmetric matrix $A$ is positive semidefinite (PSD) if all its eigenvalues are non-negative.
For a PSD matrix $A$ we write $A \succeq 0$.
Similarly $A \succeq B$ is the partial ordering which is equivalent to $A-B \succeq 0$.

We use the following standard norms. The Frobenius norm
$\norm{A}_F = \sqrt{\sum_{i=1}^m \sum_{j=1}^n A_{ij}A_{ij}^*}$, and the spectral norm $\norm{A} = \sup_{x \in \mathbb C^n, \; x \neq 0} \frac{|A x|}{|x|}$. Note that that $\norm{A}_F^2 = \tr{A^H A} = \tr{A A^H}$. Both norms are submultiplicative and unitarily invariant and they are related to each other as $\norm{A} \leq \norm{A}_F \leq \sqrt{n} \norm{A}$.

The singular value decomposition of $A$ is $A= U \Sigma V^H$ where $U,V$ are unitary matrices and $U^H$ defines the complex conjugate transpose,
also called Hermitian conjugate, of $U$.
We use throughout the thesis $x^H$ or $A^H$ for the Hermitian conjugate as well as for the transpose for real matrices and vectors.
We denote the pseudo-inverse of a matrix $A$ with singular value decomposition $U \Sigma V^H$ as $A^{+}:=V \Sigma^{+} U^H$.

\section{Matrix functional analysis}
While computing the gradient of the average log-likelihood is a straightforward task when training ordinary Boltzmann machines, finding the gradient of the quantum relative entropy is much harder.  The reason for this is that in general $[\partial_\theta H(\theta) , H(\theta)]\ne 0$.  This means that the ordinary rules that are commonly used in calculus for finding the derivative no longer hold.  One important example that we will use repeatedly is Duhamel's formula:
\begin{equation}
\partial_\theta e^{H(\theta)} = \int_{0}^1 \mathrm{d}s e^{H(\theta) s} \partial_\theta H(\theta) e^{H(\theta) (1-s)}.
\end{equation}
This formula can be easily proven by expanding the operator exponential in a Trotter-Suzuki expansion with $r$ time-slices, differentiating the result and then taking the limit as $r\rightarrow \infty$.  However, the relative complexity of this expression compared to what would be expected from the product rule serves as an important reminder that computing the gradient is not a trivial exercise.  A similar formula also exists for the logarithm as shown further below.

Similarly, because we are working with functions of matrices here we need to also work with a notion of monotonicity. We will see that for some of our approximations to hold we will also need to define a notion of concavity (in order to use Jensen's inequality).  These notions are defined below.
    \begin{definition}[Operator monoticity]
      A function $f$ is operator monotone with respect to the semidefinite order if $0 \preceq A \preceq B$, for two symmetric positive definite operators implies, $f(A) \preceq f(B)$.
      A function is operator concave w.r.t. the semidefinite order if $cf(A)+(1-c)f(B) \preceq f(cA+(1-c)B)$, for all positive definite $A,B$ and $c \in [0,1]$.
    \end{definition}

We now derive or review some preliminary equations that we will need in order to obtain a useful bound on the gradients in the main work.

\begin{claim}
  Let $A(\theta)$ be a linear operator which depends on the parameters $\theta$.
  Then
  \begin{equation}
    \label{eq:grad_inverse}
    \frac{\partial}{\partial \theta} A(\theta)^{-1} = - A^{-1} \frac{\partial \sigma}{\partial \theta} A^{-1}.
  \end{equation}
\end{claim}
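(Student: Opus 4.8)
The plan is to differentiate the defining relation of the inverse. First I would assume that $A(\theta)$ is invertible in a neighbourhood of the parameter value of interest and that its entries depend differentiably on $\theta$. Then $A(\theta)^{-1}$ is well defined there, and it is itself differentiable in $\theta$: this follows from the adjugate/determinant formula for the inverse, since the determinant is a nonvanishing polynomial in the (differentiable) entries of $A$ and the entries of the adjugate are polynomials in them as well. With differentiability secured, the identity $A(\theta)\,A(\theta)^{-1} = I$ holds throughout the neighbourhood, so both sides may be differentiated with respect to $\theta$.

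Next I would apply the product rule, which is valid for products of matrix-valued functions because matrix multiplication is bilinear. This gives
\begin{equation}
\frac{\partial A}{\partial \theta}\, A^{-1} + A\, \frac{\partial A^{-1}}{\partial \theta} = 0 .
\end{equation}
Solving for the second term and left-multiplying by $A^{-1}$ yields
\begin{equation}
\frac{\partial A^{-1}}{\partial \theta} = - A^{-1}\, \frac{\partial A}{\partial \theta}\, A^{-1},
\end{equation}
which is the claimed identity (the $\partial \sigma / \partial \theta$ written in the statement is a typographical slip for $\partial A/\partial\theta$).

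The only genuine obstacle is the preliminary point — justifying that $\theta \mapsto A(\theta)^{-1}$ is differentiable at all; once this is in place the derivation is a one-line use of the product rule. I would make this rigorous either through the explicit adjugate formula mentioned above, or, more directly, through the first-order expansion $(A+\Delta)^{-1} = A^{-1} - A^{-1}\Delta A^{-1} + \Ord{\norm{\Delta}^2}$ for a small perturbation $\Delta$, obtained from the Neumann series $(I + A^{-1}\Delta)^{-1} = \sum_{k\ge 0} (-A^{-1}\Delta)^k$ valid for $\norm{A^{-1}\Delta} < 1$. Substituting $\Delta = \frac{\partial A}{\partial\theta}\,\delta\theta + o(\delta\theta)$ and reading off the linear-in-$\delta\theta$ term already exhibits the derivative, giving a self-contained alternative proof that does not even require separately invoking the product rule.
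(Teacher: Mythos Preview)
Your proof is correct and follows essentially the same route as the paper: differentiate the identity $A(\theta)A(\theta)^{-1}=I$, apply the product rule, and rearrange. The paper's version is terser and skips the differentiability justification you supply; it also remarks in passing that the result can equally be obtained via the G\^ateaux derivative, which is effectively your Neumann-series alternative.
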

\begin{proof}
  The proof follows straight forward by using the identity $I$.
  \begin{equation*}
    \gradtheta{I} = 0 = \frac{\partial}{\partial \theta} AA^{-1} = \left(\gradtheta{A}\right) A^{-1} + A \left(\gradtheta{A^{-1}}\right).
  \end{equation*}
  Reordering the terms completes the proof. This can equally be proven using the Gateau derivative.
\end{proof}
In the following we will furthermore rely on the following well-known inequality:
\begin{lemma}[Von Neumann Trace Inequality]
  Let $A \in \mathbb{C}^{n\times n}$ and $B\in \mathbb{C}^{n\times n}$ with singular values $\{\sigma_i(A)\}_{i=1}^n$ and $\{\sigma_i(B)\}_{i=1}^n$ respectively such that $\sigma_i(\cdot) \le \sigma_j(\cdot)$ if $i\le j$.
  It then holds that
  \begin{equation}
    \abs{\tr{AB}} \leq \sum_{i=1}^n \sigma(A)_i \sigma(B)_i.
  \end{equation}
\end{lemma}
Note that from this we immediately obtain
\begin{equation}
  \label{eq:trace_inequality}
  \abs{\tr{AB}} \leq \sum_{i=1}^n \sigma(A)_i \sigma(B)_i \leq \sigma_{max}(B) \sum_i \sigma(A)_i = \norm{B} \sum_i \sigma(A)_i.
\end{equation}
This is particularly useful if $A$ is Hermitian and PSD, since this implies $\abs{\tr{AB}} \leq \norm{B} \tr{A}$ for Hermitian $A$.

Since we are dealing with operators, the common chain rule of differentiation does not hold generally. Indeed the chain rule is a special case if the derivative of the operator commutes with the operator itself.
Since we are encountering a term of the form $\log \sigma(\theta)$, we cannot assume that $[\sigma, \sigma']=0$, where $\sigma':= \sigma^{(1)}$ is the derivative w.r.t., $\theta$.
For this case we need the following identity similarly to Duhamels formula in the derivation of the gradient for the purely-visible-units Boltzmann machine.
\begin{lemma}[Derivative of matrix logarithm \cite{haber2018notes}]
  \label{eq:operator_log_ineq}
  \begin{equation}
    \frac{d}{dt} \log{A(t)} = \int\limits_0^1 [sA + (1-s) I]^{-1} \frac{dA}{dt} [sA + (1-s)I]^{-1}.
  \end{equation}
\end{lemma}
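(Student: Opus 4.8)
The plan is to reduce the statement to Duhamel's formula, which has already been discussed in the excerpt, by using the integral representation of the matrix logarithm. Concretely, I would start from the identity
\[
\log A = \int_0^\infty \left( (1+u)^{-1} I - (A + u I)^{-1} \right) \mathrm{d}u,
\]
valid for any $A$ with spectrum in the right half-plane (in particular for symmetric positive definite $A$, which is the setting of the operator-monotonicity discussion preceding the lemma). This is the workhorse: it expresses $\log A$ in terms of resolvents $(A+uI)^{-1}$, whose derivatives we already know how to compute from the Claim on $\partial_\theta A^{-1}$ proved just above, namely $\partial_t (A+uI)^{-1} = -(A+uI)^{-1} A'(t) (A+uI)^{-1}$ (the $uI$ term is constant in $t$, so only $A'$ survives).

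The key steps, in order, are: (i) differentiate under the integral sign — justified because for $A(t)$ symmetric positive definite the integrand and its $t$-derivative are bounded uniformly in $u$ on compact $t$-intervals by an integrable function of $u$, using $\|(A+uI)^{-1}\| \le (\lambda_{\min}(A)+u)^{-1}$; (ii) substitute the resolvent derivative to get
\[
\frac{d}{dt}\log A(t) = \int_0^\infty (A+uI)^{-1} \frac{dA}{dt} (A+uI)^{-1} \,\mathrm{d}u;
\]
(iii) perform the change of variables $u = (1-s)/s$, i.e. $s = 1/(1+u)$, so that $\mathrm{d}u = -s^{-2}\mathrm{d}s$ and $A + uI = \tfrac{1}{s}\bigl(sA + (1-s)I\bigr)$, hence $(A+uI)^{-1} = s\,[sA+(1-s)I]^{-1}$; the two resolvent factors contribute $s^2$, which exactly cancels the $s^{-2}$ from the Jacobian, and as $u$ ranges over $(0,\infty)$ the variable $s$ ranges over $(0,1)$. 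This yields
\[
\frac{d}{dt}\log A(t) = \int_0^1 [sA+(1-s)I]^{-1}\,\frac{dA}{dt}\,[sA+(1-s)I]^{-1}\,\mathrm{d}s,
\]
which is the claimed formula.

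An alternative route, closer to the remark in the excerpt about how Duhamel's formula itself is proved, is to write $\log A$ via $A^x = e^{x\log A}$ and $\frac{d}{dx}A^x\big|_{x=0} = \log A$, then apply Duhamel to $\partial_t e^{x \log A(t)}$ and interchange the $t$- and $x$-derivatives; but this is circular unless one is careful, so I would prefer the resolvent approach, which only uses the already-established Claim. The main obstacle is the analytic justification in step (i): one must confirm that $A(t)$ stays uniformly positive definite (bounded away from $0$) on the $t$-interval of interest so that the dominating function $u \mapsto C(\lambda_{\min}+u)^{-2}$ is integrable near $u=\infty$, and separately check integrability near $u = 0$ (there the integrand is bounded since $(A+uI)^{-1}\to A^{-1}$); granting that, differentiation under the integral and the change of variables are routine. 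I would state the positive-definiteness hypothesis explicitly, matching the operator-monotone/operator-concave setting in which this lemma is later applied.
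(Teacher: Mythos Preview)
Your proof is correct and takes a genuinely different route from the paper's. The paper starts from the compact integral representation $\log A = (A-I)\int_0^1 [s(A-I)+I]^{-1}\,ds$, which already lives on $[0,1]$, and differentiates it via the product rule, producing two terms: one from the $(A-I)$ prefactor and one from the resolvent, the latter handled by the same Claim $\partial_t B^{-1}=-B^{-1}B'B^{-1}$ that you invoke. It then inserts a factor $I=[s(A-I)+I]^{-1}[s(A-I)+I]$ into the first term so that the two pieces can be combined: the bracket $[s(A-I)+I]-s(A-I)$ collapses to $I$, yielding the sandwich form. Your approach via the half-line representation $\log A=\int_0^\infty\bigl[(1+u)^{-1}I-(A+uI)^{-1}\bigr]\,du$ is tidier in that the scalar piece is $t$-independent, so only a single resolvent term survives differentiation and no recombination step is needed; the price is the substitution $u=(1-s)/s$ to map onto $[0,1]$, which you carried out correctly (the two factors of $s$ from the resolvents exactly cancel the Jacobian). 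Both arguments rest on the same derivative-of-inverse Claim; yours is a little more streamlined and makes the analytic justification (domination near $u=0$ and $u=\infty$) explicit, while the paper's avoids the change of variables and stays on the unit interval throughout.
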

For completeness we here include a proof of the above identity.
\begin{proof}
  We use the integral definition of the logarithm~\cite{higham2008functions} for a complex, invertible, $n\times n$ matrix $A=A(t)$ with no real negative
  \begin{equation}
    \log{A} = (A-I) \int_0^1 ds [s(A-I)+I]^{-1}.
  \end{equation}
  From this we obtain the derivative $$\frac{d}{dt} log{A} = \frac{dA}{dt} \int_0^1 ds [s(A-I)+I]^{-1} + (A-I) \int_0^1 ds \frac{d}{dt}[s(A-I)+I]^{-1}.$$
  Applying \eqref{eq:grad_inverse} to the second term on the right hand side yields
  $$\frac{d}{dt} log{A} = \frac{dA}{dt} \int_0^1 ds [s(A-I)+I]^{-1} + (A-I) \int_0^1 ds [s(A-I)+I]^{-1} s \frac{dA}{dt}[s(A-I)+I]^{-1},$$
  which can be rewritten as
  \begin{eqnarray}
    \frac{d}{dt} log{A} =  \int_0^1 ds [s(A-I)+I][s(A-I)+I]^{-1}\frac{dA}{dt} [s(A-I)+I]^{-1} \\+ (A-I) \int_0^1 ds [s(A-I)+I]^{-1} s \frac{dA}{dt}[s(A-I)+I]^{-1},
  \end{eqnarray}
  by adding the identity $I = [s(A-I)+I][s(A-I)+I]^{-1}$ in the first integral and reordering commuting terms (i.e., $s$).
  Notice that we can hence just substract the first two terms in the integral which yields \eqref{eq:operator_log_ineq} as desired.
\end{proof}

minimises\chapter{Statistical Learning Theory}
\label{chap:SLT}

This chapter applies insights from statistical learning theory to answer the following question:
\setcounter{researchquestion}{0}

\begin{researchquestion}[QML under the lens of SLT]
\label{rq:slt}
Under the common assumptions of statistical learning theory, what is
the performance of supervised quantum machine learning algorithms?
\end{researchquestion}

The main idea in this chapter is to leverage the framework of statistical learning theory
to understand how the minimum number of samples required by a learner to reach a
target generalisation accuracy
influences the overall performance of quantum algorithms.
By taking into account well known bounds on this accuracy,
we can show that quantum machine learning algorithms
for supervised machine learning are unable to achieve polylogarithmic runtimes in the input dimension.
Notably, the results presented here hold only for supervised quantum machine learning algorithms
for which statistical guarantees are available.
Our results show that without further assumptions on the problem,
known quantum machine learning algorithms for supervised learning achieve only moderate
polynomial speedups over efficient classical algorithms - if any.
We note that the quantum machine learning algorithms that we analyse here
are all based on fast quantum linear algebra subroutines~\cite{harrow2009quantum,CGJ19}.
These in particular include quantum quantum support vector machines~\cite{rebentrost2014quantum},
quantum linear regression, and quantum least squares~\cite{wiebe2012quantum,schuld2016prediction,wang2017quantum,kerenidis2017quantum,CGJ19,zhang2018nonlinear}.

Notably, the origin of the `slow down' of quantum algorithms under the above
consideration is twofold.

First, most of the known quantum machine learning
algorithms have at least an inversely linear scaling in the optimisation or
approximation error of the algorithm, i.e., they require a computational time
$O(\epsilon^{-\alpha})$ (for some real positive $\alpha$) to return a solution
$\tilde{x}$ which is $\epsilon$-close in some norm to the true solution $x$ of
the problem. For example, in case of the linear systems algorithm, we obtain a quantum
state $\ket{\tilde{x}}$ such that $\norm{\ket{\tilde{x}}-\ket{x}}\leq \epsilon$,
where $\ket{x}$ is the state that encodes the exact solution to the linear system $\ket{x}=\ket{A^{-1}b}$. This is in stark contrast to classical algorithms which
typically have logarithmic dependency with respect to the error, i.e., for
running the algorithm for time $O(\log(\epsilon))$, we obtain a solution
with error $\epsilon$.

Second, a crucial bottleneck in many quantum algorithms is the requirement to
sample in the end of most quantum algorithms. This implies generally another
error, since we need to repeatedly measure the resulting quantum state in order
to obtain the underlying classical result.

We note that we mainly leverage these two sources of error in the following
analysis, but the extension of this to further include noise in the computation
is straightforward. Indeed, noise in the computation (a critical issue
in the current generation of quantum computers) could immediately be taken into
account by simply adding a linear factor in terms of the error decomposition
that we will encounter in Eq.~\ref{eq:totalerrordec}.
This will be a further limiting factor for near term devices as such
errors need to be surpressed sufficiently in order to obtain good general bounds.

While previous research has already identified a number
of caveats such as the data access~\cite{aaronson2015read} or restrictive
structural properties of the input data, which limit the practicality of these algorithms,
our insights are entirely based on statistical analysis.

As we will discuss in more detail in chapter~\ref{chap:randNLA},
under the assumption that we can efficiently sample
rows and columns of the input matrix (i.e., the input data)
according to a certain distribution,
classical algorithms can be shown to be nearly as efficient as these
quantum machine learning algorithms~\cite{tang2018quantum}.
We note that the scaling of the quantum machine learning algorithms typically
still achieve a high polynomial advantage compared to the classical ones.

This chapter is organised as follows.
First, in Section~\ref{sec:key_res_slt}, we will review existing results from
statistical learning theory in order to allow the reader to follow the subsequent argument.
Section~\ref{ssec:bounds_algorithmic_error} takes into account the
error that is introduced by the algorithm itself.
In Section~\ref{ssec:bounds_sample_error}, we then use this insight to bound the
error that is induced through the sampling process in quantum mechanics.
An additional dependency is typically given by the condition number of the problem,
and we hence derive additional bounds for it in Section~\ref{ssec:bounds_cond_num}.
Finally, in Section~\ref{sec:qml_algo_analysis}, we accumulate these insights and
use them to analyse a range of existing supervised quantum machine learning algorithms.

\section{Review of key results in Learning Theory}
\label{sec:key_res_slt}

Statistical Learning Theory has the aim to statistically quantify the resources
that are required to solve a learning problem~\cite{shalev2014understanding}.
Although multiple types of learning settings exist, depending on the access to data
and the associated error of a prediction, here we primarily focus on supervised learning.
In supervised learning, the goal is to find a function that fits a set of input-output
training examples and, more importantly, guarantees also a good fit on data points
that were not used during the training process.
The ability to extrapolate to data points that are previously not observed is also known
as {\em generalisation ability} of the model.
This is indeed the major difference between machine learning and standard optimisation
processes.
Although this problem can be cast into the framework of optimisation (by optimising
a certain problem instance), setting up this instance to achieve a maximum generalisation
performance is indeed one of the main objectives of machine learning.

After the review of the classical part, i.e., the important points of consideration for any learning algorithm and the assumptions taken,
we also analyse existing quantum algorithms, and their computational speedups within the scope of statistical learning theory.

\subsection{Supervised Learning}

We now set the stage for the analysis by defining the framework of supervised learning.
Let $X$ and $Y$ be probability spaces with distribution $\rho$ on $X \times Y$ from which we sample data points $x \in X$ and the corresponding labels $y \in Y$.
We refer to $X$ and $Y$ as input set and output set respectively.
Let $\ell : Y \times Y \mapsto \mathbb{R}$ be the loss function
measuring the discrepancy between any two points in the input space, which is a point-wise error measure.
There exist a wide range of suitable loss-functions, and choosing the appropriate one in practice is
of great importance.
Typical error functions are the least-squares error, $\ell_{sq}(f(x),y):= (f(x)-y)^2$ over $Y=\mathbb R$ for regression (generally for dense $Y$),
or the $0-1$ loss $\ell_{0-1}(f(x),y):= \delta_{f(x),y}$ over $Y=\{-1,1\}$
for classification (generally for discrete $Y$).
The least squares loss will also use be used frequently throughout this chapter.
For any hypothesis space $\mathcal{H}$ of measurable functions $f: X \mapsto Y$, $f \in \mathcal{H}$,
the goal of supervised learning is then to minimise the \textit{expected risk} or \textit{expected error}
$\mathcal{E}(f):= \Exp_{\rho}\left[\ell(y,f(x))\right]$, i.e.,
\begin{equation}
\label{eq:learning_problem}
    \inf_{f \in \mathcal H} \mathcal{E}(f), \quad \mathcal E (f) = \int_{X \times Y} \ell(f(x), y)) d\rho(x,y).
\end{equation}

We hence want to minimise the expected prediction error for a hypothesis $f:X \rightarrow Y$,
which is the average error with respect to the probability distribution $\rho$.
If the loss function is measurable, then the target space is the space of all measurable functions.
The space of all functions for which the expected risk is well defined is called the \textit{target space},
and typically denoted by $\mathcal{F}$.

For many loss functions we can in practice not achieve the infimum, however, it is still possible to derive a minimizer.
In order to be able to efficiently find a solution to Eq.~\ref{eq:learning_problem}, rather than searching over
the entirety of $\mathcal{F}$, we restrict the search over a restricted hypothesis space $\Hyp$,
which indeed can be infinite.

In summary, a learning problem is defined by the following three components:
\begin{enumerate}
    \item A probability space $X \times Y $ with a Borel probability measure $\rho$.
    \item A measureable loss function $\ell: Y \times Y \mapsto [0,\infty)$.
    \item A hypothesis space $\Hyp$ from which we choose our hypothesis $f\in \Hyp$.
\end{enumerate}
The data or input spaces $X$ can be vector spaces (linear spaces) such as
$X=\mathbb{R}^d$, $d\in \mathbb{N}$, or \textit{structured spaces},
and the output space $Y$ can also take a variety of forms as we mentioned above.

In practice,  the underlying distribution $\rho$ is unknown and we can
only access it through a finite number of observations
This finite set of samples, $S_n = \{(x_i,y_i)\}_{i=1}^n$, $x_i \in X$, $y_i \in Y$,
is called the training set, and we generally assume that these are sampled
identically and independently according to $\rho$.
This is given in many practical cases but it should be noted that this is
not generally true. Indeed, for example for time series, subsequent samples
are typically highly correlated, and the following analysis hence does not
immediately hold.
The assumption of independence can however be relaxed to the assumption that the data does only depend \textit{slightly} on each other via so-called mixing conditions. Under such assumptions, most of the following results for the independent case still hold with only slight adaptations.

The results throughout this chapter rely on the following assumptions.
\begin{assumption}
    The probability distribution on the data space $X\times Y$ can be factorized into a marginal distribution $\rho_X$ on $X$ and a conditional distribution $\rho(\cdot|x)$ on $Y$.
\end{assumption}

We add as a remark the observation that the probability
distribution $\rho$ can take into account a large set of
uncertainty in the data.
The results therefore hold for a range of noise types or partial information.

\begin{assumption}
   The probability distribution $\rho$ is known only through a finite set of samples $S_n=\{x_i,y_i\}_{i=1}^n$, $x_i \in X$, $y_i \in Y$ which are sampled i.i.d. according to the Borel probability measure $\rho$ on the data space $X \times Y$.
\end{assumption}

\subsection{Empirical risk minimization and learning rates}

Under the above assumptions, the goal of a learning algorithm is then
to choose a suitable hypothesis $f_n :X\mapsto Y$, $f_n \in \Hyp$ for the minimizer of the expected risk based on the data set $S_n$.
Empirical Risk Minimization (ERM) approaches this problem by
choosing a hypothesis that minimises the \textit{empirical risk},
\begin{equation}
    \label{eq:emp_risk}
   \inf_{f \in \Hyp} \E_n(f), \quad \E_n(f):= \frac{1}{n} \sum_{(x_i,y_i) \in S_n} \ell(f(x_i),y_i), \quad (\text{Empirical Risk})
\end{equation}
given the i.i.d. drawn data points $\{(x_i,y_i)\}_{i=1}^n\sim \rho^n$.
Note that $$\mathbb{E}_{\{x,y\}_{i=1}^n\sim \rho^n}(\E_n(f)) = \frac{1}{n} \sum_{i=1}^n \mathbb{E}_{(x_i,y_i)\sim \rho}\left[ \ell(f(x_i),y_i) \right] =\E(f),$$
i.e., the expectation of the empirical risk is the expected risk, which implies that we can indeed use the empirical risk as proxy for the expected risk in expectation.

While we would like to use the empirical risk as a proxy for the true risk,
we also need to ensure by minimising the empirical risk, we
actually find a valid solution to the underlying problem, i.e.,
the $f_n$ that we find is approaching the true solution $f_*$ which minimises the empirical risk.
This requirement is termed consistency.
To define this mathematically, let
\begin{equation}
    f_n := \underset{f \in \Hyp}{\mathrm{argmin}} \; \E_n(f)
\end{equation}
be the minimizer of the empirical risk, which exists under weak assumptions on $\Hyp$.
Then, the overall goal of a learning algorithm is to minimise the \textit{excess risk},
\begin{equation}
    \E (f_n) - \inf_{f \in \Hyp} \E (f), \quad (\text{Excess risk})
\end{equation}
while ensuring that $f_n$ is \textit{consistent} for a particular distribution $\rho$, i.e., that
\begin{equation}
    \lim_{n\rightarrow \infty}  \left( \E (f_n) - \inf_{f \in \Hyp} \E (f) \right) = 0. \quad (\text{Consistency})
\end{equation}

Since $S_n$ is a randomly sampled subset, we can analyse this behaviour
in expectation, i.e.,
\begin{equation}
    \lim_{n\rightarrow \infty} \Exp{\E (f_n) - \inf_{f \in \Hyp} \E (f)}=0
\end{equation}
or in probability, i.e.,
\begin{equation}
    \lim_{n\rightarrow \infty} \Prob_{\rho^n}{\left[\E (f_n) - \inf_{f \in \Hyp} \E (f) \right] > \epsilon}=0
\end{equation}
for all $\epsilon >0$.
If the above requirement holds for all distributions $\rho$ on the data space, then we say that it is universally consistent.

However, consistency is not enough in practice, since the convergence
of the risk of the empirical risk minimizer and the minimal
risk could be impracticably slow.

One of the most important questions in a learning setting is therefore
how fast this convergence happens, which is defined through the so-called \textit{learning rate}, i.e., the rate of decay of the excess risk.
We assume that this scales somewhat with respect to $n$, as for example
\begin{equation}
    \Exp{\E (f_n) - \inf_{f \in \mathcal{F}} \E (f)} = O(n^{-\alpha}).
\end{equation}
This speed of course has a practical relevance and hence allows us to compare different algorithms.
The sample complexity $n$ must depend on the error we want to achieve, and
in practice we can therefore define it as follows:
For a distribution $\rho$, $\forall \delta, \epsilon >0$, there exists a $n(\delta, \epsilon)$ such that
\begin{equation}
\Prob_{\rho^n}\left[ \E (f_{n(\epsilon, \delta)}) - \inf_{f \in \mathcal{F}} \E (f) \leq \epsilon \right] \geq 1 - \delta.
\end{equation}
This $n(\delta, \epsilon)$ is called the sample complexity.

One challenge of studying our algorithm's performance through the excess risk is that we
can generally not assess it.
We therefore need to find a way estimate or bound the excess risk solely based on the
hypothesis $f_n$ and the empirical risk $\E_n$.

Let us for this assume the existence of a minimizer
\begin{equation}
    f_*:= \inf_{f \in \Hyp} \E(f)
\end{equation}
over a suitable Hypothesis space $\Hyp$.

We can then decompose the
excess risk as
\begin{equation}
\label{eq:decom_emp_risk}
 \E(f_n) - \E(f_*) =  \E(f_n) - \E_n(f_n) + \E_n(f_n) - \E_n(f_*) + \E_n(f_*) - \E(f_*).
\end{equation}
Now observe that $\E_n(f_n) - \E_n(f_*) \leq 0$, we immediately see that
we can bound this quantity as
\begin{equation}
\label{eq:gen_error_bounds_erm}
   \E(f_n) - \E(f_*) =
   \E(f_n) - \inf_{f \in \Hyp} \E(f)  \leq 2  \sup_{f \in \Hyp} \lvert \E(f) - \E_n(f) \rvert,
\end{equation}
which implies that we can bound the error in terms of the so-called \textit{generalisation error} $\E(f)-\E_n(f)$.

We therefore see that we can study the convergence of the excess risk in terms of the generalisation error.
Controlling the generalisation error is one of the main objectives of statistical learning theory.

A fundamental result in statistical learning theory,
which is often referred as the fundamental theorem of statistical learning, is the following:
\begin{theorem}[Fundamental Theorem of Statistical Learning Theory~\cite{vapnik1998, blumer1989learnability, shalev2014understanding}]
  Let $\Hyp$ be a suitably chosen Hypothesis space of functions $f: X \mapsto Y$, $X\times Y$ be a probability space
  with a Borel probability measure $\rho$ and a measurable loss function $ell: Y \times Y \mapsto [0,\infty)$,
  and let the empirical risk and risk be defined as in Eq.~\ref{eq:learning_problem} and Eq.~\ref{eq:emp_risk} respectively.
  Then, for every $n \in \N$, $\delta \in (0,1)$, and distributions $\rho$, with probability $1-\delta$ it holds that
  \begin{equation}
  \label{eq:errorscaling}
      \sup_{f\in\mathcal{H}}\left\lvert \E_n(f) - \E(f) \right\rvert \leq \Theta \left(  \sqrt{\frac{c\,(\mathcal{H})+\log(1/ \delta)}{n} }
    ~\right),
  \end{equation}
  where $c\,(\mathcal{H})$ is a measure of the complexity of $\mathcal{H}$ (such as the VC dimension, covering numbers, or the Rademacher complexity~\cite{cucker2002mathematical,shalev2014understanding}).
\end{theorem}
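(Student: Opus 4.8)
The plan is to establish the upper bound by the standard symmetrisation--Rademacher--concentration pipeline, and to sketch the matching lower bound via an adversarial construction on shattered points. Throughout I take the loss to be bounded, $\ell\in[0,M]$ (true for the $0$-$1$ loss, and the generic case under the usual boundedness/tail hypotheses), and I write $\Phi(S_n):=\sup_{f\in\Hyp}\abs{\E_n(f)-\E(f)}$; the $\Theta(\cdot)$ in the statement is understood in the worst-case-over-$\rho$ (minimax) sense. \textbf{Step 1 (concentration).} The functional $\Phi$ has the bounded-differences property: replacing one pair $(x_i,y_i)$ changes $\E_n(f)$ by at most $M/n$ uniformly in $f$, hence changes $\Phi(S_n)$ by at most $M/n$; McDiarmid's inequality then gives, with probability at least $1-\delta$,
\[
\Phi(S_n)\ \le\ \Exp{\Phi(S_n)}\ +\ M\sqrt{\tfrac{\log(1/\delta)}{2n}},
\]
which already supplies the $\sqrt{\log(1/\delta)/n}$ contribution. \textbf{Step 2 (symmetrisation).} Introduce an independent ghost sample $S_n'=\{(x_i',y_i')\}$ from $\rho$, with empirical risk $\E_n'$. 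Since $\E(f)=\mathbb{E}_{S_n'}\!\left[\E_n'(f)\right]$, Jensen's inequality yields
\[
\Exp{\Phi(S_n)}\ \le\ \mathbb{E}_{S_n,S_n'}\Big[\sup_{f\in\Hyp}\abs{\E_n'(f)-\E_n(f)}\Big]\ \le\ 2\,\mathcal{R}_n(\ell\circ\Hyp),
\]
where $\mathcal{R}_n$ is the Rademacher complexity of the loss class; the last step inserts i.i.d.\ signs $\sigma_i\in\{\pm1\}$, legitimate because each summand $\ell(f(x_i'),y_i')-\ell(f(x_i),y_i)$ is symmetric under swapping the two samples.

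\textbf{Step 3 (combinatorial complexity).} It remains to bound $\mathcal{R}_n(\ell\circ\Hyp)$ by $c(\Hyp)$. If the class has effective cardinality $N$ on the sample, Massart's finite-class lemma gives $\mathcal{R}_n\le M\sqrt{2\log N/n}$; for an infinite class the Sauer--Shelah lemma bounds the growth function by $\Ord{n^{d}}$ with $d=\mathrm{VC}(\Hyp)$, and Dudley's entropy integral over $L^2$-covering numbers with respect to the empirical measure then yields $\mathcal{R}_n(\ell\circ\Hyp)=\Ord{\sqrt{d/n}}$. Running the same argument with covering numbers, or with the Rademacher complexity taken directly as the definition of $c(\Hyp)$, produces the same bound up to logarithmic factors -- which is precisely why the theorem can be phrased for any of these complexity measures. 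Combining Steps 1--3 gives the asserted upper bound.

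\textbf{Step 4 (tightness and the main obstacle).} For the $\Theta$ rather than $\Ord{\cdot}$ one needs a matching lower bound: take $d=c(\Hyp)$ points shattered by $\Hyp$, let $\rho_X$ be uniform on them and the labels be independent fair coins; a Khinchine-type anti-concentration estimate shows $\Exp{\Phi(S_n)}=\Omega(\sqrt{d/n})$, and randomising between two nearby label distributions forces the $\sqrt{\log(1/\delta)/n}$ dependence with constant probability. The main obstacle is exactly this lower half: building the adversarial distribution and pushing the anti-concentration through so that the constants actually match the upper bound. The upper bound itself is routine once $\ell$ is bounded, the only delicate point there being the equivalence (up to logs) of the various complexity measures via Sauer--Shelah and chaining. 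In a thesis of this scope I would give Steps 1--3 in full and cite \cite{vapnik1998, blumer1989learnability, shalev2014understanding} for the matching lower bound.
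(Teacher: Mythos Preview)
The paper does not prove this theorem at all: it is stated as a cited result from \cite{vapnik1998, blumer1989learnability, shalev2014understanding}, and the surrounding text explicitly treats it as background (``A fundamental result in statistical learning theory\ldots''), going on to remark that the lower bound is ``much harder to obtain in general'' and is simply assumed to hold for the specific problems under consideration. So there is nothing to compare against; the thesis defers the entire argument to the references.

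Your sketch is the standard textbook route (McDiarmid, symmetrisation, Massart/Sauer--Shelah/Dudley for the upper bound; shattered-set anti-concentration for the lower bound) and is broadly correct as an outline. Your closing remark --- that in a thesis of this scope you would cite the references for the lower bound --- is in fact exactly what the paper does, except it cites them for the upper bound as well. If anything, your proposal is more detailed than what the paper provides.
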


We note that lower bounds for the convergence of the empirical to the (expected) risk
are much harder to obtain in general, and generally require to fix the underlying
distribution of the data. We indeed need for the following proof a lower bound on
this complexity which is not generally available. However, for the specific problems we
are treating here this holds true and lower bounds do indeed exist.
For simplicity, we will rely on the above given bound, as the other factors that
occur in these bounds only play a minor role here.

\subsection{Regularisation and modern approaches}

The complexity of the hypothesis space, $c(\mathcal{H})$ in Eq.~\ref{eq:errorscaling} relates to
the phenomenon of \textit{overfitting}, where a large hypothesis space results in a low training
error on the empirical risk, but performs poorly on the true risk.

In the literature, this problem is addressed with so-called \textit{regularisation techniques},
which are able to limit the size of the Hypothesis space and thereby its complexity,
in order to avoid overfitting the training dataset.

A number of different regularisation strategies have been proposed in the literature~\cite{vapnik1998,bishop2006pattern,bauer2007regularisation}, including the well-established
Tikhonov regularisation, which directly imposes constraints on the hypotheses class of candidate predictors.

From a computational perspective, Tikhonov regularisation, and other similar approaches compute a solution for the learning
problem by optimising a constraint objective (i.e., the empirical risk with an additional regularisation term).
The solution is obtained by a sequence of standard linear algebra operations such as matrix multiplication and inversion.
Since the standard matrix inversion time is $\mathcal{O}(n^3)$ for a $n \times n$ square matrix,
most of the solutions such as for GP or SVM, can be found in $\mathcal O(n^3)$ computational time for $n$ data points.
Notably, improvements to this runtime exist based on exploiting sparsity, or trading an approximation error against a lower
computational time.
Additionally, the time to solution typically depends on the conditioning of the matrix, which therefore can be lowered by
using preconditioning methods.

Regularisation is today widely used, and has led to many popular machine learning algorithms such as Regularised Least Squares~\cite{cucker2002mathematical}, Gaussian Process (GP) Regression and Classification \cite{rasmussen2006gaussian}, Logistic Regression~\cite{bishop2006pattern}, and Support Vector Machines (SVM)~\cite{vapnik1998}.
All the above mentioned algorithms fall under the same umbrella of kernel methods~\cite{shawe2004kernel}.

To further reduce the computational cost, modern methods leverage on the fact that
regularisation can indeed be applied implicitly through incomplete optimisation or other forms of approximation.
These ideas have been widely applied in the so-called {\em early stopping} approaches which are today standardly used in practice.
In early stopping, one only performs a limited number of steps of an iterative optimisation algorithm, typically in
gradient based optimisation.
It can indeed be shown for convex functions, that this process avoids overfitting the training set
(i.e., maintains an optimal learning rate), while the computational time is drastically reduced.
All regularisation approaches hence achieve a lower number of required operations,
while maintaining similar or the same generalisation performance of approaches such as Tikhonov regularisation~\cite{bauer2007regularisation}, in some cases provably.

Other approaches include the {\em divide and conquer}~\cite{zhang2013divide} approach or Monte-Carlo sampling (also-called sub-sampling) approaches.
While the former is based on the idea of distributing partitions of the initial training data, training
different predictors on the smaller problem instances, and then combining individual predictors into a joint one,
the latter achieves a form of dimensionality reduction through sampling a subset of the data in a specific manner.
The most well-known sub-sampling methods are random features~\cite{rahimi2007random} and so
called Nystr\"om approaches~\cite{smola2000sparse,williams2001using}.

In both cases, computation benefits from parallelisation and the reduced dimension of the datasets while similarly maintaining statistical guarantees (e.g.,\cite{rudi2015less}).

For all the above mentioned training methods, the computational times can typically be reduced from the $\mathcal O(n^3)$ of standard approaches to $\widetilde{ \mathcal{O}} (n^2)$ or $\widetilde{ \mathcal{O}} (nnz)$, where $nnz$ is the number of non-zero entries in the input data (matrix), while maintaining optimal statistical performance.

Since standard regularisation approaches can trivially be integrated into quantum
algorithms - such as regularised least squares - certain methods appear not to
work in the quantum algorithms toolbox.

For example, preconditioning, as a tool to reduce the condition number and make
the computation more efficient appears not to have an efficient solution yet
in the quantum setting~\cite{harrow2017limitations}. Therefore, more research is
required to give a full picture of the power and limitations of algorithms with
respect to all parameters.
Here we offer a brief discussion of the possible effects of inverting badly
conditioned matrices and how typical cases could affect the computational
complexity, i.e., the asymptotic runtime of the algorithm.

\section{Review of supervised quantum machine learning algorithms}
\label{sec:review_QML}

The majority of proposed supervised quantum machine learning algorithms are based on
fast linear algebra operations.
Indeed, most quantum machine learning algorithms that claim an exponential improvement over classical
counterparts are based on a fast quantum algorithm for solving linear systems of equations~\cite{wiebe2012quantum,schuld2016prediction,rebentrost2014quantum,wang2017quantum,kerenidis2017quantum,CGJ19,zhang2018nonlinear}.
This widely used subroutine is the HHL algorithm, a quantum linear system solver~\cite{harrow2009quantum} (QLSA),
which is named after its inventors Harrow, Hassidim, and Lloyd.
The HHL algorithm takes as input the normalised state
$\ket{b} \in \mathbb{R}^n$ and a $s(A)$-sparse matrix $A \in \mathbb{R}^{n \times n}$,
with spectral norm $\norm{A} \leq 1$ and condition number $\kappa=\kappa(A)$,
and returns as an output a quantum state $\ket{\tilde{w}}$ which encodes an approximation of
the normalised solution $\ket{w} = \ket{A^{-1} b} \in \mathbb{R}^n$ for the linear system $Aw=b$ such that
\begin{equation}
    \norm{\ket{\tilde{w}} - \ket{w}} \leq \gamma,
\end{equation}
for error parameter $\gamma$.
Note that above we assumed that the matrix is invertible, however, the algorithm can in practice
perform the Moore-Penrose inverse (also known as pseudoinverse), which is defined for arbitrary $A \in \mathbb{R}^{n \times m}$ by
$A^{+}:=(A^HA)^{-1}A^H$, and using the singular value decomposition of $A=U\Sigma V^H$, we hence have
\begin{equation}
    A^{+} = (V \Sigma^2 V^H)^{-1} V \Sigma U^H = V \Sigma^{-1} U^H,
\end{equation}
such that $A^{+}A = VV^H = I_{m}$.

The currently best known quantum linear systems algorithm, in terms of computational complexity,
runs in
\begin{equation}
    O(\norm{A}_F \, \kappa\; \mathrm{polylog} (\kappa, n, 1/\gamma)),
\end{equation}
time~\cite{CGJ19}, where $\norm{A}_F \leq \sqrt{n} \norm{A}_2 \leq \sqrt{n}$ is the Frobenius norm of $A$ and $\kappa$ its condition number.
As we will discuss in more detail in Chapter~\ref{chap:randNLA} on randomised numerical linear algebra,
such computations can also be done exponentially faster compared to known classical
algorithms using classical randomised methods in combination with a quantum-inspired memory structure,
by taking advantage of aforementioned Monte-Carlo sampling methods~\cite{RWC+18,tang2018quantum,chia2018quantum,gilyen2018quantum,chia2020sampling}
if the data matrix ($A$) is of low rank.
We note, that for full-rank matrices, an advantage is however still possible.

Before analysing the supervised machine learning algorithms with the above discussed knowledge from
statistical learning theory, we will first recapitulate the quantum least squares algorithm (linear regression)
and the quantum support vector machine (SVM).
Throughout this chapter, we use the least squares problem as a prototypical case to study the behaviour of QML algorithms, but the results extend trivially to the quantum SVM and many other algorithms.
In Section~\ref{sec:qml_algo_analysis} we also summarise the computational complexities taking into account the statistical guarantees
for all other algorithms and hence give explicit bounds.

\subsection{Recap: Quantum Linear Regression and Least Squares}
\label{sec:qml1}

The least squares algorithm minimises the empirical risk with respect to the quadratic loss
\begin{equation}
    \ell^{LS}(f(x),y) = \left( f(x) -y \right)^2,
\end{equation}
for the hypothesis class of linear functions
\begin{equation}
    \Hyp := \{f: X \mapsto Y | \, \exists w \in \mathbb{R}^d: f(x) = w^Hx \},
\end{equation}
with input space $\mathbb{R}^d$ and outputspace $\mathbb{R}$.
Given input-output samples $\{x_i,y_i\}_{i=1}^n$, where $x_i\in \mathbb{R}^d$, and $y_i \in \mathbb{R}$.
The empirical risk is therefore given by
\begin{equation}
    \E_n(f) := \frac{1}{n} \sum_{i=1}^n \left( w^H x_i -y_i \right)^2.
\end{equation}
the least squares problem seeks to find a vector $w$ such that
\begin{equation}
    w = \mathrm{argmin}_{w \in \mathbb{R}^d} \left( \norm{y - Xw}_2^2\right),
\end{equation}
where $y \in \mathbb{R}^n$ and $X \in \mathbb{R}^{n \times d}$.
The closed form solution of this problem is then given by $w = (X^HX)^{-1} X^H y$, and we can hence reformulate
this again into a linear systems problem of the form $Aw=b$, where $A=(X^HX)$ and $b=X^Hy$.
We obtain the solution then by solving the linear system $w=A^{-1}b$.

Since several quantum algorithms for linear regression and in particular least squares have been proposed~\cite{wiebe2012quantum,schuld2016prediction,wang2017quantum,kerenidis2017quantum,CGJ19,zhang2018nonlinear},
which all result in a similar scaling (taking into account the most recent subroutines),
we will in the following analysis use the best known result for the quantum machine learning algorithm.
All approaches have in common that they make use of the quantum linear system algorithm to
convert the state $\ket{\xi} = \ket{X^Hy}$ into the solution $\ket{\tilde{w}} = \ket{(X^HX)^{-1} \xi}$.
The fastest known algorithm~\cite{CGJ19}, indeed solves (regularised) least squares or linear regression problem in time
\[
O\left(\norm{A}_F \, \kappa \; \mathrm{polylog}(n,\kappa,1/\gamma)\right),
\]
where $\kappa^2$ is the condition number of $X^HX$ or $XX^H$ respectively, and $\gamma>0$ is an error parameter for the approximation accuracy.
Notably, this algorithm precludes a physical measurement of the resulting vector $\ket{\tilde{w}}$, since this
would immediately imply a complexity of
\[
O\left(\norm{A}_F \, \kappa/\gamma \; \mathrm{polylog}(n,\kappa,1/\gamma)\right).
\]
In that sense, the algorithm does solve the classical least squares problem only
to a certain extent as the solution is not accessible in that time.
Indeed it prepares a quantum state $\ket{\tilde{w}}$ which is $\gamma$-close to $\ket{w}$, i.e.,
\[
\norm{\ket{\tilde{w}} - \ket{w}} \leq \gamma,
\]
and in order to recover it, we would need to take up to $O(n \log(n))$ samples.
In the current form, we can however immediately observe that the Frobenius norm dependency implies that the algorithm is efficient if $X$ is low-rank (but no necessarily non-sparse).
As for all of the supervised quantum machine learning algorithms for classical input data,
the quantum least squares solver requires a quantum-accessible data structure, such as a qRAM.

Notably, it is assumed that $\frac{1}{\kappa^2} \leq \norm{X^HX} \leq 1$.
The output of the algorithm is then a quantum state $\ket{\tilde{w}}$, such that $\norm{\ket{\tilde{w}}-\ket{w}} \leq \epsilon$, where $\ket{w}$ is the true solution.

We note that other linear regression algorithms based on sample-based Hamiltonian simulation are
possible~\cite{lloyd2014quantum,kimmel2017hamiltonian}, which result in different requirements.
Indeed, for these algorithms we need to repeatedly prepare a density matrix, which is a normalised version
of the input data matrix. While this algorithm has generally worse dependencies on the error $\gamma$,
it is independent of the Frobenius norm~\cite{schuld2016prediction}.
The computational complexity in this case is
\[
O(\kappa^2 \gamma^{-3} \mathrm{polylog}(n)),
\]
which can likely be improved
to $O(\kappa \gamma^{-3} \mathrm{polylog}(n, \kappa, \gamma^{-1}))$.

However, since our analysis will indeed show that a higher polynomial dependency will incur a worse runtime
once we take statistical guarantees into account, we will use the algorithm in the following which has the
lowest dependency.
Notably, the error dependency is in either case polynomial if we require a classical solution to be output.

As a further remark, other linear regression or least squares quantum algorithms exist \cite{wang2017quantum,zhang2018nonlinear},
but we will not include these here as our results can easily be extended to these as well.

Next, we will also recapitulate the quantum support vector machine.

\subsection{Recap: Quantum Support Vector Machine}
\label{sec:qml2}
The second prototypical quantum machine learning algorithm which we want to recapitulate is the
quantum least-squares support vector machine~\cite{rebentrost2014quantum} (qSVM).
As we will see, the procedure for the qSVM is similar to the quantum least squares approach, and
therefore results in very similar runtimes.
The qSVM algorithm is calculating the optimal separating hyperplane by solving again a linear system of equations.
For $n$ points $S_n=\{(x_i,y_i)\}_{i=1}^n$ with $x_i \in \mathbb{R}^d, y_i = \{\pm 1\}$,
and again assuming that we can efficiently prepare states corresponding to the data vectors,
then the least-squares formulation of the solution is given by the linear system of the form
\begin{equation}
    \left(
    \begin{array}{cc}
        0 &  \vec{1}^H \\
        \vec{1} & K + \delta^{-1} I
    \end{array} \right)
    \left(
    \begin{array}{c}
        w_0  \\
        w
    \end{array} \right) = \left(
    \begin{array}{c}
        0 \\
        y
    \end{array} \right),
\end{equation}
where $K_{ij} = x_i^Hx_j$ (or $K_{ij} = \phi(x_i)^H\phi(x_j)$ respectively for a non-linear features) is the kernel matrix, $y = (y_1,\ldots, y_n)^H$, $\vec{1}$ is the all-ones vector, and $\delta$ is a user specified parameter.
We note that certain authors argue that a least square support vector machine is
not truly a support vector machine, and their practical use highly restricted.
The additional row and column in the matrix on the left hand side arise because of a non-zero offset.
Notably, $w^Hx+w_0 > 1$ or $w^H x+w_0 < -1$, with $w^Hx=\sum_{j=1}^n w_j x_j$ determines the hyperplanes.
The solution is hence obtained by solving the linear systems using the HHL algorithm
based on the density matrix exponentiation~\cite{lloyd2014quantum} method previously mentioned.
The only adaptation which is necessary is to use the normalised Kernel $\hat{K} = K/\mathrm{tr}(K)$.
However, since the smallest eigenvalues of $\hat{K}$ will be of $O(1/n)$ due to the normalisation,
the quantum SVM algorithm truncates the eigenvalues which are below a certain threshold $\delta_K$, s.t.,
$\delta_K \leq \lvert \lambda_i \rvert \leq 1$, which results in an effective condition number
$\kappa_{eff} = 1/\delta_K$, thereby effectively implementing a form of spectral filtering.

The runtime of the quantum support vector machine is given by
\[
O(\kappa_{eff}^3 \gamma^{-3} \mathrm{polylog}(nd,\kappa,1/\gamma)),
\]
and outputs a state $\ket{\tilde{w}_n}$ that approximates the solution $w_n:=[w_0, \, w]^H$, such that
$\norm{\ket{\tilde{w}_n} - \ket{w_n}} \leq \gamma$.
Similar as for the least squares algorithm, we cannot retrieve the parameters without an overhead,
and the quantum SVM therefore needs to perform immediate classification.

\section{Analysis of quantum machine learning algorithms}
\label{sec:qml_analysis}

The quantum algorithms we analyse throughout this chapter rely on a range of parameters, which include
the input dimension $n$, which corresponds to the number of data points in a sample (the dimension of the
individual data point is typically small so we focus on this part), the error of the algorithm with respect to
the final prediction $\gamma$, and the condition number $\kappa$ of the input data matrix.
Our main objective is to understand the performance of these algorithms if we want to achieve an overall
generalisation error of $\Theta(n^{-1/2})$.

To start, we therefore first need to return to our previous assessment of the risk, and use in the following
a standard error decomposition.
Let $f$ be a hypothesis, and let $\mathcal{F}$ is the space of all measurable functions $f:X \mapsto Y$.
We define by $\E^* := \inf_{f  \in \mathcal{F}} \E(f)$ the Bayes risk, and want to limit the distance $\E(f) - \E^*$.
Let now $\E_{\Hyp} := \inf_{f \in \Hyp} \E(f)$, i.e., the best risk attainable by any function in the hypothesis space $\Hyp$, where we assume in the following for simplicity that $\E_{\Hyp}$ always admits a minimizer $f_{\Hyp} \in \Hyp$.
Note, that it is possible to remove this assumption by leveraging regularisation.
We then decompose the error as:
\begin{align}
\label{eq:totalerrordec}
    \E( f) - \E^* &= \underbrace{\Ecal(f) - \E( \hat{f})}_{\textrm{Optimisation error}} ~+~ \underbrace{\Ecal( \hat{f}) - \E_{\mathcal{H}}}_{\textrm{Estimation error}} ~+~ \underbrace{ \E_{\mathcal{H}} - \E^*}_{\textrm{Irreducible error}} \\
     &= \xi + \Theta(1/\sqrt{n}) + \mu.
\end{align}

The first term in Eq.~\ref{eq:totalerrordec} is the so-called \textit{optimisation error} which
indicates how good the optimisation procedure which generates $f$ is, with respect to the
actual minimum (infimum) of the empirical risk.
This error stems from the approximations an algorithm typically makes,
and relates to the $\gamma$ in previous sections.
The optimisation error can result from a variety of approximations, such as a finite number of steps in an iterative optimisation process or a sample error introduced through a non-deterministic process.
This error is discussed in detail in Section~\ref{ssec:bounds_algorithmic_error}.
The second term is the \textit{estimation error} which is due to taking the empirical risk as a proxy
for the true risk by using samples from the distribution $\rho$.
This can be bound by the generalisation bound we discussed in Eq.~\ref{eq:errorscaling}.
The last term is the \textit{irreducible error} which measures how well the hypothesis space describes the problem.
If the true solution is not in our hypothesis space, there will always be an irreducible error
that we indicate with the letter $\mu$.
If $\mu=0$, i.e., irreducible error is zero, then we call $\mathcal{H}$ is universal.
For simplicity, we assume here that $\mu = 0$, as it also will not impact the results of this paper much.

From the error decomposition in Eq.~\ref{eq:totalerrordec} we see that in order to achieve the best possible generalisation error overall, we need to make sure that the different error contributions are of the same order.
We therefore in particular need to ensure that the optimisation error matches the scaling of the estimation error.
Since for most known classical algorithms, with the exceptions of e.g., Monte-Carlo algorithms,
the optimisation error typically scales with $O(\log(1/\epsilon))$ and matching the bounds is usually trivial.
However, many quantum algorithms,including some of the quantum linear regression and least squares algorithms
we discussed in the previous section (e.g.~\cite{rebentrost2014quantum,schuld2016prediction}), have a polynomial
dependency on the optimisation error.
In the next section we discuss the implications of matching the bounds, and how they affect the algorithms
computational complexity.

Notably, other quantum algorithms have only a polylogarithmic error dependency,
such as~\cite{CGJ19}, and therefore the error matching does not impose any critical slowdown.
In these cases, however, we will see that quantum algorithms argument still cannot achieve a polylogarithmic runtime
in the dimension of the training set due to the error resulting from the finite sampling process that is
required to extract a classical output from a given quantum state.

Finally, to take into account all dependencies of the quantum algorithms, we also analyse the condition number.
Here, we show that with high probability the condition number has a polynomial dependency on the number of samples in the training set as well, which therefore indicates a certain scaling of the computational complexity.

We do the analysis in the following exemplary for the least squares case, and summarise the resulting
computational complexities of a range of supervised quantum machine learning algorithms next to the classical ones
then in Fig.~\ref{fig:summary_algos}.

\subsection{Bound on the optimisation error}
\label{ssec:bounds_algorithmic_error}

As previously mentioned, we will use the quantum least squares algorithms~\cite{wiebe2012quantum,schuld2016prediction,wang2017quantum} as an example case to demonstrate
how the matching of the error affects the algorithm.
The results we obtain can easily be generalised to other algorithms and instances, and in particularly hold for
all kernel methods. As we try to remain general, we will do the analysis with a general
algorithm with the computational complexity
\begin{equation}
    \Omega \left(n^{\alpha} \gamma^{-\beta} \kappa^{c} \log(n) \right)
\end{equation}
We show that in order to have a total error that scales as $n^{-1/2}$, the quantum algorithm
will pick up a polynomial $n$-dependency.

The known quantum least squares algorithms have a $\gamma$ error guarantee for the final output state $\ket{\tilde{w}}$,
i.e., $\norm{\ket{w} - \ket{\tilde{w}}}_2 \leq \gamma$, where $\ket{w}$ is the true solution.
The computational complexity (ignoring all but the error-dependency), is for all algorithms of the form
$O(\gamma^{-\beta})$ for some $\beta$, for example \cite{schuld2016prediction} with $\beta=3$,
or \cite{li2019sublinear} with $\beta=4$.

Since the quantum algorithms require the input data matrix to be either Hermitian or encoded in a
larger Hermitian matrix, the dimensionality of the overall matrix is $n+d$ for $n$ data points in $\mathbb{R}^d$.
For simplicity, we here assume that the input matrix is a $n \times n$ Hermitian matrix, and neglect this step.
In order to achieve the best possible generalisation error, as discussed previously, we want to match the errors of the
incomplete optimisation to the statistical ones.
In the least squares setting, $\tilde{w} = w_{\gamma,n}$ is the output of the algorithm corresponding to
the optimal parameters fitted to the $S_n$ data points, which exhibits at most a $\gamma$-error.
Therefore, $\tilde{w}$ corresponds to the estimator $f_{\gamma,n}$ in the previous notation
that we saw in Eq.~\ref{eq:totalerrordec} (and Eq.~\ref{eq:gen_error_bounds_erm}).
Concretely, we can see that the total error of an estimator $f_{\gamma,n}$ on $n$ data points with precision $\epsilon$ is given by
\begin{align}
\label{eq:error_contrib}
    \E(f_{\gamma,n}) - \E(f_{n}) &= \nonumber \\
    &= \underbrace{\E(f_{\gamma,n}) - \E_n(f_{\gamma,n})}_{\textrm{generalisation error}} ~+~ \underbrace{\E_n(f_{\gamma,n}) - \E_n(f_\gamma)}_{\textrm{Optimisation error}} ~+~ \underbrace{ \E_n(f_\gamma) - \E(f_\gamma)}_{\textrm{generalisation error}} \nonumber \\
     &= \Theta(n^{-1/2}) + \underbrace{\E_n(f_{\gamma,n}) - \E_n(f_\gamma)}_{\textrm{Optimisation error}},
\end{align}
where the first contribution is a result of Eq.~\ref{eq:errorscaling}, i.e., the generalisation performance,
and the second comes from the error of the quantum algorithm, which we will show next.
In order to achieve the best statistical performance, which means to achieve the lowest
generalisation error, the algorithmic error must scale at worst as the worst statistical error.
We will next show that the optimisation error of a quantum algorithm in terms of the prediction results in a
$\gamma$ error, which is inherited from weights $\ket{\tilde{w}}$, which the quantum algorithm produces.
Recalling, that in least squares classification is performed via the inner product, i.e.,
\begin{equation}
    y_{pred} := \tilde{w}^Hx,
\end{equation}
for model $\tilde{w}$ and data point $x$ which corresponds to $f_{\gamma,n}^H(x)$ in the general notation.
This then will result in the expected risk of the estimator $f_{\gamma,n}$ to be
\begin{equation}
  \E_n(f_{\gamma,n}) = \frac{1}{n} \sum_{i=1}^n \left(\tilde{w}^H x_i -y_i \right)^2.
\end{equation}
Therefore, assuming the output of the quantum algorithm is a state $\tilde{w}$, while the exact minimizer of the empirical risk is $w$, s.t., $\norm{\ket{\tilde{w}} - \ket{w}}_2 \leq \gamma$, and assuming that $|X|$ and $|Y|$ are bounded,
then we find that
\begin{align}
    \lvert \E_n(f_{n,\gamma})-\E_n(f_{n})\rvert &\leq \frac{1}{n} \sum_{i=1}^n  \left\lvert \left(\tilde{w}^H x_i -y_i \right)^2 - \left( w^H x_i -y_i \right)^2 \right\rvert \nonumber \\
    &\leq \frac{1}{n} \sum_{i=1}^n L \left\lvert \left( \tilde{w} - w\right)^H x_i \right\rvert \nonumber \\
    &\leq \frac{1}{n} \sum_{i=1}^n L \norm{\tilde{w} - w}_2 \norm{x_i}_2 \leq  k \cdot \gamma = O(\gamma),
\end{align}
where $k>0$ is a constant, and we used Cauchy-Schwartz, and the fact that that for the least-square it holds that
\begin{equation}
    \lvert \ell^{LS}(f(x_i),y_i) - \ell^{LS}(f(x_j),y_j) \rvert \leq L \lvert (f(x_i)-y_i) - (f(x_j)-y_j)\rvert,
\end{equation}
since $|X|$, and $|Y|$ bounded.

A few remarks.
In the learning setting the number of samples is fixed, and hence cannot be altered, i.e., the statistical error
(generalisation error) is fixed to $\Theta(1/\sqrt{n})$, and the larger $n$ is taken, the better the guarantees we are
able to obtain for future tasks.
Therefore, it is important to understand how we can reduce the other error contributions in Eq.~\ref{eq:error_contrib} in order to guarantee that we have the lowest possible overall error, or accuracy.

To do so, we match the error bounds of the two contributions, so that the overall performance of the algorithm is maximised, which means that the optimisation error should not surpass the statistical error.
We hence set $\gamma = n^{-1/2}$, and see that the overall scaling of the algorithm will need be of the order
$O \left(n^{\beta/2} \right)$, ignoring again all other contributions.
To take a concrete case, for the algorithm in \cite{schuld2016prediction}
the overall runtime is then of at least $O(n^{3/2})$.
The overall complexity of the algorithm then has the form
\begin{equation}
    \Omega \left(n^{\alpha} n^{\beta/2} \log(n) \kappa^{c} \right)
\end{equation}
for some constant $c, \beta, \alpha$

This straightforward argument from above can easily be generalised to arbitrary kernels by replacing
the input data $x$ with feature vectors $\phi(x)$, where $\phi(\cdot)$ is a chosen feature map.

We have so far only spoken about algorithms which naturally have a polynomial $1/\gamma$-dependency.
However, as we previously mentioned, not all quantum algorithms have such an error.
For algorithms which only depend polylogarithmically on $1/\gamma$, however, the quantum mechanical nature
will incur another polynomial $n$ dependency as we will see next.

\subsection{Bounds on the sampling error}
\label{ssec:bounds_sample_error}

So far we have ignored any error introduced by the measurement process.
However, we will always need to compute a classical estimate of the output of the quantum algorithm, which
is based on a repeated sampling of the output state.
As this is an inherent process which we will need to perform for any quantum algorithm,
the following analysis applies to any QML algorithm with classical output.
Since we estimate the result by repeatedly measuring the final state of our quantum computation in
a chosen basis, our resulting estimate for the desired output is a random variable.
It is well known from the central limit theorem, that the sampling error for such a random variable
scales as $O(1/\sqrt{m})$, where $m$ is the number of independent measurements.
This is known as the standard quantum limit or the so-called shot-noise limit.
Using so-called quantum metrology it is sometimes possible to overcome this limit
and obtain an error that scales with $1/m$.
This however poses the ultimate limit to measurement precision which is a direct consequence of the
Heisenberg uncertainty principle~\cite{giovannetti2008qram1, giovannetti2008qram2}.

Therefore, any output of the quantum algorithm will have a measurement error $\tau$.
Let us turn back to our least squares quantum algorithm.
It produces a state $\ket{\tilde{w}}$ which is an approximation to the true solution $\ket{w}$.
Using techniques such as quantum state tomography we can produce a classical estimate $\hat{w}$ of the vector
$\tilde{w}$ with accuracy
\begin{equation}
    \norm{\tilde{w} - \hat{w}}_2 \leq \tau = \Omega(1/m),
\end{equation}
where $m$ is the number of measurements performed.
If $y = w^Hx$ is the error-free (ideal) prediction, then we can hence only produce an approximation
$\hat{y}= \hat{w}^Hx$, such that
\begin{align}
|y-\hat{y}| &= |w^H x - \hat{w}^H x| \\
& \leq \norm{w-\tilde{w} + \tau} \, \norm{x} \\
& \leq (\gamma + \tau)\, \norm{x},
\end{align}
using again Cauchy-Schwartz.
Similar to the previous approach, we need to make sure that the contribution coming from the measurement
error scales at most as the worst possible generalisation error, and hence set $\tau=n^{-1/2}$.
From this, we immediately see that any quantum machine learning algorithm which is
to reach optimal generalisation performance, will require a number of $m=\Omega(n^{1/2})$ repetitions,
which is hence a lower bound for all supervised quantum machine learning algorithms.
For algorithms which do not take advantage of forms of advanced quantum metrology, this might even be $\Omega(n)$.

Putting things together, we therefore have a scaling of any QML algorithm of
\begin{equation}
    \Omega \left(n^{\alpha+(1+\beta)/2} \log(n) \kappa^{c} \right) ,
\end{equation}
which for the state-of-the-art quantum algorithm for quantum least squares~\cite{CGJ19} result in
\begin{equation}
    \Omega \left(n^{2} \kappa \log(n) \right).
\end{equation}
In order to determine the overall complexity, we hence only have one parameter left: $\kappa$.
However, already now we observe that the computational complexity is similar or even worse compared to the best classical machine learning algorithms.

\subsection{Bounds on the condition number}
\label{ssec:bounds_cond_num}

In the following we will do the analysis of the last remaining depedency of the quantum algorithms.
The condition number.
Let the condition number dependency of the QML algorithm again be given by $\kappa^c$ for some constant $c \in \mathbb{R}^+$.
Note that the best known result has a $c=1$ dependency, ignoring logarithmic dependencies.
We can think of the following three scenarios for the condition number.
\begin{enumerate}
    \item Best case scenario:
    In the best case setting, the condition number is one or sufficiently close to one.
    This is the lower bound and can only ever happen if the data is full rank and all the eigenvalues
    are of very similar size, i.e., $\lambda_i \approx \lambda_j$ for all $i,j$.
    However, for such cases, it would be questionable whether a machine learning algorithm would be useful, since this would imply that the data lacks any strong signal.
    In these cases the quantum machine learning algorithms could be very fast and might give a quantum advantage if the $n$-scaling due to the error-dependency is not too high.
    \item Worst case scenario:
    On the other extreme, the condition number could be infinite, as could be the case for very badly conditioned matrices with smallest eigenvalues approaching $0$.
    This can be the case if we have one or a few strong signals (i.e., eigenvalues which are closer to $1$), and a small additional noise which results in the smallest eigenvalues being close to $0$. Such ill conditioned systems do indeed occur in practice, but can generally be dealt with by using spectral-filtering or
    preconditioning methods, as for example discussed in~\cite{gerfo2008spectral}.
    Indeed, the quantum SVM~\cite{rebentrost2014quantum} or the HHL algorithm~\cite{harrow2009quantum}
    do or can readily make use of such methods.
    Concretely they do only invert eigenvalues which are above a certain threshold. This hence gives a new,
    effective condition number $\kappa_{eff}=\sigma_{max}/\sigma_{threshold}\leq 1/\sigma_{threshold}$ which is
    typically way smaller compared to the actual $\kappa$, and makes algorithms practically useful.
    However, it should be noted that quantum algorithms which perform such steps need to be compared against
    corresponding classical methods.
    Note, that such truncations (filters) typically introduce an error, which then needs to
    be taken into account separately.
    Having covered these two extreme scenarios, we can now focus on a typical case.
    \item A plausible case:
    While the second case will appear in practice, these bounds give little insight into the actual performance
    of the quantum machine learning algorithms, since we cannot infer any scaling of $\kappa$ from them.
    However, for kernel based methods, we can derive a plausible case which can give us some intuition of
    how bad the $\kappa$-scaling typically can be.
    We will in the following show that with high probability,
    the condition number for a kernel method can have a certain $n$-dependency.
    Even though this result gives only a bound in probability, it is a plausible case with concrete $n$-dependency (the dimension of the input matrix) rather than the absolute worst case of $\kappa=\infty$, which gives an impractical upper bound.
    As a consequence, a quantum kernel method which scales as $O(\kappa^3)$ could pick up a factor of $O(n\sqrt{n})$ in the worst case which has the same complexity as the classical state of the art.
\end{enumerate}

In the following, we now prove a lower bound for the condition number of a covariance or kernel matrix assuming that we have at least one strong signal in the data.
The high level idea of this proof is that the sample covariance should be close to the true covariance with increasing number of samples, which we can show using concentration of measure.
Next we use that the true covariance is known to have converging eigenvalues,
as it constitutes a converging series.
This means we know that the $k$-th eigenvalue of the true covariance must have
an upper bound in terms of its size which is related to $k$.
Since we also know that the eigenvalues of the two matrices will be close,
and assuming that we have a few strong signals (i.e., $O(1)$ large eigenvalue),
we can then bound the condition number as the ratio of the largest over the smallest eigenvalue.

For the following analysis, we will first need to recapitulate some well known results about Mercer's kernels, which can be found e.g.,
in \cite{cucker2002mathematical}.
If $f \in L^2_{\nu}(X)$ is a function in the Hilbert space of square integrable functions on $X$ with Borel measure $\nu$, and $\{\phi_1, \phi_2,\ldots\}$
is a Hilbert basis of $L_{\nu}^2(X)$, $f$ can be uniquely written as $f=\sum_{k=1}^{\infty} a_k \phi_k$, and the partial sums $\sum_{k=1}^N a_k \phi_k$ converge
to $f$ in $L_{\nu}^2(X)$.
If this convergence holds in $C(X)$, the space of continuous functions on $X$, we say that the series converges uniformly to $f$.
If furthermore $\sum_k |a_k|$ converges, then we say that the series $\sum_k a_k$ converges absolutely.
Let now $K:X\times X\rightarrow \mathbb{R}$ be a continuous function. Then the linear map $$L_K: L_{\nu}^2(X) \rightarrow C(X)$$ given by the following integral transform
$$(L_Kf)(x)=\int K(x,x') f(x') d\nu(x')$$
is well defined.
It is well known that the integral operator and the kernel have the following relationship:
\begin{theorem}[\cite{cucker2002mathematical}[Theorem 1, p. 34; first proven in \cite{hochstadt2011integral}]]
 \label{thm:integral_kernel}
 Let $\mathcal{X}$ be a compact domain or a manifold, $\nu$ be a Borel measure on
 $\mathcal{X}$, and $K : \mathcal{X} \times \mathcal{X} \rightarrow \mathbb{R}$ a Mercer kernel.
 Let $\lambda_k$ be the $k$th eigenvalue of $L_K$ and $\{\phi_k\}_{k \geq 1}$
 be the corresponding eigenvectors. Then, we have for all $x, x' \in \mathcal{X}$
 \[ K(x,x') = \sum_{k=1}^{\infty} \lambda_k \phi_k(x) \phi_k(x'), \]
 where the convergence is absolute (for each $x,x' \in \mathcal{X} \times \mathcal{X}$) and
 uniform (on $\mathcal{X} \times \mathcal{X}$).
\end{theorem}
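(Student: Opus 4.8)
The plan is to prove Theorem~\ref{thm:integral_kernel} by combining the spectral theorem for compact self-adjoint operators with Dini's theorem, along the lines of the classical argument. Since $\mathcal{X}$ is compact and $K$ is continuous, bounded and symmetric, the integral operator $L_K$ is a self-adjoint Hilbert--Schmidt operator on $L^2_{\nu}(\mathcal{X})$, and it is positive because $K$ is a Mercer kernel. The spectral theorem then yields a non-increasing sequence $\lambda_k \ge 0$ with $\lambda_k \to 0$ and an orthonormal system of eigenfunctions $\{\phi_k\}$; each $\phi_k$ with $\lambda_k>0$ may be taken continuous, since $\phi_k = \lambda_k^{-1} L_K \phi_k$ and $L_K$ maps $L^2_{\nu}(\mathcal{X})$ into $C(\mathcal{X})$. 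Computing Fourier coefficients, $\langle K(x,\cdot),\phi_k\rangle = (L_K\phi_k)(x) = \lambda_k\phi_k(x)$ while $\langle K(x,\cdot),\psi\rangle = 0$ for $\psi\in\ker L_K$, so for each fixed $x$ the series $\sum_k \lambda_k\phi_k(x)\phi_k$ converges to $K(x,\cdot)$ in $L^2_{\nu}(\mathcal{X})$.

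Next I would set $K_N(x,x'):=\sum_{k=1}^N\lambda_k\phi_k(x)\phi_k(x')$ and $R_N := K - K_N$, and show $R_N$ is a continuous positive semidefinite kernel: the operator $L_{R_N}$ has the non-negative eigenvalues $\{\lambda_k\}_{k>N}$, and a continuous kernel whose integral operator is positive is pointwise positive semidefinite (if a diagonal entry or a $2\times 2$ principal block of $R_N$ were not positive semidefinite, testing $L_{R_N}$ against indicator functions of small neighbourhoods would contradict positivity). This supplies the two estimates I need: $\sum_{k=1}^N\lambda_k\phi_k(x)^2 = K(x,x) - R_N(x,x) \le K(x,x) \le \max_{t} K(t,t) =: C$ for all $N$ and $x$, and $|R_N(x,x')|^2 \le R_N(x,x)R_N(x',x')$. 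From the first bound the series $\sum_k\lambda_k\phi_k(x)^2$ converges pointwise, and then by Cauchy--Schwarz the tail of $\sum_k\lambda_k\phi_k(x)\phi_k(\cdot)$ is at most $C^{1/2}\big(\sum_{k>N}\lambda_k\phi_k(x)^2\big)^{1/2}$, so for each fixed $x$ this series converges uniformly in the second variable; its sum is therefore continuous and, agreeing with $K(x,\cdot)$ in $L^2_{\nu}$, equals it everywhere. Hence $K(x,x') = \sum_k\lambda_k\phi_k(x)\phi_k(x')$ pointwise, and on the diagonal $\sum_k\lambda_k\phi_k(x)^2 = K(x,x)$.

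Finally I would upgrade this to uniform and absolute convergence. The partial sums $s_N(x):=\sum_{k=1}^N\lambda_k\phi_k(x)^2$ are continuous, non-decreasing in $N$, and converge pointwise to the continuous function $x\mapsto K(x,x)$ on the compact set $\mathcal{X}$, so Dini's theorem gives $\sup_x \sum_{k>N}\lambda_k\phi_k(x)^2 \to 0$. Applying Cauchy--Schwarz once more, $\big|\sum_{k>N}\lambda_k\phi_k(x)\phi_k(x')\big| \le \big(\sup_t\sum_{k>N}\lambda_k\phi_k(t)^2\big)^{1/2}\big(\sup_t\sum_{k>N}\lambda_k\phi_k(t)^2\big)^{1/2} \to 0$ uniformly on $\mathcal{X}\times\mathcal{X}$, while $\sum_k\lambda_k|\phi_k(x)||\phi_k(x')| \le K(x,x)^{1/2}K(x',x')^{1/2} < \infty$ yields absolute convergence for each $(x,x')$.

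I expect the main obstacle to be the middle step: justifying that positivity of $L_K$ \emph{as an operator} forces the pointwise positive-semidefiniteness of $R_N$ (equivalently the diagonal bound and the $2\times 2$ Cauchy--Schwarz inequality), and keeping the logical order correct, so that Dini's theorem is invoked only after the pointwise limit of $s_N$ has been identified with the continuous function $K(x,x)$. The remaining manipulations of the spectral decomposition are routine.
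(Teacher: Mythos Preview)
Your proof is the classical Mercer argument and is correct as outlined; the step you flag as the main obstacle (that operator-positivity of $L_{R_N}$ forces pointwise positive-semidefiniteness of the continuous remainder $R_N$, hence the diagonal bound and the $2\times 2$ Cauchy--Schwarz inequality) is exactly the standard hinge of the proof, and your sketch via testing against localised functions is the right idea. There is nothing to compare against here, though: the paper does not supply its own proof of this theorem but simply quotes it as a known result from \cite{cucker2002mathematical} (originally \cite{hochstadt2011integral}) and then uses it as a black box to derive Corollary~\ref{cor:convergence_evs}.
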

Note that the kernel here takes the form $K=\Phi_{\infty}\Phi_{\infty}^H$, which
for the linear Kernel has the form $K=XX^H$.
Furthermore, the kernel matrix must have a similar spectrum to the empirical or sample kernel matrix $K_n=X_nX_n^H$, and indeed for $\lim_{n\rightarrow \infty} K_n \rightarrow K$,
where we use here the definition $X_n \in \mathbb{R}^{d
times n}$, i.e., we have $n$ vectors of dimension $d$ and therefore $\mathcal{X}=\mathbb{R}^d$.
The function $K$ is said to be the kernel of $L_K$ and several properties of $L_K$ follow from the properties of $K$.
Since we want to understand the condition number of $K_n$, the
sample covariance matrix, we need to study the behaviour of its eigenvalues.
For this, we start by studying the eigenvalues of $K$.
First, from Theorem~\ref{thm:integral_kernel} the next corollary follows.
\begin{corollary}[\cite{cucker2002mathematical}, Corollary 3]
\label{cor:convergence_evs}
The sum $\sum_k \lambda_k$ is convergent, and
\begin{equation}
  \sum_{k=1}^{\infty} \lambda_k = \int_X K(x,x) \leq \nu(X) C_K,
\end{equation}
where $C_K=\sup_{x,x'\in X} \lvert K(x,x')\rvert$ is an upper bound on the kernel.
Therefore, for all $k\geq 1$, $\lambda_k \leq \left(\frac{\nu(X) C_K}{k}\right)$.
\end{corollary}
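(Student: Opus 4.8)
The plan is to derive the corollary directly from the Mercer expansion in Theorem~\ref{thm:integral_kernel} by restricting it to the diagonal and then integrating term by term. First I would set $x' = x$ in the identity $K(x,x') = \sum_{k=1}^{\infty} \lambda_k \phi_k(x)\phi_k(x')$; this is legitimate because Theorem~\ref{thm:integral_kernel} asserts absolute (hence pointwise) convergence for every pair $(x,x')$, so in particular $K(x,x) = \sum_{k=1}^{\infty} \lambda_k \phi_k(x)^2$ holds for all $x \in X$.

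Next I would integrate both sides against $\nu$ over $X$. The only genuine subtlety is the exchange of summation and integration: since $X$ is compact and the convergence in Theorem~\ref{thm:integral_kernel} is uniform on $X \times X$, the partial sums converge uniformly on the diagonal, and because $\nu$ is a finite Borel measure ($\nu(X) < \infty$), uniform convergence justifies integrating term by term. Using that $\{\phi_k\}_{k\geq 1}$ is a Hilbert (orthonormal) basis of $L^2_{\nu}(X)$, so that $\int_X \phi_k(x)^2 \, d\nu(x) = 1$ for every $k$, I obtain
\[
\int_X K(x,x)\, d\nu(x) = \sum_{k=1}^{\infty} \lambda_k \int_X \phi_k(x)^2 \, d\nu(x) = \sum_{k=1}^{\infty} \lambda_k .
\]
Since a Mercer kernel is positive semidefinite, each $\lambda_k \geq 0$, so the right-hand side is a series of nonnegative terms whose convergence will follow as soon as its partial sums are bounded.

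To get the bound, I would use $|K(x,x)| \leq C_K := \sup_{x,x' \in X} |K(x,x')|$, giving $\int_X K(x,x)\, d\nu(x) \leq \nu(X) C_K$; combined with the identity above this yields $\sum_{k} \lambda_k \leq \nu(X) C_K < \infty$, which simultaneously establishes convergence. Finally, adopting the standard convention that the eigenvalues are listed in non-increasing order $\lambda_1 \geq \lambda_2 \geq \cdots \geq 0$, I would note that $k \lambda_k \leq \sum_{j=1}^{k} \lambda_j \leq \sum_{j=1}^{\infty} \lambda_j \leq \nu(X) C_K$, and hence $\lambda_k \leq \nu(X) C_K / k$ for all $k \geq 1$. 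The main obstacle is purely the rigorous justification of interchanging the sum and the integral; once uniform convergence on the compact diagonal and finiteness of $\nu$ are invoked, the remainder is bookkeeping with the orthonormality of the $\phi_k$ and the ordering of the $\lambda_k$.
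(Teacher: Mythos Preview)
Your argument is correct and is essentially the standard proof of this result (the one found in the cited reference \cite{cucker2002mathematical}); the thesis itself does not supply a proof but merely cites the corollary. If anything, your derivation of the final inequality is sharper than the informal remark the thesis makes immediately after the corollary: the thesis says that for a convergent nonnegative series $\sum_k \alpha_k$ the terms must go to zero ``faster than $1/k$,'' which is only true under monotonicity, whereas you correctly invoke the non-increasing ordering of the eigenvalues to get $k\lambda_k \le \sum_{j\le k}\lambda_j \le \nu(X)C_K$.
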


As we see from Corollary~\ref{cor:convergence_evs}, the eigenvalue $\lambda_k$ (or singular value,
since $K$ is SPSD) of $L_K$ cannot decrease slower than $O(1/k)$,
since for convergent series of real non-negative numbers $\sum_k \alpha_k$,
it must holds that $\alpha_k$ must go to zero faster than $1/k$.

Recalling that $K$ is the \textit{infinite} version of the kernel matrix $K_n=X_nX_n^H$
(or generally $K=\Phi \Phi^H$ for arbitrary kernels) we now
need to relate the finite sized Kernel $K_n$ to the kernel $K$.
Leveraging on concentration inequalities for random matrices, we will now show how $K_n \in \mathbb{R}^{d \times d}$ for $X_n \in \mathbb{R}^{d \times n}$ converges
to $K$ as $n$ grows, and therefore the spectra (i.e., eigenvalues must match),
which implies that the decay of the eigenvalues of $K_n$.
Indeed, we will see that the smallest eigenvalue $\lambda_n = O(1/n)$ with high probability.
From this we obtain immediately upper bounds on the condition number in high probability.
This is summarised below.

\begin{theorem}
  \label{lem:cond_num_dep}
  The condition number of a Mercer kernel $K$ for a finite number of samples $n$ is with high probability lower bounded by $\Omega(\sqrt{n})$.
\end{theorem}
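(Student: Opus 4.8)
The plan is to compare the spectrum of the empirical kernel matrix $K_n$ with that of the integral operator $L_K$ — for which Corollary~\ref{cor:convergence_evs} already supplies a decay rate — and to control the difference by a matrix concentration inequality. Throughout I assume the ``strong signal'' hypothesis indicated above, namely that the leading eigenvalue $\Lambda:=\lambda_1(L_K)$ is a positive constant independent of $n$ (essentially non-degeneracy of $K$); and I work in the regime where $K_n$ has full rank $n$, since otherwise $\lambda_{\min}(K_n)=0$, $\kappa(K_n)=\infty$, and the bound holds trivially.

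First I would pass from the matrix to operators. Writing $\Phi=[\phi(x_1),\dots,\phi(x_n)]$ for the (feature-space) data operator, we have $K_n=\Phi^{H}\Phi$, and the empirical second-moment operator $T_n:=\tfrac1n\Phi\Phi^{H}=\tfrac1n\sum_{i=1}^n\langle\,\cdot\,,\phi(x_i)\rangle\phi(x_i)$ shares the nonzero spectrum of $\tfrac1n K_n$, i.e. $\lambda_k(K_n)=n\,\lambda_k(T_n)$ for $1\le k\le n$. Its population counterpart is $T=\mathbb{E}_{x\sim\rho_X}[\langle\,\cdot\,,\phi(x)\rangle\phi(x)]$, unitarily equivalent to $L_K$, whose eigenvalues satisfy $\lambda_k\le C/k$ with $C=\nu(X)C_K$ by Corollary~\ref{cor:convergence_evs}. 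Since $|K(x,x')|\le C_K$, the summands defining $T_n$ are i.i.d.\ (Assumption~2) and uniformly bounded, so a Hilbert--Schmidt operator Bernstein / bounded-differences inequality yields, with probability at least $1-\delta$,
\begin{equation}
\Norm{T_n-T}\;\le\;\epsilon_n\;:=\;\frac{c\,C_K\sqrt{\log(2/\delta)}}{\sqrt n}\;=\;\Ord{1/\sqrt n},
\end{equation}
a dimension-free estimate — in fact the same $1/\sqrt n$ rate that drives Eq.~\ref{eq:errorscaling}.

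Next I would feed this into Weyl's inequality: by the min--max characterisation, $|\lambda_k(T_n)-\lambda_k(T)|\le\Norm{T_n-T}\le\epsilon_n$ for every $k$ (both operators being compact and positive). Taking $k=n$ and using $\lambda_n(T)\le C/n$,
\begin{equation}
\lambda_{\min}(K_n)\;=\;n\,\lambda_n(T_n)\;\le\;n\!\left(\frac{C}{n}+\epsilon_n\right)\;=\;C+\Ord{\sqrt n}\;=\;\Ord{\sqrt n},
\end{equation}
whereas taking $k=1$ and invoking the strong-signal hypothesis gives $\lambda_1(T_n)\ge\Lambda-\epsilon_n\ge\Lambda/2$ once $n$ exceeds some $n_0(\delta)$, hence $\lambda_{\max}(K_n)\ge(\Lambda/2)\,n$. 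Dividing,
\begin{equation}
\kappa(K_n)\;=\;\frac{\lambda_{\max}(K_n)}{\lambda_{\min}(K_n)}\;\ge\;\frac{(\Lambda/2)\,n}{\Ord{\sqrt n}}\;=\;\Omega(\sqrt n)
\end{equation}
with probability at least $1-\delta$, which is the assertion.

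The main obstacle is the concentration step: one needs a genuinely \emph{dimension-free} operator-norm deviation bound for the (possibly infinite-rank) second-moment operator, together with careful bookkeeping of normalisations so that $\lambda_k(K_n)=n\,\lambda_k(T_n)$ holds exactly. It is precisely the mismatch between the true decay $\lambda_n(L_K)=\Ord{1/n}$ and the sampling ``noise floor'' $\Ord{1/\sqrt n}$ that caps the conclusion at $\Omega(\sqrt n)$ rather than the $\Omega(n)$ one might naively expect. A secondary subtlety is that the strong-signal assumption is genuinely used — without it neither $\lambda_{\max}(K_n)=\Theta(n)$ nor even finiteness of the condition number is guaranteed — and that it is the i.i.d.\ sampling assumption that licenses the concentration inequality in the first place.
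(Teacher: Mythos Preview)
Your argument is correct and follows essentially the same strategy as the paper: a matrix/operator Bernstein concentration bound gives $\norm{T_n-T}=O(1/\sqrt{n})$, eigenvalue perturbation transfers this to individual eigenvalues, and the $\lambda_k(L_K)\le C/k$ decay from Corollary~\ref{cor:convergence_evs} then pins $\lambda_{\min}$ below the $1/\sqrt{n}$ noise floor while the strong-signal assumption keeps $\lambda_{\max}$ of order one.

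The differences are in bookkeeping rather than substance. The paper works directly with the finite-dimensional sample covariance $K_n=\tfrac1n\sum_k x_kx_k^{H}$ and invokes the $d$-dependent Matrix Bernstein of Tropp, then proves an ad hoc lemma ($|\inf f-\inf g|\le\sup|f-g|$ applied to Rayleigh quotients) to pass to extremal eigenvalues. You instead route through the operator pair $(T_n,T)$, keep track of the factor-of-$n$ normalisation $\lambda_k(K_n)=n\,\lambda_k(T_n)$ explicitly, and appeal to standard Weyl; this is cleaner and has the bonus of being dimension-free, so it covers infinite-dimensional feature maps where the paper's $d$-dependent bound would not directly apply. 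Both arrive at the same $\Omega(\sqrt n)$ via the identical mechanism of the $O(1/n)$ population tail being swamped by the $O(1/\sqrt n)$ sampling fluctuation.
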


\begin{proof}
We will in the following need some auxiliary results.

\begin{theorem}[Matrix Bernstein~\cite{tropp2015introduction}]
  \label{thm:matrix_bernstein}
Consider a finite sequence $\{X_k\}$ of independent, centered, random, Hermitian $d$-dimensional matrices, and assume that $\Exp X_k =0$, and $\norm{X_k} \leq R$, for all $k\in [n]$.
Let $X:=\sum_k X_k$ and $\Exp X=\sum_k \Exp X_k$, and let
\begin{align}
\sigma(X) &= \max \left\lbrace \norm{\Exp [XX^H]}, \norm{\Exp [X^HX]} \right\rbrace \nonumber \\
  &= \max \left\lbrace \norm{\sum_{k=1}^n\Exp [X_kX_k^H]}, \norm{\sum_{k=1}^n\Exp [X_k^HX_k]} \right\rbrace.
\end{align}
Then,
\begin{equation}
    \Pr{\left[ \norm{X} \geq \epsilon \right]} \leq 2d \exp{\left(\frac{-\epsilon^2/2}{\sigma(X)+R\epsilon/3}\right)}, \forall \epsilon \geq 0,\,
\end{equation}
\end{theorem}
We can make use of this result to straightforwardly bound the largest
eigenvalue of the sample covariance matrix, which is a well known result in the random matrix literature.
As outlined above, the sample covariance matrix is given by
\begin{equation}
  K_n = \frac{1}{n} \sum_{k=1}^n x_kx_k^H,
\end{equation}
for $n$ centred (zero mean) samples in $\mathbb{R}^d$,
and $K:=\Exp{xx^H} \in \mathbb{R}^{d \times d}$.
We look in the following at the matrix, i.e., $A := K_n - K$, and assume
\begin{equation}
  \label{eq:assumption_bound_samples}
  \norm{x_i}_2^2\leq r, \quad \forall i \in [n]
\end{equation}
i.e., the sample norm is bounded by some constant $r$.
Typically data is sparse, and hence independent of the dimension,
although both scenarios are possible. Here we assume a dependency on $d$.
Under this assumption, we let
\begin{equation}
  A_k := \frac{1}{n} \left( x_k x_k^H - K \right),
\end{equation}
for each $k$ and hence $A = \sum_{k=1}^n A_k$.
With the assumption in Eq.~\ref{eq:assumption_bound_samples} we obtain then
\begin{equation}
  \norm{K} = \norm{\Exp{xx^H}} \leq \Exp{\norm{xx^H}} =\Exp{\norm{x}^2} \leq r,
\end{equation}
using Jensen's inequality.
The matrix variance statistic $\sigma(A)$ is therefore given by
\begin{equation}
    0 \leq \sigma(A) \leq \frac{R}{n} \norm{C_{\infty}} \leq \frac{r^2}{n},
\end{equation}
which follows from straight calculation.
Taking into account that $\norm{A_k}\leq \frac{r}{n}$, and by invoking
Thm.~\ref{thm:matrix_bernstein}, and using the above bounds.
Assuming $r=C\cdot d$ for some constant $C$, i.e., the norm of the vector will be dependent on
the data dimension $d$ (which might not be the case for sparse data!), we hence obtain that
\begin{equation}
  \Exp{\norm{A}} = \Exp{\norm{K_n-K}} \leq O\left(\frac{d}{\sqrt{n}} + \frac{d}{n} \right) =  O\left(\frac{d}{\sqrt{n}}\right).
\end{equation}
Note that this is essentially $O(\sqrt{1/n})$ for $n$ the number of samples, assuming $n \gg d$,
and similarly for sparse data with sparsity $s = O(1)$ the norm will not be proportional to $d$.

We next need to relate this to the eigenvalues $\lambda_{min}$ and $\lambda_{max}$, for which we will need the following lemma.
\begin{lemma}
  \label{lem:inf_bound}
  For any two bounded functions $f,g$, it holds that
  \begin{equation}
      \lvert \inf_{x \in X} f(x) - \inf_{x\in X} g(x) \rvert \leq \sup_{x \in X} \lvert f(x)-g(x) \rvert
  \end{equation}
\end{lemma}
\begin{proof}
  First we show that $\abs{\sup_X f - \sup_X g} \leq \sup_X \abs{f-g}$.
  For this, take $$\sup_X (f \pm g) \leq \sup_X (f-g) + \sup_X g \leq \sup_X \abs{f-g} + \sup_X g,$$
  and $$\sup_X (g \pm f) \leq \sup_X (g-f) + \sup_X f \leq \sup_X \abs{g-g} + \sup_X f,$$
  and the result follows. Next we can proof the Lemma by replacing $f=-f$ and $g=-g$ and using that $\inf_X f = \sup_X(-f)$, the claim follows.
\end{proof}

Using that with high probability $\norm{K_n-K} \leq O\left(d/\sqrt{n} + d/n\right)$, and therefore
\begin{align}
  &\abs{\lambda_{max}(K_n)-\lambda_{max}(K)} \leq O \left(d/\sqrt{n} \right), \quad \text{and}\\
  &\abs{\lambda_{min}(K_n)-\lambda_{min}(K)} \leq O \left(d/\sqrt{n} \right),
\end{align}
which follows from Lemma~\ref{lem:inf_bound} by taking $f(x)= (x^HK_nx)/x^Hx$ and $g(x)= (x^HK x)/x^Hx$.
Therefore, ignoring the data dimension dependency, and recalling that the Kernel matrices
are positive semi-definite, we obtain that with high probability
\begin{align}
  \kappa(K_n) &= \frac{\lambda_{max}(K_n)}{\lambda_{min}(K_n)} \geq \frac{\norm{K_n}}{ \lambda_{min}(K) + O\left(1/\sqrt{n}\right)} \\
  &= \frac{\norm{K_n}}{ O(1/n) + O\left(1/\sqrt{n}\right)} = O(\norm{K_n}\sqrt{n}),
\end{align}
where we used the bounds on the smallest eigenvalue in Corollary~\ref{cor:convergence_evs}.
Therefore, in high probability the condition number of the problem is at least $O(\norm{K_n}\sqrt{n})$, and if we assume $\norm{K_n}=1$ this is $O(\sqrt{n})$ for any Kernel method.
\end{proof}

We hence can assume that an additional $\sqrt{n}$ dependency from $\kappa$ can appear as a
plausible case.
If we additionally take into account the other sources of errors, which we discussed before, typical QML algorithm result in runtimes which are significantly worse than the classical counterpart.
We hence learn from this analysis, that if we desire practically relevant QML algorithms with provable guarantees, we either need to reduce the condition number dependency (e.g., through preconditioning or filtering),
or apply the quantum algorithm only to data sets for which we can guarantee that $\kappa$ is small enough.
Since some quantum algorithms allow for spectral filtering methods, and therefore limit
the condition number by $\kappa_{eff}$, we will not include the condition number scaling
in the following analysis.
We leave it open to the reader to apply this scaling to algorithms which exhibit a
high $\kappa$-dependency.
Notably, it would furthermore require a more involved analysis of the error since such a
truncation immediately imposes an error, e.g., $\norm{A - F(A)}_2^2$ for Hermitian
$A=U\Sigma U^H$, and filter $F$, could result in an error $\norm{\Sigma - F(\Sigma)}_2^2$.
If the filter in the simplest form cuts of the eigenvalues below $\sigma_{eff}$ then the final
error would be $\sigma_{eff}$, and it would then be a matter of the propagation of this error
through the algorithm.

\section{Analysis of supervised QML algorithms}
\label{sec:qml_algo_analysis}

Our analysis is based entirely on the dependency of the
statistical guarantees of the estimator on the size of the data set.
We leverage on the above discussed impacts on the algorithmic error and the
measurement based error as well as the previously derived results in statistical learning theory.
In particular, by using that the accuracy parameter of a supervised learning problem
scales inverse polynomially with the number of samples which are used for
the training of the algorithm, we showed that the errors of quantum algorithms will
results in $\mathrm{poly}(n)$ scalings.
The runtimes for a range of quantum algorithms which we now derive based on these requirements
indicate that these algorithms can therefore not achieve exponential
speedups over their classical counterparts.

We note that this does not rule out exponential advantages for learning problems
where no efficient classical algorithms are known, as there exist learning problems
for which quantum algorithms have a superopolynomial advantage~\cite{grilo2017learning, kanade2018learning}.
One nice feature of our results is that they are independent of the model of access to the
training data, which means that the results hold even if
debated access such as quantum random access memory is used.
Finally, we note that our results do not assume any prior
knowledge on the function to be learned, which allows us to
make statements on virtually every possible learning algorithm, including neural networks.
Under stronger assumptions (e.g., more knowledge of the target function) the dependency
of the accuracy in terms of samples can be derived.

We summarise the results of our analysis in Table~\ref{fig:summary_algos}.
We omit the $\kappa$ dependency which would generally decrease the performance
of the quantum algorithms further.
Notably, while this is classically not an issue due to preconditioning, no
efficient general quantum preconditioning algorithm exists.
We note that \cite{clader2013preconditioned} introduced a SPAI preconditioner, however without providing
an efficient quantum implementation for its construction and without any performance analysis.
We additionally note that recently~\cite{tong2020fast} proposed a different mechanism for constructing efficient preconditioners, called fast inversion.
The main idea is based on the fact that fast inversion of 1-sparse matrices can
be done efficiently on a quantum computer. The algorithm works for matrices of the
form $A+B$, where $\norm{B} = \norm{A^{-1}} = \norm{(A+B)^{-1}}= O(1)$, and $A$
can be inverted fast. It results
in a condition number of the QLSA of $\kappa(M(A+B))=\kappa(I+A^{-1}B)$ once
the preconditioner $M=A^{-1}$ is applied.

\begin{figure}
\begin{center}
    \begin{tabular}{| l | l | l | l |  }
    \hline
    \rowcolor{Gray}
     & Algorithm & Train time & Test time \\
    \hline
    Classical & LS-SVM / KRR & $n^3$ & $n$ \\
    & KRR~\cite{yang2017randomised,ma2017diving,gonen2016solving,avron2017faster,fasshauer2012stable} & $n^2$ & $n$ \\
     & Divide and conquer~\cite{zhang2013divide} & $n^2$ & $n$\\
    & Nystr\"om~\cite{williams2001using,rudi2015less} & $n^2$ & $\sqrt{n}$\\
    & FALKON \cite{rudi2017falkon} & $n \sqrt{n}$ & $\sqrt{n}$ \\
     \hline
    Quantum & QKLS / QKLR \cite{CGJ19}& $ \sqrt{n}$ & $ n \sqrt{n}$ \\
    & QSVM \cite{rebentrost2014quantum} & $n \sqrt{n}$ & $n^2 \sqrt{n}$ \\
    \hline
    \end{tabular}
\end{center}
\caption{Summary of time complexities for training and testing of different classical and quantum algorithms when statistical guarantees are taken into account. We omit $\polylog(n,d)$ dependencies for the quantum algorithms.
We assume $\epsilon = \Theta(1/\sqrt{n})$ and count the effects of measurement errors.
The acronyms in the table refer to: least square support vector machines (LS-SVM), kernel ridge regression (KRR), quantum kernel least squares (QKLS), quantum kernel linear regression (QKLR), and quantum support vector machines (QSVM).
Note that for quantum algorithms the state obtained after training cannot be maintained or copied and the algorithm must be retrained after each test round.
This brings a factor proportional to the train time in the test time of quantum algorithms.
Because the condition number may also depend on $n$ and for quantum algorithms
this dependency may be worse, the overall scaling of the quantum algorithms may be slower than the classical.}
\label{fig:summary_algos}
\end{figure}

\section{Conclusion}

Quantum machine learning algorithms promise to be exponentially faster than their classical counter parts.
In this chapter, we showed by relying on standard results from statistical learning theory that such
claims are not well founded, and thereby rule out QML algorithms with
polylogarithmic time complexity in the input dimensions.
As any practically-used machine learning algorithms have polynomial runtimes,
our results effectively rule out the possibility of exponential advantages for
supervised quantum machine learning.
Although this holds for polynomial runtime classical algorithms, we note that our analysis
does not rule out an exponential advantage over classical algorithms with superpolynomial runtime.
Furthermore, since we do not make any assumptions on the hypothesis space $\Hyp$ of the learning
problem, we note that generally faster error rates are possible if more prior knowledge exists.
It is hence possible to obtain faster convergence rates than $1/\sqrt{n}$, which would imply
a potential quantum advantage for such problems.

As future directions, it is worth mentioning that it may be possible strengthen our results by analysing the $n$ dependency of the condition number. Previous results in this direction are discussed in~\cite{cucker2002mathematical,hochstadt2011integral}.

\chapter{randomised Numerical Linear Algebra}
\label{chap:randNLA}

The research question we answer in this chapter is the following:

\begin{researchquestion}[QML under the lens of RandNLA]
\label{rq:randNLA1}
Under the assumption of efficient sampling processes for the data for both classical and quantum algorithms,
what is the comparative advantage of the latter?
\end{researchquestion}

We will show how the requirement of a fast memory model can hide much of
the computational power of an algorithm.
In particular, by allowing a classical algorithm to sample according
to a certain probability distribution,
we can derive classical algorithms with computational complexities which
are independent of the input dimensions, and therefore only polynomially slower
compared to the best known quantum algorithm for the same task.
We use such Monte Carlo algorithms to construct
fast algorithms for Hamiltonian simulation, and also connect our research to
the recent so-called \textit{quantum-inspired} or \textit{dequatisation} results~\cite{tang2018quantum}.

We start by defining a range of memory models, which are used in quantum algorithms, randomised
numerical linear algebra, and quantum-inspired algorithms.
We continue by introducing the main ideas of Monte Carlo methods for numerical linear algebra,
and as an exemplary case study the randomised matrix multiplication algorithm to
grasp the main ideas of such approximate methods.
For a detailed introduction and overview, we refer the reader to the review of David Woodruff~\cite{woodruff2014sketching}.

We then show that such a memory structure immediately leads to faster simulation algorithms
for dense Hamiltonians in Sec.~\ref{ssec:dense_ham_sim}.

In the next step, we then use these ideas in Sec.~\ref{ssec:ham_sim_nystrom}
to construct symmetric matrices which
are approximations for the Hamiltonian and use these to perform fast
classical Hamiltonian simulation.
We do so by first finding a randomised low-rank approximation $H_k$
of the Hamiltonian $H$, and then applying a form of approximation to the
series expansion of the time evolution operator $\exp(-i H_k t)$.

We thereby show that the ability to sample efficiently from such matrices
immediately allows for faster classical algorithms as well.

Finally we briefly discuss how our results relate to the recent stream
of \textit{dequantisation} results, and show that indeed we can achieve
exponentially faster algorithms if we make similar assumptions.

\section{Introduction}
\label{sec:intro_randNLA}

In general, there exist two different approaches to performing approximate
numerical linear algebra operations, and a closer inspection shows
that there are many parallels between them.
The first stream is based on random sampling, also called sub-sampling, while the second stream is based on so-called random projections.
Random projections are themselves based on the John-Lindenstrauss transformation,
which allow us to embed vectors in a lower dimensional subspace while
preserving certain distance metrics.
Roughly speaking, random projections correspond to uniform sampling in randomly
rotated spaces.
The main differences between different approaches are the resulting error bounds,
where the best known ones are multiplicative relative-error bounds.

Taking as example the problem of a randomised low-rank approximation,
based on the randomised projector $\hat{P}_{k}$, which projects into some rank-$k$
subspace of a matrix, and denoting with $P_{k}$ the projector into the subspace
with containing the best rank-$k$ approximation of $A$ (e.g., the left eigenspace
of the top $k$ singular values of $A$), then we define additive-error bounds to be
of the form
\[
\norm{A - \hat{P}_{k}A}_F \leq \norm{A- P_{k}A}_F + \epsilon \norm{A}_F,
\]
since we have an additional error term to the best rank-$k$ approximation of $A$.
These bounds are typically weaker than the so-called
\textit{multiplicative-error bounds}, which take the form
\[
\norm{A - \hat{P}_{k}A} \leq  f(m,n,k,\ldots) \norm{A-P_{k}A},
\]
where $f(\cdot)$ denotes a function depending on the dimensions $m$ and $n$ of
the matrix $A \in \mathbb{R}^{m \times n}$, the rank, or other parameters.
The best known bounds on $f$ are independent of $m$ or $n$, and are referred to
as \textit{constant-factor bounds}.
The strongest of these bounds are given for $f=1+\epsilon$, for some error parameter
$\epsilon$, i.e.,
\[
\norm{A - \hat{P}_{k}A} \leq  (1+\epsilon) \norm{A-P_{k}A}.
\]

\section{Memory models and memory access}
\label{sec:memory_models}

In computer science, many different models for data access and storage are used.
Here we briefly recapitulate a number of memory models, which are common in
randomised numerical linear algebra, quantum algorithms, and quantum inspired (or dequantised) algorithms.
We will introduce the different memory structures, and define their properties.
Next, we will introduce the basic ideas behind numerical linear algebra on the example of matrix multiplication.

\subsection{The pass efficient model}
\label{ssec:pass_efficient_model}

Traditionally, in randomised numerical linear algebra, the so-called pass-efficient model was used to describe memory access.
In this model, the only access an algorithm has to the data is via a so-called pass,
which is a sequential read of the entire input data set.
An algorithm is then called pass-efficient if it uses only a small or a
constant number of passes over the data,
while it can additionally use RAM space and additional computation sublinear
with respect to the data stream to compute the solution.

The data storage can take several forms, as for example one could store only the index-data pairs as a sparse-unordered representations of the data, i.e., $((i,j),A_{ij})$
for all non-zero entries of the matrix $A$.
In practice, these types of storage are for example implemented in the Intel MKL
compressed sparse row (CSR) format, which is specified by four arrays (values,
columns, pointer 1, and pointer 2).

Stronger results, however, can sometimes be obtained in different input models,
yet these memory models are generally inaccessible and hence should be used
with care.

In order to efficiently sample from the data set using this memory model,
we need to be able to select random samples in a pass-efficient manner.
In order to do so, we can rely on the so-called SELECT algorithm, which is presented below.

\begin{algorithm}
\caption{The SELECT Algorithm \cite{drineas2006fast}}
\label{select_algo}
\begin{flushleft}
        \textbf{INPUT:}  $\{a_1,\ldots,a_n\}$, $a_i \geq 0$, i.e., one sequential read over the data.\\
        \textbf{OUTPUT:}   $i^*,a_{i^*}$.
\end{flushleft}
\begin{algorithmic}[1]
\STATE $D=0$
\FOR{\texttt{i=1 to n}}
    \STATE $D = D+a_i$
    \STATE With probability $a_i/D$, let $i^* = i$ and let $a_{i^*} = a_i$.
\ENDFOR
\RETURN $i^*,a_{i^*}$.
\end{algorithmic}
\end{algorithm}

The SELECT algorithm has then the following properties.
\begin{lemma}[\cite{mahoney2016lecture}]
  Suppose that $\{a_1, \ldots, a_n\}, a_i \geq 0$, are read in one pass, then the SELECT algorithm returns the index $i^*$ with probability $\mathbb{P}[i^*=i] = a_i/\sum_{j=1}^n a_{j}$,
  and requires $O(1)$ additional storage space.
\end{lemma}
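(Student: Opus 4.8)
The plan is to compute, for each fixed index $i$, the probability that the variable $i^*$ holds the value $i$ when the loop terminates, and then read off the storage cost directly from the pseudocode. First I would introduce the partial sums $D_k := \sum_{j=1}^k a_j$, so that on iteration $k$ the variable $D$ equals $D_k$ and the algorithm performs the assignment $i^* \leftarrow k$ with probability $a_k/D_k$, using a fresh coin flip independent of all other iterations. For the terminal value of $i^*$ to equal $i$, exactly two things must happen: the assignment $i^* \leftarrow i$ occurs on iteration $i$, and no later iteration $k \in \{i+1,\dots,n\}$ overwrites it.

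Next I would compute the two contributions. The assignment at step $i$ happens with probability $a_i/D_i$. The event that step $k$ does \emph{not} overwrite $i^*$ has probability $1 - a_k/D_k = D_{k-1}/D_k$. Since the per-iteration Bernoulli trials are independent, these probabilities multiply, giving
\[
\mathbb{P}[i^* = i] \;=\; \frac{a_i}{D_i}\prod_{k=i+1}^{n}\frac{D_{k-1}}{D_k}.
\]
The product telescopes to $D_i/D_n$, so $\mathbb{P}[i^* = i] = a_i/D_n = a_i/\sum_{j=1}^n a_j$, which is the claimed distribution. The degenerate cases are harmless: if $a_i = 0$ the index is never selected, consistent with the formula; one assumes $D_n > 0$ so that at least one $a_j$ is positive; and for $i = n$ the product is empty and equals $1$, recovering $\mathbb{P}[i^* = n] = a_n/D_n$ directly.

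For the space bound I would simply observe that the algorithm maintains only the running scalar $D$, the current candidate index $i^*$, its value $a_{i^*}$, and the loop counter — a constant number of words — while touching each input element exactly once in sequence; hence it runs in a single pass using $O(1)$ additional storage.

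The argument has no real obstacle; the only points that merit a word of care are the independence of the per-iteration coin flips (which is what licenses multiplying the probabilities of "assigned at step $i$" and "not overwritten at step $k$" for each $k > i$) and the telescoping identity $\prod_{k=i+1}^n D_{k-1}/D_k = D_i/D_n$, including its empty-product boundary case at $i = n$.
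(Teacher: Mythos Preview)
Your argument is correct and essentially the same as the paper's: both hinge on the fact that step $k$ fails to overwrite with probability $D_{k-1}/D_k$ and then combine these factors, with the paper packaging this as an induction on the number of elements read while you write out the full product and telescope it directly. The storage analysis is identical.
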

\begin{proof}
  The proof is by induction. For the base case we have that the first element $a_1$ with $i^*=1$ is selected with probability $a_1/a_1=1$.
  The induction step is then performed by letting $D_k = \sum_{j=1}^k a_j$, i.e., the first $k$ elements have been read, and the algorithm reads the element $k+1$.
  Hence the probability to have selected any prior $i^*=i$ is $\mathbb{P}[i^*=i]= a_i/D_k$. Then the algorithm selects the index $i^*= k+1$ with probability $a_{k+1}/D_{k+1}$,
  and retains the previous selection otherwise, which is done for $i \in [k]$ with probability $\mathbb{P}[i^*=i] = \mathbb{P}(\text{Entry was selected prior})*\mathbb{P}(\text{Current entry was not selected})$.
  Since by the induction hypothesis it holds that the any entry prior was selected with $\mathbb{P}[i^*=i] = \frac{a_i}{D_k}$ and the probability that the new entry was not selected is $\left(1-\frac{a_{k+1}}{D_{k+1}}\right)$,
  hence we have $\mathbb{P}[i^*=i] =  \frac{a_i}{D_k}\left(1-\frac{a_{k+1}}{D_{k+1}} \right)= \frac{a_i}{D_{k+1}}$. By induction this result holds for all $i$ and hence $l+1=n$.
  The storage space is limited to the space for keeping track of the sum, which is $O(1)$, and hence concludes the proof.
\end{proof}

It is important to note that we can therefore use the SELECT algorithm to perform
importance sampling according to distributions over the rows $A_{i}$ or columns
$A^{i}$ of a matrix $A$, as for example if we want to sample according to
the probabilities
\[
\mathbb{P}[i^*=i] = \norm{A_i}_2^2/\norm{A}_F^2.
\]
We will use this sampling scheme to demonstrate how we can already obtain very
fast algorithms in randomised linear algebra.
In particular, we will use it to derive one of the most fundamental
randomised algorithms in Section~\ref{sec:basic_mm}: The basic matrix multiplication algorithm.

\subsection{Quantum random access memory}
\label{ssec:qram}

A classical random-access memory (RAM) is a device that stores the content of a memory location in a
memory array. A random-access memory importantly allows data items to be read or written in almost the
same amount of time irrespective of the physical location of data inside the memory.
Practically this is due to a binary tree which allows the bus to traverse the memory consisting
of $N$ elements in $\log(N)$ computational steps by traversing the tree.
Quantum random access memory (qRAM) in a similar fashion allows us to access and load data
in superposition from all the memory sites.
qRAMs with $n$-bit addresses can therefore access $2^n$ memory sites and hence
require $\Ord{2^n}$ two-bit logic gates.
Note that we, with abuse of previous notation, define therefore the dimensions
of a Hermitian matrix in this chapter with $N$, unlike in previous chapters.
We do this to comply with the standard notation in the quantum computing literature.
Using, e.g.\ the so-called \textit{bucket brigade} architecture the number of
two-qubit physical interactions during each qRAM call can then in principle reduced to
$\Ord{n}$~\cite{giovannetti2008qram1,giovannetti2008qram2},
which is hence polylogarithmic in the input dimension,
assuming a data array of size $N=2^n$.

A variety of different qRAM architectures have been proposed, and we will here mainly focus on the specific architecture introduced by Kerenidis and Prakash~\cite{kerenidis2016quantum}.
We will use this algorithm later on in context of Hamiltonian simulation for
dense Hamiltonians on a quantum computer.
In the next subsection, we will also show that such a memory model has a powerful classical
correspondence which can be used to construct much faster randomised algorithms for numerical
linear algebra.
In the following, we introduce an adaptation of the architecture proposed
in~\cite{kerenidis2016quantum}, which is suitable for the application that we
will mainly study in this chapter, namely Hamiltonian simulation.
Notably, while most quantum memory structures store the squares of the
entries (e.g., of the matrix or data in general), the here presented one
stores the absolute values only.

\begin{definition}[Quantum Data Structure]
  \label{def:datastructure}
  Let $H\in\mathbb{C}^{N\times N}$ be a Hermitian matrix (where $N = 2^n$), $\pnorm{1}{H}$ being the maximum absolute row-sum norm, and $\sigma_j:= \sum_k \lvert H_{jk} \rvert$. Each entry $H_{jk}$ is represented with $b$ bits of precision. Define $D$ as an array of $N$ binary trees $D_j$ for $j \in \{0, \ldots, N-1\}$. Each $D_j$ corresponds to the row $H_j$, and its organization is specified by the following rules.
  \begin{enumerate}
    \item The leaf node $k$ of the tree $D_j$ stores the value\footnote{Note that the conjugation here is necessary. See Eq.~\eqref{eq:Hij}.} ${H_{jk}^{*}}$ corresponding to the index-entry pair $(j,k,H_{jk})$.
	\item For the level immediately above the bottom level, i.e., the leaves, and any node level above the leaves, the data stored is determined as follows: suppose the node has two children storing data $a$, and $b$ respectively (note that $a$ and $b$ are complex numbers). Then the entry that is stored in this node is given by $(|a|+|b|)$.
  \end{enumerate}
\end{definition}
We show an example of the above data structure in Fig.~\ref{fig:bt-4}.
As we immediately see from the Definition~\ref{def:datastructure}, while
each binary tree $D_j$ in the data structure contains a real number in each internal
(non-leaf) node, the leaf nodes store a complex number.
The root node of each tree $D_j$ then store the absolute column norm, i.e., the value
$\sum_{k=0}^{N-1}|H_{jk}^{*}|$, and we can therefore calculate the value
$\pnorm{1}{H}-\sigma_j$ in constant time.
Furthermore, the $\pnorm{1}{H}$ can be obtained as the maximum among all the roots of the
binary trees, which can be done during the construction of the data structure, or through
another binary search through the tree structure which ends in the $D_j$'s.

\begin{figure}[ht]
    \centering
    \includegraphics[width=0.8\textwidth]{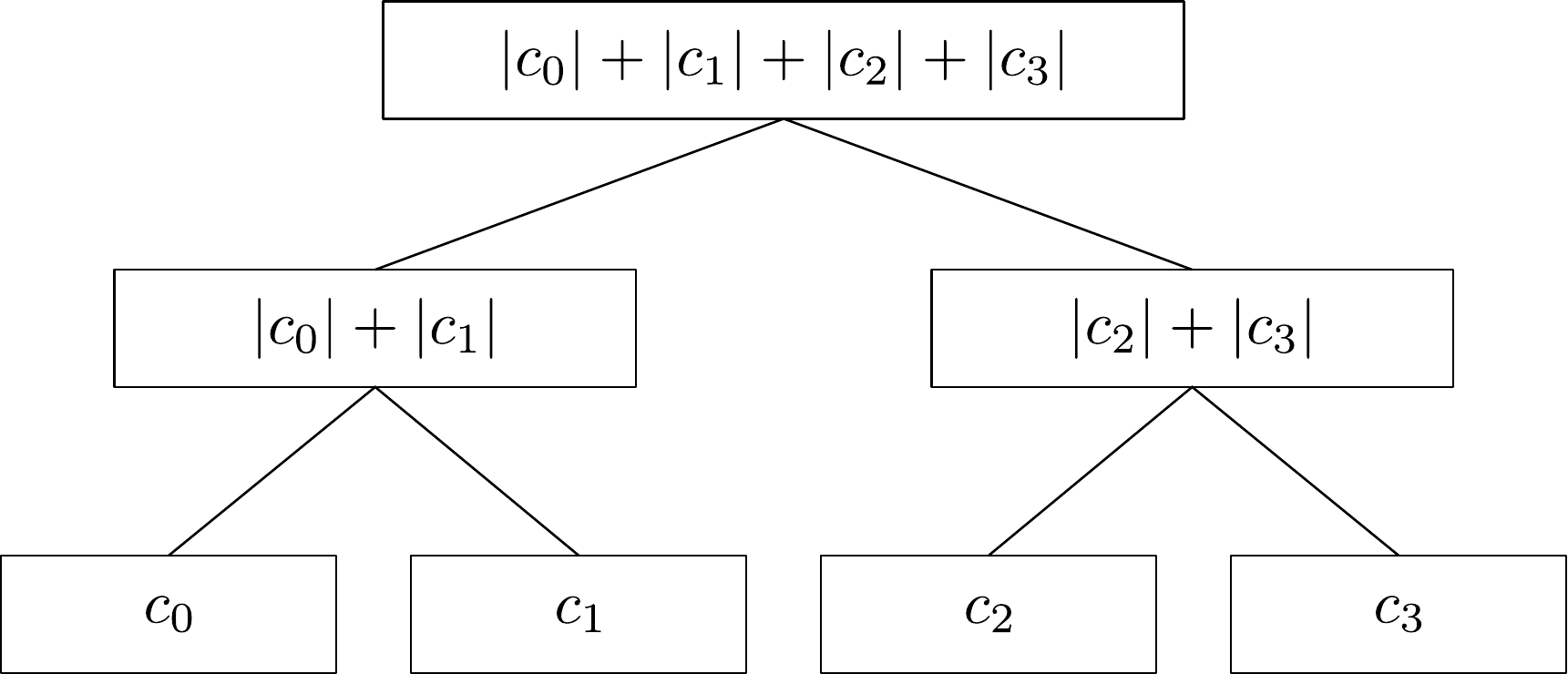}
	\caption{An example of the data structure that allows for efficient state preparation using a logarithmic number of conditional rotations.}
    \label{fig:bt-4}
\end{figure}

With this data structure, we can efficiently perform a state preparation which we will
require in the Hamiltonian simulation algorithm in Sec.~\ref{ssec:dense_ham_sim}.
Notably, as we will see in the next subsection,
such a fast quantum memory has a classical equivalent which allows us to sample
very efficiently from the data.

There are several challenges with such a memory architecture, including a large overhead
from the controlled operations, and possible challenges from decoherence.
For a discussion of some caveats of such qRAM architectures, we point the reader
to the reviews~\cite{adcock2015advances} or \cite{ciliberto2018quantum}.
An important notion that we want to mention here is the separation between quantum algorithms
which use quantum data (i.e.\ data which is accessible in superposition) and ones which
are operating on classical data which needs to be accessed efficiently to allow
polylogarithmic runtimes.
In general most quantum algorithms are discussed in the so-called query model, i.e.,
the computational complexity is given in terms of numbers of queries of an oracle.
The actual oracle could then be another quantum algorithm, or a memory model such as qRAM.
The computational time is then given by the query complexity times the time for each query
call, which could for example be $T_{query} \times \log(N)$ if we call the qRAM $T_{query}$
times.
If a quantum algorithm requires qRAM or a similar solution, then we need to
understand the associated limitations and assumptions and assess these with care.
Particularly, the superiority w.r.t.\ classical algorithms is here in question when
comparing to classical PRAM machines \cite{steiger2016perimeter}.
Results that consider error probabilities in certain quantum RAM architectures further
indicate that a fully-error-corrected quantum RAM is necessary to maintain speedups, which
might result in huge overheads and questionable advantages~\cite{regev2008impossibility,arunachalam2015robustness}.

\subsection{Quantum inspired memory structures}
\label{ssec:qi_memory}
Recent quantum inspired algorithms rely on a data structure
which is very similar to the qRAM used in many QML algorithms.
While QML algorithms rely on qRAM or 'quantum access' to data
in order to allow for state preparation with linear gate count but
polylogarithmic depth, quantum inspired algorithms achieve their
polylogarithmic complexities through \textit{sampling and query access} of the data via a
\textit{dynamic data structure}~\cite{chia2020sampling}.

The \textit{Sampling and query access} model can be
thought of as a classical analogue of the above introduced
qRAM model for state preparation.
The ability to prepare a state $\ket{v}$ which is proportional
to some input vector $v \in \mathbb{C}^{N}$ (such as a column of $H \in \mathbb{C}^{N \times M}$)
from memory is equivalent to the ability make the following queries:
\begin{enumerate}
    \item Given an index $i \in [N]$, output the corresponding entry $v_i$ of the vector $v$.
    \item Sample the index $j \in [N]$ with probability $\lvert v_j \rvert^2/\norm{v}_2^2$.
    \item Output the spectral norm of the vector $\norm{v}_2$.
\end{enumerate}
For a matrix $H \in \mathbb{C}^{M \times N}$ this extends to the ability
to perform the following queries:
\begin{enumerate}
    \item Sample and query access for each vector $H_i$, i.e., for all rows $i \in [N]$ we
    can output each entry, sample indices with probability proportional to the magnitude of the entry,
    and output its spectral norm.
    \item Sample and query access to the vector $h$ with $h_i = \norm{H_i}$, i.e., the vector of row norms.
\end{enumerate}
Notably, if the input data is given in a classical form,
classical algorithms can be run efficiently in the sampling and query model
whenever the corresponding quantum algorithms require qRAM access to the data,
and both all state preparations can be performed efficiently
(i.e., be performed in logarithmic time in the input dimensions).

The classical data structure which enables such sampling and query access
to the data is similar to the one described in Fig.~\ref{fig:bt-4}.
It stores a matrix $H \in \mathbb{C}^{M \times N}$, again in form of
a set of binary trees, where each tree contains the absolute values
(or to be precise, the square of the absolute values) of the entries
of one row or column of the input matrix $H$.
The (time) cost of a query to any entry in $H$ is then $O(1)$ and
sampling can be performed in time $O(\log(MN))$ for any entry $H \in \mathbb{C}^{M \times N}$.
We summarise this data structure below in Definition~\ref{def:classical_datastructure}.

\begin{definition}[Classical Data Structure]
  \label{def:classical_datastructure}
  Let $H\in\mathbb{C}^{N\times M}$ be a Hermitian matrix (where $N = 2^n$), $\pnorm{F}{H}$ being the Frobenius norm,
  and $h_i=\pnorm{2}{H_i}$ being the spectral norm of row $i$ of $H$.
  Each entry $H_{jk}$ is represented with $b$ bits of precision.
  Define $D$ as an array of $N$ binary trees $D_j$ for $j \in \{0, \ldots, N-1\}$.
  Each $D_j$ corresponds to the row $H_j$, and its organization is specified by the following rules.
  \begin{enumerate}
    \item The leaf node $k$ of the tree $D_j$ stores the value $\frac{H_{jk}}{\abs{H_{jk}}}$ corresponding to the index-entry pair $(j,k,H_{jk})$.
	\item For the level immediately above the bottom level, i.e., the leaves, the entry $k$ of the tree $D_j$ is given by $\abs{H_{jk}}^2$
	\item For any node level above the leaves, the data stored is determined as follows: suppose the node has two children storing data $a$, and $b$ respectively (note that $a$ and $b$ are squares values of complex numbers). Then the entry that is stored in this node is given by $(a+b)$.
  \end{enumerate}
  The root nodes of the binary tree $D_j$ is then given by $h_j$, and the memory structure is completed by
  applying the same tree structure where now the leaves $j$ of the tree are now given by the root nodes of $D_j$.
\end{definition}
Figure~\ref{fig:cl-mem-st} demonstrates an example of this structure.

\begin{figure}[ht]
    \centering
    \includegraphics[width=0.8\textwidth]{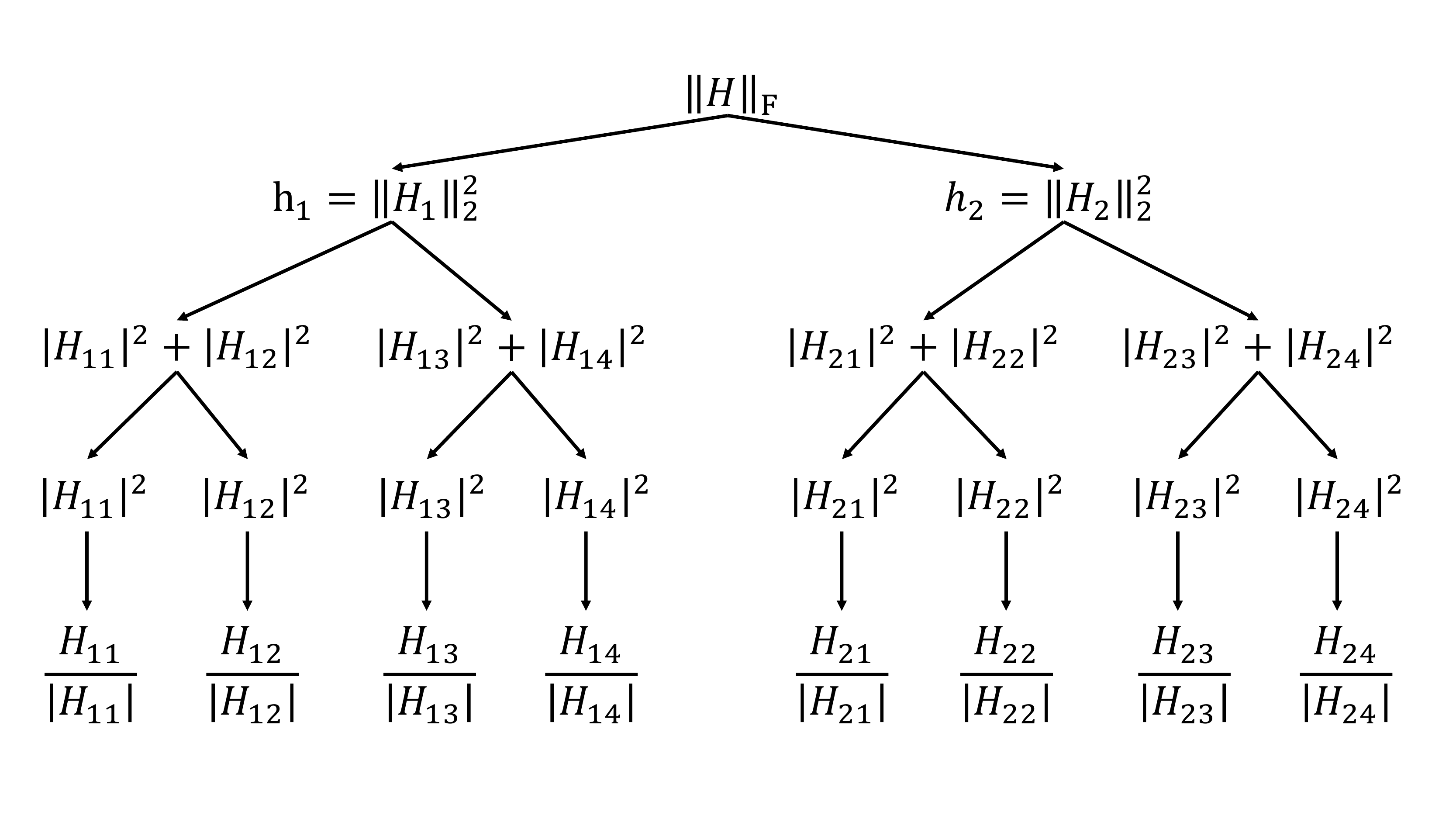}
	\caption{An example of the classical (dynamic) data structure that enables efficient
	sample and query access for the example of $H \in \mathbb{C}^{2 \times 4}$.}
    \label{fig:cl-mem-st}
\end{figure}

Although we do not explicitly rely on such a data structure for the results of section~\ref{ssec:ham_sim_nystrom},
since we use a less restrictive requirement which we call row-computability and row-searchability.
However, the data structure in Def.~\ref{def:classical_datastructure} immediately allows us to perform these
operations.
We will in this thesis just refer to this requirement as \textit{query and sample access},
since it has been established as the commonly used term in the dequantisation literature and
suffices in practice. There are generally exceptions, for example, since row-computability and row-searchability can also be achieved through structural properties of the matrix which would not require the memory structure.

\section{Basic matrix multiplication}
\label{sec:basic_mm}
In this section we will introduce some of the main concepts from randomised numerical linear algebra
on the example of the randomised matrix multiplication algorithm.

We will denote the $i$-th column of some matrix $A$ by $A^i$ and the $j$-th row of $A$ by $A_j$.
Let in the following $A \in \mathbb{R}^{m \times n}$ and $B \in \mathbb{R}^{n \times p}$.
With this notation, recall that $(AB)_{ij} = \sum_{j=1}^n A_{ij} B_{jk} = A_i B^j$, and that we
can write the product $AB= \sum_{i=1}^n A^i B_i$ via the sum of outer products, i.e., a sum
of rank-$1$ matrices, where $A^i B_i \in \mathbb R^{m \times p}$.
The representation of the product $AB$ into the sum of outer products implies that
$AB$ could be decomposed into a sum of random variables, which, if appropriately chosen,
would in expectation result in the product $AB$.
This suggests that we could sample such terms to approximate the product,
specifically we could use an approximation of the form
\begin{equation}
  \label{eq:matrix_multiplication}
  AB = \sum_{i=1}^n A^i B_i \approx \frac{1}{c} \sum_{t=1}^c \frac{A^{i_t} B_{i_t}}{p_{i_t}} = CR,
\end{equation}
where $\{p_i\}_{i=1}^n$ are the sampling probabilities.
We could do this via random uniform sampling, but this would lead to a very high variance.
Hence we will further optimally like to optimise the sampling probabilities to obtain a good result.\\

In general, instead of working with probabilities, it is easier to work with matrices, and we can
define a standardized matrix notation called \textit{sampling matrix formalism} according to~\cite{mahoney2016lecture}.
Notably, we can treat many other approaches in the framework of matrix-multiplication, and will therefore also
rely on it in the following.
For this we let $S \in \mathbb{R}^{n \times c}$ be a matrix such
that
\begin{equation}
  S_{ij} := \left\lbrace \begin{array}{c c}
  1 & \text{ if the } i \text{-th column of }A \text{ is chosen in the }j\text{-th independent trial}\\
  0 & \text{ otherwise,}\\
\end{array} \right.
\end{equation}
and let $D \in \mathbb{R}^{c \times c}$ be a diagonal matrix such that
\begin{equation}
  D_{tt} = 1/\sqrt{c p_{i_t}}.
\end{equation}
Using this, we can write the output of the sampling process described
in Eq.~\ref{eq:matrix_multiplication} via the simplified notation
\begin{equation}
  CR = ASD(SD)^TB \approx AB.
\end{equation}
The matrix multiplication algorithm is given below in Algorithm~\ref{algo:basic_matrix_multiplication},
and below we will next give a proof of its correctness.

\begin{algorithm}
\caption{The BASIC-MATRIX-MULTIPLICATION Algorithm \cite{drineas2006fast}}
\begin{flushleft}
        \textbf{INPUT:} $A \in \mathbb{R}^{m \times n}$, $B \in \mathbb{R}^{n \times p}$, integer $c>0$ and sampling probabilities $\{p_i\}_{i=1}^n$.\\
        \textbf{OUTPUT:} $C$ and $R$ s.t. $CR \approx AB$.
\end{flushleft}
\begin{algorithmic}[1]
  \label{algo:basic_matrix_multiplication}
\STATE $C,R$ as all-zeros matrices.
\FOR{\texttt{i=1 to c}}
        \STATE Sample an index $i_t \in \{ 1, \ldots, n\}$ w. prob. $\mathbb{P}[i_t =k]=p_k$, i.i.d with replacement.
        \STATE Set $C^t = A^{i_t}/\sqrt{c p_{i_t}}$ and $R_t = B_{i_t}/\sqrt{c p_{i_t}}$.
      \ENDFOR
\RETURN $C$ and $R$.
\end{algorithmic}
\end{algorithm}

In the following analysis we will use the fact that for rank-$1$ matrices it holds
that $\norm{A^i B_i}_2 = \norm{A^i}_2 \norm{B_i}_2$.
In order to see why this is the case observe that
\[
\norm{A^i B_i}_2  = \sqrt{(A^iB_i)^T(A^iB_i)} = \sqrt{B_i^T(A^i)^T A^iB_i},
\]
but this is simply the inner product and hence
\[
\sqrt{\sigma_{max} (\norm{B_i}_2^2 \norm{A^i}_2^2)} = \sqrt{\norm{B_i}_2^2 \norm{A^i}_2^2}= \norm{B_i}_2 \norm{A^i}_2.
\]

We now prove a Lemma that states that $CR$ is an unbiased estimator for $AB$, element-wise,
and calculate the variance of that estimator. This variance is strongly dependent on the sampling probabilities.
Based on this, we can then derive optimal sampling probabilities which minimise the variance.
In practice these will not be accessible and one typically needs to find approximations for these.
In practice, we therefore use other distributions, such as the row or column norms, or so-called leverage scores~\cite{mahoney2016lecture,woodruff2014sketching}.

\begin{lemma}[\cite{drineas2006fast}]
  \label{lem:CR_exp}
  Given two matrices $A \in \mathbb{R}^{m \times n}$ and $B \in \mathbb{R}^{n \times p}$, construct matrices $C$ and $R$ with the matrix multiplication algorithm from above.
  Then,
  \begin{align}
  & \mathrm{E}{[(CR)_{ij}]} = (AB)_{ij}, \\
  & \mathrm{Var}[(CR)_{ij}] = \frac{1}{c} \sum_{k=1}^n \frac{A_{ik}^2 B_{kj}^2}{p_k} - \frac{1}{c} (AB)_{ij}^2
  \end{align}
\end{lemma}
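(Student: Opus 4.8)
The plan is to isolate a single sampling trial as a scalar random variable, compute its first two moments against the sampling distribution $\{p_k\}$, and then use independence of the $c$ trials to assemble the mean and variance of $(CR)_{ij}$.

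First I would write out the $(i,j)$ entry of the estimator explicitly. Since $C\in\mathbb{R}^{m\times c}$ and $R\in\mathbb{R}^{c\times p}$, the matrix product collapses to a sum over trials:
\begin{equation}
 (CR)_{ij} = \sum_{t=1}^c C_{it} R_{tj} = \sum_{t=1}^c \frac{A_{i\,i_t}\, B_{i_t\,j}}{c\, p_{i_t}} =: \sum_{t=1}^c X_t,
\end{equation}
where I set $X_t := A_{i\,i_t} B_{i_t\,j}/(c\,p_{i_t})$ and recall that the indices $i_1,\dots,i_c$ are i.i.d.\ with $\mathbb{P}[i_t=k]=p_k$. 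In particular the $X_t$ are i.i.d.\ copies of a single variable $X$, so it suffices to analyse $X$.

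Next I would compute $\mathbb{E}[X]$ by averaging over the index distribution: the factor $p_k$ in the probability cancels the $p_k$ in the denominator, giving $\mathbb{E}[X] = \frac{1}{c}\sum_{k=1}^n A_{ik} B_{kj} = \frac{1}{c}(AB)_{ij}$, and hence by linearity $\mathbb{E}[(CR)_{ij}] = c\cdot\frac{1}{c}(AB)_{ij} = (AB)_{ij}$, which is the unbiasedness claim. For the second moment, $\mathbb{E}[X^2] = \sum_{k=1}^n p_k \frac{A_{ik}^2 B_{kj}^2}{c^2 p_k^2} = \frac{1}{c^2}\sum_{k=1}^n \frac{A_{ik}^2 B_{kj}^2}{p_k}$, so $\mathrm{Var}[X] = \mathbb{E}[X^2] - (\mathbb{E}[X])^2 = \frac{1}{c^2}\left(\sum_{k=1}^n \frac{A_{ik}^2 B_{kj}^2}{p_k} - (AB)_{ij}^2\right)$.

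Finally, because the trials are independent, variance is additive over the $c$ terms: $\mathrm{Var}[(CR)_{ij}] = c\,\mathrm{Var}[X] = \frac{1}{c}\sum_{k=1}^n \frac{A_{ik}^2 B_{kj}^2}{p_k} - \frac{1}{c}(AB)_{ij}^2$, as claimed. There is no real obstacle here; the only points requiring a moment's care are observing that the matrix product reduces to a single sum over trials (so that each summand depends on exactly one sampled index) and invoking independence rather than merely pairwise-uncorrelatedness to get variance additivity — both of which follow directly from the sampling scheme in Algorithm~\ref{algo:basic_matrix_multiplication}.
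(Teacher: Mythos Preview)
Your proof is correct and essentially identical to the paper's: both fix $(i,j)$, define the per-trial random variable $X_t = A_{i\,i_t}B_{i_t\,j}/(c\,p_{i_t})$, compute $\mathbb{E}[X_t]$ and $\mathbb{E}[X_t^2]$ by summing against $\{p_k\}$, and then use linearity of expectation and additivity of variance under independence to assemble the result.
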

\begin{proof}
  Fix an index-pair $(i,j)$ and define the random variable $X_t = \left( \frac{A^{i_t} B_{i_t}}{c p_{i_t}} \right)_{ij} = \left( \frac{A_{ii_t} B_{i_tj}}{c p_{i_t}} \right)$,
  and observe that $(CR)_{ij}=\sum_{t=1}^c X_t$. Then we have that $\mathrm{E}[X_t] = \sum_{k=1}^n p_k \frac{A_{ik} B_{kj}}{c p_k} = \frac{1}{c} (AB)_{ij}$,
  and $\mathrm{E}[X_t^2] = \sum_{k=1}^n \frac{A_{ik}^2 B_{kj}^2}{c^2 p_k}$. Furthermore we just need to sum up in order to obtain $\mathrm{E}[(CR)_{ij}]$,
  and hence obtain \[\mathrm{E}[(CR)_{ij}] = \sum_{t=1}^c \mathrm{E}[X_t] = (AB)_{ij},\] and $\mathrm{Var}[(CR)_{ij}] = \sum_{t=1}^c \mathrm{Var}{[X_t]}$, where
  we can determine the variance of $X_t$ easily from $\mathrm{Var}{[X_t]} = \mathrm{E}[X_t^2] - \mathrm{E}[X_t]^2$, and the lemma follows.
\end{proof}

We can use this directly to establish the expected error in terms of the Frobenius norm, i.e., $\mathrm{E}[\norm{AB - CR}_F^2]$,
by observing that we can treat this as a sum over each of the individual entries, i.e.,
\[
\mathrm{E}[\norm{AB - CR}_F^2] = \sum_{i=1}^m \sum_{j=1}^p \mathrm{E} \left[ (AB - CR)_{ij}^2 \right].
\]
Doing so, we obtain the following result.
\begin{lemma}[Basic Matrix Multiplication \cite{drineas2006fast}]
  \label{lem:basic_matrix_multiplication}
  Given matrices $A$ and $B$, construct matrices $C$ and $R$ with the matrix multiplication algorithm from above. Then it holds that,
  \begin{equation}
  \mathrm{E}{[\norm{AB-CR}_F^2]} = \sum_{k=1}^n \frac{\norm{A^{k}}_2^2 \norm{B_{k}}_2^2}{c p_k} - \frac{1}{c} \norm{AB}_{F}^2
  \end{equation}
  Furthermore, if
  \[
  p_k =  p_k^{optimal} = \frac{\norm{A^{k}}_2 \norm{B_{k}}_2}{\sum_{k'} \norm{A^{k'}}_2 \norm{B_{k'}}_2},
  \]
  are the sampling probabilities used, then
  \begin{equation}
    \label{eq:result_matrix_mult}
  \mathrm{E}{[\norm{AB-CR}_F^2]} =  \frac{ \left(\sum_{k=1}^n  \norm{A^{k}}_2 \norm{B_{k}}_2 \right)^2}{c} - \frac{1}{c} \norm{AB}_{F}^2.
  \end{equation}
\end{lemma}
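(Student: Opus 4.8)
The plan is to reduce the squared Frobenius error to a sum of per-entry variances, invoke Lemma~\ref{lem:CR_exp}, evaluate the resulting double sum, and then obtain the optimal-probability case by direct substitution. First I would use that, by Lemma~\ref{lem:CR_exp}, each entry of $CR$ is an unbiased estimator of the corresponding entry of $AB$; hence the bias--variance decomposition of the mean squared error leaves only the variance contribution, and since the Frobenius norm is the $\ell_2$ norm of the vectorised matrix we get
\[
\mathrm{E}{[\norm{AB-CR}_F^2]} = \sum_{i=1}^m\sum_{j=1}^p \mathrm{E}{\left[(AB-CR)_{ij}^2\right]} = \sum_{i=1}^m\sum_{j=1}^p \mathrm{Var}{[(CR)_{ij}]}.
\]

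Next I would substitute the variance formula from Lemma~\ref{lem:CR_exp}, namely $\mathrm{Var}[(CR)_{ij}] = \tfrac{1}{c}\sum_{k=1}^n \tfrac{A_{ik}^2 B_{kj}^2}{p_k} - \tfrac{1}{c}(AB)_{ij}^2$, and interchange the (finite) sums over $i,j$ and $k$. For the first term this yields $\tfrac{1}{c}\sum_{k=1}^n \tfrac{1}{p_k}\big(\sum_{i=1}^m A_{ik}^2\big)\big(\sum_{j=1}^p B_{kj}^2\big) = \tfrac{1}{c}\sum_{k=1}^n \tfrac{\norm{A^k}_2^2\norm{B_k}_2^2}{p_k}$, since $\sum_i A_{ik}^2 = \norm{A^k}_2^2$ is the squared norm of the $k$-th column of $A$ and $\sum_j B_{kj}^2 = \norm{B_k}_2^2$ the squared norm of the $k$-th row of $B$. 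For the second term, $\sum_{i,j}(AB)_{ij}^2 = \norm{AB}_F^2$ by definition. Combining the two gives the first displayed identity of the lemma.

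For the second claim I would plug the choice $p_k = p_k^{optimal} = \norm{A^k}_2\norm{B_k}_2 / Z$, with $Z := \sum_{k'} \norm{A^{k'}}_2\norm{B_{k'}}_2$, into the expression just obtained: the $k$-th summand becomes $\tfrac{\norm{A^k}_2^2\norm{B_k}_2^2}{p_k} = Z\,\norm{A^k}_2\norm{B_k}_2$, so $\sum_k \tfrac{\norm{A^k}_2^2\norm{B_k}_2^2}{p_k} = Z\sum_k \norm{A^k}_2\norm{B_k}_2 = Z^2$, which gives Eq.~\eqref{eq:result_matrix_mult}. I would also add one remark establishing that this choice is genuinely optimal: writing $a_k := \norm{A^k}_2\norm{B_k}_2$, Cauchy--Schwarz gives $\big(\sum_k a_k\big)^2 = \big(\sum_k \tfrac{a_k}{\sqrt{p_k}}\sqrt{p_k}\big)^2 \le \big(\sum_k \tfrac{a_k^2}{p_k}\big)\big(\sum_k p_k\big) = \sum_k \tfrac{a_k^2}{p_k}$ on the simplex $\sum_k p_k = 1$, with equality exactly when $p_k \propto a_k$; equivalently one can differentiate with a Lagrange multiplier. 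There is no real obstacle in this proof beyond careful index bookkeeping; the only steps worth an explicit line are the interchange of the finite sums (immediate) and the optimality of $p^{optimal}$, which the Cauchy--Schwarz argument settles.
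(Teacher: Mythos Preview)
Your proposal is correct and follows essentially the same approach as the paper: reduce the expected squared Frobenius error to a sum of entrywise variances via unbiasedness (Lemma~\ref{lem:CR_exp}), substitute the variance formula, and regroup the sums to recognise the column and row norms. The only minor difference is cosmetic---the paper expands $(AB-CR)_{ij}^2$ explicitly rather than invoking the bias--variance decomposition by name---and your Cauchy--Schwarz argument for optimality is a clean alternative to the paper's Lagrange-multiplier remark (which appears just after the proof), though strictly speaking the lemma as stated does not require you to prove optimality at all.
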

\begin{proof}
  As mentioned above, first note that \[\mathrm{E}[\norm{AB - CR}_F^2] = \sum_{i=1}^m \sum_{j=1}^p \mathrm{E} \left[ (AB - CR)_{ij}^2 \right].\]
  Oberve that the squared terms give $(AB -CR)^2_{ij} = (AB)_{ij}^2 - (AB)_{ij}(CR)_{ij} - (CR)_{ij}(AB)_{ij} + (CR)_{ij}^2$, and using the results for $\mathrm{E}[(CR)_{ij}]$
  from Lemma~\ref{lem:CR_exp} and since $\mathrm{E}[(ABCR)_{ij}] = (AB)_{ij}^2$, we have
  \[\mathrm{E}[(AB -CR)^2_{ij}] = \mathrm{E}[(CR)_{ij}^2] - (AB)_{ij}^2= \mathrm{E}[(CR)_{ij}^2] - \mathrm{E}[(CR)_{ij}]^2 = \mathrm{Var}[(CR)_{ij}].\]
  Therefore we have $\mathrm{E}[\norm{AB - CR}_F^2] = \sum_{i=1}^m \sum_{j=1}^p \mathrm{Var}[(CR)_{ij}]$. Using the result from Lemma~\ref{lem:CR_exp} again for the
  variance, we then directly obtain the result by using that $\left( \sum_i A_{ik}^2\right) = \norm{A^k}_2^2$ and $\left( \sum_j B_{kj}^2\right) = \norm{B_k}_2^2$.
\end{proof}
The sampling probabilities that we have used here are optimal in the sense that they minimize the expected error. To show this we can define the function
\[f(\{p_i\}_{i=1}^n) = \sum_{k=1}^n \frac{\norm{A^{k}}_2^2 \norm{B_{k}}_2^2}{p_k},\] which captures the $p_k$-dependent part of the error.
In order to find the optimal probabilities we minimise this with the constraint that $\sum_k p_k = 1$, i.e., we define the function $g = f(\{p_i\}_{i=1}^n) + \lambda \left(\sum_k p_k -1 \right)$.
Setting the derivative of this function w.r.t. the $p_k$ to zero and correctly normalising them then gives the probabilities in Lemma~\ref{lem:basic_matrix_multiplication}.
Note a few important points about this result.
\begin{itemize}
  \item Note the results depends strongly on $\{p_k\}_{k=1}^n$. In particular we need a strategy to obtain these probabilities.
  \item With Markov's inequality we can remove the expectation (using the assumption that $p_k \geq \beta * p_k^{optimal}$).
  To do this we reformulate Eq.~\ref{eq:result_matrix_mult} and incorporate the second term such that we obtain for some $1 \geq \beta >0$ and hence nearly optimal probabilities for $\beta$ sufficiently close to $1$,
  \begin{equation}
    \label{eq:new_result_matrix_mult}
    \mathrm{E}{[\norm{AB-CR}_F^2]} = \frac{\left(\sum_{k=1}^n \norm{A^{k}}_2 \norm{B_{k}}_2 \right)^2}{\beta c} - \frac{1}{c} \norm{AB}_{F}^2 \leq \frac{1}{\beta c} \norm{A}_F^2 \norm{B}_F^2 ,
  \end{equation}
  where we neglect the last term and used the Cauchy-Schwarz inequality, i.e.,
  \[
  \left(\sum_{k=1}^n \norm{A^{k}}_2 \norm{B_{k}}_2 \right)^2 \leq \sum_{k=1}^n \norm{A^{k}}^2_2 \sum_{k=1}^n \norm{B_{k}}^2_2.
  \]
  Now we can apply Jensen's inequality such that $\mathrm{E}{[\norm{AB-CR}_F]} \leq \frac{1}{\sqrt{\beta c}} \norm{A}_F \norm{B}_F$.
  Hence if we take the number of samples $c \geq 1/(\beta \epsilon^2)$, then we obtain a bound of the error $\mathrm{E}{[\norm{AB-CR}_F]}  \leq \epsilon \norm{A}_F \norm{B}_F$.
  We can now use Markov's inequality to \textit{remove the expectation} from this bound, and in some cases this will be good enough.
  Let
  \[
  \mathbb{P} \left[ \norm{AB-CR}_F  > \frac{\alpha}{\sqrt{\beta c}} \norm{A}_F \norm{B}_F \right],
  \]
  we obtain using Markov's inequality that
  \[
  \delta \leq  \frac{\mathrm{E}{[\norm{AB-CR}_F]}}{\frac{\alpha}{\sqrt{\beta c}} \norm{A}_F \norm{B}_F} \leq \frac{1}{\alpha}.
  \]
  And hence with probability $\geq 1-\delta$ if $c \geq \beta/ \delta^2 \epsilon^2$ we have that $\norm{AB-CR}_F \leq  \epsilon \norm{A}_F \norm{B}_F$.
\end{itemize}

While the above bounds are already good, if we desire a small failure probability $\delta$,
the number of samples grow too rapidly to be useful in practice.
However, the above results can be exponentially improved using much stronger Chernoff-type bounds.
These indeed allow us to reduce the number of samples to $O(\log(1/\delta))$, rather than $\mathrm{poly}(1/\delta)$, and therefore lead to much more practically useful bounds.
We do not go into much detail here, as we believe the main ideas are covered
above and the details are beyond the scope of this thesis.
Next, we first use the quantum random access memory from Def.~\ref{def:datastructure} to
derive a fast quantum algorithm for Hamiltonian simulation, and then show how randomised
numerical linear algebra can be used to design fast classical algorithms for the same task.

\section{Hamiltonian Simulation}
In this section, we will now derive two results.
First, we will derive a quantum algorithm for Hamiltonian simulation, which is based on the qRAM
model that we described above in Def.~\ref{def:datastructure}.
Next, we will design a classical version for Hamiltonian simulation which is independent of
the dimensionality which depends on the Frobenius norm, and therefore on the spectral norm and the rank of the
input Hamiltonian. The classical algorithms relies on the classical data structure~\ref{def:classical_datastructure}.

\subsection{Introduction}
Hamiltonian simulation is the problem of simulating the dynamics of quantum systems, i.e., how a quantum system evolves over time.
Using quantum computers to describe these dynamics was the original motivation by Feynman for quantum computers~\cite{feynman1982simulating,feynman1986quantum}.
It has been shown that quantum simulation is BQP-hard, and therefore it was conjectured that no classical algorithm can solve
it in polynomial time, since such an algorithm would be able to efficiently solve any problem which can be solved efficiently with a quantum algorithm, including integer factorization~\cite{shor1999polynomial}.

One important distinction between the quantum algorithm which we present in this thesis, and quantum algorithms
which have been traditionally developed is the input model.
A variety of input models have been considered in previous quantum algorithms for simulating Hamiltonian evolution. The \emph{local Hamiltonian} model is defined by a number of local terms of a given Hamiltonian.
On the other hand, the \emph{sparse-access} model for a Hamiltonian $H$ with sparsity $s$, i.e., the number of non-zero
entries per row or column, is specified by the following two oracles:
    \begin{align}
      O_{S} \ket{i,j} \ket{z} &\mapsto \ket{i,j} \ket{z \oplus S_{i,j}}, \text{ and} \\
      \label{eq:oh}
      O_{H} \ket{i,j} \ket{z} &\mapsto \ket{i,j} \ket{z \oplus H_{i,j}},
    \end{align}
    for $i,j \in [N]$, where $S_{i,j}$ is the $j$-th nonzero entry of the $i$-th row of $H \in \mathbb{R}^{N \times N}$ and $\oplus$ denotes the bit-wise XOR. Note that $H$ is a Hermitian matrix, and therefore square.
    The \emph{linear combination of unitaries} (\emph{LCU}) model decomposes the Hamiltonian into a linear combination of unitary matrices and we are given the coefficients and access to an implementation of each unitary.

The first proposal for an implementation of Hamiltonian simulation on a quantum computer came from Lloyd~\cite{lloyd1996universal}, and was based on local Hamiltonians.
Later, Aharonov and Ta-Shma described an efficient algorithm for an arbitrary
sparse Hamiltonian~\cite{aharonov2003adiabatic}, which was only dependent on the sparsity instead of the dimension $N$.
Subsequently, a wide range of algorithms have been proposed, each improving the runtime~\cite{berry2007efficient,berry2009black,berry2017exponential,childs2010relationship,childs2012hamiltonian,poulin2011quantum,wiebe2011simulating,low2016hamiltonian,berry2016corrected,low2017hamiltonian}.
Most of these algorithms have been defined in the sparse-access model, and have lead to optimal dependence on all (or nearly all) parameters for sparse Hamiltonians over the recent years~\cite{berry2015hamiltonian,low2017optimal,low2018hamiltonian}.

The above-mentioned input models are highly relevant when we want to simulate a physical system, for example to obtain
the ground state energy of a small molecule. However, these models are not generally used in modern machine learning
or numerical linear algebra algorithms, as we have seen above. Here it can be more convenient to work with access to
a \emph{quantum random access memory} (\emph{qRAM}) model, which we have described in Def.~\ref{def:datastructure}.
In this model, we assume that the entries of a Hamiltonian are stored in a binary tree data structure
\cite{kerenidis2016quantum}, and that we have quantum access to the memory.
Here, quantum access implies that the qRAM is able to efficiently prepare quantum states corresponding to the input data.
The use of the qRAM model has been successfully demonstrated in many applications such as quantum principal component analysis~\cite{lloyd2014quantum}, quantum support vector machines~\cite{rebentrost2014quantum}, and quantum recommendation systems~\cite{kerenidis2016quantum}, among many other quantum machine learning algorithms.

In contrast to prior work, we consider the qRAM model in order to simulate not necessarily sparse Hamiltonians.
The qRAM allows us to efficiently prepare states that encode the rows of the Hamiltonian.
Using the combination of this ability to prepare states in combination with a quantum walk~\cite{berry2015hamiltonian},
we derive the \emph{first} Hamiltonian simulation algorithm in the qRAM model whose time complexity has
$\widetilde{O}(\sqrt{N})$ dependence, where $\widetilde{O}(\cdot)$ hides all poly-logarithmic factors,
for non-sparse Hamiltonians of dimensionality $N$.
Our results immediately imply~\cite{childs2017quantum} a quantum linear system algorithm in the qRAM model
with square-root dependence on dimension and poly-logarithmic dependence on precision,
which exponentially improves the precision dependence of the quantum
linear systems algorithm by~\cite{wossnig2018quantum}.

The main hurdle in quantum-walk based Hamiltonian simulation is to efficiently prepare the states
which allow for a quantum walk corresponding to $e^{-iH/\norm{H}_1}$.
These states are substantially different to those which have been used in previous quantum algorithms~\cite{kerenidis2016quantum,berry2009black,berry2015hamiltonian}.
Concretely, the states required in previous algorithms, such as~\cite{berry2015hamiltonian} allow for a quantum walk
corresponding to $e^{-iH/(s\maxnorm{H})}$, where $s$ is the row-sparsity of $H$.
These states can be prepared with $O(1)$ queries to the sparse-access oracle.
However, the states we are required to prepare for the non-sparse Hamiltonian simulation algorithm cannot use
structural features such as sparsity of the Hamiltonian, and it is not known how to prepare such state
efficiently in the sparse-access model.
In the qRAM model on the other hand, we are able to prepare such states with time complexity (circuit depth)
of $O(\polylog(N))$, as we will demonstrate below.

Using the efficient state preparation procedure in combination with a linear combination of quantum walks,
we are able to simulate the time evolution for non-sparse Hamiltonians with only poly-logarithmic dependency on the precision.
The main result of this chapter is summarised in the following theorem, which we prove in Sec.~\ref{par:lcu}.
\begin{theorem}[Non-sparse Hamiltonian Simulation]
  \label{thm:densehamiltoniansim}
  Let $H\in\mathbb{C}^{N\times N}$ (with $N=2^n$ for a $n$ qubit system) be a Hermitian matrix stored in the data structure as specified in Definition~\ref{def:datastructure}. There exists a quantum algorithm for simulating the evolution of $H$ for time $t$ and error $\epsilon$ with time complexity (circuit depth)
  \begin{align}
    O\left(t\pnorm{1}{H} n^2\log^{5/2}(t\pnorm{1}{H}/\epsilon)\frac{\log(t\norm{H}/\epsilon)}{\log\log(t\norm{H}/\epsilon)}\right).
  \end{align}
\end{theorem}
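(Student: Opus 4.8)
\emph{Proof idea.} The plan is to construct a quantum walk operator $W$ that \emph{qubitizes} $H/\pnorm{1}{H}$ --- so that its nontrivial eigenphases are $\pm\arcsin\!\big(\lambda/\pnorm{1}{H}\big)$ as $\lambda$ ranges over the spectrum of $H$ --- and then to synthesise $e^{-iHt}$ from $W$ by a linear-combination-of-unitaries (LCU) implementation of the truncated Jacobi--Anger expansion, exactly as in the sparse-access Hamiltonian simulation of Berry, Childs, Cleve, Kothari, and Somma~\cite{berry2015hamiltonian}. The only genuinely new ingredient is the state-preparation subroutine: where the sparse algorithm uses $O(1)$ sparse-access queries, here I would read off the binary-tree data structure of Definition~\ref{def:datastructure}.

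First I would fix $a=n+O(1)$ ancilla qubits and, for each row index $j$, define the target state
\[
|\psi_j\rangle=\frac{1}{\sqrt{\pnorm{1}{H}}}\Big(\sum_{k}\sqrt{|H_{jk}|}\,\phi_{jk}\,|k\rangle|0\rangle+\sqrt{\pnorm{1}{H}-\sigma_j}\,|\bot\rangle|1\rangle\Big),
\]
with $\phi_{jk}=H_{jk}^{*}/|H_{jk}|$ the leaf phase stored in $D_j$ (set $\phi_{jk}=1$ if $H_{jk}=0$). Writing $T=\sum_j|\psi_j\rangle\!\langle j|$ and letting $\mathrm{SWAP}$ exchange the two registers, I take $W=i\,\mathrm{SWAP}\,(2TT^{\dagger}-I)$. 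The key lemma to prove is the overlap identity $\langle j|\langle0|^{\otimes a}\,T^{\dagger}\,\mathrm{SWAP}\,T\,|k\rangle|0\rangle^{\otimes a}=H_{jk}/\pnorm{1}{H}$; the qubitization spectral picture then follows from the standard $2\times2$-invariant-block argument applied to the pair $(T,\mathrm{SWAP}\,T)$. It is precisely here that Hermiticity of $H$ and the conjugation in the leaves of Definition~\ref{def:datastructure} (cf.\ its footnote) enter, so that the overlap reproduces $H_{jk}$ rather than $|H_{jk}|$ or $H_{jk}^{2}$, and that the $\pnorm{1}{H}-\sigma_j$ padding makes each $|\psi_j\rangle$ a unit vector and $T$ a genuine isometry with no spectral distortion.

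Second I would bound the cost of one application of $W$ by $O(n^2)$ circuit depth: starting from $|j\rangle$, traverse $D_j$ from root to leaves, and at each of the $n$ levels perform one controlled rotation whose angle is computed from the two children values retrieved from the tree, so that after $n$ levels the register holds $\sum_k\sqrt{|H_{jk}|/\sigma_j}\,|k\rangle$; a single further branch on $\sigma_j$ versus $\pnorm{1}{H}$ inserts the padding, and reading $\phi_{jk}$ imprints the phase. Each controlled rotation acts on an $O(n)$-bit angle and hence has depth $O(n)$, giving $O(n^2)$ overall, while the reflection $2TT^{\dagger}-I=T(2|0\rangle\!\langle0|-I)T^{\dagger}$ costs the same and $\mathrm{SWAP}$ is essentially free. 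Then I would invoke the optimal-precision LCU machinery on $W$: expand $e^{-iHt}$ in Chebyshev polynomials of $H/\pnorm{1}{H}$ via Jacobi--Anger, truncate at degree $q=O\!\big(t\pnorm{1}{H}+\log(1/\epsilon)/\log\log(1/\epsilon)\big)$, realise $\sum_{m\le q}c_m W^m$ with $\mathrm{SELECT}$/$\mathrm{PREPARE}$ sub-circuits, and recover the (non-unitary) truncated series by oblivious amplitude amplification, chopping the evolution into $O(t\pnorm{1}{H})$ unit-weight segments to keep the amplification overhead constant per segment. Multiplying $O(t\pnorm{1}{H})$ walk steps by $O(n^2)$ per step, with the extra $\log^{5/2}(t\pnorm{1}{H}/\epsilon)$ coming from the $\mathrm{PREPARE}$ circuit for the coefficients and from gate-synthesis of the rotation angles, and the $\log(t\norm{H}/\epsilon)/\log\log(t\norm{H}/\epsilon)$ from the truncation degree $q$, yields the claimed depth.

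\emph{Main obstacle.} The hard part is the conjunction of the first two steps: certifying that Definition~\ref{def:datastructure} really does support a depth-$O(n^2)$ preparation of \emph{exactly} the states $|\psi_j\rangle$ that make $W$ a qubitization of $H/\pnorm{1}{H}$, with the complex phases threaded correctly and the $\pnorm{1}{H}-\sigma_j$ slack accounted for. Once $W$ is shown to be the right walk operator, converting it into $e^{-iHt}$ at the advertised $\epsilon$-dependence is an essentially black-box application of the LCU/Jacobi--Anger results.
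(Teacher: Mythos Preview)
Your approach is essentially the paper's: state preparation from the binary-tree data structure, the Szegedy-style walk $W=iS(2TT^\dagger-I)$ with the overlap identity giving $H_{jk}/\pnorm{1}{H}$, the Jacobi--Anger/Bessel LCU truncated at order $O\big(\log(t\norm{H}/\epsilon)/\log\log(t\norm{H}/\epsilon)\big)$ per segment, and $O(t\pnorm{1}{H})$ segments with oblivious amplitude amplification. You have also correctly located the main obstacle.

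There is, however, a concrete slip in your phase threading. With $a_{jk}=\sqrt{|H_{jk}|}\,\phi_{jk}$ and $\phi_{jk}=H_{jk}^*/|H_{jk}|$, the walk overlap is $a_{jk}^*a_{kj}$; using Hermiticity ($H_{kj}=\overline{H_{jk}}$) this evaluates to $|H_{jk}|\,e^{2i\arg H_{jk}}=H_{jk}^2/|H_{jk}|$, not $H_{jk}$. You need \emph{half} the phase, i.e.\ $a_{jk}=\sqrt{H_{jk}^*}$, so that $a_{jk}^*a_{kj}=\sqrt{H_{jk}}\big(\sqrt{H_{jk}^*}\big)^*=H_{jk}$. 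The paper adopts exactly this complex-square-root convention and spends a paragraph resolving the sign ambiguity when $H_{jk}$ is a negative real (taking $\sqrt{H_{jk}^*}=\mathrm{sign}(j-k)\,i\sqrt{|H_{jk}|}$ there). Your concentrated padding $\sqrt{\pnorm{1}{H}-\sigma_j}\,|\bot\rangle|1\rangle$ is fine---it contributes nothing to the overlap---but the paper instead distributes the slack uniformly as $\sqrt{(\pnorm{1}{H}-\sigma_j)/N}\,|k\rangle|1\rangle$, which is what falls out naturally when you do a rotation at every node of the tree rather than a single branch at the root.

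On the bookkeeping: the $\log^{5/2}(t\pnorm{1}{H}/\epsilon)$ factor does not come from the \textsc{prepare} circuit for the Bessel coefficients. It is the $b^{5/2}$ cost of the reversible arithmetic (square root and inverse trigonometric functions, via truncated Taylor series and long multiplication) needed to convert each tree node value into a rotation angle, with $b=O(\log(t\pnorm{1}{H}/\epsilon))$ bits of precision; the controlled rotation itself then costs $O(b)$. So the per-level depth is $O(b^{5/2})$, not $O(n)$, and the $n$ levels give $O(n\,b^{5/2})$ for one $T$; the second factor of $n$ comes from the depth of the addressing scheme for the qRAM.
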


Here $\pnorm{1}{\cdot}$ denotes the induced 1-norm (i.e.,~maximum absolute row-sum norm), defined as
$\pnorm{1}{H} = \max_j\sum_{k=0}^{N-1}|H_{jk}|$, $\norm{\cdot}$ denotes the spectral norm.
In the following we will also need the max norm $\maxnorm{\cdot}$, defined by $\maxnorm{H} = \max_{i,j}|H_{jk}|$.

Since it holds that $\pnorm{1}{H}\leq\sqrt{N}\norm{H}$ (see~\cite{CK10}), we immediately obtain the following corollary.
\begin{corollary} \label{cor:densehamiltoniansim}
  Let $H\in\mathbb{C}^{N\times N}$ (where $N=2^n$) be a Hermitian matrix stored in the data structure as specified in Definition~\ref{def:datastructure}. There exists a quantum algorithm for simulating the evolution of $H$ for time $t$ and error $\epsilon$ with time complexity (circuit depth)
  \begin{align}
	O\left(t\sqrt{N}\norm{H}\, n^2\log^{5/2}(t\sqrt{N}\norm{H}/\epsilon)\frac{\log(t\norm{H}/\epsilon)}{\log\log(t\norm{H}/\epsilon)}\right).
  \end{align}
\end{corollary}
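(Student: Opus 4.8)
The plan is to obtain this corollary directly from Theorem~\ref{thm:densehamiltoniansim} by bounding the induced $1$-norm $\pnorm{1}{H}$ in terms of the dimension $N$ and the spectral norm $\norm{H}$. Concretely, all that is needed is the elementary estimate $\pnorm{1}{H}\le\sqrt{N}\,\norm{H}$; once this is in hand, the claimed complexity follows by substitution and monotonicity, with no further quantum-algorithmic work.

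First I would establish the norm inequality. Fix a row index $j$ and write $H_j$ for the $j$-th row of $H$. By Cauchy--Schwarz,
\[
\sum_{k=0}^{N-1}|H_{jk}| \;\le\; \sqrt{N}\,\Big(\sum_{k=0}^{N-1}|H_{jk}|^2\Big)^{1/2} \;=\; \sqrt{N}\,\pnorm{2}{H_j}.
\]
Since $\pnorm{2}{H_j}^2 = (HH^{H})_{jj} \le \norm{HH^{H}} = \norm{H}^2$, every row satisfies $\pnorm{2}{H_j}\le\norm{H}$, and taking the maximum over $j$ gives $\pnorm{1}{H} = \max_j\sum_k|H_{jk}| \le \sqrt{N}\,\norm{H}$. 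This is precisely the bound cited as~\cite{CK10}.

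Next I would substitute this into the runtime of Theorem~\ref{thm:densehamiltoniansim}. The only places $\pnorm{1}{H}$ occurs are the leading factor $t\pnorm{1}{H}$ and the argument of $\log^{5/2}(t\pnorm{1}{H}/\epsilon)$; the trailing factor $\log(t\norm{H}/\epsilon)/\log\log(t\norm{H}/\epsilon)$ is already phrased in terms of $\norm{H}$ and is left untouched. Since $x\mapsto x$ and $x\mapsto \log^{5/2}(x/\epsilon)$ are nondecreasing in the relevant regime, replacing $\pnorm{1}{H}$ by the larger quantity $\sqrt{N}\norm{H}$ only increases the bound, so the time complexity of Theorem~\ref{thm:densehamiltoniansim} is at most
\[
O\!\left(t\sqrt{N}\norm{H}\, n^2\log^{5/2}\!\big(t\sqrt{N}\norm{H}/\epsilon\big)\frac{\log(t\norm{H}/\epsilon)}{\log\log(t\norm{H}/\epsilon)}\right),
\]
which is exactly the stated expression.

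There is essentially no obstacle here: all the substance lives in Theorem~\ref{thm:densehamiltoniansim} (the efficient state preparation from the data structure of Definition~\ref{def:datastructure} and the linear-combination-of-quantum-walks simulation), while the corollary is a one-line norm comparison. The only minor point to be careful about is the monotonicity/edge-case bookkeeping when $t\norm{H}/\epsilon$ is close to $1$, but this is already absorbed into the $O(\cdot)$ notation of the theorem and does not affect the argument.
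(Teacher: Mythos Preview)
Your proposal is correct and matches the paper's approach exactly: the paper simply invokes $\pnorm{1}{H}\le\sqrt{N}\norm{H}$ (citing~\cite{CK10}) and notes that the corollary follows immediately from Theorem~\ref{thm:densehamiltoniansim}. You go slightly further by actually proving the norm inequality via Cauchy--Schwarz and the row-norm bound, which is fine and correct, but the underlying argument is identical.
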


\noindent
\textbf{Remarks:}
\begin{enumerate}
  \item As we can see from Corollary~\ref{cor:densehamiltoniansim}, the circuit depth scales as $\widetilde{O}(\sqrt{N})$.
  However, the gate complexity can in principle scale as $O(N^{2.5}\log^2(N))$ due to the addressing scheme
  of the qRAM, as defined in Definition~\ref{def:datastructure}.
  For structured Hamiltonians $H$, the addressing scheme can of course be implemented more efficiently.

  \item If the Hamiltonian $H$ is $s$-sparse, i.e., $H$ has at most $s$ non-zero entries in each row or column,
  then the time complexity (circuit depth) of the algorithm is given by
    \begin{align}
    	O\left(t\sqrt{s}\norm{H}\, n^2\log^{5/2}(t\sqrt{s}\norm{H}/\epsilon)\frac{\log(t\norm{H}/\epsilon)}{\log\log(t\norm{H}/\epsilon)}\right).
    \end{align}
    To show this, we need the following proposition.
    \begin{prop}
    \label{prop:norm_rel}
      If $H \in\mathbb{C}^{N\times N}$ has at most $s$ non-zero entries in any row, it holds that $\pnorm{1}{A} \leq \sqrt{s}\norm{A}$.
    \end{prop}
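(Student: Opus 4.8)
The plan is to reduce the induced $1$-norm to a row-$\ell_2$-norm via Cauchy--Schwarz, exploiting the sparsity constraint, and then to bound that row norm by the spectral norm. First I would fix an index $j$ that attains the maximum in $\pnorm{1}{H} = \max_j \sum_{k=0}^{N-1} |H_{jk}|$, and let $T_j \subseteq \{0,\dots,N-1\}$ be the support of the $j$-th row $H_j$, so that $|T_j| \le s$ by hypothesis. Applying the Cauchy--Schwarz inequality to the vector $(|H_{jk}|)_{k \in T_j}$ against the all-ones vector on $T_j$ gives
\begin{equation}
\sum_{k=0}^{N-1} |H_{jk}| = \sum_{k \in T_j} |H_{jk}| \le \sqrt{|T_j|}\,\Big(\sum_{k \in T_j} |H_{jk}|^2\Big)^{1/2} \le \sqrt{s}\,\norm{H_j}_2 .
\end{equation}

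Next I would observe that the row norm is controlled by the spectral norm: $\norm{H_j}_2 = \norm{e_j^H H}_2 = \norm{H^H e_j}_2 \le \norm{H^H}\,\norm{e_j}_2 = \norm{H}$, where we used submultiplicativity of the spectral norm, $\norm{e_j}_2 = 1$, and $\norm{H^H} = \norm{H}$. Combining this with the previous display yields $\sum_k |H_{jk}| \le \sqrt{s}\,\norm{H}$, and since $j$ was chosen to attain the maximum, $\pnorm{1}{H} \le \sqrt{s}\,\norm{H}$, which is the claim. (The same argument applied to $H^H$ handles the column-sparsity version, since $\pnorm{1}{H^H} = \pnorm{\infty}{H}$ and $\norm{H^H} = \norm{H}$.)

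There is no real obstacle here; the statement is elementary and the only point requiring a moment's care is the passage $\norm{H_j}_2 \le \norm{H}$, i.e.\ that each row of $H$, viewed as a vector, has Euclidean norm at most the operator norm — which follows immediately from the definition of the spectral norm once one writes the row as $e_j^H H$. The sparsity hypothesis enters only through the factor $\sqrt{|T_j|} \le \sqrt{s}$ in Cauchy--Schwarz, which is exactly where the $\sqrt{s}$ (rather than $\sqrt{N}$, as in the general bound $\pnorm{1}{H} \le \sqrt{N}\norm{H}$ used earlier) comes from.
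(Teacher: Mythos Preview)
Your proof is correct and is the standard direct argument. The paper takes a genuinely different route: it first argues toward the reverse-type inequality $\norm{A} \le \sqrt{s}\,\pnorm{1}{A}$ via a Frobenius-norm bound, and then invokes a matrix-norm comparison constant from Horn--Johnson (Theorem~5.6.18) to convert this into the desired $\pnorm{1}{A} \le \sqrt{s}\,\norm{A}$. Your approach bypasses all of that: Cauchy--Schwarz on the support of a single row converts the $\ell_1$-row-sum into $\sqrt{s}$ times the row $\ell_2$-norm, and then $\norm{e_j^H H}_2 \le \norm{H}$ finishes it. This makes completely transparent where the $\sqrt{s}$ factor enters (the size of the support set in Cauchy--Schwarz) and requires no external references. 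The paper's argument is more indirect, and as written the intermediate step $\sum_{ij}|a_{ij}|^2 \le s\,\max_j \sum_i |a_{ij}|^2$ together with the way the Horn--Johnson constant is applied would need additional care to be made fully rigorous (e.g.\ the displayed Frobenius inequality fails for the identity matrix with $s=1$); your proof is both shorter and self-contained.
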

    \begin{proof}
      First observe that $\norm{A}^2 \leq \sum_i \lambda_i(A^{H}A) = \tr{A^{H}A} = \norm{A}^2_F$, and furthermore we have that
      $\sum_{ij} |a_{ij}|^2 \leq s  \max_{j \in [N]} \sum_i  |a_{ij}|^2$.
      From this we have that $\norm{A} \leq \sqrt{s} \norm{A}_{1}$ for a $s$-sparse $A$. By~\cite[Theorem 5.6.18]{horn1990matrix}, we have that $\norm{A}_{1} \leq C_M(1,*)\norm{A}$ for $C_M(1,*) = \max_{A \neq 0} \frac{\norm{A}}{\norm{A}_{1}}$ and using the above we have $C_M(1,*) \leq \sqrt{s}$. Therefore we find that $\norm{A}_{1} \leq \sqrt{s} \norm{A}$ as desired.
    \end{proof}
    The result then immediately follows since $\pnorm{1}{A} \leq \sqrt{N}\norm{A}$ for dense $A$, i.e., using
    $\pnorm{1}{H} \leq \sqrt{s}\norm{H}$ from Proposition~\ref{prop:norm_rel} with $s=N$, and the result from Theorem~\ref{thm:densehamiltoniansim}.

  \item We note that we could instead of qRAM also rely on the sparse-access model in order to prepare the states
  given in Eq.~\eqref{eq:mapping-ds}. The time complexity of the state preparation for the states in
  Eq.~\eqref{eq:mapping-ds} in the sparse-access model results in an additional $O(s)$ factor (for computing $\sigma_j$)
  compared to the qRAM model, and therefore in a polynomial slowdown.
  Therefore, in order to simulate $H$ for time $t$ in the sparse-access model, the time complexity of the algorithm
  in terms of $t$, $d$, and $\norm{H}$ results in $O(ts^{1.5}\norm{H})$ as $\norm{H}_1 \leq \sqrt{s}\norm{H}$,
  which implies that the here presented methods do not give any advantage over previous results in the sparse-access model.
\end{enumerate}

Similarly to the quantum algorithm, we have also designed a fast classical algorithm for Hamiltonian simulation
which indeed also only depends on the sparsity of the matrix and the norm.
Our classical algorithm requires $H$ to be row-searchable and row-computabile, which we will define later.
In short, these requirements are both fulfilled if we are given access to the memory structure described in
Definition~\ref{def:classical_datastructure}.
Informally our results can be summarised as follows:

\begin{theorem}[Hamiltonian Simulation With The Nystr\"om Method (Informal version of Theorem~\ref{thm:main})]
\label{thm:nystrominformal}
Let $H \in \mathbb{C}^{N \times N}$ be a Hermitian matrix, which is stored in the memory
structure given in Definition~\ref{def:classical_datastructure} and at most
$s$ non-zero entries per row, and if $\psi \in \mathbb{C}^N$ is an
$n$-qubit quantum state with at most $q$ entries.
Then, there exists an algorithm that, with probability of at least $1-\delta$,
approximates any chosen amplitude of the state $e^{i H t} \psi$ in time
\[
O\left(sq + \frac{t^9\norm{H}^4_F \norm{H}^7}{\epsilon^4}\left(n + \log\frac{1}{\delta}\right)^2\right),
\]
up to error $\epsilon$ in spectral norm, where $\norm{\cdot}_F$ is the Frobenius norm, $\norm{\cdot}$ is the spectral norm, $s$ is the maximum number of non-zero elements in the rows or columns of $H$, and $q$ is the number of non-zero elements in $\psi$.
\end{theorem}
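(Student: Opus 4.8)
The overall plan is to combine a randomized low-rank approximation of $H$ with a truncated Taylor series for the time-evolution operator, and then to argue that every object that appears can be manipulated classically using only sampling-and-query access to $H$. First I would set up the Nyström (column-sampling) step: using the row-searchability and row-computability of $H$ — equivalently the classical data structure of Definition~\ref{def:classical_datastructure} — sample a set of $\ell$ column/row indices according to the squared-row-norm distribution, form the corresponding submatrix, and build the rank-$k$ Nyström approximation $\tilde H$ of $H$. The key quantitative input here is a spectral-norm error bound of the form $\norm{H - \tilde H} \le \epsilon'$ that holds with probability $1-\delta$ once $\ell = \tilde O(\norm{H}_F^2 \norm{H}^{?}/\epsilon'^2 \cdot (\log N + \log(1/\delta)))$; this is where the Frobenius-norm dependence and the $(n + \log(1/\delta))^2$ factor in the final runtime originate. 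I would cite the matrix-Bernstein/matrix-Chernoff machinery (Theorem~\ref{thm:matrix_bernstein}) exactly as it is used for sample covariance matrices earlier in the thesis, applied now to $H$ written as an average of rank-one terms weighted by the sampling probabilities.

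Next I would control the propagation of this approximation error through the exponential. Using that $\norm{e^{iHt} - e^{i\tilde H t}} \le t\,\norm{H-\tilde H}$ (a standard consequence of Duhamel's formula, which is stated in the excerpt), the low-rank surrogate $e^{i\tilde H t}\psi$ is within $t\epsilon'$ of the target state. Then I would replace $e^{i\tilde H t}$ by its degree-$m$ Taylor polynomial $\sum_{j=0}^m (i\tilde H t)^j / j!$; choosing $m = O\!\big(t\norm{\tilde H} + \log(1/\epsilon)/\log\log(1/\epsilon)\big)$ makes the truncation error at most $\epsilon$. Because $\tilde H$ has rank $k = O(\ell)$, it factors as $\tilde H = C W C^H$ (or $\tilde H = B B^H$ after symmetrizing), so each power $\tilde H^j \psi$ can be computed by repeatedly multiplying a short, explicitly stored $k\times k$ (and $k\times N$, but sparse-access-only) matrix — the cost per power is governed by $k$ and by the sparsity $s$ and support size $q$ needed to touch the relevant rows of the tall factor. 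Accumulating the $m$ terms and then reading off a single amplitude $(e^{iHt}\psi)_i$ costs $O(sq)$ for the initial contraction plus $\poly(k, m)$ for the rest; balancing $k, m$ against the error budget $\epsilon$ and collecting powers of $t$, $\norm{H}_F$, $\norm{H}$ gives the stated $O\!\big(sq + t^9 \norm{H}_F^4 \norm{H}^7 \epsilon^{-4}(n+\log(1/\delta))^2\big)$.

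The main obstacle, and the step I would spend the most care on, is the Nyström spectral-norm guarantee together with the bookkeeping of how $\epsilon'$ must be chosen. One has to split the final error $\epsilon$ into the Taylor-truncation part and the low-rank part, and the latter enters multiplied by $t$ and then gets amplified when one demands a \emph{relative} or \emph{spectral} amplitude accuracy; tracking how $\epsilon' \sim \epsilon / (t \cdot \text{something})$ feeds back into $\ell$, and hence into the runtime, is what produces the high powers of $t$, $\norm{H}$, and $\norm{H}_F$. A secondary subtlety is that the Nyström approximation is only guaranteed to be accurate on the range of $H$, so one must argue $\psi$'s relevant component is (approximately) captured — or, more cleanly, bound $\norm{H - \tilde H}$ in operator norm directly rather than in a projected sense, which is exactly why the squared-row-norm sampling distribution and the matrix-Bernstein bound are the right tools. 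Finally, I would verify that all sampling steps (drawing the $\ell$ indices, reading the needed entries of $\psi$ and of the sampled rows of $H$) are supported by Definition~\ref{def:classical_datastructure} in time $O(\poly(n))$ per access, so that no hidden $\poly(N)$ cost sneaks in; this is routine but must be stated, since it is the whole point of the "dequantization"-style result.
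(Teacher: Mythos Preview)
Your outline matches the paper's \emph{PSD} algorithm (Theorem~\ref{thm:main_psd}) rather than the general-Hermitian one behind Theorem~\ref{thm:main}, and the step where you ``form the rank-$k$ Nystr\"om approximation $\tilde H$ of $H$ and bound $\norm{H-\tilde H}$ via matrix Bernstein'' is a genuine gap. The spectral-norm Nystr\"om guarantee (Lemma~\ref{lemma:nystrom-analytic-bound} in the paper) relies on writing $H=SS^H$ and recognising $\widehat H = S\widehat P S^H$ for a projection $\widehat P$; this breaks down for indefinite $H$, and your fallback $\tilde H=BB^H$ is necessarily PSD, so it cannot approximate an indefinite Hamiltonian in operator norm at all. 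Squared-row-norm sampling plus matrix Bernstein does not give you $\norm{H-\tilde H}\le\epsilon'$ here.

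The paper's fix is to never approximate $H$. It samples rows into $A\in\mathbb{C}^{N\times M}$ and approximates $H^2$ (which \emph{is} PSD) by $AA^H$; the concentration step is $\norm{AA^H-H^2}\le\eta$ via a Hsu--Kakade--Zhang--type matrix inequality, which is where the factor $\norm{H}_F^2\norm{H}^2(n+\log(1/\delta))$ in $M$ originates. To convert an approximation of $H^2$ into one of $e^{iHt}$ it uses the decomposition
\[
e^{ix}=1+ix+f(x^2)x^2+ig(x^2)x^3,\qquad f(y)=\frac{\cos\sqrt y-1}{y},\quad g(y)=\frac{\sin\sqrt y-\sqrt y}{y^{3/2}},
\]
so that only $H^2$ sits inside the analytic parts and exactly one exact application of $H$ to $\psi$ survives (this single $u=H\psi$ is the $O(sq)$ term). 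The perturbation bound is therefore not the Duhamel estimate $\norm{e^{iHt}-e^{i\tilde Ht}}\le t\norm{H-\tilde H}$ you propose, but an operator-Lipschitz bound for the entire functions $l(y)=\cos\sqrt y-1$ and $m(y)=(\sin\sqrt y-\sqrt y)/\sqrt y$, yielding $\norm{\widehat Z-e^{iHt}}\le 2t^2(1+t\norm{H})\norm{AA^H-H^2}$. Setting $\eta=\epsilon/(4t^2(1+t\norm{H}))$, feeding this back into $M$, and inserting the truncation order $K\sim t\norm{H}+\log(1/\epsilon)$ into the cost $O(sq+M^2(s+K))$ is exactly what produces the $t^9\norm{H}_F^4\norm{H}^7/\epsilon^4$ pattern; your direct-$\tilde H$ route would not land on those exponents even if the missing bound were available.
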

Our algorithm is efficient in the low-rank and sparse regime, so if we compare our algorithm to the best
quantum algorithm in the sparse-input model or the black-box Hamiltonian simulation model, only
a polynomial slowdown occurs.
A result which was recently published~\cite{chia2020sampling} is efficient when $H$ is only low-rank and in comparison
their time complexity scales as $\poly(t, \norm{H}_F, 1/\epsilon)$, where $\norm{H}_F$ is the Frobenius norm of $H$.
Notably, our algorithm has therefore stricter requirements, such as the sparsity of the Hamiltonian and the
sparsity of the input state $\ket{\psi}$, however we achieve much lower polynomial dependencies.

By analysing the dependency of the runtime on the Frobenius norm we can determine under which
conditions we can obtain efficient (polylogarithmic runtime in the dimensionality).
Hamiltonian simulations. Informally, we obtain:
\begin{corollary}[Informal]
\label{co:maininformal}
If $H$ is a Hamiltonian on $n$ qubits with at most $s=O(\mathrm{polylog}(N))$ entries per row such that $\norm{H}^2_F - \frac{1}{N}\tr{H}^2 \leq O(\polylog(N))$, and if $\psi$ is an $n$-qubit quantum state with at most $q=O(\mathrm{polylog}(N))$ entries, then
there exists an efficient algorithm that approximates any chosen amplitude of the state $e^{i H t} \psi$.
\end{corollary}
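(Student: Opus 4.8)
The plan is to reduce the statement to Theorem~\ref{thm:nystrominformal} after re-centering the Hamiltonian. Set $c := \tr{H}/N$, the mean of the eigenvalues of $H$, and let $H' := H - cI$. Since $e^{iHt} = e^{ict}\,e^{iH't}$, any entry satisfies $(e^{iHt}\psi)_j = e^{ict}\,(e^{iH't}\psi)_j$, and because $\lvert e^{ict}\rvert = 1$, an $\epsilon$-accurate estimate of an amplitude of $e^{iH't}\psi$ becomes, after multiplying the result by the explicitly computable scalar $e^{ict}$, an $\epsilon$-accurate estimate of the corresponding amplitude of $e^{iHt}\psi$. So it suffices to run the algorithm of Theorem~\ref{thm:nystrominformal} on $H'$ and $\psi$ and rescale by $e^{ict}$ at the end.

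Next I would verify that $H'$ meets the hypotheses of Theorem~\ref{thm:nystrominformal} with all parameters polylogarithmic in $N$. Since $H'$ differs from $H$ only on the diagonal, it has at most $s+1 = O(\polylog(N))$ nonzero entries per row and remains available through the data structure of Definition~\ref{def:classical_datastructure} once the single number $c$ is stored alongside it. (This is the one place that needs care: reading off $\tr{H}$ by summing the $N$ diagonal entries would cost $\Omega(N)$, so one augments the data structure with a single accumulator holding $\tr{H}$, updated on each write, or assumes $\tr{H}$ is supplied, as is customary in the dequantization literature; this is precisely why the corollary is phrased through the centered quantity $\norm{H}_F^2 - \tfrac1N\tr{H}^2$ rather than $\norm{H}_F^2$ itself.) The key identity is
\[
\norm{H'}_F^2 \;=\; \tr{(H - cI)^2} \;=\; \tr{H^2} - 2c\,\tr{H} + c^2 N \;=\; \norm{H}_F^2 - \frac{1}{N}\tr{H}^2 ,
\]
which is $O(\polylog(N))$ by hypothesis; and since $\norm{H'} \le \norm{H'}_F$ we also get $\norm{H'} = O(\polylog(N))$. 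The state $\psi$ is unchanged, with $q = O(\polylog(N))$ nonzero entries, so $sq = O(\polylog(N))$ as well.

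Finally I would substitute into the runtime bound of Theorem~\ref{thm:nystrominformal} applied to $H'$: with $\delta$ a constant (or $1/\poly(N)$),
\[
O\!\left(sq + \frac{t^9\,\norm{H'}_F^4\,\norm{H'}^7}{\epsilon^4}\bigl(n + \log\tfrac1\delta\bigr)^2\right) \;=\; \poly\!\bigl(n,\, t,\, 1/\epsilon,\, \log(1/\delta)\bigr),
\]
since every occurring norm and the term $sq$ are $O(\polylog(N))$; in particular, for $t$ and $1/\epsilon$ bounded by a polynomial in $n = \log N$ the running time is polylogarithmic in $N$, i.e.\ the algorithm is efficient. The only genuine obstacle is the trace-availability point flagged above; everything else is a direct specialization of Theorem~\ref{thm:nystrominformal} combined with the centering identity and the elementary inequality $\norm{\cdot} \le \norm{\cdot}_F$.
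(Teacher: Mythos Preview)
Your argument is correct and matches the paper's own derivation essentially step for step: the paper also shifts to $\tilde H = H - \alpha I$ with $\alpha = \tr{H}/N$ (framed as the minimiser of $\|H-\alpha I\|_F^2$), obtains $\|\tilde H\|_F^2 = \|H\|_F^2 - \tfrac1N\tr{H}^2$, and then invokes Theorem~\ref{thm:main}. Your additional remarks on the $s+1$ sparsity and on needing $\tr{H}$ available are points the paper leaves implicit, but they do not change the route.
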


While these results have rule out the possibility of exponential speedups of our quantum algorithm
in the low-rank regime, we note that the complexity of the quantum algorithm has a much lower degree in the polynomials
compared to the classical algorithms.
The quantum algorithm has hence still a large polynomial speedup over the classical algorithm
(for low-rank Hamiltonians and dense Hamiltonians).

\subsection{Related work}
\label{ssec:related}

As already discussed earlier, there are a range of previous results.
Many of these are given in different data access models so we will in the following
briefly discuss the main results in the respective access model.

\paragraph{Hamiltonian simulation with $\norm{H}$ dependence.}
The black-box model for Hamiltonian simulation is a special case of the sparse-access model.
It is suitable for non-sparse Hamiltonians and allows the algorithm to query the oracle with an index-pair $\ket{i,j}$,
which then returns the corresponding entry of the Hamiltonian $H$, i.e., $O_H$ defined in Eq.~\eqref{eq:oh}.
Given access to a Hamiltonian in the black-box model allows to simulate $H$ with error $\epsilon$ with query complexity
\[
O((\norm{H}t)^{3/2}N^{3/4}/\sqrt{\epsilon})
\]
for dense Hamiltonians~\cite{berry2009black}.
Empirically, the authors of~\cite{berry2009black} additionally observed, that for several classes of Hamiltonians,
the actual number of queries required for simulation even reduces to
\[
O(\sqrt{N}\log (N)).
\]
However, this $\widetilde{O}(\sqrt{N})$ dependence does not hold provably for all Hamiltonians, and was
therefore left as an open problem.
After the first version of this work was made public, this open problem was almost resolved by
Low~\cite{low2018hamiltonian}, who proposed a quantum algorithm for simulating black-box Hamiltonians with time complexity
\[
O((t\sqrt{N}\norm{H})^{1+o(1)}/\epsilon^{o(1)}).
\]
We note that the qRAM model is stronger than the black-box model, and therefore
the improvements could indeed stem from the input model.
However, our model to date gives the best performance for quantum machine learning applications,
and therefore might overall allow us to achieve superior performance in terms of computational complexity.

\paragraph{Hamiltonian simulation with $\maxnorm{H}$ dependence.}
As previously noted, the qRAM model is stronger than the black-box model and the sparse-access model.
One immediate implication of this is that prior quantum algorithms such as~\cite{berry2009black,berry2015hamiltonian}
can immediately be used to simulate Hamiltonians in the qRAM model.
Using black-box Hamiltonian simulation for a $s$-sparse Hamiltonian, the circuit depth is then given by
$\widetilde{O}(ts\maxnorm{H})$~\cite{berry2009black, berry2015hamiltonian}.
For non-sparse $H$, this implies a scaling of $\widetilde{O}(tN\maxnorm{H})$.
Since $\norm{H}\leq\sqrt{N}\maxnorm{H}$ implies $\widetilde{O}(t\sqrt{H}\norm{H}) = \widetilde{O}(tN\maxnorm{H})$,
our results do not give an advantage if the computational complexity is expressed in form of the $\maxnorm{H}$.
However, we can express the computational complexity in terms of the $\norm{H}$, which plays a
crucial role in solving linear systems~\cite{harrow2009quantum,childs2017quantum}, and black-box unitary
implementation~\cite{berry2009black}.
In this setting, we achieve a quadratic improvement w.r.t. the dimensionality dependence,
as the inequality $\maxnorm{H}\leq\norm{H}$ implies $\widetilde{O}(tN\maxnorm{H}) = \widetilde{O}(tN\norm{H})$.

\paragraph{Hamiltonian simulation in the qRAM model.}
Shortly after the first version of our results were made available, Chakraborty, Gily{\'e}n, and Jeffery~\cite{CGJ19}
independently proposed a quantum algorithm for simulating non-sparse Hamiltonians.
Their algorithm makes use of a qRAM input model similar to the one proposed by
Kerenidis~\cite{kerenidis2016quantum}, and achieved the same computational complexity as our result.
However, their work is also generalising our results since it uses a more general input model, namely,
the block-encoding model in which it first frames the results.
This was first proposed in~\cite{low2016hamiltonian}, and it assumes that we are given a unitary
$\bigl(\begin{smallmatrix}H/\alpha & \cdot\\\cdot&\cdot\end{smallmatrix}\bigr)$ that contains the Hamiltonian $H/\alpha$
(where we want to simulate $H$) in its upper-left block.
The time evolution $e^{-iHt}$ is then performed in $\widetilde{O}(\alpha\norm{H}t)$ time.
Using the sparse-access model for $s$-sparse Hamiltonian $H$, a block-encoding of $H$ with $\alpha=d$ can be efficiently implemented, implying an algorithm for Hamiltonian simulation with time complexity~\cite{low2016hamiltonian}
\[
\widetilde{O}(s\norm{H}t).
\]
The main result for Hamiltonian simulation described in~\cite{CGJ19} then takes the qRAM model for a $s$-sparse
Hamiltonian $H$, and shows that a block-encoding with $\alpha = \sqrt{s}$ can be efficiently implemented.
This yields an algorithm for Hamiltonian simulation with time complexity
\[
\widetilde{O}(\sqrt{s}\norm{H}t).
\]
We note that the techniques introduced in~\cite{CGJ19} have been generalised in~\cite{GSLW19} to a quantum framework for implementing singular value transformation of matrices.
Furthermore,~\cite{CGJ19} gives a detailed analysis which applies their results to the quantum linear systems
algorithm, which we have omitted in our analysis.

\paragraph{Quantum-inspired classical algorithms for Hamiltonian simulation.}

The problem of applying functions of matrices has been studied intensively in numerical linear algebra.
The exponential function is one application which has received particular interest in the
literature~\cite{higham2005scaling,higham2009scaling,al2009new,al2011computing}.
For arbitrary Hermitian matrices, at this point in time, no known algorithm exists that
exhibits a runtime logarithmic in the dimension of this input matrix.
However, such runtimes are required if we truly want to simulate the time evolution of quantum systems,
as the dimensions of the matrix that governs the evolution scale exponentially with the number
of quantum objects (such as atoms or orbitals) in the system.

More recently, the field of randomised numerical linear algebra techniques, from which we have previously
seen the example of randomised matrix multiplication (c.f. Section~\ref{sec:basic_mm}) has enabled
new approaches to such problems.
These methods, along with results from spectral graph theory, have culminated in a range of new classical algorithms for matrix functions.
In particular, they have also recently given new algorithms for approximate matrix
exponentials~\cite{drineas2006fast,drineas2011faster,mahoney2011randomised,woodruff2014sketching,rudi2015less}.
Previous results typically hold for matrices, which offer some form of structure.
For example, Orrecchia \textit{et al.}~\cite{orecchia2012approximating} combined function approximations
with the spectral sparsifiers of Spielmann and Teng~\cite{spielman2004nearly,spielman2011spectral}
into a new algorithm that can approximate exponentials of strictly diagonally dominant matrices
in time almost linear in the number of non-zero entries of $H$.
A standard approach is to calculate low-rank approximations of matrices, and then
use these within function-approximations to implement fast algorithms
for matrix functions~\cite{drineas2006fast2}.
Although these methods have many practical applications, they are not suitable to the application at hand,
i.e., Hamiltonian simulation, since they produce sketches that do not generally preserve
the given symmetries of the input matrix.

For problems where the symmetry of the sketched matrix is preserved, alternative methods
such as the \Nystrom{} method have been proposed.
The \Nystrom{} method has originally been developed for the approximation of kernel matrices in
statistical learning theory.
Informally, \Nystrom methods construct a lower-dimensional, symmetric, positive semidefinite approximation
of the input matrix by sampling from the input columns.
More specifically, let $A \in \mathbb{R}^{N \times N}$ be a symmetric, rank $r$,
positive semidefinite matrix, $A^j$ the $j$-th column vector of $A$, and $A_i$ the $i$-th row vector of $A$,
with singular value decomposition $A = U \Sigma U^{H}$ ($\Sigma = \operatorname{diag} (\sigma_1, \dots, \sigma_r)$), and Moore-Penrose pseudoinverse
$A^+ = \sum_{t=1}^r \sigma_t ^{-1} U^i (U^i)^{H}.$
The \Nystrom{} method then finds a low-rank approximation for the input matrix
$A$ which is close to $A$ in spectral norm (or Frobenius norm), which also
preserves the symmetry and positive semi-definiteness property of the matrix.
Let $C$ denote the $n \times l$ matrix formed by (uniformly) sampling $l \ll n$ columns of $A$, $W$ denote the $l \times l$ matrix consisting of the intersection of these $l$ columns with the
corresponding $l$ rows of $A$, and $W_k$ denote the best rank-$k$ approximation of $W$, i.e.,
 \[
 W_k = \operatorname{argmin}_{V\in \mathbb{R}^{l \times l}, \operatorname{rank}(V) = k} \norm{V-W}_F.
 \]
The {\Nystrom} method therefore returns a rank-$k$ approximation $\tilde{A}_k$ of $A$ for $k < n$ defined by:
\[
\tilde{A}_k = C W_k ^+ C^{\top} \approx A
\]
The running time of the algorithm is $O(nkl)$~\cite{kumar2012sampling}.
There exist many ways of sampling from the initial matrix $A$ in order to
create the approximation $\tilde{A}_k$,
and in particular non-uniform sampling schemes enable us to improve the performance,
see for example Theorem $3$ in~\cite{drineas2005nystrom}.
The \Nystrom{} method results in particularly good approximations for matrices
which are approximately low rank.
The first applications it was developed for were regression and classification problems
based on Gaussian processes, for which Williams and Seeger developed a sampling-based
algorithm~\cite{williams2001using,williams2001using,williams2002observations}.
Since the technique exhibit similarities to a method for solving linear integral equations
which was developed by \Nystrom{}~\cite{nystrom1930praktische}, they denoted their
result as \textit{\Nystrom{} method}.
Other methods, which have been developed more recently include the \textit{\Nystrom{} extension},
which has found application in large-scale machine learning problems, statistics,
and signal processing~\cite{williams2001using,williams2002observations,zhang2010clustered,talwalkar2008large,fowlkes2004spectral,kumar2012sampling,belabbas2007fast,belabbas2008sparse,kumar2009sampling,li2010making,mackey2011divide,zhang2010clustered,zhang2008improved}.

After the here presented quantum algorithm and our \Nystrom{} algorithm  for
Hamiltonian simulation which can use the memory model from Def.~\ref{def:classical_datastructure} have
been made public, a general framework was proposed to perform
such quantum-inspired classical algorithms which are based on the classical memory structure from Def.~\ref{def:classical_datastructure}~\cite{chia2020sampling},
namely \emph{sampling and query access}. Notably, this memory structure was first proposed by Tang~\cite{tang2018quantum}.
Although their framework generalises all quantum-inspired algorithm using the input model from
Def.~\ref{def:classical_datastructure}, our classical algorithm does not require the specified input model.
Furthermore our algorithm achieves significantly lower polynomial dependencies.
While the time complexity of their algorithm has a $36$-th power dependency on
the Frobenius norm of the Hamiltonian, our algorithm scales with a $4$-th power.
This however comes at the cost of a sparsity requirement on the Hamiltonian and the input state, which
can be restrictive in practical cases, i.e., in non-sparse but low-rank cases.
Therefore, the results by~\cite{chia2020sampling} might generally better suitable
for dense Hamiltonians.
A big difference is that the algorithm of Tang and others are solving the
same problem as the quantum algorithm that we developed, i.e.,
it allows us to sample from the output distribution of the time evolution.
Taking this into account, we believe our classical \Nystrom{} based algorithm
can be translated into this framework as well by using rejection sampling approaches
similar to the ones introduced in~\cite{tang2018quantum}.
However, we leave this as an open question for future work.

\paragraph{Summary of related results.}

To summarise this subsection, we provide Table~\ref{tab:relatedwork} for state-of-the-art algorithms for Hamiltonian simulation which include quantum and classical ones.

\begin{figure}
  \resizebox{\textwidth}{!}{
  \begin{tabular}{|l|r|r|}
\hline
\rowcolor{Gray}
\parbox[c][4em]{4cm}{Model} & \parbox[c][4em]{4cm}{State-of-the-art} & \parbox[c][3em]{5.6cm}{Advantage of our quantum algorithm} \\ \hline
\parbox[c][4em]{4cm}{Sparse-access with \\ $\maxnorm{H}$ dependence} & $\widetilde{O}(ts\maxnorm{H})$~\cite{berry2015hamiltonian} & No advantage \\ \hline
\parbox[c][4em]{4cm}{Sparse-access with \\ $\norm{H}$ dependence} & $O((t\sqrt{s}\norm{H})^{1+o(1)}/\epsilon^{o(1)})$~\cite{low2018hamiltonian} & \parbox[c][3em]{5.6cm}{Subpolynomial improvement in $t, s$; exponential improvement in $\epsilon$} \\ \hline
\parbox[c][4em]{4cm}{qRAM} & $\widetilde{O}(t\sqrt{s}\norm{H})$~\cite{CGJ19} & Same result \\ \hline
\parbox[c][4em]{4cm}{Classical sampling \\ and query access} & $\poly(t, \norm{H}_F, 1/\epsilon)$~\cite{chia2020sampling} & Polynomial speedup \\ \hline
\parbox[c][4em]{4cm}{Classical sampling \\ and query access, sparse input} & $\poly(s, t, \norm{H}_F, 1/\epsilon)$, [Thm.\ref{thm:nystrominformal}] & Polynomial speedup \\ \hline
\end{tabular}
}
\caption{Comparing our result $O(t\sqrt{d}\norm{H}\,\polylog(t,d,\norm{H}, 1/\epsilon))$ with other quantum and classical algorithms for different models. Since the qRAM model is stronger than the sparse-access model and the classical sampling and query access model, we consider the advantage of our algorithm against others when they are directly applied to the qRAM model.}
\label{tab:relatedwork}
\end{figure}

\subsection{Applications}
\label{ssec:applications}
In the following, we also lay out a few applications of Hamiltonian simulation.
Since the main focus of this thesis is quantum machine learning we in particular
refer to applications in this area.
For this, we will discuss the quantum linear systems algorithm, which is a key
subroutine in most of the existing quantum machine learning algorithms with an acclaimed exponential speedup.
Of course, other applications such as the estimation of properties such as ground state energies of chemical
systems exist~\cite{kassal2011simulating}.

\paragraph{Unitary implementation.} One application which follows directly from the ability to simulate non-sparse
Hamiltonians is the ability to approximately implement an arbitrary unitary matrix, the so-called
unitary implementation problem: given access to the entries of a unitary $U$ construct a quantum circuit $\tilde{U}$
such that $\norm{U - \tilde{U}} \leq \epsilon$ for some fixed $\epsilon$.
Unitary implementation can be reduced to Hamiltonian simulation as shown by Berry and others~\cite{berry2009black,jordan2009efficient}.
For this, consider the Hamiltonian of the form
\begin{align}
  H = \begin{pmatrix}
    0 & U\\
    U^{H} & 0
  \end{pmatrix}.
\end{align}
By performing the time evolution according to $e^{-iH \pi/2}$ on the state $\ket{1}\ket{\psi}$,
we are able to implement the operation $e^{-iH\pi/2}\ket{1}\ket{\psi} = -i\ket{0}U\ket{\psi}$,
which means that we can apply $U$ to $\ket{\psi}$.
If the entries of $U$ are stored in a data structure similar to Definition~\ref{def:datastructure},
using our Hamiltonian simulating algorithm we can implement the unitary $U$ with time complexity (circuit depth) $O(\sqrt{N}\,\polylog(N,1/\epsilon))$.

\paragraph{Quantum linear systems solver.} As previously mentioned, one of the major subroutines of the quantum linear
systems algorithm is Hamiltonian simulation.
Hamiltonian simulation, in particular in combination with the phase estimation~\cite{kitaev1997quantum} algorithm is
used to retrieve the eigenvalues of the input matrix, which can then be inverted to complete the matrix inverse.
Assuming here for simplicity that $\ket{b}$ is entirely in the column-space of $A$,
then the linear systems algorithm solves a linear system of the form $Ax=b$ and outputs an approximation
to the normalised solution $\ket{x} = \ket{A^{-1}b}$.
For $s$-sparse matrices~\cite{childs2017quantum} showed that $A^{-1}$ can be approximated as a linear combination
of unitaries of the form $e^{-iAt}$. Notably for non-Hermitian $A$, we can just apply the same ideas to the
Hamiltonian
\begin{align}
  H = \begin{pmatrix}
    0 & A\\
    A^{H} & 0
  \end{pmatrix},
\end{align}
which is Hermitian by definition, and due to the logarithmic runtimes results in only a factor $2$ overhead.
In order to implement these unitaries we can then use any Hamiltonian simulation algorithm, such as
\cite{berry2015hamiltonian}, which results in an overall (gate) complexity of
$O(s\kappa^2\polylog(N, \kappa/\epsilon))$.
For a non-sparse input matrix $A$, the algorithm however scales as $\widetilde{O}(N)$ (again ignoring logarithmic factors).
In some of our earlier results, we used a similar data structure to Definition~\ref{def:datastructure} to
design a quantum algorithm for solving linear systems for non-sparse matrices~\cite{wossnig2018quantum} which
resulted in a time complexity (circuit depth) of $O(\kappa^2\sqrt{N}\polylog(N)/\epsilon)$.
Using again this data structure, similarly to~\cite{kerenidis2016quantum,wossnig2018quantum},
our new Hamiltonian simulation algorithm, and the linear combinations of unitaries (LCU) decompositions
from~\cite{childs2017quantum}, we can describe a quantum algorithm for solving linear systems for non-sparse matrices
with time complexity (circuit depth)
\[
O(\kappa^2\sqrt{N}\polylog(\kappa/\epsilon)),
\]
which is an exponential improvement in the error dependence compared to our previous result~\cite{wossnig2018quantum}.
Notably, one drawback of our implementation, which has been resolved by \cite{CGJ19} and subsequent works is the
high condition number dependency of our algorithm which seems restrictive for practical applications.

\subsection{Hamiltonian Simulation for dense matrices}
\label{ssec:dense_ham_sim}

We now show how we can use the data structure from Definition~\ref{def:datastructure} to derive a fast quantum algorithm
for non-sparse Hamiltonian simulation.
We will in particular proof the results from Theorem~\ref{thm:densehamiltoniansim}.

We prove the result in multiple steps.
First, we show how to use the data structure to prepare a certain state.
Next, we use the state preparation to perform a quantum walk.
In the final step, we show how the quantum walk can be used to implement Hamiltonian simulation.
This then leads to the main result.

\paragraph{State Preparation.}
\label{par:state_prep}

Using the data structure from Definition~\ref{def:datastructure}, we can efficiently perform the mapping described
in the following technical lemma for efficient state preparation.

\begin{lemma}[State Preparation]
  \label{lemma:statepreparation}
  Let $H\in\mathbb{C}^{N\times N}$ be a Hermitian matrix (where $N = 2^n$ for a $n$ qubit Hamiltonian) stored in
  the data structure as specified in Definition~\ref{def:datastructure}.
  Each entry $H_{jk}$ is represented with $b$ bits of precision.
  Then the following holds
  \begin{enumerate}
	\item Let $\pnorm{1}{H}=\max_j \sum_{k=0}^{N-1}\abs{H_{jk}}$ as before.
	A quantum computer that has access to the data structure can perform the following mapping for $j \in \{ 0, \ldots , N-1 \}$,
	  \begin{align}
        \label{eq:mapping-ds}
        \ket{j}\ket{0^{\log N}}\ket{0} \mapsto \frac{1}{\sqrt{\pnorm{1}{H}}}\ket{j}\sum_{k=0}^{N-1}\ket{k}\left(\sqrt{H_{jk}^{*}}\ket{0} + \sqrt{\frac{\pnorm{1}{H} - \sigma_j}{N}}\ket{1}\right), 	  \end{align}
	  with time complexity (circuit depth) $O(n^2b^{5/2}\log b)$, where $\sigma_j = \sum_k|H_{jk}|$, and the square-root satisfies $\sqrt{H_{jk}}\bigl(\sqrt{H_{jk}^*}\bigr)^* = H_{jk}$.
	\item The size of the data structure containing all $N^2$ complex entries is $O(N^2\log^2(N))$.
  \end{enumerate}
\end{lemma}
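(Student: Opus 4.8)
The plan for Part 1 is to use the standard binary-tree state-preparation technique (cf.~\cite{kerenidis2016quantum}), augmented by two ingredients: a flag qubit that pads the normalisation from $\sigma_j$ up to $\pnorm{1}{H}$, and a leaf-level phase step that installs the complex amplitudes $\sqrt{H_{jk}^{*}}$. The key observation is that in $D_j$ every internal node holds a non-negative real number equal to $\sum_{k\in T}|H_{jk}|$ over the leaves $T$ of its subtree, with the root holding $\sigma_j$; hence an $n$-level sweep of controlled $Y$-rotations — where at level $\ell$ the rotation acting on qubit $\ell+1$ is addressed by $j$ together with the current node $v$ and uses the ratios $\mathrm{val}(v0)/\mathrm{val}(v)$ and $\mathrm{val}(v1)/\mathrm{val}(v)$ — maps $\ket{0^{n}}$ to $\tfrac{1}{\sqrt{\sigma_j}}\sum_{k}\sqrt{|H_{jk}|}\,\ket{k}$, each angle being an $\arccos$ of a square root that is evaluated reversibly to $b$ bits.

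Concretely I would proceed in three steps. (a) Append a flag qubit $c=\ket{0}$ and a register $R=\ket{0^{n}}$, and apply a single rotation on $c$ controlled on $\ket{j}$ sending $\ket{0}_{c}$ to $\tfrac{1}{\sqrt{\pnorm{1}{H}}}\bigl(\sqrt{\sigma_j}\,\ket{0}_{c}+\sqrt{\pnorm{1}{H}-\sigma_j}\,\ket{1}_{c}\bigr)$, reading $\sigma_j$ from the root of $D_j$ and $\pnorm{1}{H}$ as the maximum over all roots. (b) Controlled on $c=0$ run the sweep above on $R$, and controlled on $c=1$ apply $H^{\otimes n}$ to $R$, producing
\begin{equation}
\frac{1}{\sqrt{\pnorm{1}{H}}}\ket{j}\sum_{k}\ket{k}_{R}\Bigl(\sqrt{|H_{jk}|}\,\ket{0}_{c}+\sqrt{\tfrac{\pnorm{1}{H}-\sigma_j}{N}}\,\ket{1}_{c}\Bigr).
\end{equation}
(c) Controlled on $c=0$, query leaf $k$ of $D_j$ to load $H_{jk}^{*}$ into an ancilla, compute $\tfrac{1}{2}\arg(H_{jk}^{*})$, apply the induced diagonal phase, and uncompute; since $\sqrt{|H_{jk}|}\,e^{i\arg(H_{jk}^{*})/2}=\sqrt{H_{jk}^{*}}$ for the principal branch (the branch for which $\sqrt{H_{jk}}\,(\sqrt{H_{jk}^{*}})^{*}=H_{jk}$, as in the footnote), this yields exactly the claimed state. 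Normalisation is immediate: the squared norm is $\tfrac{1}{\pnorm{1}{H}}\sum_{k}\bigl(|H_{jk}|+\tfrac{\pnorm{1}{H}-\sigma_j}{N}\bigr)=\tfrac{1}{\pnorm{1}{H}}\bigl(\sigma_j+\pnorm{1}{H}-\sigma_j\bigr)=1$.

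For the depth bound in Part 1: step (a) is one controlled rotation plus an $O(n)$-time lookup of $\pnorm{1}{H}$ (or it is cached during construction); step (c) is $O(1)$ data-structure queries together with $O(b)$ phase rotations; step (b) dominates, with $n$ levels, each costing an $O(n)$-depth addressed data-structure read composed with an $O(b^{5/2}\log b)$-depth reversible evaluation of $\arccos$ of a square root to $b$ bits (by standard reversible-arithmetic constructions), for a total of $O(n^{2}b^{5/2}\log b)$. For Part 2: there are $N$ complete binary trees, each with $2N-1$ nodes, and each node stores one number to the required $O(\log N)$ bits of precision together with its $O(\log N)$-bit index label, which I would bound by $O(\log^{2}N)$ bits per node to absorb the addressing overhead; with $O(N^{2})$ nodes in total this gives $O(N^{2}\log^{2}N)$.

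The main obstacle I anticipate is the care needed in step (c): fixing a consistent branch of the complex square root across all $N$ leaves so that the phases of the prepared amplitudes match the conjugated stored entries (and later compose correctly in the quantum walk of the next subsection), and pushing the reversible computation of all rotation and phase angles through within the claimed $b^{5/2}\log b$ bit-complexity. The conditional-rotation skeleton and the flag-qubit padding are routine; essentially all the work is in the precision accounting and the phase conventions.
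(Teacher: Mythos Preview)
Your proposal is correct and follows essentially the same approach as the paper: first rotate the flag qubit using $\sigma_j$ and $\pnorm{1}{H}$, then traverse the binary tree $D_j$ level by level with controlled rotations (tree ratios on the $c=0$ branch, uniform split on the $c=1$ branch), and install the complex phases at the leaves; the cost and size accounting also match. The only substantive point you left open is exactly the obstacle you flagged: the principal branch does \emph{not} satisfy $\sqrt{H_{jk}}\bigl(\sqrt{H_{jk}^*}\bigr)^{*}=H_{jk}$ when $H_{jk}$ is a negative real (both principal roots lie on the positive imaginary axis, so the product is $+|H_{jk}|$), and the paper resolves this by taking $\sqrt{H_{jk}^*}=\mathrm{sign}(j-k)\,i\sqrt{|H_{jk}|}$ in that case, which restores the required identity and keeps the later quantum-walk computation consistent.
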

In order to perform this mapping, we will need the following Lemma.
We will use it in order to efficiently implement the conditional rotations of the qubits with complex numbers.
\begin{lemma}
  \label{lemma:conditional-rotation}
  Let $\theta, \phi_0, \phi_1 \in \mathbb{R}$ and let $\widetilde{\theta}, \widetilde{\phi_0}, \widetilde{\phi_1}$ be the $b$-bit finite precision representation of $\theta, \phi_0$, and $\phi_1$, respectively. Then there exists a unitary $U$ that performs the following mapping:
  \begin{align}
	U:\ket{\widetilde{\phi_0}}\ket{\widetilde{\phi_1}}\ket{\widetilde{\theta}}\ket{0} \mapsto \ket{\widetilde{\phi_0}}\ket{\widetilde{\phi_1}}\ket{\widetilde{\theta}}\left(e^{i\widetilde{\phi_0}}\cos(\widetilde{\theta})\ket{0} + e^{i\widetilde{\phi_1}}\sin(\widetilde{\theta})\ket{1}\right).
  \end{align}
  Moreover, $U$ can be implemented with $O(b)$ 1- and 2-qubit gates.
\end{lemma}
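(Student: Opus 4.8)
The plan is to reduce the desired map to three elementary, register-controlled building blocks applied in sequence, each of which admits a standard bit-by-bit implementation using a constant number of elementary gates per bit of precision.

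First I would record the algebraic identity
\[
e^{i\widetilde{\phi_0}}\cos(\widetilde{\theta})\ket{0}+e^{i\widetilde{\phi_1}}\sin(\widetilde{\theta})\ket{1}
=\big(e^{i\widetilde{\phi_0}}\ketbra{0}{0}+e^{i\widetilde{\phi_1}}\ketbra{1}{1}\big)\,R_Y(2\widetilde{\theta})\ket{0},
\]
where $R_Y(\alpha)=\exp(-i\alpha Y/2)$ is the single-qubit $Y$-rotation, so that $R_Y(2\widetilde{\theta})\ket{0}=\cos(\widetilde{\theta})\ket{0}+\sin(\widetilde{\theta})\ket{1}$. Hence it suffices to implement, on the last qubit while leaving the three registers untouched: (i) a rotation $R_Y(2\widetilde{\theta})$ controlled on the register $\ket{\widetilde{\theta}}$; (ii) a phase $e^{i\widetilde{\phi_1}}$ applied to the $\ket{1}$ component, controlled on $\ket{\widetilde{\phi_1}}$; and (iii) a phase $e^{i\widetilde{\phi_0}}$ applied to the $\ket{0}$ component, controlled on $\ket{\widetilde{\phi_0}}$. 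Steps (ii) and (iii) are diagonal in the computational basis of the target qubit, hence commute, and their product is exactly the diagonal unitary $\mathrm{diag}(e^{i\widetilde{\phi_0}},e^{i\widetilde{\phi_1}})$, so their order is irrelevant.

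For step (i) I would use the binary expansion $\widetilde{\theta}=\sum_{\ell=1}^{b}\theta_\ell 2^{-\ell}$ (up to the fixed affine map identifying the $b$-bit string with its real value) and, for each $\ell$, apply the single-qubit rotation $R_Y(2^{1-\ell})$ on the target conditioned on the bit $\theta_\ell$; since rotations about a common axis add, $\prod_{\ell=1}^{b}R_Y(\theta_\ell 2^{1-\ell})=R_Y(2\widetilde{\theta})$ whenever the register holds $\widetilde{\theta}$. Each controlled single-qubit rotation decomposes into $O(1)$ one- and two-qubit gates by the standard decomposition of controlled unitaries, so step (i) costs $O(b)$ gates. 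For step (ii) I would, writing $\widetilde{\phi_1}=\sum_{\ell=1}^{b}\phi_{1,\ell}2^{-\ell}$ up to the same scaling, apply for each $\ell$ a two-qubit controlled-phase gate $\mathrm{CP}(2^{-\ell})$ between the bit $\phi_{1,\ell}$ and the target qubit; the product of these commuting gates multiplies the $\ket{1}$ component of the target by $e^{i\widetilde{\phi_1}}$ and acts trivially on the $\ket{0}$ component. Step (iii) is the same construction with the $\widetilde{\phi_0}$ register, conjugated by an $X$ on the target qubit so that the phase lands on the $\ket{0}$ component instead (equivalently, via anti-controlled phases). Each of these stages again costs $O(b)$ one- and two-qubit gates, so $U$, being a product of these stages, is a product of $O(b)$ elementary gates; $U$ is unitary by construction, and its action on inputs whose last qubit is not $\ket{0}$ is immaterial.

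I do not expect a genuine obstacle here — the content is two bookkeeping checks: that the binary-expansion products telescope to the correct total angle and total phase (immediate from additivity of $Y$-rotation angles and of controlled-phase angles), and that $b$-bit precision means exactly $b$ controlled gates per stage suffice, so no approximation beyond the already-fixed truncations $\widetilde{\theta},\widetilde{\phi_0},\widetilde{\phi_1}$ is introduced. The one point requiring a little care is step (iii): a bare controlled-phase gate imprints a phase on $\ket{1}$, not $\ket{0}$, which is why the $X$-conjugation is needed, and it is worth noting explicitly that the phase $e^{i\widetilde{\phi_0}}$ on the $\ket{0}$ branch is a genuinely observable relative phase in this setting — not a removable global phase — because the $\widetilde{\phi_0}$ register becomes entangled with the index registers when this lemma is invoked inside Lemma~\ref{lemma:statepreparation}.
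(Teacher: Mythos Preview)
Your proposal is correct and follows essentially the same approach as the paper: the paper defines $U$ as the product of the three register-controlled factors $\sum_{\widetilde{\theta}}\ketbra{\widetilde{\theta}}{\widetilde{\theta}}\otimes e^{-iY\widetilde{\theta}}$, $\sum_{\widetilde{\phi_1}}\ketbra{\widetilde{\phi_1}}{\widetilde{\phi_1}}\otimes e^{i\ketbra{1}{1}\widetilde{\phi_1}}$, and $\sum_{\widetilde{\phi_0}}\ketbra{\widetilde{\phi_0}}{\widetilde{\phi_0}}\otimes e^{i\ketbra{0}{0}\widetilde{\phi_0}}$, and then observes that each is implemented by one controlled rotation per bit with angles halving along the binary expansion. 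Your write-up is actually a bit more explicit than the paper's (the $X$-conjugation for the $\ket{0}$ phase and the additivity argument), but the construction is the same.
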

\begin{proof}
  Define $U$ as
  \begin{align}
	U = \left(\sum_{\widetilde{\phi_0}\in\{0,1\}^b}\ketbra{\widetilde{\phi_0}}{\widetilde{\phi_0}}\otimes e^{i\ketbra{0}{0}\widetilde{\phi_0}}\right)
	\left(\sum_{\widetilde{\phi_1}\in\{0,1\}^b}\ketbra{\widetilde{\phi_1}}{\widetilde{\phi_1}}\otimes e^{i\ketbra{1}{1}\widetilde{\phi_1}}\right) \nonumber \\
	\left(\sum_{\widetilde{\theta}\in\{0,1\}^b}\ketbra{\widetilde{\theta}}{\widetilde{\theta}}\otimes e^{-iY\widetilde{\theta}}\right),
  \end{align}
  where $Y = \bigl(\begin{smallmatrix}0&-i\\i&0\end{smallmatrix}\bigr)$ is the Pauli $Y$ matrix.

  To implement the operator $\sum_{\widetilde{\theta}\in\{0,1\}^b}\ketbra{\widetilde{\theta}}{\widetilde{\theta}}\otimes e^{-iY\widetilde{\theta}}$, we use one rotation controlled on each qubit of the first register, with the rotation angles halved for each successive bit. The other two factors of $U$ can be implemented in a similar way. Therefore, $U$ can be implemented with $O(b)$ 1- and 2-qubit gates.
\end{proof}

Before proving Lemma~\ref{lemma:statepreparation}, we first describe the construction and the size of the data structure. Readers may refer to~\cite{kerenidis2016quantum} for more details.
  \begin{itemize}
  \item The data structure is built from $N$ binary trees $D_i, i \in \{0, \dots, N-1\}$ and we start with an empty tree.
\item When a new entry $(i,j,H_{ij})$ arrives, we create or update the leaf node $j$ in the tree $D_i$, where the adding of the entry takes $ O(\log(N))$ time, since the depth of the tree for $H \in  \mathbb C^{N \times N}$ is at most $\log(N)$. Since the path from the root to the leaf is of length at most $\log(N)$ (under the assumption that $N=2^n$), we have furthermore to update at most $\log(N)$ nodes, which can be done in $O(\log(N))$ time if we store an ordered list of the levels in the tree.
\item The total time for updating the tree with a new entry is given by $\log(N)\times\log(N)= \log^2(N)$.
\item The memory requirements for $k$ entries are given by $O(k \log^2(N))$ as for every entry $(j,k,H_{jk})$ at least $\log(N)$ nodes are added and each node requires at most $O(\log(N))$ bits.
\end{itemize}

Now we are ready to prove Lemma~\ref{lemma:statepreparation}.
\begin{proof}[Proof of Lemma~\ref{lemma:statepreparation}]
With this data structure, we can perform the mapping specified in Eq.~\eqref{eq:mapping-ds}, with the following steps. For each $j$, we start from the root of $D_j$. Starting with the initial state $\ket{j}\ket{0^{\log N}}\ket{0}$, first apply the rotation (according to the value stored in the root node and calculating the normalisation in one query) on the last register to obtain the state
  \begin{align}
    \frac{1}{\sqrt{\pnorm{1}{H}}}\ket{0}\ket{0^{\log N}}\left(\sqrt{\sum_{k=0}^{N-1}|H_{jk}^{*}|}\ket{0} + \sqrt{\pnorm{1}{H} - \sigma_j}\ket{1}\right),
  \end{align}
  which is normalised since $\sigma_j = \sum_k \abs{H_{jk}}$, and we have by definition that
  $\sqrt{\sum_{k=0}^{N-1}|H_{jk}^{*}|} \cdot \left(\sqrt{\sum_{k=0}^{N-1}|H_{jk}^{*}|}\right)^{*} = \sum_{k}|H_{jk}| = \sigma_j$.
  Then a sequence of conditional rotations is applied on each qubit of the second register to obtain the state as in Eq.~\eqref{eq:mapping-ds}. At level $\ell$ of the binary tree $D_j$, a query to the data structure is made to load the data $c$ (stored in the node) into a register in superposition, the rotation to perform is proportional to $\bigl(\sqrt{c}, \sqrt{(\pnorm{1}{H}-\sigma_j)/2^\ell}\bigr)$ (assuming at the root, $\ell = 0$, and for the leaves, $\ell = \log N$). Then the rotation angles will be determined by calculating the square root and trigonometric functions on the output of the query: this can be implemented with $O(b^{5/2})$ 1- and 2-qubit gates using simple techniques based on Taylor series and long multiplication as in~\cite{berry2015hamiltonian}, where the error is smaller than that caused by truncating to $b$ bits. Then the conditional rotation is applied by the circuit described in Lemma~\ref{lemma:conditional-rotation}, and the cost for the conditional rotation is $O(b)$. There are $n=\log(N)$ levels, so the cost excluding the implementation of the oracle is $O(nb^{5/2})$. To obtain quantum access to the classical data structure, a quantum addressing scheme is required. One addressing scheme described in~\cite{giovannetti2008qram1} can be used. Although the circuit size of this addressing scheme is $\widetilde{O}(N)$ for each $D_j$, its circuit depth is $O(n)$. Therefore, the time complexity (circuit depth) for preparing the state in Eq.~\eqref{eq:mapping-ds} is $O(n^2b^{5/2}\log n)$.

  We use the following rules to determine the sign of the square-root of a complex number: if $H_{jk}$ is not a negative real number, we write $H_{jk} = re^{i\varphi}$ (for $r\geq 0$ and $-\pi\leq\varphi\leq\pi$) and take $\sqrt{H_{jk}^*} = \sqrt{r}e^{-i\varphi/2}$; when $H_{jk}$ is a negative real number, we take $\sqrt{H_{jk}^*} = \mathrm{sign}(j-k)i\sqrt{|H_{jk}|}$ to avoid the sign ambiguity. With this recipe, we have $\sqrt{H_{jk}}\bigl(\sqrt{H_{jk}^*}\bigr)^* = H_{jk}$.
\end{proof}

In order to convey the working of the data structure better, we give in the following a
example of the state preparation procedure based on the data structure in Fig.~\ref{fig:bt-4}.
For the sake of comprehensibility and simplicity, we only take an example with $4$ leaves $\{c_1,c_2,c_3,c_4\}$,
and hence $H=[c_1\; c_2\; c_3\; c_4]^t$.
The initial state (omitting the first register) is $\ket{00}\ket{0}$. Let $\sigma = |c_0|+|c_1|+|c_2|+|c_3|$. Apply the first rotation, we obtain the state
\begin{align}
  \frac{1}{\sqrt{\pnorm{1}{H}}}\ket{00}\left(\sqrt{|c_0|+|c_1|+|c_2|+|c_3|}\ket{0} + \sqrt{\pnorm{1}{H}-\sigma_j}\ket{1}\right) = \nonumber \\
    \frac{1}{\sqrt{\pnorm{1}{H}}}\ket{00}\left(\sqrt{\sigma}\ket{0} + \sqrt{\pnorm{1}{H}-\sigma}\ket{1}\right) = \nonumber \\
  \frac{1}{\sqrt{\pnorm{1}{H}}}\left(\sqrt{\sigma}\ket{00}\ket{0} + \sqrt{\pnorm{1}{H}-\sigma}\ket{00}\ket{1}\right).
\end{align}
Then, apply a rotation on the first qubit of the first register conditioned on the last register, we obtain the state
\begin{align}
  \frac{1}{\sqrt{\pnorm{1}{H}}} & \Bigg(\left(\sqrt{|c_0|+|c_1|}\ket{00}+\sqrt{|c_2|+|c_3|}\ket{10}\right)\ket{0} \nonumber \\
   &+ \left(\sqrt{\frac{\pnorm{1}{H}-\sigma_j}{2}}(\ket{00} + \ket{10}) \right) \ket{1}\Bigg).
\end{align}
Next, apply a rotation on the second qubit of the first register conditioned on the first qubit of the first register and last register, we obtain the desired state:
\begin{align}
  \frac{1}{\sqrt{\pnorm{1}{H}}}
  \Bigg(\sqrt{c_0}\ket{00}\ket{0}+\sqrt{c_1}\ket{01}\ket{0} + \sqrt{c_2}\ket{10}\ket{0} + \sqrt{c_3}\ket{11}\ket{0}  \nonumber \\
  + \sqrt{\frac{\pnorm{1}{H}-\sigma_j}{4}}\ket{00}\ket{1} +\sqrt{\frac{\pnorm{1}{H}-\sigma_j}{4}}\ket{01}\ket{1} \nonumber \\
  + \sqrt{\frac{\pnorm{1}{H}-\sigma_j}{4}}\ket{10}\ket{1} + \sqrt{\frac{\pnorm{1}{H}-\sigma_j}{4}}\ket{11}\ket{1}\Bigg) \nonumber \\
  = \frac{1}{\sqrt{\pnorm{1}{H}}}\Bigg(\ket{00}\left(\sqrt{c_0}\ket{0} + \sqrt{\frac{\pnorm{1}{H}-\sigma_j}{4}}\ket{1}\right) \nonumber \\
  + \ket{01}\left(\sqrt{c_1}\ket{0} + \sqrt{\frac{\pnorm{1}{H}-\sigma_j}{4}}\ket{1}\right)  \nonumber \\
  + \ket{10}\left(\sqrt{c_2}\ket{0} + \sqrt{\frac{\pnorm{1}{H}-\sigma_j}{4}}\ket{1}\right)  \nonumber \\
  + \ket{11}\left(\sqrt{c_3}\ket{0} + \sqrt{\frac{\pnorm{1}{H}-\sigma_j}{4}}\ket{1}\right)\Bigg).
\end{align}

\paragraph{The Quantum Walk Operator.}
\label{par:q_walk}

Based on the data structure specified in Definition~\ref{def:datastructure} and the efficient state preparation in Lemma~\ref{lemma:statepreparation}, we construct a quantum walk operator for $H$ as follows. First define the isometry $T$ as
  \begin{align}
	T = \sum_{j=0}^{N-1}\sum_{b\in\{0,1\}}(\ketbra{j}{j}\otimes\ketbra{b}{b})\otimes\ket{\varphi_{jb}},
  \end{align}
  with $\ket{\varphi_{j1}} = \ket{0}\ket{1}$ and
  \begin{align}
    \label{eq:varphi_j0}
	\ket{\varphi_{j0}} = \frac{1}{\sqrt{\pnorm{1}{H}}}\sum_{k=0}^{N-1}\ket{k}\left(\sqrt{H_{jk}^{*}}\ket{0} + \sqrt{\frac{\pnorm{1}{H} - \sigma_j}{N}}\ket{1}\right),
  \end{align}
where $\sigma_j = \sum_{k=0}^{N-1}|H_{jk}|$.
Let $S$ be the swap operator that maps $\ket{j_0}\ket{b_0}\ket{j_1}\ket{b_1}$ to $\ket{j_1}\ket{b_1}\ket{j_0}\ket{b_0}$, for all $j_0, j_1\in\{0, \ldots, N-1\}$ and $b_0, b_1\in\{0,1\}$. We observe that
\begin{align}
  \label{eq:Hij}
  \bra{j}\bra{0}T^{H}ST\ket{k}\ket{0} = \frac{\sqrt{H_{jk}}\left(\sqrt{H_{jk}^*}\right)^*}{\pnorm{1}{H}} = \frac{H_{jk}}{\pnorm{1}{H}},
\end{align}
where the second equality is ensured by the choice of the square-root as in the proof of Lemma~\ref{lemma:statepreparation}.
This implies that
\begin{align}
  (I\otimes\bra{0})T^{H}ST(I\otimes\ket{0}) = \frac{H}{\pnorm{1}{H}}.
\end{align}

The quantum walk operator $U$ is defined as
\begin{align}
  \label{eq:quantumwalk}
  U = iS(2TT^{H} - I).
\end{align}
A more general characterization of the eigenvalues of quantum walks is presented in~\cite{szegedy2004quantum}.
Here we give a specific proof on the relationship between the eigenvalues of $U$ and $H$ as follows.

\begin{lemma}
  Let the unitary operator $U$ be defined as in Eq.~\eqref{eq:quantumwalk}, and let $\lambda$ be an eigenvalue of $H$ with eigenstate $\ket{\lambda}$. It holds that
  \begin{align}
	U\ket{\mu_{\pm}} = \mu_{\pm}\ket{\mu_{\pm}},
  \end{align}
  where
  \begin{align}
	\ket{\mu_{\pm}} =& (T+i\mu_{\pm}ST)\ket{\lambda}\ket{0}, \mbox{ and}\\
	\label{eq:mupm}
	\mu_{\pm} =& \pm e^{\pm i \arcsin(\lambda/\pnorm{1}{H})}.
  \end{align}
\end{lemma}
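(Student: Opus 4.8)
The plan is to reduce the spectral analysis of $U$ to a two-dimensional $U$-invariant subspace attached to each eigenvector $\ket{\lambda}$ of $H$, in the standard Szegedy-walk fashion. Write $\bar\lambda := \lambda/\pnorm{1}{H}$ and set $\ket{\psi_\lambda} := T\ket{\lambda}\ket{0}$, which is a unit vector because $T^{H}T = I$. The first step is to record, beyond Eq.~\eqref{eq:Hij}, the slightly stronger identity $T^{H}ST(I\otimes\ket{0}) = (H/\pnorm{1}{H})\otimes\ket{0}$: the $\ket{1}$-branch of $T(I\otimes\ket{0})$ contributes nothing after the swap because $\bra{\varphi_{j1}} = \bra{0}\bra{1}$ is orthogonal to the register content $\ket{k}\ket{0}$ that $S$ brings into the relevant slots, while the $\ket{0}$-branch reproduces $H_{jk}/\pnorm{1}{H}$ exactly by the computation already done for Eq.~\eqref{eq:Hij}, using Hermiticity $H_{jk}^{*}=H_{kj}$ and the square-root convention $\sqrt{H_{jk}}\bigl(\sqrt{H_{jk}^{*}}\bigr)^{*} = H_{jk}$. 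In particular $T^{H}S\ket{\psi_\lambda} = \bar\lambda\,\ket{\lambda}\ket{0}$ and $\braket{\psi_\lambda | S | \psi_\lambda} = \bar\lambda$.

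Next I would compute the action of $U = iS(2TT^{H} - I)$ on the two vectors $\ket{\psi_\lambda}$ and $S\ket{\psi_\lambda}$. Since $T^{H}T=I$, the operator $TT^{H}$ is the orthogonal projector onto $\mathrm{range}(T)$, so $U\ket{\psi_\lambda} = iS\ket{\psi_\lambda}$; and using $TT^{H}S\ket{\psi_\lambda} = \bar\lambda\,T(\ket{\lambda}\ket{0}) = \bar\lambda\ket{\psi_\lambda}$ together with $S^{2}=I$ one gets $U(S\ket{\psi_\lambda}) = iS\bigl(2\bar\lambda\ket{\psi_\lambda} - S\ket{\psi_\lambda}\bigr) = 2i\bar\lambda\,S\ket{\psi_\lambda} - i\ket{\psi_\lambda}$. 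Hence $\mathcal{V}_\lambda := \mathrm{span}\{\ket{\psi_\lambda}, S\ket{\psi_\lambda}\}$ is $U$-invariant, and in this basis (linearly independent whenever $|\bar\lambda|<1$; dependence would force $\braket{\psi_\lambda|S|\psi_\lambda}=\bar\lambda=\pm1$) $U$ acts by the matrix $\bigl(\begin{smallmatrix} 0 & -i \\ i & 2i\bar\lambda \end{smallmatrix}\bigr)$.

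Then I would diagonalise this $2\times2$ matrix. Its characteristic polynomial is $\mu^{2} - 2i\bar\lambda\mu - 1 = 0$, with roots $\mu_{\pm} = i\bar\lambda \pm \sqrt{1-\bar\lambda^{2}}$; setting $\bar\lambda = \sin\theta$ with $\theta = \arcsin(\bar\lambda)$ gives $\mu_{+} = \cos\theta + i\sin\theta = e^{i\theta}$ and $\mu_{-} = -\cos\theta + i\sin\theta = -e^{-i\theta}$, i.e.\ $\mu_{\pm} = \pm e^{\pm i\arcsin(\lambda/\pnorm{1}{H})}$. To identify the eigenvectors, I would verify directly that $\ket{\mu_{\pm}} = (T + i\mu_{\pm}ST)\ket{\lambda}\ket{0} = \ket{\psi_\lambda} + i\mu_{\pm}S\ket{\psi_\lambda}$ satisfies $U\ket{\mu_{\pm}} = \mu_{\pm}\ket{\mu_{\pm}}$: substituting the two action formulas collapses the required equality of $S\ket{\psi_\lambda}$-coefficients to $i - 2\mu_{\pm}\bar\lambda = i\mu_{\pm}^{2}$, which is exactly the characteristic equation, while the $\ket{\psi_\lambda}$-coefficients match trivially.

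The only non-bookkeeping point — and hence the main obstacle — is establishing the strengthened identity $T^{H}ST(I\otimes\ket{0}) = (H/\pnorm{1}{H})\otimes\ket{0}$: one must track the swap $S$ and the two ancilla qubits carefully, check that the $\ket{1}$ branch drops out by orthogonality, and keep the complex square-root sign convention consistent so that the product $\sqrt{H_{jk}}\bigl(\sqrt{H_{jk}^{*}}\bigr)^{*}$ really collapses to $H_{jk}$. Everything downstream is elementary $2\times2$ linear algebra, and the degenerate case $|\lambda| = \pnorm{1}{H}$ (where $\mu_{+}=\mu_{-}$ and $\mathcal{V}_\lambda$ collapses) can be noted separately.
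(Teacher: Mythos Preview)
Your proposal is correct and follows essentially the same route as the paper: the paper also uses $T^{H}T=I$ together with the identity $T^{H}ST(I\otimes\ket{0})=(H/\pnorm{1}{H})\otimes\ket{0}$ (which it states as the pair $(I\otimes\bra{0})T^{H}ST(I\otimes\ket{0})=H/\pnorm{1}{H}$ and $(I\otimes\bra{1})T^{H}ST(I\otimes\ket{0})=0$), then computes $U\ket{\mu_\pm}=\mu_\pm T\ket{\lambda}\ket{0}+i\bigl(1+\tfrac{2\lambda i}{\pnorm{1}{H}}\mu_\pm\bigr)ST\ket{\lambda}\ket{0}$ and solves the resulting quadratic $1+2i\bar\lambda\mu_\pm=\mu_\pm^2$. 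Your extra step of first exhibiting the two-dimensional $U$-invariant subspace $\mathrm{span}\{\ket{\psi_\lambda},S\ket{\psi_\lambda}\}$ and writing down the $2\times2$ matrix is a helpful unpacking but not a different argument.
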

\begin{proof}
  By the fact that $T^{H}T = I$ and
  $(I\otimes\bra{0})T^{H}ST(I\otimes\ket{0}) = H/\pnorm{1}{H}$, and $(I \otimes \bra{1})T^{H} ST (I \otimes \ket{0}) = 0$, we have
  \begin{align}
	U\ket{\mu_{\pm}} = \mu_{\pm}T\ket{\lambda}\ket{0}+i\left(1+\frac{2\lambda i}{\pnorm{1}{H}}\mu_{\pm}\right)ST\ket{\lambda} \ket{0}.
  \end{align}
  In order for this state being an eigenstate, it must hold that
  \begin{align}
	1 + \frac{2\lambda i}{\pnorm{1}{H}}\mu_{\pm} = \mu_{\pm}^2,
  \end{align}
  and the solution is
  \begin{align}
	\mu_{\pm} = \frac{\lambda i}{\pnorm{1}{H}} \pm \sqrt{1 - \frac{\lambda^2}{\pnorm{1}{H}^2}} =\pm e^{\pm i\arcsin(\lambda/\pnorm{1}{H})}.
  \end{align}
\end{proof}

\paragraph{Linear combination of unitaries and Hamiltonian simulation.}
\label{par:lcu}

Next, we need to convert the quantum walk operator $U= i S(2TT^H -I)$ into an operator for
Hamiltonian simulation.
For this, we consider the generating functions for the Bessel functions, denoted by $J_m(\cdot)$.
From~\cite[(9.1.41)]{abramowitz1964handbook}, we know that it holds that
\begin{align}
\label{eq:relation_sum_exp}
  \sum_{m=-\infty}^{\infty}J_m(z)\mu_{\pm}^m = \exp\left(\frac{z}{2}\left(\mu_{\pm}-\frac{1}{\mu_{\pm}}\right)\right) = e^{iz\lambda/\pnorm{1}{H}},
\end{align}
where the second equality follows from Eq.~\eqref{eq:mupm} and the fact that $\sin(x)=(e^{ix}-e^{-ix})/2i$. This leads to the following linear combination of unitaries:
\begin{align}
  \label{eq:precise-V}
  V_{\infty} = \sum_{m=-\infty}^{\infty}\frac{J_m(z)}{\sum_{j={-\infty}}^{\infty}J_j(z)}U^m = \sum_{m=-\infty}^{\infty}J_m(z)U^m = e^{izH/\pnorm{1}{H}},
\end{align}
where the second equality follows from the fact that $\sum_{j={-\infty}}^{\infty}J_j(z) = 1$.

Since we cannot in practice implement the infinite sum,
we will in the following find an approximation to
$e^{-izH/\pnorm{1}{H}}$ by truncation the sum in Eq.~\eqref{eq:precise-V}:
\begin{align}
  \label{eq:lcu-vk}
  V_k = \sum_{m=-k}^k\frac{J_m(z)}{\sum_{j=-k}^{k}J_j(z)}U^m.
\end{align}
Here the coefficients are normalised by $\sum_{j=-k}^k J_j(z)$ so that they sum to 1. This will minimize the approximation error (see the proof of Lemma~\ref{lemma:truncate}), and the normalisation trick was originated in~\cite{berry2015hamiltonian}). The eigenvalues of $V_k$ are
\begin{align}
  \sum_{m=-k}^k\frac{J_m(z)}{\sum_{j=-k}^{k}J_j(z)}\mu_{\pm}^m.
\end{align}
Note that each eigenvalue of $V_k$ does not depend on $\pm$ as $J_{-m}(z) = (-1)^mJ_m(z)$.

To bound the error in this approximation, we require the following technical lemma.
\begin{lemma}
  \label{lemma:truncate}
  Let $V_k$ and $V_{\infty}$ be defined as above. There exists a positive integer $k$ satisfying $k \geq |z|$ and
  \begin{align}
    k = O\left(\frac{\log(\norm{H}/(\pnorm{1}{H}\epsilon))}{\log\log(\norm{H}/(\pnorm{1}{H}\epsilon))}\right),
  \end{align}
  such that
  \begin{align}
	\norm{V_k-V_{\infty}} \leq \epsilon.
  \end{align}
\end{lemma}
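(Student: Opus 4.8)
The plan is to reduce the operator-norm error $\norm{V_k-V_\infty}$ to a scalar tail sum of Bessel coefficients, using only unitarity of the quantum walk operator $U$, and then to estimate that tail. Write $S_k:=\sum_{j=-k}^{k}J_j(z)$, so that $V_k=S_k^{-1}\sum_{m=-k}^{k}J_m(z)U^m$ while $V_\infty=\sum_{m=-\infty}^{\infty}J_m(z)U^m=e^{izH/\pnorm{1}{H}}$. I would then split
\begin{equation}
V_\infty-V_k=\sum_{\abs{m}>k}J_m(z)U^m+\bigl(1-S_k^{-1}\bigr)\sum_{\abs{m}\le k}J_m(z)U^m,
\end{equation}
and, setting $\Delta_k:=\sum_{\abs{m}>k}\abs{J_m(z)}$, use $\norm{U^m}=1$ to get $\norm{V_\infty-V_k}\le\Delta_k+\frac{\abs{1-S_k}}{\abs{S_k}}\bigl\|\sum_{\abs{m}\le k}J_m(z)U^m\bigr\|$. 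Since $1-S_k=\sum_{\abs{j}>k}J_j(z)$ we have $\abs{1-S_k}\le\Delta_k$ and $\abs{S_k}\ge 1-\Delta_k$, while $\bigl\|\sum_{\abs{m}\le k}J_m(z)U^m\bigr\|\le\norm{V_\infty}+\Delta_k=1+\Delta_k$ because $V_\infty$ is unitary (for real $z$ and Hermitian $H$, $izH/\pnorm{1}{H}$ is skew-Hermitian). Hence $\norm{V_\infty-V_k}$ is at most a constant multiple of $\Delta_k$ once $\Delta_k\le\tfrac12$, and it suffices to choose $k$ with $\Delta_k=O(\epsilon)$.

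For the tail I would invoke the standard bound $\abs{J_m(z)}\le\frac{(\abs{z}/2)^{\abs{m}}}{\abs{m}!}$ for real $z$ and integer $m$ (see~\cite{abramowitz1964handbook}) together with $J_{-m}=(-1)^mJ_m$, so $\Delta_k\le 2\sum_{m>k}\frac{(\abs{z}/2)^m}{m!}$. Here the hypothesis $k\ge\abs{z}$ is precisely what makes the series geometrically dominated: for $m>k$ the ratio of consecutive terms is $\frac{\abs{z}/2}{m+1}\le\tfrac12$, so $\Delta_k\le 4\,\frac{(\abs{z}/2)^{k+1}}{(k+1)!}\le 4\bigl(\tfrac{e\abs{z}}{2(k+1)}\bigr)^{k+1}$ by Stirling's inequality $(k+1)!\ge((k+1)/e)^{k+1}$. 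A sharper version of this estimate, exploiting that the operator $H/\pnorm{1}{H}$ being exponentiated has spectral radius $\norm{H}/\pnorm{1}{H}$ rather than $1$ — so only the accuracy of the truncated Jacobi--Anger series on the subinterval $[-\norm{H}/\pnorm{1}{H},\,\norm{H}/\pnorm{1}{H}]$ is relevant — improves the effective precision demand by the factor $\norm{H}/\pnorm{1}{H}$, and this is what puts that ratio inside the logarithm in the stated bound on $k$.

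It then remains to solve $\bigl(\tfrac{e\abs{z}}{2(k+1)}\bigr)^{k+1}\le c\,\epsilon\,\norm{H}/\pnorm{1}{H}$ for $k$. Taking logarithms turns this into $(k+1)\log\!\bigl(\tfrac{2(k+1)}{e\abs{z}}\bigr)\ge\log\!\bigl(\pnorm{1}{H}/(\epsilon\norm{H})\bigr)+O(1)$, and the elementary fact that $x\log x\ge y$ is satisfied by some $x=O\!\bigl(\tfrac{\log y}{\log\log y}\bigr)$ produces a $k$ with $k=O\!\bigl(\tfrac{\log(\norm{H}/(\pnorm{1}{H}\epsilon))}{\log\log(\norm{H}/(\pnorm{1}{H}\epsilon))}\bigr)$; in the regime the lemma is applied $\abs{z}$ is at most an absolute constant, so this $k$ can be taken $\ge\abs{z}$ as required. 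Combining with the reduction above gives $\norm{V_k-V_\infty}=O(\Delta_k)\le\epsilon$.

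The main obstacle I expect is the quantitative tail analysis of the middle two paragraphs: coaxing the Bessel bound into a genuinely factorially-decaying form, pinpointing exactly where $k\ge\abs{z}$ is used, extracting the correct argument (with the $\norm{H}/\pnorm{1}{H}$ factor) inside the logarithm from the restricted-spectrum refinement, and then carrying out the $\log/\log\log$ inversion cleanly. The unitarity reduction and the renormalisation bookkeeping in the first paragraph are routine by comparison.
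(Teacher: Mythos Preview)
Your unitarity reduction in the first paragraph is correct and gives $\norm{V_\infty-V_k}=O(\Delta_k)$ with $\Delta_k\le 4\,(|z|/2)^{k+1}/(k+1)!$. But this only yields $k=O\!\bigl(\log(1/\epsilon)/\log\log(1/\epsilon)\bigr)$, which is \emph{weaker} than the stated bound: since $\norm{H}\le\pnorm{1}{H}$, the quantity $\norm{H}/(\pnorm{1}{H}\epsilon)$ is at most $1/\epsilon$, so the lemma asks for a potentially \emph{smaller} $k$. The missing $\norm{H}/\pnorm{1}{H}$ factor is the content of the lemma, not a cosmetic sharpening.

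The gap is in your ``sharper version'' paragraph. You argue that only the Jacobi--Anger accuracy on $[-\norm{H}/\pnorm{1}{H},\norm{H}/\pnorm{1}{H}]$ matters, but your reduction went through $\norm{U^m}=1$ and discarded the eigenvalue structure entirely; there is no way to recover a spectrum-dependent improvement from $\Delta_k$, which is just an $\ell^1$ tail independent of $H$. Moreover $V_k$ is a polynomial in $U$, not in $H/\pnorm{1}{H}$, and the relevant eigenvalues of $U$ are $\mu_\pm=\pm e^{\pm i\arcsin(\lambda/\pnorm{1}{H})}$ on the unit circle, so ``restricting to a subinterval'' is not the right picture. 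The paper instead works eigenvalue by eigenvalue: it writes the error at each $\lambda$ as (up to normalization) $\sum_{|m|>k}J_m(z)(\mu_\pm^m-1)$ and uses $|\mu_+^m-1|\le 2|m|\,|\lambda|/\pnorm{1}{H}$, which is exactly where the extra factor $\norm{H}/\pnorm{1}{H}$ enters. Handling $\mu_-$ separately (via the symmetry $J_{-m}=(-1)^mJ_m$) and the normalization $S_k^{-1}$ is then routine. That eigenvalue-level estimate $|\mu_\pm^m-1|\lesssim |m|\,\norm{H}/\pnorm{1}{H}$ is the key idea your proposal is missing; once you have it, your Bessel tail estimate and $\log/\log\log$ inversion go through unchanged.
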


\begin{proof}
The proof outlined here follows closely the proof of Lemma~8 in~\cite{berry2015hamiltonian}.
Recalling the definition of $V_k$ and $V_{\infty}$, we define the weights in $V_k$ by
\begin{align}
  \alpha_m := \frac{J_m(z)}{C_k},
\end{align}
where $C_k = \sum_{l=-k}^k J_l(z)$. The normalisation here is chosen so that $\sum_m a_m =1$ which will give the best result~\cite{berry2015hamiltonian}.\\
Since
\begin{equation}
\sum_{m=-\infty}^{\infty} J_m(z) = \sum_{m=[-\infty:-k-1;k+1:\infty]} J_m(z)  + \sum_{m=-k}^{k} J_m(z) = 1 ,
\end{equation}
observe that we have two error sources. The first one comes from the truncation of the series, and the second one comes from the different renormalisation of the terms which introduces an error in the first $\lvert m \rvert \leq k$ terms in the sum.
We therefore start by bounding the normalisation factor $C_k$. For the Bessel-functions for all $m$ it holds that $\lvert J_m(z) \rvert \leq \frac{1}{\lvert m\lvert !}\left\lvert \frac{z}{2} \right\rvert^{|m|}$, since $J_{-m}(z) = (-1)^m J_m(z)$~\cite[(9.1.5)]{abramowitz1964handbook}.
For $\lvert m \rvert \leq k$ we can hence find the following bound on the truncated part
\begin{align}
\sum_{m=[-\infty:-k-1;k+1:\infty]} \lvert J_m(z) \rvert &= 2 \sum_{m=k+1}^{\infty} \lvert J_m(z) \rvert \leq  2 \sum_{m=k+1}^{\infty} \frac{\lvert z/2 \rvert^m}{m!} \nonumber\\
 &= 2 \frac{\lvert z/2 \rvert^{k+1}}{(k+1)!} \left(1+\frac{|z/2|}{k+2} + \frac{|z/2|^2}{(k+2)(k+3)}  + \cdots\right) \nonumber\\
 &< 2 \frac{\lvert z/2 \rvert^{k+1}}{(k+1)!} \sum_{m=k+1}^{\infty} \left(\frac{1}{2}\right)^{m-(k+1)} = \frac{ 4 \lvert z/2 \rvert^{k+1}}{(k+1)!}.
\end{align}
Since $\sum_m J_m(z) =1$, based on the normalisation, we hence find that
\begin{equation}
\sum_{m=-k}^k J_m(z) \geq \left( 1- \frac{ 4 \lvert z/2 \rvert^{k+1}}{(k+1)!} \right),
\end{equation}
which is a lower bound on the normalisation factor $C_k$. Since $a_m = \frac{J_m(z)}{C_k}$, the correction is small, which implies that
\begin{equation}
\label{eq:dep_am}
a_m = J_m(z) \left( 1 + O \left( \frac{\lvert z/2 \rvert^{k+1}}{(k+1)!} \right) \right),
\end{equation}
and we have a multiplicative error based on the renormalisation.

Next we want to bound the error in the truncation before we join the two error sources.\\
From Eq.~(\ref{eq:relation_sum_exp}) we know that
\begin{equation}
e^{iz \lambda/\Lambda} - 1 = \sum_{m=-\infty}^{\infty} J_m(z) ( \mu_{\pm}^m -1),
\end{equation}
by the normalisation of $\sum_m J_m(z)$. From this we can see that we can hence obtain a bound on the truncated $J_m(z)$ as follows.
\begin{equation}
\label{eq:reordering_exp_sum}
\sum_{m=-k}^{k} J_m(z) ( \mu_{\pm}^m -1) = e^{iz \lambda/\Lambda} - 1 - \sum_{\substack{m = [-\infty: -(k+1);\\ (k+1):\infty]}} J_m(z)  ( \mu_{\pm}^m -1).
\end{equation}
Therefore we can upper bound the left-hand side in terms of the exact value of $V_{\infty}$, i.e.\ $e^{iz\lambda/\Lambda}$ if we can bound the right-most term in Eq.~\eqref{eq:reordering_exp_sum}. Using furthermore the bound in Eq.~\eqref{eq:dep_am} we obtain
\begin{align}
\label{eq:difference1}
\sum_{m=-k}^{k} a_m(z) ( \mu_{\pm}^m -1) = \left( e^{iz \lambda/\Lambda} - 1 - \sum_{\substack{m = [-\infty: -(k+1);\\ (k+1):\infty]}} J_m(z)  ( \mu_{\pm}^m -1)\right) \nonumber \\
\left( 1 + O \left( \frac{\lvert z/2 \rvert^{k+1}}{(k+1)!} \right) \right),
\end{align}
which reduced with $\left\lvert 2^{iz \lambda/\Lambda} -1 \right\rvert \leq \lvert z \lambda/\Lambda \rvert$ and $|z| \leq k$ to
\begin{equation}
\label{eq:difference2}
\sum_{m=-k}^{k} a_m(z) ( \mu_{\pm}^m -1) = e^{iz \lambda/\Lambda} - 1 -  O \left(\sum_{\substack{m = [-\infty: -(k+1);\\ (k+1):\infty]}} J_m(z)  ( \mu_{\pm}^m -1)\right).
\end{equation}
We can then obtain the desired bound $\norm{ V_{\infty} - V_k}$ by reordering the above equation, and using that $\sum_{m=-k}^k a_m(z)=1$ such that we have
\begin{equation}
\label{eq:final_error_bound}
\norm{ V_{\infty} - V_k} = \left\lvert \sum_{m=-k}^k a_m \mu_{\pm}^m - e^{iz \lambda/\Lambda} \right\rvert = O \left( \sum_{\substack{m = [-\infty: -(k+1);\\ (k+1):\infty]}} J_m(z)  ( \mu_{\pm}^m -1)\right).
\end{equation}
We hence only need to bound the right-hand side.\\
For $\mu_+$ we can use that $\lvert \mu_{+}^m - 1 \rvert \leq 2 \lvert m \lambda/ \Lambda \rvert =: 2 \lvert m \nu \rvert$ and obtain the bound $ 2 \frac{\lvert \nu \rvert}{k!}\left\lvert\frac{z}{2} \right\rvert^{k+1} $~\cite{berry2015hamiltonian}. For the $\mu_-$ case we need to refine the analysis and will show that the bound remains the same.
Let $\nu := \lambda/ \Lambda$ as above. First observe that
$J_m(z) \mu_-^m + J_{-m} (z) \mu_i^{-m} = J_{-m} (z) \mu_+^{-m} + J_m (z) \mu_+^m$, and it follows that
\begin{align}
 \sum_{m = -\infty}^{-(k+1)} J_m(z)  ( \mu_{-}^m -1)  &+ \sum_{m = k+1}^{\infty} J_m(z)  ( \mu_{-}^m -1) \nonumber \\
 &=\sum_{m = -\infty}^{-(k+1)} J_m(z)  ( \mu_{+}^m -1)  + \sum_{m = k+1}^{\infty} J_m(z)  ( \mu_{+}^m -1).
\end{align}
Therefore we only need to treat the $\mu_+$ case.
\begin{align}
\left\lvert \sum_{\substack{m = [-\infty: -(k+1);\\ (k+1):\infty]}} J_m(z)  ( \mu_{+}^m -1) \right\rvert &\leq 2 \left\lvert \sum_{m = k+1} ^{\infty} J_m(z)  ( \mu_{+}^m -1)  \right\rvert \nonumber\\
&\leq 2 \sum_{m= k+1}^{\infty} \lvert J_m(z) \rvert  \lvert \mu_{+}^m- 1 \rvert  \nonumber\\
& = 2 \sum_{m= k+1}^{\infty} \frac{1}{|m|!}\left\lvert\frac{z}{2} \right\rvert^{|m|}  \lvert \mu_{+}^m- 1 \rvert \nonumber\\
& \leq 4 \sum_{m= k+1}^{\infty} \frac{1}{|m|!}\left\lvert\frac{z}{2} \right\rvert^{|m|} m \lvert \nu \rvert \nonumber\\
&<   \frac{8 \lvert \nu \rvert }{(k+1)!}\left\lvert\frac{z}{2} \right\rvert^{k+1} (k+2) .
\end{align}
Using this bound, we hence obtain from Eq.~\eqref{eq:final_error_bound},
\begin{equation}
\label{eq:final_error_bound_2}
\norm{ V_{\infty} - V_k} = \left\lvert \sum_{m=-k}^k a_m \mu_{\pm}^m - e^{iz \lambda/\Lambda} \right\rvert \leq O \left(
\frac{\lambda}{k!\ \Lambda } \left\lvert\frac{z}{2} \right\rvert^{k+1} \right) = O\left(\frac{\norm{H}(z/2)^{k+1}}{\Lambda k!}\right).
\end{equation}
In order for the above equation being upper-bounded by $\epsilon$, it suffices to choose some $k$ that is upper bounded as claimed.
\end{proof}

As we can see from the above discussion, we can hence use the operator $V_k$ to implement
the time evolution according to $e^{-izH/\pnorm{1}{H}}$.
We can also immediately see from this, that $V_k$ is a linear combination of unitaries (LCU).
We proceed now to implement this LCU by using some well-known results.

In the following, we provide technical lemmas for implementing linear combination of unitaries.
Suppose we are given the implementations of unitaries $U_0$, $U_1$, \ldots, $U_{m-1}$, and coefficients $\alpha_0, \alpha_1, \ldots, \alpha_{m-1}$. Then the unitary
\begin{align}
  V = \sum_{j=0}^{m-1}\alpha_j U_j
\end{align}
can be implemented probabilistically by the technique called linear combination of unitaries (LCU)~\cite{kothari2014efficient}. Provided $\sum_{j=0}^{m-1}|\alpha_j| \leq 2$, $V$ can be implemented with success probability $1/4$. To achieve this, we define the multiplexed-$U$ operation, which is denoted by $\text{multi-}U$, as
\begin{align}
  \text{multi-}U\ket{j}\ket{\psi} = \ket{j}U_j\ket{\psi}.
\end{align}

The probabilistic implementation of $V$ is summarised in the following lemma.
\begin{lemma}
  \label{lemma:lcu-const}
  Let $\mathrm{multi}$-$U$ be defined as above. If $\sum_{j=0}^{m-1}|\alpha_j| \leq 2$, then there exists a quantum circuit that maps $\ket{0}\ket{0}\ket{\psi}$ to the state
  \begin{align}
	\frac{1}{2}\ket{0}\ket{0}\left(\sum_{j=0}^{m-1}\alpha_jU_j\ket{\psi}\right) + \frac{\sqrt{3}}{2}\ket{\Phi^{\bot}},
  \end{align}
  where $(\ketbra{0}{0}\otimes\ketbra{0}{0}\otimes I)\ket{\Phi^{\bot}} = 0$.
  Moreover, this quantum circuit uses $O(1)$ applications of $\mathrm{multi}$-$U$ and $O(m)$ 1- and 2-qubit gates.
\end{lemma}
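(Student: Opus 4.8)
The plan is to prove this by the standard \emph{prepare/select/unprepare} construction for linear combinations of unitaries. Write $\alpha_j = |\alpha_j| e^{i\theta_j}$ and set $s := \sum_{j=0}^{m-1}|\alpha_j| \leq 2$. On the $\lceil\log m\rceil$-qubit selection register I would introduce two state-preparation unitaries $B_{\mathrm{L}}, B_{\mathrm{R}}$ with
\begin{equation}
  B_{\mathrm{L}}\ket{0} = \frac{1}{\sqrt{s}}\sum_{j=0}^{m-1} e^{i\theta_j}\sqrt{|\alpha_j|}\,\ket{j}, \qquad B_{\mathrm{R}}\ket{0} = \frac{1}{\sqrt{s}}\sum_{j=0}^{m-1}\sqrt{|\alpha_j|}\,\ket{j}.
\end{equation}
Since the amplitudes here are fixed classical data, each $B$ can be synthesised by a cascade of (multiply-)controlled single-qubit rotations using $O(m)$ $1$- and $2$-qubit gates; no qRAM is needed, unlike in Lemma~\ref{lemma:statepreparation}.

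Next I would analyse the circuit $W := (B_{\mathrm{R}}^{\dagger}\otimes I)\,(\mathrm{multi}\text{-}U)\,(B_{\mathrm{L}}\otimes I)$ acting on the selection and target registers. A direct computation gives
\begin{equation}
  W\ket{0}\ket{\psi} = \frac{1}{s}\,\ket{0}\Bigl(\sum_{j=0}^{m-1}\alpha_j U_j\ket{\psi}\Bigr) + \ket{\Psi^{\bot}}, \qquad (\ketbra{0}{0}\otimes I)\ket{\Psi^{\bot}} = 0,
\end{equation}
because $B_{\mathrm{L}}\ket{0}$ spreads amplitude $s^{-1/2}e^{i\theta_j}\sqrt{|\alpha_j|}$ onto $\ket{j}$, $\mathrm{multi}\text{-}U$ attaches $U_j$, and re-projecting the selection register onto $\ket{0}$ via $B_{\mathrm{R}}^{\dagger}$ contributes a further factor $\langle 0|B_{\mathrm{R}}^{\dagger}|j\rangle = s^{-1/2}\sqrt{|\alpha_j|}$, so the weight on $U_j\ket{\psi}$ is $|\alpha_j|e^{i\theta_j}/s = \alpha_j/s$. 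This step uses exactly one application of $\mathrm{multi}\text{-}U$ and $O(m)$ further gates.

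It then remains to rescale the flagged amplitude from $1/s$ to $1/2$. I would append a fresh flag qubit, apply to it the single-qubit rotation $\ket{0}\mapsto \tfrac{s}{2}\ket{0} + \sqrt{1 - s^2/4}\,\ket{1}$ (well defined because $s\leq 2$), and then run $W$ on the selection and target registers, leaving the state
\begin{equation}
  \Bigl(\tfrac{s}{2}\ket{0}_{\mathrm{f}} + \sqrt{1-s^2/4}\,\ket{1}_{\mathrm{f}}\Bigr)\otimes\Bigl(\tfrac{1}{s}\ket{0}_{\mathrm{sel}}\sum_{j}\alpha_j U_j\ket{\psi} + \ket{\Psi^{\bot}}\Bigr).
\end{equation}
Its component along $\ket{0}_{\mathrm{f}}\ket{0}_{\mathrm{sel}}$ is exactly $\tfrac{s}{2}\cdot\tfrac{1}{s}\sum_j\alpha_j U_j\ket{\psi} = \tfrac{1}{2}\sum_j\alpha_j U_j\ket{\psi}$, whereas every other term has flag $\neq\ket{0}$ or selection $\neq\ket{0}$ and is therefore annihilated by $\ketbra{0}{0}\otimes\ketbra{0}{0}\otimes I$. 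Collecting the orthogonal remainder into $\ket{\Phi^{\bot}}$ gives the claimed output; the displayed coefficient $\tfrac{\sqrt{3}}{2}$ is the exact norm of that remainder precisely when $\sum_j\alpha_j U_j$ is norm-preserving on the relevant subspace, as holds for $V_k \approx e^{izH/\pnorm{1}{H}}$ in the application. The total cost is $O(m)$ gates for $B_{\mathrm{L}}$ and $B_{\mathrm{R}}^{\dagger}$, $O(1)$ for the flag rotation, and one call to $\mathrm{multi}\text{-}U$, as stated.

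The argument is conceptually routine; the only points that need real care are (i) confirming that an arbitrary length-$m$ amplitude vector is preparable in $O(m)$ $1$- and $2$-qubit gates, and (ii) keeping the two ancilla registers straight — the single flag qubit versus the $\lceil\log m\rceil$-qubit selection register — and checking that splitting the complex phases between $B_{\mathrm{L}}$ and $B_{\mathrm{R}}$ leaves the $\tfrac12$-weighted component exact. Neither is a genuine obstacle; I expect the main effort to be purely notational bookkeeping rather than anything deep.
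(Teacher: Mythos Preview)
Your argument is correct and follows the same prepare--select--unprepare LCU template as the paper. Two points of execution differ, and in both your version is the cleaner one. First, the paper bundles the flag and selection registers into a single preparation unitary $B$ with $B\ket{0}\ket{0}=\bigl(\sqrt{s/2}\,\ket{0}+\sqrt{1-s/2}\,\ket{1}\bigr)\otimes s^{-1/2}\sum_j\sqrt{\alpha_j}\ket{j}$ and takes $W=(B^{\dagger}\otimes I)(I\otimes\mathrm{multi}\text{-}U)(B\otimes I)$; but since $\mathrm{multi}\text{-}U$ does not touch the flag, the flag part of $B^{\dagger}B$ collapses to the identity and the circuit as written in the paper actually puts weight $1/s$, not $1/2$, on $\ket{0}\ket{0}$. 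Your variant---rotating the flag once by amplitude $s/2$ and \emph{not} un-rotating it---is exactly what is needed to obtain the stated $1/2$. Second, the paper writes $\sqrt{\alpha_j}$ and so implicitly takes $\alpha_j\ge 0$ (phases absorbed into $U_j$); your split $B_{\mathrm L}/B_{\mathrm R}$ handles complex $\alpha_j$ directly. Your remark that the $\sqrt{3}/2$ coefficient on $\ket{\Phi^{\bot}}$ is exact only when $\sum_j\alpha_j U_j$ is norm-preserving on $\ket{\psi}$ is also correct; the statement should be read as specialised to the near-unitary $V_k$ used downstream.
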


\begin{proof}
  Let $s = \sum_{j=0}^{m-1}|\alpha_j|$.
  We first define the unitary operator $B$ to prepare the coefficients:
  \begin{align}
	B\ket{0}\ket{0} = \left(\sqrt{\frac{s}{2}}\ket{0} + \sqrt{1-\frac{s}{2}}\ket{1}\right)\otimes \frac{1}{\sqrt{s}}\sum_{j=0}^{m-1}\sqrt{\alpha_j}\ket{j}.
  \end{align}
  Define the unitary operator $W$ as $W = (B^{H}\otimes I)(I \otimes \mbox{multi-}U)(B\otimes I)$. We claim that $W$ performs the desired mapping, as
  \begin{align}
	W\ket{0}\ket{0}\ket{\psi} =& (B^{H}\otimes I)(I\otimes \mbox{multi-}U)(B\otimes I) \ket{0}\ket{0}\ket{\psi} \nonumber\\
	=& \frac{1}{\sqrt{2}}(B^{H}\otimes I)\ket{0}\sum_{j=0}^{m-1}\sqrt{\alpha_j}\ket{j}U_j\ket{\psi} \nonumber \\
  +& \sqrt{\frac{2-s}{2s}}(B^{H}\otimes I)\ket{1}\sum_{j=0}^{m-1}\sqrt{\alpha_j}\ket{j}U_j\ket{\psi} \nonumber\\
	=& \frac{1}{2}\ket{0}\ket{0}\sum_{j=0}^{m-1}\alpha_jU_j\ket{\psi} + \sqrt{\gamma}\ket{\Phi^{\bot}},
  \end{align}
  where $\ket{\Phi^{\bot}}$ is a state satisfying $(\ketbra{0}{0}\otimes\ketbra{0}{0}\otimes I)\ket{\Phi^{\bot}} = 0$, and $\gamma$ is some normalisation factor.

  The number of applications of $\mbox{multi-}U$ is constant, as in the definition of $W$. To implement the unitary operator $B$, $O(m)$ 1- and 2-qubit gates suffice.
\end{proof}

Let $W$ be the quantum circuit in Lemma~\ref{lemma:lcu-const}, and let $P$ be the projector defined as $P = \ketbra{0}{0}\otimes\ketbra{0}{0}\otimes I$. We have
\begin{align}
  PW\ket{0}\ket{0}\ket{\psi} = \frac{1}{2}\ket{0}\ket{0}\sum_{j=0}^{m-1}\alpha_jU_j\ket{\psi}.
\end{align}
If $\sum_{j=0}^{m-1}\alpha_jU_j$ is a unitary operator, one application of the oblivious amplitude amplification operator $-W(1-2P)W^{H}(1-2P)W$ implements $\sum_{j=0}^{m-1}U_j$ with certainty~\cite{berry2017exponential}. However, in our application, the unitary operator $\widetilde{W}$ implements an approximation of $V_{\infty}$ in the sense that
\begin{align}
  \label{eq:approx-lcu}
  P\widetilde{W}\ket{0}\ket{0}\ket{\psi} = \frac{1}{2}\ket{0}\ket{0}V_k\ket{\psi},
\end{align}
with $\norm{V_k - V_{\infty}} \leq \epsilon$. The following lemma shows that the error caused by the oblivious amplitude amplification is bounded by $O(\epsilon)$.
\begin{lemma}
  \label{lemma:oaa}
  Let the projector $P$ be defined as above. If a unitary operator $\widetilde{W}$ satisfies $P\widetilde{W}\ket{0}\ket{0}\ket{\psi} = \frac{1}{2}\ket{0}\ket{0}\widetilde{V}\ket{\psi}$ where $\norm{\widetilde{V} - V} \leq \epsilon$. Then $\norm{-\widetilde{W}(I-2P)\widetilde{W}^{H}(I-2P)\widetilde{W}\ket{0}\ket{0}\ket{\psi} - \ket{0}\ket{0}V\ket{\psi}} = O(\epsilon)$.
\end{lemma}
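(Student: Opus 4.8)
The plan is to recognise $-\widetilde W(I-2P)\widetilde W^H(I-2P)\widetilde W$ as one round of oblivious amplitude amplification and to track how the $\epsilon$-error in $\widetilde V$ propagates through the exact amplitude-amplification rotation identity. Write $P=\ketbra{0}{0}\otimes\ketbra{0}{0}\otimes I$ for the projector onto the ``success'' flag, $\ket{\psi_0}:=\ket{0}\ket{0}\ket{\psi}$, and $\Pi':=\widetilde W P\widetilde W^H$. First I would use the algebraic identity $\widetilde W(I-2P)\widetilde W^H=I-2\Pi'$, which turns the operator of interest into $-(I-2\Pi')(I-2P)\ket a$ with $\ket a:=\widetilde W\ket{\psi_0}$. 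The key observation is that $\ket{\psi_0}\in\mathrm{range}(P)$, so $\Pi'\ket a=\widetilde W P\widetilde W^H\widetilde W\ket{\psi_0}=\widetilde W P\ket{\psi_0}=\ket a$; that is, $\ket a\in\mathrm{range}(\Pi')$.

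Next I would invoke Jordan's lemma for the two projectors $P$ and $\Pi'$: the Hilbert space decomposes orthogonally into subspaces of dimension at most two, each invariant under both $P$ and $\Pi'$. Since $\ket a\in\mathrm{range}(\Pi')$ and $P\ket a\in\mathrm{range}(P)$ are non-parallel — here I use $0<\norm{P\ket a}<1$, which holds because $\norm{P\ket a}=\tfrac12\norm{\widetilde V\ket\psi}=\tfrac12+\Ord{\epsilon}$ by the hypothesis together with $\norm{\widetilde V-V}\le\epsilon$ and unitarity of $V$ — the vectors $\ket a$ and $P\ket a$ span a common two-dimensional invariant plane $\mathcal P$, on which $P$ and $\Pi'$ restrict to the rank-one projectors onto $P\ket a$ and onto $\ket a$. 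Setting $\sin\theta:=\norm{P\ket a}$, $\ket{a_{\mathrm{good}}}:=P\ket a/\sin\theta\in\mathrm{range}(P)$ and $\ket{a_{\mathrm{bad}}}:=(I-P)\ket a/\cos\theta\in\ker(P)$, the standard two-dimensional amplitude-amplification computation on $\mathcal P$ gives the exact rotation identity
\[
-\widetilde W(I-2P)\widetilde W^H(I-2P)\widetilde W\ket{\psi_0}=\sin(3\theta)\ket{a_{\mathrm{good}}}+\cos(3\theta)\ket{a_{\mathrm{bad}}},
\]
and by hypothesis $P\widetilde W\ket{\psi_0}=\tfrac12\ket{0}\ket{0}\widetilde V\ket\psi$, so $\ket{a_{\mathrm{good}}}=\ket{0}\ket{0}\widetilde V\ket\psi/\norm{\widetilde V\ket\psi}$.

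The last step is the error bookkeeping. From $\norm{\widetilde V-V}\le\epsilon$ and $V$ unitary we get $\abs{\,\norm{\widetilde V\ket\psi}-1\,}\le\epsilon$, hence $\sin\theta=\tfrac12+\Ord{\epsilon}$, hence $\theta=\pi/6+\Ord{\epsilon}$, hence $3\theta=\pi/2+\Ord{\epsilon}$, giving $\sin(3\theta)=1-\Ord{\epsilon^2}$ and $\cos(3\theta)=\Ord{\epsilon}$. Thus the amplified state is within $\Ord{\epsilon}$ of $\ket{a_{\mathrm{good}}}$ in norm; and $\ket{a_{\mathrm{good}}}$ is within $\Ord{\epsilon}$ of $\ket{0}\ket{0}V\ket\psi$, because $\norm{\widetilde V\ket\psi/\norm{\widetilde V\ket\psi}-V\ket\psi}\le\abs{1-\norm{\widetilde V\ket\psi}}+\norm{\widetilde V\ket\psi-V\ket\psi}\le 2\epsilon$. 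The triangle inequality then yields $\norm{-\widetilde W(I-2P)\widetilde W^H(I-2P)\widetilde W\ket{0}\ket{0}\ket{\psi}-\ket{0}\ket{0}V\ket{\psi}}=\Ord{\epsilon}$, as claimed. I expect the only genuinely delicate point to be the justification of the two-dimensional invariant-plane picture — in particular, checking that $\ket{\psi_0}\in\mathrm{range}(P)$ is exactly what forces $\Pi'$ to act as a rank-one projector on $\mathcal P$, so that the exact ($\epsilon=0$) oblivious amplitude amplification identity cited as \cite{berry2017exponential} applies verbatim; once that is in place the robustness is just the Taylor expansion of $\sin$ and $\cos$ about $\pi/6$ together with two triangle inequalities.
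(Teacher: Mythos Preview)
Your geometric approach via Jordan's lemma is conceptually appealing but has a genuine gap. You claim that $\mathcal{P}:=\mathrm{span}\{\ket a,P\ket a\}$ is invariant under both $P$ and $\Pi'=\widetilde W P\widetilde W^H$, and that $\Pi'$ restricts to the rank-one projector onto $\ket a$. The invariance under $P$ is immediate, and $\Pi'\ket a=\ket a$ is fine, but $\Pi' P\ket a$ need not lie in $\mathcal P$. Indeed, using the adjoint of the hypothesis (which holds for all inputs, so $P\widetilde W^H P=\tfrac12\ketbra{0}{0}\otimes\ketbra{0}{0}\otimes\widetilde V^H$), one computes
\[
\Pi' P\ket a=\widetilde W P\widetilde W^H\Bigl(\tfrac12\ket0\ket0\widetilde V\ket\psi\Bigr)=\tfrac14\,\widetilde W\ket0\ket0\,\widetilde V^H\widetilde V\ket\psi,
\]
which lies in $\mathcal P$ only when $\widetilde V^H\widetilde V\ket\psi\propto\ket\psi$, i.e.\ when $\ket\psi$ is a singular vector of $\widetilde V$. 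Thus your single-angle rotation identity $\sin(3\theta)\ket{a_{\mathrm{good}}}+\cos(3\theta)\ket{a_{\mathrm{bad}}}$ is not valid for general $\ket\psi$. The ``delicate point'' you flagged is exactly where the argument fails; the fact that $\ket{\psi_0}\in\mathrm{range}(P)$ does not force $\Pi'$ to be rank one on $\mathcal P$.

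Your argument \emph{can} be repaired: write $\widetilde V=U_1\Sigma U_2^H$ (SVD), decompose $\ket\psi$ along the right singular vectors $\ket{u_{2,i}}$, and observe that each component does live in a single two-dimensional Jordan block with $\sin\theta_i=\sigma_i/2$; since $\norm{\widetilde V-V}\le\epsilon$ forces every $\sigma_i\in[1-\epsilon,1+\epsilon]$, each $\theta_i=\pi/6+O(\epsilon)$ and the block-wise rotation bounds sum to give the result. The paper, however, bypasses all of this: it simply expands $-\widetilde W(I-2P)\widetilde W^H(I-2P)\widetilde W\ket0\ket0\ket\psi$ algebraically, uses the adjoint relation $P\widetilde W^H P=\tfrac12(\ketbra{0}{0}\otimes\ketbra{0}{0})\otimes\widetilde V^H$ to reduce it to $\widetilde W\ket0\ket0\ket\psi+\ket0\ket0\widetilde V\ket\psi-\widetilde W\ket0\ket0\widetilde V^H\widetilde V\ket\psi$, and then notes that $\norm{\widetilde V^H\widetilde V-I}=O(\epsilon)$ makes the first and third terms cancel up to $O(\epsilon)$. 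This direct route is a few lines and needs no invariant-subspace analysis.
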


\begin{proof}
We have
\begin{align}
  -\widetilde{W}(I-2P) &\widetilde{W}^{H}(I-2P)\widetilde{W}\ket{0}\ket{0}\ket{\psi} \nonumber \\
  =& (\widetilde{W} + 2P\widetilde{W} - 4\widetilde{W}P\widetilde{W}^{H}P\widetilde{W})\ket{0}\ket{0}\ket{\psi} \nonumber\\
  =& (\widetilde{W} + 2P\widetilde{W} - 4\widetilde{W}P\widetilde{W}^{H}PP\widetilde{W}P)\ket{0}\ket{0}\ket{\psi} \nonumber\\
  =& \widetilde{W}\ket{0}\ket{0}\ket{\psi} + \ket{0}\ket{0}\widetilde{V}\ket{\psi} - \widetilde{W}\left(\ket{0}\ket{0}\widetilde{V}^{H}\widetilde{V}\ket{\psi}\right)
\end{align}
Because $\norm{\widetilde{V}-V} \leq \epsilon$ and $V$ is a unitary operator, we have $\norm{\widetilde{V}^{H}\widetilde{V} - I} = O(\epsilon)$. Therefore, we have
\begin{align}
  \norm{-\widetilde{W}(I-2P)\widetilde{W}^{H}(I-2P)\widetilde{W}\ket{0}\ket{0}\ket{\psi} - \ket{0}\ket{0}\widetilde{V}\ket{\psi}} = O(\epsilon).
\end{align}
Thus
\begin{align}
  \norm{-\widetilde{W}(I-2P)\widetilde{W}^{H}(I-2P)\widetilde{W}\ket{0}\ket{0}\ket{\psi} - \ket{0}\ket{0}V\ket{\psi}} = O(\epsilon).
\end{align}
\end{proof}

Now we are ready to prove Theorem~\ref{thm:densehamiltoniansim}.
\begin{proof}[Proof of Theorem~\ref{thm:densehamiltoniansim}]
The proof we outline here follows closely the proof given in~\cite{berry2015hamiltonian}. The intuition of this algorithm is to divide the simulation into $O(t\pnorm{1}{H})$ segments, with each segment simulating $e^{-iH/2}$. To implement each segment, we use the LCU technique to implement $V_k$ defined in Eq.~\eqref{eq:lcu-vk}, with coefficients $\alpha_m = J_m(z)/\sum_{j=-k}^kJ_j(z)$.
When $z=-1/2$, we have $\sum_{j=-k}^k|\alpha_j| <2$. Actually, this holds for all $|z| \leq 1/2$ because
\begin{align}
    \sum_{j=-k}^k |\alpha_{j}| &\leq \sum_{j=-k}^k \frac{\lvert J_j(z)\rvert }{1-4 \frac{|z/2|^{k+1}}{(k+1)!}}
    \leq \sum_{j=-k}^k \frac{\left\lvert z/2\right\rvert^{|j|}}{|j|!} \left(1 - \frac{4|z/2|^{k+1}}{(k+1)!}\right)^{-1}  \nonumber \\
    &< \frac{8}{7} + \frac{16}{7} \sum_{j=1}^{\infty} \frac{1}{4^j j!} = \frac{8}{7} + \frac{16}{7} \left(\sqrt[4]{e}-1 \right)
    < 2,
\end{align}
where the first inequality follows from the fact that $\sum_{j=-k}^kJ_j(z) \geq 1-4|z/2|^{k+1}/(k+1)!$ (see~\cite{berry2015hamiltonian}), the second inequality follows from the fact that $|J_m(z)| \leq |z/2|^{|m|}/|m|!$ (see~\cite[(9.1.5)]{abramowitz1964handbook}), and the third inequality uses the assumption that $|z|\leq 1/2$.
Now, Lemmas~\ref{lemma:lcu-const} and~\ref{lemma:oaa} can be applied. By Eq.~\eqref{eq:relation_sum_exp} and using Lemma~\ref{lemma:truncate}, set
\begin{align}
  \label{eq:k}
  k = O\left(\frac{\log(\norm{H}/(\pnorm{1}{H}\epsilon'))}{\log\log(\norm{H}/(\pnorm{1}{H}\epsilon'))}\right),
\end{align}
and we obtain a segment that simulates $e^{-iH/(2\pnorm{1}{H})}$ with error bounded by $O(\epsilon')$. Repeat the segment $O(t\pnorm{1}{H})$ times with error $\epsilon' = \epsilon/(t\pnorm{1}{H})$, and we obtain a simulation of $e^{-iHt}$ with error bounded by $\epsilon$. It suffices to take
\begin{align}
  k = O\left(\frac{\log(t\norm{H}/\epsilon)}{\log\log(t\norm{H}/\epsilon)}\right).
\end{align}

By Lemma~\ref{lemma:lcu-const}, each segment can be implemented by $O(1)$ application of $\text{multi-}U$ and $O(k)$ 1- and 2-qubit gates, as well as the cost for computing the coefficients $\alpha_m$ for $m\in\{-k,\ldots,k\}$. The cost for each $\text{multi-}U$ is $k$ times the cost for implementing the quantum walk $U$. By Lemma~\ref{lemma:statepreparation}, the state in Eq.~\eqref{eq:varphi_j0} can be prepared with time complexity (circuit depth) $O(n^2b^{5/2})$, where $b$ is the number of bit of precision. To achieve the overall error bound $\epsilon$, we choose $b=O(\log(t\pnorm{1}{H}/\epsilon))$. Hence the time complexity for the state preparation is $O(n^2\log^{5/2}(t\pnorm{1}{H}/\epsilon))$, which is also the time complexity for applying the quantum walk $U$. Therefore, the time complexity for one segment is
\begin{align}
  O\left(n^2\log^{5/2}(t\pnorm{1}{H}/\epsilon)\frac{\log(t\norm{H}/\epsilon)}{\log\log(t\norm{H}/\epsilon)}\right).
\end{align}
Considering $O(t\pnorm{1}{H})$ segments, the time complexity is as claimed.
\end{proof}

Note that the coefficients $\alpha_{-k}, \ldots, \alpha_k$ (for $k$ defined in Eq.~\eqref{eq:k}) in Lemma~\ref{lemma:lcu-const} can be classically computed using the methods in~\cite{british1960bessel,olver1964error}, and the cost is $O(k)$ times the number of bits of precision, which is $O(\log(t\norm{H}/\epsilon)$. This is no larger than the quantum time complexity.

\paragraph{Discussion.} We have hence seen that we can design a quantum Hamiltonian simulation algorithm
which hast time complexity $\widetilde{O}(\sqrt{N})$ even for non-sparse Hamiltonians.
The algorithm relies heavily on the access to a seemingly powerful input model.
While it is questionable that it is even possible to implement such a data structure physically due to the
exponential amount of quantum resources~\cite{aaronson2015read,adcock2015advances,ciliberto2018quantum},
and this requirement might even be further increased through a potentially strong requirements in terms of hte
error rate per gate of $O(1/\mathrm{poly}(N))$ to retain a feasible error rate for
applications~\cite{arunachalam2015robustness}, for us there is an even more important question:
How fast can classical algorithms be if they are given a similarly
powerful data structure?
We investigate this question in the next section, where we use the so-called Nystr\"om approach to
simulate a Hamiltonian on a classical computer.

\subsection{Hamiltonian Simulation with the Nystr\"om method}
\label{ssec:ham_sim_nystrom}

In this section, we derive a classical, randomised algorithm for the strong
simulation of quantum Hamiltonian dynamics which is based on the \Nystrom{} method
that we introduced in Section~\ref{ssec:related}[Quantum-inspired
classical algorithms for Hamiltonian simulation].
We particularly prove the results in Theorem~\ref{thm:nystrominformal},
and the Corollary~\ref{co:maininformal}.
For this, we develop an algorithm for so-called
strong quantum simulation, where our objective is to obtain an algorithm which
can compute the amplitude of a particular outcome and hence the entries of the
final state after the evolution for time $t$.
This is in contrast to the weak simulation, where we require the algorithm
to only be able to sample from the output distribution of a quantum circuit.
Informally, in the one case we are hence able to query the algorithm with an
index $i$, and the algorithm will be able to return $\psi[i]$ (or more precisely
the projection of the state vector in the $i$-th element of the computational basis)
while in the other case we can only obtain the basis state $i$
with probability $\abs{\psi[i]}^2$.
Strong simulation is known to have unconditional and exponential lower
bounds~\cite{huang2018explicit}.
This implies that it is therefore in general hard for both classical and quantum
computers.
On the other hand, weak simulation can be performed efficiently by a quantum
computer (for circuits of polynomial size).
From the perspective of computational complexity, we are therefore
attempting to solve a stronger problem compared to the Hamiltonian simulation
that the quantum algorithm performs, since the latter can only sample
from the probability distribution induced by measurements on the output state.
Surprisingly, we are still able to find cases in which the time evolution can be simulated efficiently.
Our algorithm is able to efficiently perform the Hamiltonian simulation and output the
the requested amplitude (c.f., Theorem~\ref{thm:nystrominformal}) if we grant it access
to a memory structure which fulfills the following requirements.
Note that these are in principle more general than the assumption of the
classical memory structure from
Definition~\ref{def:classical_datastructure} but the memory structure immediately fulfills
these requirements.
On the other hand, we also require input and row sparsity in order to perform
the simulation efficiently (i.e., in $O(\log(N))$ time for a
$n$-qubit system with dimension $N=2^n$.
We next discuss these requirements in more detail.

\paragraph{Input Requirements}
\label{par:input_requirements}
We assume throughout the chapter that the input matrix and state are sparse, i.e.,
we assume that every row of $H$ has at most $s$ non-zero entries, and that
the input state $\psi$ has at most $q$ non-zero entries.
In order for algorithm to work, we then require the following assumptions:

\begin{enumerate}
  \item We require $H$ to be {\em row-computable}, i.e.,
  there exists a classical efficient algorithm that, given a row-index $i$,
  outputs a list of the non-zero entries (and their indices) of the row.
  If we are not having access to a memory structure such as described in
  Def.~\ref{def:classical_datastructure} or a structured Hamiltonian,
  then this condition is in general only fulfilled if every row of $H$ has at most a
  number $s = O(\polylog(N))$ of non-zero entries.
  \item We require that the entries of the initial state $\psi$ are row-computable, i.e.,
  there exists a classical efficient algorithm that outputs a list of the non-zero entries similar as above.
  In order for this to be generally valid, we require the state to have at most $q=O(\polylog(N))$ non-zero entries. However, we can also use the data structure given in
  Def.~\ref{def:classical_datastructure} or a structured input.
  \item We require $H$ to be efficiently {\em row-searchable}.
  This condition informally states that we can efficiently sample randomly selected indices of the rows of $H$
  in a way proportional to the norm of the row (general case) or the diagonal element of the row (positive semidefinite case).
  This is fundamentally equivalent to the sample-and-query access
  discussed earlier, and indeed, the data structure allows us to immediately perform this
  operation as well.
  We will show this relationship later on.
\end{enumerate}

While the notion of row-searchability is commonly assumed to hold in the randomised numerical
linear algebra literature (e.g.,~\cite{frieze2004fast}[Section $4$]), in the context of quantum
systems this is not given in general, since we are dealing with exponentially sized matrices.
It is of course reasonable to assume that for a polynomially sized matrix we are indeed able
to evaluate all the row-norms efficiently in a time (number of steps)
proportional to the number of non-zero entries.

In order to obtain an efficient algorithm (i.e., one which only depends logarithmically on the dimension $N = 2^n$ for a $n$-qubit system),
we require sparsity of the Hamiltonian $H$ and the input state $\psi$.
In general we therefore require $s=q=O(\polylog(N))$ for the non-zero entries
in order for the algorithm to be efficient.

\paragraph{High-level description of the algorithm}
The algorithm proceeds by performing a two steps approximation.
First, we approximate the Hamiltonian $H$ in terms of a low rank operator $\widehat{H}$
which is small, and therefore more amendable for the computations which we perform next.
We obtain $\widehat{H}$ by sampling the rows with a probability proportional to the row norm,
and hence do rely on a \Nystrom{} scheme.
Second, we approximate the time evolution of the input state $e^{i\widehat{H} t}\psi$ via
a truncated Taylor expansion of the matrix exponential.
This can be efficiently performed since we can use the small operator $\widehat{H}$ and
the spectral properties of the truncated exponential..

As an addition, we separately consider the case of a generic Hamiltonian $H$ and the restricted case of positive semidefinite Hamiltonians.
As is the case in general, for the more restricted case of the PSD Hamiltonian, we are able to
derive tighter bounds.

We will briefly discuss the sampling scheme to obtain $\tilde{H}$ to give the reader an
idea of the overall procedure before going into the detailed proofs.

In both cases, our algorithm leverages on a low-rank approximation of the Hamiltonian $H$ to
efficiently approximate the matrix exponential $e^{i Ht}$ which are obtained by randomly
sampling $M = O(\polylog(N))$ rows according to the magnitude of the row norms, and then
collating them in a matrix $A\in \mathbb C^{M \times N}$.
Let now $A\in \mathbb C^{M \times N}$ be the matrix obtained by sub-sampling $M$ rows of $H$,
and $B\in\mathbb C^{M \times M}$ the matrix obtained by selecting the columns of $A$
whose indices correspond to those $M$ indices for the rows of $H$.

For SPD $H$, we then use an approximation of the form $\widehat{H}= AB^+ A^H$,
where $B^+$ is the pseudoinverse of the SPD matrix $B$.

Next, in order to perform the time evolution $e^{i\widehat{H}t}\psi$ by truncating the Taylor
series expansion of the matrix exponential function after the $K$-th order.
In order to apply the individual terms in the truncated Taylor series, we then
make use of the structure of $\widehat{H}$, and formulate the operator only in terms of linear
operations involving the matrices $A^HA$, $B^+$ and $B$ and the vector $A^H\psi$.
Under the above assumptions (Paragraph~\ref{par:input_requirements}[Input Requirements]),
and for $s=q=O(\mathrm{polylog}(N))$ all these operations can be performed efficiently.

In the general Hermitian setting, we form $A$ by first sampling the rows of $H$
and then rescale the sampled rows according to their sampling probability.
Unlike in the case of SPD $H$, we approximate the Hamiltonian by the
approximation $\widehat{H}^2 := AA^H$ to approximate $H^2$.
This approximation is useful here, since we decompose the matrix exponential
into two auxiliary functions, and then for each of these evaluate its truncated
Taylor series expansion.
By doing so, we can formulate the final approximation solely in terms of linear
operations involving $A^HA$ and $A^H\psi$.
These operations can then again be performed efficiently under the initial assumptions
oh $H$ and $\psi$.

\paragraph{Row-searchability implies efficient row-sampling}
\label{par:sampling}

As we see in the assumptions, all our algorithms require the Hamiltonian
to be row-searchable.
In this section we describe an efficient algorithm for sampling
rows of a row-searchable Hamiltonians according to some probability distribution.
Let $n \in \N$.
We first introduce a binary tree of subsets spanning $\{0,1\}^n$.
In the following, with abuse of notation, we identify binary tuples with the associated binary number.
Let $L$ be a binary string with $|L| \leq n$, where $|L|$ denotes the length of the string.
We denote with $S(L)$ the set
\begin{align}
S(L) = \{L\} \times \{0,1\}^{n - |L|} = \{(L_1,\dots, L_{|L|}, v_1,\dots,v_{n - |L|}) ~|~ v_1,\dots,v_{n - |L|} \in \{0,1\} \}.
\end{align}
We are now ready to state the row-searchability property for a matrix $H$.

\begin{definition}[Row-searchability]
  \label{cond:row-search}
  Let $H$ be a Hermitian matrix of dimension $2^n$, for $n \in \N$.
  $H$ is {\em row-searchable} if, for any binary string $L$ with $|L| \leq n$,
  it is possible to compute the following quantity in $O(\poly(n))$
  \begin{equation}
    \label{norm_diag}
    w(S(L)) = \sum_{i \in S(L)} h(i,H_{:,i}),
  \end{equation}
  where $h$ is the function computing the weight associated to the $i$-th column $H_{:,i}$.
  For positive semidefinite $H$ we use $h(i,H_{:,i}) = H_{i,i}$,
  i.e.\ the diagonal element $i$ while
  for general Hermitian $H$ we use $h(i, H_{:,i}) = \|H_{:,i}\|^2$.
\end{definition}
Row-searchability intuitively works as follows:
We are working with the following binary tree corresponding to a Hamiltonian $H$.
We start at the leaves of the tree, which contain the individual probabilities
according to which we want to sample from $H$.
The parents at each level are then given (and computed)
by the marginals over their children nodes, i.e., the sum over the probabilities
of the children (assuming discrete probabilities).
Using this tree, we can for a randomly sampled number in $[0,1]$ traverse through
the levels of the tree in $\log(N)$ time to find the leave node that is sampled,
i.e.\ the indices of the column of $H$.
More specifically, row-searchability then requires the evaluation of $w(S(L)))$
as defined in Eq.~(\ref{norm_diag}) which computes marginals of the diagonal of $H$
or the norm $\|H_{:,i}\|^2$ in the general case,
where the co-elements, \textit{i.e.} the elements where we are not summing over,
are defined by the tuple $L$.
Hence, for empty $L$, $w(S(L)) = \tr{H}$.
Note that this assumption is obviously closely related to the data structure given
in Definition~\ref{def:classical_datastructure}, and indeed we can use this
data structure to immediately perform these operations.

%

\begin{algorithm}
\caption{MATLAB code for the sampling algorithm\label{alg:sampling}}
\begin{flushleft}
{\bf Input:} \texttt{wS(L)} corresponds to the function $w(S(L))$ defined in Eq.~\ref{norm_diag}.\\
{{\bf Output:} $L$ is the sampled row index}
\end{flushleft}
\begin{center}
\begin{verbatim}
L = [];
q = rand()*wS(L);

for i=1:n
  if q >= wS([L 0])
      L = [L 0];
  else
      L = [L 1];
      q = q - wS([L 0]);
  end
end
\end{verbatim}
\end{center}
\end{algorithm}

Note that the function $h$ that we described above is indeed related to leverage score sampling,
which is a widely used sampling process in randomised numerical linear algebra~\cite{mahoney2011randomised,woodruff2014sketching}.
Leverage scores allow us to efficiently obtain sample probabilities which have a
sufficiently low variance to obtain fast algorithms with low error.

Alg.~\ref{alg:sampling} describes an algorithm, that, given a row-searchable $H$,
is able to sample an index with probability $p(j) = h(j, H_{:,j})/w(\{0,1\}^n)$.
Let $q$ be a random number uniformly sampled in $[0,T]$, where $T = w(\{0,1\}^n)$
is the sum of the weights associated to all the rows.
The algorithm uses logarithmic search, starting with $L$ empty and adding
iteratively $1$ or $0$, to find the index $L$ such that
$w(\{0, \dots, L-1\}) \leq q \leq w(\{0, \dots, L\})$.
The total time required to compute one index, is $O(n Q(n))$ where $Q(n)$ is
the maximum time required to compute a $w(S(L))$ for $L \in \{0,1\}^n$.
Note that if $w(S(L))$ can be computed efficiently for any $L \in \{0,1\}^n$,
then $Q(n)$ is polynomial and the cost of the sampling procedure will be polynomial.

\begin{remark}[Row-searchability more general than sparsity]
Note that if $H$ has a polynomial number of non-zero elements,
then $w(S(L))$ can be always computed in polynomial time.
Indeed given $L$, we go through the list of elements describing $H$
and select only the ones whose row-index starts with $L$ and then compute $w(S(L))$,
both step requiring polynomial time.
However $w(S(L))$ can be computed efficiently even for Hamiltonians that are not
polynomially sparse.
For example, take the diagonal Hamiltonian defined by $H_{ii} = 1/i$ for $i \in [2^n]$.
This $H$ is not polynomially sparse and in particular it has an exponential
number of non-zero elements, but still $w(S(L))$ can be computed in polynomial time,
here in particular in $O(1)$.
As it turns out, the data structure from Definition~\ref{def:classical_datastructure},
which was first proposed by Tang~\cite{tang2018quantum} to do a similar sampling
process efficiently, also allows us to perform this operation efficiently with the
difference, that this holds then true for arbitrary Hamiltonians.
\end{remark}

\paragraph{Algorithm for PSD row-searchable Hermitian matrices}
\label{par:PSD}

Given a $2^n \times 2^n$ (i.e., $N \times N$) matrix $H \succeq 0$,
the algorithm should output an approximation for the state given by the
time evolution
\begin{equation}\label{eq:true-state}
\psi(t) = \exp{(i H t)} \psi.
\end{equation}
The algorithm will do this through an expression of the form $\widehat{\exp}(i \widehat{H} t) \psi$,
where $\widehat{\exp}$ and $\widehat{H}$ are an approximation of the exponential function and
the low rank approximation of $H$ respectively.

The first algorithm we describe here applies for for $H\succeq 0$.
We will then generalise this result in the following section to arbitrary Hermitian $H$.
All our results hold under the row-searchability condition, i.e.
if condition~\ref{cond:row-search} is fulfilled.

Let $h$ be the diagonal of the positive semidefinite $H$ and let $t_1,\dots, t_M$, with $M \in \N$
be indices i.i.d. sampled with repetition from $\{1,\dots, 2^n\}$
according to the probabilities
\begin{equation}
  \label{eq:probabilities}
  p(q) = h_q / \sum_i h_i,
\end{equation}
e.g. via Alg.~\ref{alg:sampling}.
Then, let $B \in \CC^{M \times M}$ where $B_{i,j} = H_{t_i, t_j}$, for $1 \leq i,j \leq M$.
Let furthermore $A \in \CC^{2^n \times M}$ be the matrix with $A_{i,j} = H_{i,t_j}$
for $1 \leq i \leq 2^n$ and $1 \leq j \leq M$.
We then define the approximation for $H$ by $\widehat{H} = A B^{+} A^H$,
where $(\cdot)^+$ denotes again the pseudoinverse.
We now define a function $g(x) = (e^{i t x} - 1)/x$.
We can reformulate this, and immediately have $e^{i t x} = 1 + g(x)x$,
and note that $g$ is an analytic function, i.e.,
it has a series expansion
\[
g(x) = \sum_{k\geq 1} \frac{(i t)^k}{k!} x^{k-1}.
\]
Then
\begin{align}
e^{i\widehat{H}t} = I + g(\widehat{H})\widehat{H} = I + g(A B^{+} A^H)A B^{+} A^H = I + A g(B^{+} A^H A) B^{+} A^H,
\end{align}
where the last step is due to the fact that for any analytic function
$q(x) = \sum_{k \geq 0} \alpha_k x^k$, it holds that
\begin{align*}
q(A B^{+} A^H)A B^{+} A^H &= \sum_{k\geq 1} \alpha_k (A B^{+} A^H)^{k} A B^{+} A^H \\
& = A \sum_{k\geq 1} \alpha_k (B^{+} A^H A)^k B^{+} A^H = A q(B^{+} A^H A) B^{+} A^H.
\end{align*}
By writing $D = B^{+} A^H A$, the algorithm then performs the
operation
\[
\widehat{\psi}_M(t) = \psi +  A g(D) B^{+} A^H \psi.
\]

Next, in order to make the computation feasible,
we truncate the series expansion after a finite number of terms.
To do this, we hence approximate $g$ with $g_K(x)$,
which limits the series defining $g$ to the first $K$ terms, for $K \in \N$.
Moreover, we can evaluate the function $g_K(D)B^{+}(A^H\psi)$
in an iterative fashion, and for this chose
\[
b_j = \frac{(i t)^{K-j}}{(K-j)!}v + D b_{j-1}, \quad v =  B^{+} (A^H \psi),
\]
where $b_0 = \frac{(i t)^{K}}{K!}v$ and so
$b_{K-1} = g(D) B^{+} A^H \psi $.
Then, the new approximate state is given by
\begin{equation}\label{eq:algo}
\widehat{\psi}_{K, M}(t) = \psi +  A b_{K-1}.
\end{equation}
\begin{algorithm}[t]
\caption{MATLAB code for approximating Hamiltonian dynamics when $H$ is PSD\label{alg:Nystrom}}
\begin{flushleft}
{\bf Input:} $M$, $T = t_1,\dots, t_M$ list of indices computed via Alg.~\ref{alg:sampling}. The function \texttt{compute\_H\_subMatrix}, given two lists of indices, computes the associated submatrix of $H$. \texttt{compute\_psi\_subVector}, given a list of indices, computes the associated subvector of $\psi$.\\
{\bf Output:} vector $b$, s.t. $b=e^{iHt}\psi$.
\end{flushleft}
\begin{center}
\begin{verbatim}
B = compute_H_subMatrix(T, T);

D = zeros(M,M);
v = zeros(M,1);
for i=1:(2^n/M)
    E = compute_H_subMatrix((i-1)*M+1:M*i, T);
    D = D + E'*E;
    v = v + E'*compute_Psi_subVector((i-1)*M+1:M*i);
end
u
D = B\D;

b = zeros(M,1);
for j=1:K
    b = (1i*t)^(K-j)/factorial(K-j) * v + D*b;
end
\end{verbatim}
\end{center}
\end{algorithm}
We summarise the algorithm in form of a MATLAB implementation in Alg.~\ref{alg:Nystrom}. \\
Next, we analyse the cost of this algorithm.
Let the row sparsity $s$ of $H$ be of order $\poly(n)$.
Then the total cost of applying this operator is given by $O(M^2 \poly(n) + KM^2 + M^3) )$
time complexity, where the terms $M^3$ and $M^2 \poly(n)$ are resulting from the calculation
of $D$ and the inverse.
To compute the total cost in terms of space/memory,
note that we do not have to save $H$ or $A$ in memory,
but only $B, D$ and the vectors $v, b_j$, which requires a total cost of $O(M^2)$.
Indeed $D$ can be computed in the following way:
Assuming, without loss of generality, to have $2^n/M \in \N$, then
\[
D = B^{-1} \sum_{i=1}^{2^n/M} A_{M(i-1)+1:Mi}^HA_{M(i-1)+1:Mi},
\]
where $A_{a:b}$ is the submatrix of $A$ containing the rows from $a$ to $b$.
A similar reasoning holds for the computation of the vector $v$.
In the above computation we have assumed that we can efficiently sample
from the matrix $H$ according to the probabilities in Eq.~\ref{eq:probabilities}.
In order to do this, we are relying on the sampling algorithm which is
summarised in Algorithm~\ref{alg:sampling}.

We next analyse the errors and complexity of the algorithm in more detail,
i.e., we derive bounds on $K$ and $M$ for a concrete approximation error
$\epsilon$.
The results are summarised in the following theorem,
which hold for the SPD case of $H$.

\begin{theorem}[Algorithm for simulating PSD row-searchable Hermitian matrices]
\label{thm:main_psd}
Let $\epsilon, \delta \in (0,1]$, let $K, M \in \N$ and $t > 0$. Let $H$ be positive semidefinite, where $K$ is the number of terms in the truncated series expansions
of $g(\hat H)$ and $M$ the number of samples we take for the approximation.
Let $\psi(t)$ be the true evolution (Eq.~\ref{eq:true-state}) and let $\widehat{\psi}_{K, M}(t)$ be the output of our Alg.~\ref{alg:Nystrom} (Eq.~\ref{eq:algo}).
When
\begin{equation}
K \geq e\, t\norm{H} + \log\frac{2}{\epsilon},\qquad
\quad M \geq \max\left(405 \tr{H},~ \frac{72\tr{H} t}{\epsilon} \log\frac{36\tr{H} t}{\epsilon\delta}\right),
\end{equation}
then the following holds with probability $1-\delta$,
\[\norm{\psi(t) - \widehat{\psi}_{K,M}(t)} \leq \epsilon.\]
\end{theorem}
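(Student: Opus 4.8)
The bound will come from splitting the error into a Nystr\"om approximation error and a Taylor truncation error and controlling each by $\epsilon/2$. First I would record the algebraic identity obtained by unrolling the Horner recursion defining $b_{K-1}$: one gets $b_{K-1}=g_K(D)B^{+}A^{H}\psi$ with $g_K(x)=\sum_{k=1}^{K}\frac{(it)^k}{k!}x^{k-1}$ and $D=B^{+}A^{H}A$, so by the analytic-function identity $q(AB^{+}A^{H})AB^{+}A^{H}=A\,q(B^{+}A^{H}A)B^{+}A^{H}$ established above (with $\widehat H:=AB^{+}A^{H}$),
\[
\widehat{\psi}_{K,M}(t)=\psi+A\,g_K(D)B^{+}A^{H}\psi=\Big(\sum_{k=0}^{K}\tfrac{(it)^k}{k!}\widehat H^{\,k}\Big)\psi=:T_K(it\widehat H)\,\psi ,
\]
i.e.\ the algorithm outputs exactly the degree-$K$ Taylor polynomial of $e^{it\widehat H}$ applied to $\psi$. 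Hence, with $\psi(t)=e^{iHt}\psi$,
\[
\norm{\psi(t)-\widehat{\psi}_{K,M}(t)}\le \norm{(e^{iHt}-e^{i\widehat Ht})\psi}+\norm{(e^{i\widehat Ht}-T_K(it\widehat H))\psi},
\]
and I would bound the two summands separately.

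\textbf{Truncation term.} Since $B$ is a principal submatrix of $H\succeq 0$ it is PSD, so $B^{+}\succeq 0$ and $\widehat H\succeq 0$; moreover $H-\widehat H$ is PSD (it is the Schur-complement residual of the Nystr\"om scheme applied to $H\succeq0$), so $0\preceq\widehat H\preceq H$ and $\norm{\widehat H}\le\norm{H}$. Therefore the operator-norm remainder satisfies $\norm{e^{i\widehat Ht}-T_K(it\widehat H)}\le\sum_{k>K}\frac{(t\norm{H})^k}{k!}$, and the elementary tail estimate for the exponential series shows this is $\le\epsilon/2$ as soon as $K\ge e\,t\norm{H}+\log(2/\epsilon)$; since $\norm{\psi}=1$ this term is $\le\epsilon/2$.

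\textbf{Nystr\"om term.} Applying Duhamel's formula along the segment $r\mapsto K_r:=rH+(1-r)\widehat H$ (all $K_r$ Hermitian, hence all the exponentials appearing are unitary) gives $\norm{(e^{iHt}-e^{i\widehat Ht})\psi}\le t\norm{H-\widehat H}$, so it remains to show that the stated lower bound on $M$ forces $\norm{H-\widehat H}\le\lambda:=\epsilon/(2t)$ with probability $1-\delta$. Write $H=G^{H}G$ with columns $g_i$, so $\norm{g_i}^2=H_{ii}$, $\sum_i\norm{g_i}^2=\tr{H}$, $\norm{GG^{H}}=\norm{H}$, and $p(q)=\norm{g_q}^2/\tr{H}$. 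With $P$ the orthogonal projector onto $\operatorname{span}\{g_{t_1},\dots,g_{t_M}\}=\operatorname{range}(\widehat H)$ one has $H-\widehat H=G^{H}(I-P)G$, hence $\norm{H-\widehat H}=\norm{(I-P)\Sigma(I-P)}$ with $\Sigma:=GG^{H}$; and for the unbiased estimator $\widehat\Sigma:=\frac1M\sum_{\ell}\frac1{p(t_\ell)}g_{t_\ell}g_{t_\ell}^{H}$ every $v\perp\operatorname{range}(P)$ has $v^{H}\widehat\Sigma v=0$, so any event of the form $\widehat\Sigma\succeq\frac12\Sigma-\frac{\lambda}{2}I$ already implies $v^{H}\Sigma v\le\lambda\norm{v}^2$ for all such $v$, i.e.\ $\norm{H-\widehat H}\le\lambda$. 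Thus the entire probabilistic content reduces to the \emph{regularized} concentration statement
\[
\lambda_{\max}\Big((\Sigma+\lambda I)^{-1/2}(\Sigma-\widehat\Sigma)(\Sigma+\lambda I)^{-1/2}\Big)\le\tfrac12\qquad\text{with probability }1-\delta .
\]
I would prove this with a matrix Bernstein inequality — in the intrinsic-dimension refinement of Theorem~\ref{thm:matrix_bernstein} — applied to the centered summands $Z_\ell=(\Sigma+\lambda I)^{-1/2}\big(\tfrac1M\Sigma-\tfrac1{Mp(t_\ell)}g_{t_\ell}g_{t_\ell}^{H}\big)(\Sigma+\lambda I)^{-1/2}$: the preconditioning makes the one-sided bound $\lambda_{\max}(Z_\ell)\le 1/M$ and the variance proxy $\norm{\sum_\ell\mathbb{E}[Z_\ell^2]}\le \tr{H}/(\lambda M)$ both independent of $\norm{H}$, while the relevant intrinsic dimension is at most $d_\lambda:=\tr\!\big(\Sigma(\Sigma+\lambda I)^{-1}\big)\le\tr{H}/\lambda$, which removes the ambient factor $\log N$. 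Balancing the two regimes of the Bernstein bound then yields the requirement $M=\Omega\big(\frac{\tr{H}}{\lambda}\log\frac{\tr{H}}{\lambda\delta}\big)=\Omega\big(\frac{\tr{H}\,t}{\epsilon}\log\frac{\tr{H}\,t}{\epsilon\delta}\big)$, the $O(\tr{H})$ burn-in term absorbing the side condition of the inequality; a final union bound over this single random event completes the argument.

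\textbf{Main obstacle.} The delicate step is the last one: extracting a \emph{spectral}-norm Nystr\"om guarantee whose sample size is only linear in $t/\epsilon$ and in $\tr{H}$, with no $\norm{H}$ factor and no $\log N$. This is exactly what fails if one tries to control $\Sigma-\widehat\Sigma$ directly, and it is the reason one must (i) precondition by $(\Sigma+\lambda I)^{-1/2}$ before concentrating, and (ii) invoke the intrinsic-dimension version of matrix Bernstein rather than the ambient form stated earlier. The explicit constants $405,72,36$ are then a matter of routine bookkeeping of the two Bernstein regimes together with the truncation threshold $K\ge e\,t\norm{H}+\log(2/\epsilon)$; everything else is deterministic linear algebra.
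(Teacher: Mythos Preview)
Your proposal is correct and matches the paper's proof essentially step for step: the same error decomposition into Taylor truncation plus $t\norm{H-\widehat H}$ (Lemma~\ref{lemma:base-decomposition}), the same bound $\norm{\widehat H}\le\norm{H}$ together with the reduction of $\norm{H-\widehat H}$ to the regularized concentration quantity $\beta(\tau)=\lambda_{\max}\big((C+\tau I)^{-1/2}(C-\widetilde C)(C+\tau I)^{-1/2}\big)$ (Lemma~\ref{lemma:nystrom-analytic-bound}), and the same preconditioned intrinsic-dimension Bernstein argument to force $\beta(\tau)\le 1/2$ (Lemma~\ref{lemma:nystrom-probabilistic-bound}, via Proposition~8 of~\cite{rudi2015less}). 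The only differences are cosmetic: the paper cites~\cite{nakamoto2003norm} and~\cite{mathias1993approximation} for the Lipschitz and Taylor-remainder bounds you derive directly, and proves $\norm{\widehat H}\le\norm{H}$ via $\norm{\widehat P}=1$ rather than via the Schur-complement ordering.
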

Note that with the result above, we have that $\widehat{\psi}_{K,M}(t)$ in Eq.~\eqref{eq:algo} (Alg.~\ref{alg:Nystrom}) approximates $\psi(t)$, with error at most $\epsilon$ and with probability at least $1-\delta$, requiring a computational cost that is $\Ord{\frac{s t^2\tr{H}^2}{\epsilon^2} \log^2\frac{1}{\delta}}$ in time and $\Ord{\frac{t^2\tr{H}^2}{\epsilon^2} \log^2\frac{1}{\delta}}$ in memory.

In the following we now prove the first main result of this work.
To prove Theorem~\ref{thm:main_psd} we decompose the error into multiple
contributions. Lemma~\ref{lemma:base-decomposition} performs a basic decomposition
of the error in terms of the distance between $H$ and
the approximation $\widehat{H}$ as well as in terms of the approximation $g_K$ with respect to $g$.
Lemma~\ref{lemma:nystrom-analytic-bound} then provides an analytic bound on the
distance between $H$ and $\widehat{H}$, expressed in terms of the expectation of
eigenvalues or related matrices which are then concentrated in Lemma~\ref{lemma:nystrom-probabilistic-bound}.

\begin{lemma}\label{lemma:base-decomposition}
Let $K, M \in \N$ and $t > 0$, then
\[\norm{\psi(t) - \widehat{\psi}_{K, M}(t)} \leq t \norm{H - \widehat{H}} + \frac{(t\norm{\widehat{H}})^{K+1}}{(K+1)!}.\]
\end{lemma}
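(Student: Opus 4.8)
The plan is to split the total error into a \emph{Nyström} contribution, coming from replacing $H$ by the low-rank approximant $\widehat{H} = AB^{+}A^H$, and a \emph{truncation} contribution, coming from replacing the matrix exponential by its degree-$K$ Taylor polynomial. Since $\psi(t) = e^{iHt}\psi$ by Eq.~\eqref{eq:true-state}, inserting $e^{i\widehat{H}t}\psi$ and using the triangle inequality gives
\begin{equation}
\norm{\psi(t) - \widehat{\psi}_{K,M}(t)} \le \norm{e^{iHt}\psi - e^{i\widehat{H}t}\psi} + \norm{e^{i\widehat{H}t}\psi - \widehat{\psi}_{K,M}(t)}.
\end{equation}
For the first term I would use that $\psi$ is a unit vector, so it is at most $\norm{e^{iHt}-e^{i\widehat{H}t}}$. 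Both $H$ and $\widehat{H}$ are Hermitian (in fact $\widehat{H}\succeq 0$, since $B$ is a principal submatrix of the PSD matrix $H$ and hence $B^{+}\succeq 0$), so $e^{iHs}$ and $e^{i\widehat{H}s}$ are unitary. Duhamel's formula yields
\begin{equation}
e^{iHt}-e^{i\widehat{H}t} = i\int_0^t e^{iHs}\,(H-\widehat{H})\,e^{i\widehat{H}(t-s)}\,ds,
\end{equation}
and bounding the integrand in spectral norm via unitary invariance and submultiplicativity gives $\norm{e^{iHt}-e^{i\widehat{H}t}}\le t\norm{H-\widehat{H}}$, the first summand of the claim.

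For the second term I would first identify the algorithm's output with a truncated Taylor polynomial of $e^{i\widehat{H}t}$. Unrolling the Horner-type recursion defining $b_{K-1}$ shows $b_{K-1}=g_K(D)B^{+}A^H\psi$, where $D=B^{+}A^HA$ and $g_K(x)=\sum_{k=1}^K\frac{(it)^k}{k!}x^{k-1}$. Applying the identity $q(AB^{+}A^H)\,AB^{+}A^H = A\,q(B^{+}A^HA)\,B^{+}A^H$ (valid for any analytic $q$, as established above) with $q=g_K$ gives
\begin{equation}
\widehat{\psi}_{K,M}(t) = \psi + A\,g_K(D)B^{+}A^H\psi = \psi + g_K(\widehat{H})\widehat{H}\,\psi = \sum_{k=0}^{K}\frac{(i\widehat{H}t)^k}{k!}\,\psi.
\end{equation}
Hence $e^{i\widehat{H}t}\psi - \widehat{\psi}_{K,M}(t)$ is exactly the degree-$K$ Taylor remainder of $e^{i\widehat{H}t}$ evaluated at $\psi$. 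Using the integral form of the remainder,
\begin{equation}
e^{i\widehat{H}t} - \sum_{k=0}^K\frac{(i\widehat{H}t)^k}{k!} = \frac{(i\widehat{H}t)^{K+1}}{K!}\int_0^1 e^{i\widehat{H}ts}(1-s)^K\,ds,
\end{equation}
and estimating the norm with $\norm{e^{i\widehat{H}ts}}=1$, $\norm{\widehat H}^{K+1}$, and $\int_0^1(1-s)^K\,ds=\tfrac1{K+1}$ (together with $\norm{\psi}=1$) produces the constant-free bound $\frac{(t\norm{\widehat{H}})^{K+1}}{(K+1)!}$. Adding the two contributions gives the lemma.

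Most of this is routine. The step that needs genuine care is the middle one: verifying that the iterative scheme of Algorithm~\ref{alg:Nystrom} really reproduces $\sum_{k=0}^K (i\widehat{H}t)^k/k!$ applied to $\psi$, which hinges on correctly moving a function of $B^{+}A^HA$ past to a function of $AB^{+}A^H$, and then on using the \emph{integral} form of the Taylor remainder rather than a crude geometric-tail estimate, since the latter would only give the bound up to a multiplicative constant whereas the lemma asserts the sharp $\frac{(t\norm{\widehat{H}})^{K+1}}{(K+1)!}$.
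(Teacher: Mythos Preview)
Your proposal is correct and follows essentially the same route as the paper: insert $e^{i\widehat{H}t}\psi$, apply the triangle inequality, and bound the two pieces separately. The only cosmetic differences are that the paper cites \cite{nakamoto2003norm} for $\norm{e^{iHt}-e^{i\widehat{H}t}}\le t\norm{H-\widehat{H}}$ and \cite{mathias1993approximation} for the Taylor-remainder bound, whereas you derive both directly via Duhamel's formula and the integral remainder; and the paper takes the identification $\widehat{\psi}_{K,M}(t)=(I+g_K(\widehat{H})\widehat{H})\psi$ as already established from the algorithm description, while you re-derive it.
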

\begin{proof}
By definition we have that $e^{ixt} = 1 + g(x)x$ with
$g(x) = \sum_{k \geq 1} x^{k-1}(i t)^k/k!$ and $g_K$ is the truncated version of $g$. By adding and subtracting $e^{i \widehat{H} t}$, we have
\begin{equation}
\norm{e^{iHt}\psi - (I+g_K(\widehat{H})\widehat{H})\psi}
\leq \norm{\psi}\ (\norm{e^{iHt} - e^{i\widehat{H}t}} + \norm{e^{i\widehat{H}t} - (I + g_K(\widehat{H})\widehat{H})}).
\end{equation}
By \cite{nakamoto2003norm},
\begin{align}
\norm{e^{iHt} - e^{i\widehat{H}t}} \leq t \norm{H - \widehat{H}},
\end{align}
moreover, by \cite{mathias1993approximation}, and since $\widehat{H}$ is Hermitian and hence all the eigenvalues are real, we have
\begin{align}
\norm{e^{i\widehat{H}t} - (I + g_K(\widehat{H})\widehat{H})} \leq \frac{(t\norm{\widehat{H}})^{K+1}}{(K+1)!}\sup_{l \in [0,1]}\norm{i^{K+1}e^{i l \widehat{ H}t}} \leq \frac{(t\norm{\widehat{H}})^{K+1}}{(K+1)!}.
\end{align}
Finally note that $\norm{\psi} = 1$.
\end{proof}
To study the norm $\norm{H - \widehat{H}}$ note that, since $H$ is positive semidefinite, there exists an operator $S$ such that $H = S S^H$, so $H_{i,j} = s_i^H s_j$ with $s_i,s_j$ the $i$-th and $j$-th row of $S$.
Denote with $C$ and $\widetilde{C}$ the operators
\[C = S^HS, \quad \widetilde{C} = \frac{1}{M}\sum_{j=1}^M \frac{\tr{H}}{h_{t_j}} s_{t_j}s_{t_j}^H.\]
We then obtain the following result.
\begin{lemma}\label{lemma:nystrom-analytic-bound}
The following holds with probability $1$. For any $\tau > 0$,
\begin{equation}
\norm{H - \widehat{H}} \leq \frac{\tau}{1-\beta(\tau)}, \quad
\quad \beta(\tau) = \lambda_{\max}((C+\tau I)^{-1/2}(C- \widetilde{C})(C + \tau I)^{-1/2}),
\end{equation}
moreover $\norm{\widehat{H}} \leq \norm{H}$.
\end{lemma}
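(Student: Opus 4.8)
The plan is to exploit the orthogonal-projector structure hidden inside the \Nystrom{} approximation. With $H = SS^H$ as in the statement, let $S_T$ be the $M$-row submatrix of $S$ formed by the rows indexed by $t_1,\dots,t_M$; then $A = SS_T^H$ and $B = S_TS_T^H$, so that
\[
\widehat{H} = AB^{+}A^H = S\big(S_T^H(S_TS_T^H)^{+}S_T\big)S^H = SPS^H,
\]
where $P := S_T^H(S_TS_T^H)^{+}S_T$. Using the singular value decomposition of $S_T$ one checks in one line that $P$ is the orthogonal projector onto $\mathrm{range}(S_T^H) = \mathrm{span}\{s_{t_1},\dots,s_{t_M}\}$. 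I would record this identity first, because the ``moreover'' part then drops out immediately: $0 \preceq P \preceq I$ yields $0 \preceq \widehat{H} = SPS^H \preceq SS^H = H$, hence $\norm{\widehat{H}} \le \norm{H}$.

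Next I would reduce the spectral error to a statement about $C = S^HS$. Since $I-P$ is a Hermitian idempotent, $H - \widehat{H} = S(I-P)S^H = \big(S(I-P)\big)\big(S(I-P)\big)^H$, so
\[
\norm{H - \widehat{H}} = \norm{(I-P)\,S^HS\,(I-P)} = \norm{(I-P)\,C\,(I-P)}.
\]
The key structural observation is that $\widetilde{C} = \tfrac1M\sum_j \tfrac{\tr{H}}{h_{t_j}}s_{t_j}s_{t_j}^H$ is a nonnegative combination of the rank-one operators $s_{t_j}s_{t_j}^H$, so $\mathrm{range}(\widetilde{C}) = \mathrm{span}\{s_{t_j}\} = \mathrm{range}(P)$; consequently $(I-P)\widetilde{C}(I-P) = 0$.

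Now comes the ridge-regularisation step. For $\tau>0$ the matrix $C+\tau I$ is invertible, and by construction $\beta(\tau)$ is exactly the least scalar with $C - \widetilde{C} \preceq \beta(\tau)(C+\tau I)$ (conjugate by $(C+\tau I)^{1/2}$); note also $\beta(\tau)\ge 0$, since $\tr{C} = \tr{\widetilde{C}} = \tr{H}$ forces $C - \widetilde{C}$ to have a nonnegative largest eigenvalue. Conjugating $C - \widetilde{C} \preceq \beta(\tau)(C+\tau I)$ by $I-P$ and using $(I-P)\widetilde{C}(I-P)=0$ together with $(I-P)^2 = I-P$ gives
\[
(I-P)\,C\,(I-P) \preceq \beta(\tau)\,(I-P)\,C\,(I-P) + \beta(\tau)\,\tau\,(I-P).
\]
In the regime $\beta(\tau)<1$ --- the only one in which the claimed bound is meaningful --- this rearranges to $(I-P)C(I-P) \preceq \tfrac{\beta(\tau)\tau}{1-\beta(\tau)}(I-P) \preceq \tfrac{\tau}{1-\beta(\tau)}(I-P)$, and passing to operator norms (with $\norm{I-P}\le 1$) and invoking the identity above for $\norm{H-\widehat{H}}$ completes the argument. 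Since $t_1,\dots,t_M$ are fixed once drawn, the bound is in fact deterministic for every realisation, which is why it is stated ``with probability $1$''.

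I do not anticipate a real obstacle: this is the standard ridge-regularised \Nystrom{} estimate, and everything after the first step is two congruences in the Loewner order. The only point demanding care is that first step --- verifying cleanly, via the SVD and Moore--Penrose pseudoinverse of $S_T$, that $AB^{+}A^H = SPS^H$ with $P$ an honest orthogonal projector, and that $P$ and $\widetilde{C}$ share the same range.
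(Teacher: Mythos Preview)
Your proposal is correct and structurally identical to the paper's proof: both write $\widehat{H}=S P S^H$ with $P$ the orthogonal projector onto $\mathrm{span}\{s_{t_j}\}$, deduce $\norm{\widehat{H}}\le\norm{H}$ from $\norm{P}=1$ (the paper uses submultiplicativity, you use the Loewner order; these are equivalent here), reduce $\norm{H-\widehat{H}}$ to $\norm{(I-P)S^H}^2=\norm{(I-P)C(I-P)}$, and note that $\widetilde{C}$ has the same range as $P$.

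The one genuine difference is the last step. The paper dispatches the inequality $\norm{(I-P)S^H}^2\le \tau/(1-\beta(\tau))$ by citing Propositions~3 and~7 of \cite{rudi2015less}. You instead give a self-contained two-line argument: the definition of $\beta(\tau)$ is equivalent to $C-\widetilde{C}\preceq\beta(\tau)(C+\tau I)$; conjugating by $I-P$ and using $(I-P)\widetilde{C}(I-P)=0$ gives $(1-\beta(\tau))(I-P)C(I-P)\preceq\beta(\tau)\tau(I-P)$, and taking norms finishes. This is exactly what those cited propositions contain, so you are not doing anything different in substance, but your version is more transparent and removes an external dependency. The only minor point to tighten is the justification that $\beta(\tau)\ge 0$: the trace identity $\tr{C}=\tr{\widetilde{C}}$ gives $\lambda_{\max}(C-\widetilde{C})\ge 0$, and you should mention Sylvester's law of inertia (or simply observe that congruence by $(C+\tau I)^{-1/2}$ preserves the signature) to carry this over to the transformed matrix.
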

\begin{proof}
Define the selection matrix $V \in \CC^{M\times 2^n}$, that is always zero except for one element in each row which is $V_{j, t_j} = 1$ for $1 \leq j \leq M$. Then we have that
\[A = HV^H, \quad B = V H V^H,\]
i.e., $A$ is again given by the rows according to the sampled indiced $t_1,\ldots, t_M$ and $B$ is the submatrix obtained from taking the rows and columns according to the same indices.
In particular by denoting with $\widehat{P}$ the operator $\widehat{P} = S^HV^H(VSS^HV^H)^+ VS$, and recalling that $H=SS^H$ and $C=S^HS$, we have
\[\widehat{H} = A B^+ A^H = SS^HV^H(VSS^HV^H)^+ VSS^H = S \widehat{P} S^H.\]
By definition $\widehat{P}$ is an orthogonal projection operator, indeed it is symmetric and, by definition $Q^+ Q Q^+ = Q^+$, for any matrix $Q$, then
\begin{align}
\widehat{P}^2  = S^HV^H[(VSS^HV^H)^+ (VSS^HV^H)(VSS^HV^H)^+] VS = S^HV^H (VSS^HV^H)^+  VS = \widehat{P}.
\end{align}
Indeed this is a projection in the row space of the matrix $R:=S^HV^H$, since with the singular value decomposition $R := U_R \Sigma_R V_R^H$ we have
\begin{align}
\hat{P} = R^H(R^HR)^+ R = V_R \Sigma_R U_R^H U_R \Sigma^{-2}_R U_R^H U_R \Sigma_R V_R^H = V_RV_R^H,
\end{align}
which spans the same space as $R$.
Finally, since $(I - \widehat{P}) = (I - \widehat{P})^2$, and $\norm{Z^HZ} = \norm{Z}^2$, we have
\begin{align}
\norm{H - \widehat{H}} = \norm{S(I - \widehat{P})S^H} = \norm{S(I - \widehat{P})^2S^H} = \norm{(I - \widehat{P})S^H}^2.
\end{align}
Note that $\widetilde{C}$ can be rewritten as $\widetilde{C} = S^HV^HLVS$, with $L$ a diagonal matrix, with $L_{jj} = \frac{\tr{H}}{M h_{t_j}}$. Moreover $t_j$ is sampled from the probability $p(q) = h_q/\tr{H}$, so $h_{t_j} > 0$ with probability 1, then $L$ has a finite and strictly positive diagonal, so $\widetilde{C}$ has the same range of $\widehat{P}$.
Now, with  $C = S^HS$, we are able to apply Proposition 3 and Proposition 7 of \cite{rudi2015less}, and obtain
\begin{equation}
\norm{(I - \widehat{P})S^H}^2 \leq \frac{\tau}{1-\beta(\tau)}, \quad
\quad \beta(\tau) = \lambda_{\max}((C+\tau I)^{-1/2}(C- \widetilde{C})(C + \tau I)^{-1/2}).
\end{equation}
Finally, note that, since $\widehat{P}$ is a projection operator we have that $\norm{\widehat{P}} = 1$, so
\[\norm{\widehat{H}} = \norm{S \widehat{P} S^H} \leq \norm{\widehat{P}}\norm{S}^2 \leq \norm{S}^2 = \norm{H},\]
where the last step is due to the fact that $H = SS^H$.
\end{proof}

\begin{lemma}\label{lemma:nystrom-probabilistic-bound}
Let $\delta \in (0,1]$ and $\tau > 0$.
When
\begin{equation}
M \geq \max\left(405 \tr{H},~ 67 \tr{H} \log \frac{\tr{H}}{2\delta}\right), \quad
\quad \tau = \frac{9\tr{H}}{M} \log \frac{M}{2\delta},
\end{equation}
then with probability $1-\delta$ it holds that
\[\lambda_{\max}((C+\tau I)^{-1/2}(C- \widetilde{C})(C + \tau I)^{-1/2}) \leq \frac{1}{2}.\]
\end{lemma}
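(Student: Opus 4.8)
The plan is to recognise $\widetilde{C}$ as an unbiased empirical estimator of $C$ and then invoke an effective-dimension matrix Bernstein inequality; the ambient dimension here is $2^n$, so a Bernstein bound carrying the naive dimension prefactor would be useless and the intrinsic-dimension refinement of Theorem~\ref{thm:matrix_bernstein} (as in \cite{tropp2015introduction}, equivalently the effective-dimension concentration bound of \cite{rudi2015less}) is essential. First I would record the unbiasedness: writing $H = SS^H$ with $s_i$ the $i$-th row of $S$ so that $H_{ij} = s_i^H s_j$, $h_i = \|s_i\|_2^2$ and $C = S^H S = \sum_i s_i s_i^H$, the reweighted summand satisfies $\mathbb{E}\left[\frac{\tr{H}}{h_{t_j}} s_{t_j} s_{t_j}^H\right] = \sum_{q : h_q > 0} \frac{h_q}{\tr{H}} \cdot \frac{\tr{H}}{h_q} s_q s_q^H = \sum_q s_q s_q^H = C$ (indices with $h_q = 0$ contribute $s_q = 0$ and are never sampled, so the reweighting is almost surely well defined). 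Hence, setting $G_{\tau} := (C+\tau I)^{-1/2} C (C+\tau I)^{-1/2}$ and $Z_j := \frac{\tr{H}}{h_{t_j}} (C+\tau I)^{-1/2} s_{t_j} s_{t_j}^H (C+\tau I)^{-1/2}$, the target quantity is exactly $\lambda_{\max}\left(G_{\tau} - \frac{1}{M}\sum_{j=1}^M Z_j\right)$, the deviation of an average of i.i.d.\ rank-one positive semidefinite matrices from their common mean $G_{\tau}$, which also satisfies $\|G_{\tau}\| \leq 1$.

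Next I would assemble the two moment estimates that feed the Bernstein bound. For the uniform (sup) bound, $s_q s_q^H \preceq C \preceq C+\tau I$ gives $s_q^H(C+\tau I)^{-1} s_q \leq 1$, while $(C+\tau I)^{-1} \preceq \tau^{-1} I$ gives $s_q^H(C+\tau I)^{-1} s_q \leq h_q/\tau$, so $\|Z_j\| \leq \frac{\tr{H}}{h_{t_j}} \cdot \frac{h_{t_j}}{\tau} = \frac{\tr{H}}{\tau}$ almost surely; the sharp constant is the sup effective dimension $\mathcal{N}_{\infty}(\tau) := \sup_q \frac{\tr{H}}{h_q} s_q^H (C+\tau I)^{-1} s_q \leq \tr{H}/\tau$. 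For the second moment, since $Z_j$ is rank one, $Z_j^2 = \left(\frac{\tr{H}}{h_{t_j}} s_{t_j}^H (C+\tau I)^{-1} s_{t_j}\right) Z_j \preceq \mathcal{N}_{\infty}(\tau)\, Z_j$, whence $\mathbb{E}[Z_j^2] \preceq \mathcal{N}_{\infty}(\tau)\, G_{\tau}$ and the matrix variance of the average is dominated by $\frac{\mathcal{N}_{\infty}(\tau)}{M} G_{\tau}$, whose trace is $\frac{\mathcal{N}_{\infty}(\tau)}{M}\, \mathcal{N}(\tau)$ with $\mathcal{N}(\tau) := \tr((C+\tau I)^{-1} C) \leq \tr{H}/\tau$ the trace effective dimension; it is $\mathcal{N}(\tau)$, not $2^n$, that plays the role of the dimension in the prefactor.

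Then I would apply the intrinsic/effective-dimension matrix Bernstein inequality at deviation level $1/2$ with the parameters above and substitute $\tau = \frac{9\tr{H}}{M}\log\frac{M}{2\delta}$. With this choice $\mathcal{N}_{\infty}(\tau) \leq \tr{H}/\tau = \frac{M}{9\log(M/2\delta)}$ and the variance norm is $\leq \frac{\tr{H}}{M\tau} = \frac{1}{9\log(M/2\delta)}$, so both the standard-deviation term and the linear Bernstein term become $O\!\left(1/\sqrt{\log(M/2\delta)}\right)$ while the tail exponent is $-\Omega(\log(M/2\delta))$; the hypotheses $M \geq 405\tr{H}$ and $M \geq 67\tr{H}\log\frac{\tr{H}}{2\delta}$ are exactly what is needed to guarantee, first, that $\tau \leq \|C\| \leq \tr{H}$ so the inequality is in its sub-exponential regime and $\mathcal{N}(\tau), \mathcal{N}_{\infty}(\tau) \leq M$, and second, that the residual $\mathcal{N}(\tau)$ and polynomial-in-$M$ prefactors are absorbed by the tail, leaving failure probability $\leq \delta$ and deviation $\leq 1/2$. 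The only real obstacle is arithmetic: one must pick the correct form of the Bernstein inequality — the intrinsic/effective-dimension version, since the naive factor $2 \cdot 2^n$ is fatal — and then verify, without too much slack, that the numerical thresholds $405$ and $67$ on $M$ together with the constant $9$ in $\tau$ push the bound below $1/2$ with confidence $1-\delta$.
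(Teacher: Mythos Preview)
Your proposal is correct and is essentially the same argument as the paper's: both recognise $\widetilde{C}$ as an unbiased empirical average of rank-one matrices with $\mathbb{E}[\zeta_j\zeta_j^H]=C$, bound $\zeta_j^H(C+\tau I)^{-1}\zeta_j\le \tr{H}/\tau$, and then feed these moment controls into the effective-dimension matrix Bernstein inequality of \cite{rudi2015less,tropp2012user}. The paper simply packages the concentration step as a black-box citation to Proposition~8 and Remark~1 of \cite{rudi2015less} (which supply the constants $405$, $67$, $9$ directly), whereas you unpack that proposition explicitly; there is no substantive difference in the route taken.
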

\begin{proof}
Define the random variable $\zeta_j = \sqrt{\frac{\tr{H}}{h_{t_j}}} s_{t_j},$
for $1 \leq j \leq M$. Note that
\[\norm{\zeta_j} \leq \sqrt{\frac{\tr{H}}{h_{t_j}}}\norm{s_{t_j}} \leq \sqrt{\tr{H}},\]
almost surely. Moreover,
\begin{equation}
\mathbb{E} \zeta_j \zeta_j^H = \sum_{q=1}^{2^n} p(q) \frac{\tr{H}}{h_q} s_{q} s_q^H
= \sum_{q=1}^{2^n} s_{q} s_q^H
= S^HS
= C.
\end{equation}
By definition of $\zeta_j$, we have
\[ \widetilde{C} = \frac{1}{M} \sum_{j=1}^M \zeta_j \zeta_j^H.\]
Since $\zeta_j$ are independent for $1 \leq j \leq M$, uniformly bounded, with expectation equal to $C$, and with $\zeta_j^H (C + \tau I)^{-1} \zeta_j \leq \norm{\zeta_j}^2 \tau^{-1} \leq \tr{H}\tau^{-1}$, we can apply Proposition 8 of \cite{rudi2015less}, that uses non-commutative Bernstein inequality for linear operators \cite{tropp2012user}, and obtain
\begin{align}
\lambda_{\max}((C+\tau I)^{-1/2}(C- \widetilde{C})(C + \tau I)^{-1/2}) \leq \frac{2 \alpha}{3 M} + \sqrt{\frac{2\alpha}{M t}},
\end{align}
with probability at least $1 - \delta$, with $\alpha = \log \frac{4 \tr C}{\tau \delta}$.
Since \[\tr{C} = \tr{S^HS} = \tr{SS^H} = \tr{H},\] by Remark~1 of \cite{rudi2015less}, we have that
\[\lambda_{\max}((C+\tau I)^{-1/2}(C- \widetilde{C})(C + \tau I)^{-1/2}) \leq \frac{1}{2},\]
with probability $1-\delta$, when
$M \geq \max(405 \kappa^2, 67 \kappa^2 \log \frac{\kappa^2}{2\delta})$ and $\tau$ satisfies $\frac{9\kappa^2}{M} \log \frac{M}{2\delta} \leq \tau \leq \norm{C}$ (note that $\norm{C} = \norm{H}$), where $\kappa^2$ is a bound for the following quantity
\begin{equation}
\begin{split}
\inf_{\tau > 0}[(\norm{C} + \tau) (\mathrm{ess}\sup \zeta_j^H (C + \tau I)^{-1} \zeta_j)] \\
\leq \tr{H} \inf_{\tau > 0} \frac{\norm{H} + \tau}{\tau} \leq \tr{H} := \kappa^2,
\end{split}
\end{equation}
where $\mathrm{ess}\sup$ here denotes the essential supremum.
\end{proof}
Now we are ready to prove Theorem~\ref{thm:main_psd}.
\begin{proof}[Proof of Theorem~\ref{thm:main_psd}]
By Lemma~\ref{lemma:base-decomposition}, we have
\[\norm{e^{iHt}\psi - (I+g_K(\widehat{H})\widehat{H})\psi} \leq t \norm{H - \widehat{H}} + \frac{(t\norm{\widehat{H}})^{K+1}}{(K+1)!}.\]
Let $\tau > 0$. By Lemma~\ref{lemma:nystrom-analytic-bound}, we know that $\norm{\widehat{H}} \leq \norm{H}$ and that
\[\norm{H - \widehat{H}} \leq \frac{\tau}{1-\beta(\tau)},\]
\[\quad \beta(\tau) = \lambda_{\max}((C+\tau I)^{-1/2}(C- \widetilde{C})(C + \tau I)^{-1/2}),\]
with probability $1$.
Finally by Lemma~\ref{lemma:nystrom-probabilistic-bound}, we have that the following holds with probability $1-\delta$,
\[\lambda_{\max}((C+\tau I)^{-1/2}(C- \widetilde{C})(C + \tau I)^{-1/2}) \leq \frac{1}{2},\]
when
\[ M \geq \max\left(405 \tr{H},~ 67 \tr{H} \log \frac{\tr{H}}{2\delta}\right)\]
and
\[\quad \tau = \frac{9\tr{H}}{M} \log \frac{\tr{H}}{2\delta}.\]
So we have
\[ \norm{e^{iHt}\psi - (I+g_K(\widehat{H})\widehat{H})\psi} \leq \frac{18 \tr{H}t}{M} \log\frac{M}{2\delta} + \frac{(t\norm{H})^{K+1}}{(K+1)!},\]
with probability $1-\delta$.\\
Now we select $K$ such that $\frac{(t\norm{H})^{K+1}}{(K+1)!} \leq \frac{\epsilon}{2}$. Since, by the Stirling approximation, we have
\[(K+1)! \geq \sqrt{2\pi} (K+1)^{K+3/2} e^{-K-1} \geq  (K+1)^{K+1} e^{-K-1}.\]
Since \[(1+x)\log(1/(1+x)) \leq -x,\] for $x > 0$ we can select $K = e t \norm{H} + \log\frac{2}{\epsilon} - 1$, such that we have

\begin{align*}
\log\left(\frac{(t\norm{H})^{K+1}}{(K+1)!}\right) &\leq (K+1) \log \frac{e t \norm{H}}{K+1} \\
&\leq e t \norm{H}\left(1 + \frac{\log\frac{2}{\epsilon}}{e t \norm{H}}\right) \log \frac{1}{1 + \frac{\log\frac{2}{\epsilon}}{e t \norm{H}}} \\
&\leq \log\frac{\epsilon}{2}.
\end{align*}
Finally we require $M$, such that \[\frac{18 \tr{H}t}{M} \log\frac{M}{2\delta} \leq \frac{\epsilon}{2},\] and select \[M = \frac{72\tr{H} t}{\epsilon} \log\frac{36\tr{H} t}{\epsilon\delta}.\]
Then we have that
\begin{equation*}
\frac{18 \tr{H}t}{M} \log\frac{M}{2\delta}
\leq \frac{\epsilon}{2} \frac{\log\frac{36\tr{H} t}{\epsilon\delta} +  \log\log\frac{36\tr{H} t}{\epsilon\delta}}{2 \log\frac{36\tr{H} t}{\epsilon\delta}}
\leq \frac{\epsilon}{2}.
\end{equation*}

\end{proof}

Next, we generalise this results to arbitrary Hermitian matrices
under the assumption that these are row-searchable,
i.e.\ assuming the ability to sample according to some leverage of the rows.
This will lead to our second result for classical Hamiltonian simulation with the
\Nystrom{} method.

\paragraph{Algorithm for row-searchable Hermitian matrices}
\label{par:hermitian_case}

As mentioned, in this section we now generalise our previous result and derive
an algorithm for simulating arbitrary Hermitian matrices.
We again provide guarantees on the runtime and errors of the algorithm,
generally under the assumption that $H$ is row-searchable.
Our algorithm for the general case has slightly worse guarantees compared to the
SPD case, which is to be expected.
Let in the following again $s$ be the maximum number of non-zero elements
in any of the rows of $H$,
$\epsilon$ be the error in the approximation of the output states of the
algorithm w.r.t.\ the ideal $\psi(t)$, and
$t$ the evolution time of the simulation.
Let further $K$ be the order of the truncated series expansions and $M$ the number
of samples we take for the approximation.

In the following we again start by describing the algorithm, and then
derive bounds on the runtime and error.

For arbitrary matrices $H$ we will use the following algorithm.
Sample $M \in \N$ independent indices $t_1, \dots t_M$, with probability
$p(i) = \frac{\|h_i\|^2}{\|H\|_F^2}$, $1 \leq i \leq 2^n$, where $h_i$ is the
$i$-th row of $H$ (sample via Alg.~\ref{alg:sampling}).
Let $A \in \mathbb{C}^{2^n \times M}$ be the matrix defined by
\[
A = \left[\frac{1}{\sqrt{Mp(t_1)}}h_{t_1}, \dots, \frac{1}{\sqrt{Mp(t_1)}} h_{t_M}\right].
\]
We then approximate $H$ via the matrix $\hat{H}^2=AA^H$.\\
Next, we again define two functions that we will use to approximate the
exponential $e^{ix}$,
\[
f(x) = \frac{\cos(\sqrt{x}) - 1}{x}, \quad g(x) = \frac{\sin(\sqrt{x}) - \sqrt{x}}{x\sqrt{x}},
\]
and denote with $f_K$ and $g_K$ the $K$-truncated Taylor expansions of $f$ and $g$,
for $K \in \N$, i.e.,
\[
f_K(x) = \sum_{j=0}^K \frac{(-1)^{j+1}x^j}{(2j+2)!}, \quad g_K(x) = \sum_{j=0}^K \frac{(-1)^{j+1}x^j}{(2j+3)!}.
\]
In particular note that
\[
e^{ix} = 1 + ix + f(x^2)x^2 + i g(x^2)x^3.
\]

Similar in  spirit to the previous approach for SPD $H$,
we hence approximate $e^{ix}$ via the functions $f_K$ and $g_K$.
The final approximation is then given by
\begin{equation}\label{eq:algo-frob}
\widehat{\psi}_{K,M}(t) = \psi + i t u + t^2A f_K(t^2A^HA)v + i t^3 A g_K(t^2A^HA) z,
\end{equation}
where $u = \hat H\psi$, $v = A^H\psi$, $z = A^Hu$.
The products $f_k(A^HA)v$ and $A g_K(A^HA) z$ are done by again exploiting
the Taylor series form of the two functions and performing only matrix vector products
similar to Alg.~\ref{alg:Nystrom}.
Recall that $s$ is the maximum number of non-zero elements in the rows of $H$,
and $q$ the number of non-zero elements in $\psi$.
The algorithm then requires $O(sq)$ in space and time to compute $u$,
$O(M\min(s,q))$ in time and $O(M)$ in space to compute $v$ and $O(Ms)$ in time and space to compute $z$.
We therefore obtain a total computational complexity of
\begin{align}
&{\rm time:}~~O\left(sq + M\min(s,q) + sMK\right), \\
&\quad {\rm space:}~~O\left(s(q + M)\right).
\end{align}

Note that if $s > M$ is it possible to further reduce the memory requirements at
the cost of more computational time, by computing $B = A^HA$, which can be done
in blocks and require $O(sM^2)$ in time and $O(M^2)$ in memory, and then compute
\[
\widehat{\psi}_{K,M}(t) = \psi + i t u + t^2A f_K(t^2B)v + i t^3 A g_K(t^2B) z.
\]
In this case the computational cost would be
\begin{align}
\label{eq:Timquation}
&{\rm time:}~~O\left(sq + M\min(s,q) + M^2(s +K)\right), \\
\label{eq:Spquation}
&\quad {\rm space:}~~O\left(sq + M^2\right).
\end{align}
The properties of the this algorithm are summarised in the following theorem,
which is a formal statement of Theorem~\ref{thm:nystrominformal}:

\begin{theorem}[Algorithm for simulating row-samplable Hermitian matrices]
\label{thm:main}
Let $\delta, \epsilon \in (0,1]$. Let $t > 0$ and $K, M \in \mathbb{N}$, where $K$ is the number of terms in the truncated series expansions
of $g(\widehat{H})$ and $M$ the number of samples we take for the approximation, and let $t > 0$.
Let $\psi(t)$ be the true evolution (Eq.~\ref{eq:true-state}) and let $\widehat{\psi}_{K,M}(t)$ be computed as in Eq.~\ref{eq:algo-frob}. When
\begin{align}
\label{eq:Mquation}
&M \geq \frac{256t^4(1+t^2\norm{H}^2)\norm{H}^2_F \norm{H}^2}{\epsilon^2} \log \frac{4\norm{H}^2_F}{\delta\norm{H}^2},\\
\label{eq:Kquation}
&\quad K \geq 4t \sqrt{\norm{H}^2 + \epsilon} + \log \frac{4(1+t\norm{H})}{\epsilon},
\end{align}
then
\[\norm{\widehat{\psi}_{K,M}(t) - \psi(t)} \leq \epsilon,\]
with probability at least $1-\delta$.
\end{theorem}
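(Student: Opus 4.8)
The plan is to follow the blueprint of the positive-semidefinite case (Lemma~\ref{lemma:base-decomposition} and Theorem~\ref{thm:main_psd}), but organised around the even/odd decomposition $e^{ix}=1+ix+f(x^2)x^2+ig(x^2)x^3$ that underlies Eq.~\eqref{eq:algo-frob}. First I would expand $\psi(t)=e^{iHt}\psi=\psi+itH\psi+f(t^2H^2)(t^2H^2)\psi+it\,g(t^2H^2)(t^2H^2)(H\psi)$ and note that $\widehat\psi_{K,M}(t)$ is obtained from this by three moves: keep the linear term $itH\psi=itu$ exact; replace each occurrence of the \emph{positive semidefinite} matrix $H^2$ by the Nystr\"om estimate $\widehat H^2:=AA^H$, using the shift identity $\varphi(AA^H)AA^H=A\,\varphi(A^HA)A^H$ valid for every analytic $\varphi$; and truncate $f,g$ to $f_K,g_K$. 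Writing $\Phi(y):=f(t^2y)(t^2y)=\cos(t\sqrt y)-1$ and $\Psi(y):=g(t^2y)(t^2y)=(\sin(t\sqrt y)-t\sqrt y)/(t\sqrt y)$, and $\Phi_K,\Psi_K$ for their $K$-truncations, the $\psi$ and $itu$ contributions cancel exactly and the triangle inequality with $\norm{\psi}=1$, $\norm{A^H(H\psi)}\le\norm{H}$ gives
\begin{align*}
\norm{\widehat\psi_{K,M}(t)-\psi(t)}
&\le \norm{\Phi(H^2)-\Phi(AA^H)}+t\norm{H}\,\norm{\Psi(H^2)-\Psi(AA^H)}\\
&\quad+\norm{(\Phi-\Phi_K)(AA^H)}+t\norm{H}\,\norm{(\Psi-\Psi_K)(AA^H)}.
\end{align*}

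For the two truncation terms I would use $\operatorname{spec}(AA^H)\subseteq[0,\norm{AA^H}]$ with $\norm{AA^H}\le\norm{H}^2+\norm{H^2-AA^H}$, and the matrix Taylor-remainder estimate of~\cite{mathias1993approximation} (already invoked in Lemma~\ref{lemma:base-decomposition}): the $K$-th remainders of the entire functions $\Phi,\Psi$ decay factorially, so on that interval they are at most $(t^2\norm{AA^H})^{K+2}/(2K+4)!$ up to a bounded factor, which falls below $\epsilon/2$ as soon as $K\gtrsim e\,t\sqrt{\norm{H}^2+\epsilon}+\log\frac{1+t\norm{H}}{\epsilon}$ --- exactly Eq.~\eqref{eq:Kquation} after absorbing the $O(\epsilon)$-sized gap between $\norm{AA^H}$ and $\norm{H}^2$.

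The substitution terms are the crux. Since $\Phi,\Psi$ are analytic on $[0,\infty)$ with derivatives bounded there --- one computes $\norm{\Phi'}_\infty\le t^2/2$ and $\norm{\Psi'}_\infty\lesssim t^2$ --- an operator-Lipschitz estimate (through the divided-difference / double-operator-integral representation $\Phi(X)-\Phi(Y)=\iint\Phi^{[1]}(\lambda,\mu)\,dE_X(\lambda)\,(X-Y)\,dE_Y(\mu)$, whose Schur-multiplier norm on the relevant bounded interval is controlled by $\norm{\Phi'}_\infty$ up to at most logarithmic factors) yields $\norm{\Phi(H^2)-\Phi(AA^H)}\lesssim t^2\norm{H^2-AA^H}$ and likewise for $\Psi$, so the substitution error is $\lesssim t^2(1+t\norm{H})\,\norm{H^2-AA^H}$. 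It then remains to control $\norm{H^2-AA^H}$ by concentration: $AA^H=\tfrac1M\sum_{j=1}^M\zeta_j\zeta_j^H$ with $\zeta_j=h_{t_j}/\sqrt{p(t_j)}$ i.i.d., and since $p(i)=\norm{h_i}^2/\norm{H}_F^2$ one has $\mathbb{E}[\zeta_j\zeta_j^H]=\sum_i h_i h_i^H=H^HH=H^2$ (Hermiticity of $H$), $\norm{\zeta_j}=\norm{H}_F$ almost surely, and $\norm{\mathbb{E}[\norm{\zeta_j}^2\zeta_j\zeta_j^H]}=\norm{H}_F^2\norm{H}^2$; the operator Bernstein inequality (Theorem~\ref{thm:matrix_bernstein}, in the form of~\cite{tropp2012user} used in~\cite{rudi2015less}) then gives, with probability $1-\delta$, $\norm{H^2-AA^H}\lesssim\sqrt{\norm{H}_F^2\norm{H}^2\,M^{-1}\log(\norm{H}_F^2/(\delta\norm{H}^2))}+\norm{H}_F^2 M^{-1}\log(\cdots)$. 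Forcing the substitution error below $\epsilon/2$ pins $M$ down to Eq.~\eqref{eq:Mquation} up to the explicit constant, and a union bound over this single bad event, together with the deterministic truncation bound, finishes the proof; the stated time and memory costs then follow by counting the matrix--vector products used to evaluate $f_K,g_K$, exactly as for Alg.~\ref{alg:Nystrom}.

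\textbf{The main obstacle} is the operator-Lipschitz step: forcing $\norm{\Phi(H^2)-\Phi(AA^H)}$ (and the $\Psi$ analogue) to scale \emph{linearly} in $\norm{H^2-AA^H}$ with only polynomial dependence on $t\norm{H}$. The naive route --- telescoping $\norm{X^k-Y^k}\le k\,R^{k-1}\norm{X-Y}$ term by term inside the power series of $\cos(t\sqrt{\cdot})$ --- fails badly: the resulting constant $\sum_k k\,|\widehat\Phi_k|\,R^{k-1}$ grows like $\tfrac{t}{\sqrt R}\sinh(t\sqrt R)=\Theta(e^{t\norm{H}})$ once the sign cancellations in the $\cos/\sin$ coefficients are discarded, which would make $M$ exponential in $t\norm{H}$. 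One therefore has to keep the cancellations, via the divided-difference/Schur-multiplier machinery or a careful resummation; this is the only genuinely delicate ingredient --- the rest is the error splitting of Lemma~\ref{lemma:base-decomposition} plus a textbook matrix Bernstein bound.
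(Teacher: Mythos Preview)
Your proposal follows essentially the same architecture as the paper's proof: the same even/odd decomposition $e^{ix}=1+ix+l(x^2)+im(x^2)x$ with $l(x)=\cos(\sqrt{x})-1$ and $m(x)=(\sin(\sqrt{x})-\sqrt{x})/\sqrt{x}$, the same two-way error split into a substitution term $\norm{l(t^2AA^H)-l(t^2H^2)}+t\norm{H}\,\norm{m(t^2AA^H)-m(t^2H^2)}$ and a truncation term, the same matrix concentration for $\norm{AA^H-H^2}$ (the paper uses the intrinsic-dimension bound of Hsu--Kakade--Zhang rather than Tropp's Bernstein, but the upshot is identical), and the same Stirling argument to size $K$. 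You also correctly identify the operator-Lipschitz step as the only delicate point and correctly diagnose why term-by-term telescoping destroys the cancellation and yields an $e^{t\norm{H}}$ constant.

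The one substantive difference is \emph{how} that obstacle is dispatched. You reach for the general double-operator-integral/Schur-multiplier machinery and hedge with ``up to at most logarithmic factors''. The paper instead invokes a clean off-the-shelf result: the Bernstein-type operator-Lipschitz inequality of Aleksandrov--Peller (Thm.~1.4.1 of \cite{aleksandrov2016operator}), which says that an entire function $f$ of exponential type $\sigma$ that is bounded on $\mathbb{R}$ satisfies $\norm{f(A)-f(B)}\le \sigma\,\norm{f}_{L^\infty(\mathbb{R})}\,\norm{A-B}$ for self-adjoint $A,B$. Since $l$ and $m$ have power series $\sum_{j\ge 1}(-1)^j z^j/(2j)!$ and $\sum_{j\ge 1}(-1)^j z^j/(2j+1)!$, they are entire with $|l(z)|,|m(z)|\le e^{|z|}$ (so $\sigma=1$) and $\norm{l}_{L^\infty},\norm{m}_{L^\infty}\le 2$, giving directly $\norm{l(t^2AA^H)-l(t^2H^2)}\le 2t^2\norm{AA^H-H^2}$ with no logarithmic loss and no dependence on the spectral radius. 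This is exactly the cancellation-preserving resummation you allude to in your final paragraph, packaged as a single citation; your route via divided differences would ultimately land in the same place but requires you to either reprove this or control the Besov $B^1_{\infty,1}$ norm of $l,m$ by hand.
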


Note that with the result above, we have that $\widehat{\psi}_{K,M}(t)$ in Eq.~\eqref{eq:algo-frob} approximates $\psi(t)$, with error at most $\epsilon$ and with probability at least $1-\delta$, requiring a computational cost that is $O\left(sq + M\min(s,q) + M^2(s +K)\right)$ in time and $O\left(sq + M^2\right)$ is memory.

Combining Eq.~\ref{eq:Timquation}  and~\ref{eq:Spquation} with Eq.~\ref{eq:Mquation} and~\ref{eq:Kquation}, the whole computational complexity of the algorithm described in this section, is

\begin{align}
&{\rm time:}~~O\left(sq + \frac{t^9\norm{H}^4_F \norm{H}^7}{\epsilon^4}\left(n + \log\frac{1}{\delta}\right)^2\right),\\
&{\rm space:}~~O\left(sq + \frac{t^8\norm{H}^4_F \norm{H}^6}{\epsilon^4}\left(n + \log\frac{1}{\delta}\right)^2\right),
\end{align}
where the quantity $\log \frac{4\norm{H}^2_F}{\delta\norm{H}^2}$ in Eq.~\ref{eq:Mquation} was bounded using the following inequality
\[
\log \frac{\norm{H}^2_F}{\norm{H}^2} \leq \log \frac{2^n \lambda_{MAX} ^2}{\lambda_{MAX} ^2} = n,
\]
where $\lambda_{MAX}$ is the biggest eigenvalue of $H$.

Observe now that simulation of the time evolution of $\alpha I$ does only change the phase of the time evolution, where $I \in \mathbb C^{N \times N}$ is the identity matrix and $\alpha$ some real parameter. We can hence perform the time evolution of $\tilde{H} := H - \alpha I$, since for any efficient classical description of the input state we can apply the time evolution of the diagonal matrix $e^{-i\alpha I t}$. We can then optimise the parameter $\alpha$ such that the Frobenius norm of the operator $\tilde{H}$ is minimized, i.e.
\begin{align}
 \alpha =\underset{\alpha}{\text{argmin}} \norm{\tilde H}^2_F = \underset{\alpha}{\text{argmin}} \norm{H - \alpha I}^2_F,
\end{align}
from which we obtain the condition $\alpha = \frac{\tr{H}}{2^n}$.
Since, in order for the algorithm to be efficient, we require that $\norm{\tilde H}_F$
is bounded by $\polylog N$.
Using the spectral theorem, and the fact that the Frobenius norm is unitarily invariant,
this in turn gives us after a bit of algebra the condition
\begin{align}
\label{eq:new_bound}
\norm{H}_F^2 - \frac{1}{N}\tr{H}^2 \leq \Ord{\polylog(N)},
\end{align}
for which we can simulate the Hamiltonian $H$ efficiently.\\
We now prove the second main result of this work and establish the correctness of the above results.

\begin{proof}[Proof of Theorem~\ref{thm:main}]
Denote with
\begin{align*}
\widehat{Z}_K(Ht,At) &= I + i t H + t^2A f_K(t^2A^HA)A^H + i t^3 A g_K(t^2A^HA) A^HH, \\
\widehat{Z}(Ht, At) &= I + i t H + t^2A f(t^2A^HA)A^H + i t^3 A g(t^2A^HA) A^HH.
\end{align*}
By definition of $\widehat{\psi}_{K,M}(t)$ and the fact that $\norm{\psi} = 1$, we have
\begin{align*}
\norm{\widehat{\psi}_{K,M}(t) - \psi(t)}  \leq\norm{\widehat{Z}_K(At,Ht) - e^{iHt}}\norm{\psi} \\
\leq \norm{\widehat{Z}_K(At,Ht) - \widehat{Z}(Ht, At)} +  \norm{\widehat{Z}(Ht, At) - e^{iHt}}.
\end{align*}
We first study $\norm{\widehat{Z}(Ht, At) - e^{iHt}}$.
Define $l(x) = f(x) x $ and $m(x) = g(x) x$.
Note that, by the spectral theorem, we have
\begin{align*}
\widehat{Z}(Ht, At) &= I + i t H + t^2A f(t^2A^HA)A^H + i t^3 A g(t^2A^HA) A^HH \\
 &= I + i t H + t^2 f(t^2AA^H)AA^H + i t^3 g(t^2AA^H) AA^H \\
 &= I + i t H + l(t^2 AA^H) + i t m(t^2AA^H) H.
\end{align*}
Since
\[
e^{ixt} = 1 + ixt + l(t^2x^2) + it m(t^2x^2)x,
\]
we have
\begin{align*}
\norm{\widehat{Z}(Ht, At) - e^{iHt}} &= \norm{l(t^2 AA^H) - l(t^2H^2) + it m(t^2 AA^H) H - itm(t^2 H^2) H} \\
 &\leq \norm{l(t^2 AA^H) - l(t^2H^2)} + t \norm{m(t^2 AA^H) - m(t^2 H^2)}\norm{H}.
\end{align*}
To bound the norms in $l, m$ we will apply Thm.~1.4.1 of \cite{aleksandrov2016operator}.
The theorem state that if a function $f \in L^\infty(\mathbb{R})$, i.e.\ $f$ is in the function space which elements are the essentially bounded measurable functions,
it is entirely on $\mathbb{C}$ and satisfies $|f(z)| \leq e^{\sigma |z|}$ for any $z \in \mathbb{C}$. Then
$\norm{f(A) - f(B)} \leq \sigma \norm{f}_{L^\infty(\mathbb{R})} \norm{A - B}$.
Note that
\begin{align*}
|l(z)| &= \left|\sum_{j=1}^{\infty}  \frac{(-1)^j z^j}{(2j)!}\right| \leq \sum_{j=1}^{\infty}  \frac{|z|^j}{(2j)!} \leq \sum_{j=1}^{\infty}  \frac{|z|^j}{j!} \leq e^{|z|}, \\
|l(z)| &= \left|\sum_{j=1}^{\infty} \frac{(-1)^j z^{j}}{(2j+1)!}\right| \leq \sum_{j=1}^{\infty}  \frac{|z|^{j}}{(2j+1)!} \leq \sum_{j=1}^{\infty}  \frac{|z|^j}{j!} \leq e^{|z|}.
\end{align*}
Moreover it is easy to see that $\norm{l}_{L^{\infty}(\mathbb{R})}, \norm{m}_{L^{\infty}(\mathbb{R})} \leq 2.$ So
\begin{align*}
\norm{\widehat{Z}(Ht, At) - e^{iHt}} &\leq 2(1+t\norm{H}) \norm{t^2AA^H - t^2H^2} \nonumber \\
&= 2t^2(1+t\norm{H}) \norm{AA^H - H^2}.
\end{align*}
Now note that, by defining the random variable $\zeta_i = \frac{1}{p(t_i)} h_{t_i} h_{t_i}^H,$ we have that
\begin{align*}
&AA^H = \frac{1}{M} \sum_{i=1}^M \zeta_i, \\
& \mathbb{E} [\zeta_i] = \sum_{q=1}^{2^n} p(q) \frac{1}{p(q)} h_{t_i} h_{t_i}^H = H^2, ~~ \forall i.
\end{align*}
Let $\tau > 0$. By applying Thm.~1 of \cite{hsu2014weighted} (or Prop.~9 in \cite{rudi2015less}), for which
\[\norm{AA^H - H^2} \leq \sqrt{\frac{\norm{H}^2_F \norm{H}^2 \tau}{M}} + \frac{2\norm{H}^2_F \norm{H}^2 \tau}{M},\]
with probability at least $1 - 4\frac{\norm{H}^2_F}{\norm{H}^2}\tau/(e^\tau - \tau - 1)$.
Now since \[1 - 4\frac{\norm{H}^2_F}{\norm{H}^2}\tau/(e^\tau - \tau - 1) \geq 1 - e^{\tau},\]
when $\tau \geq e$, by selecting \[\tau = 2\log\frac{4\norm{H}^2_F}{\norm{H}\delta},\]
we have that the equation above holds with probability at least $1-\delta$.\\
Let $\eta > 0$, by selecting \[M = 4\eta^{-2}\norm{H}^2_F\norm{H}^2\tau,\] we then obtain
\[\norm{AA^H - H^2} \leq \eta,\]
with probability at least $1-\delta$.\\

Now we study $\norm{\widehat{Z}_K(At,Ht) - \widehat{Z}(Ht, At)}$.
Denote with $a$, $b$ the functions $a(x) = l(x^2)$, $b(x) = m(x^2)$ and with $a_K,
b_K$ the associated $K$-truncated Taylor expansions.
Note that $a(x) = \cos(x) - 1$, while $b(x) = (\sin(x) - x)/x$.
Now by definition of $\widehat{Z}_K$ and $\widehat{Z}$, we have
\begin{align*}
\norm{\widehat{Z}_K(At,Ht) - \widehat{Z}(Ht, At)} &\leq \norm{a_K(t \sqrt{AA^H}) - a(t \sqrt{AA^H})} \nonumber \\
&+ t \norm{H}\norm{b_K(t \sqrt{AA^H}) - b(t \sqrt{AA^H})}.
\end{align*}
Note that, since \[\sum_j x^{2j}/(2j)! = \cosh(x) \leq 2 e^{|x|},\] and
\begin{align*}
|(a_K-a)(x)| &= \left|\sum_{j=K+2}(-1)^j \frac{x^{2j}}{(2j)!}\right| \\
&\leq \frac{|x|^{2K + 4}}{(2K + 4)!} \sum_{j=0} \frac{|x|^{2j}}{(2j)!} \frac{(2K + 4)!(2j)!}{(2j + 2K + 4)!} \\
&\leq \frac{2|x|^{2K + 4}e^{|x|}}{(2K + 4)!},\\
|(b_K-b)(x)| &= \left|\sum_{j=K+2}(-1)^j \frac{x^{2j}}{(2j+1)!} \right| \\
&\leq \frac{|x|^{2K + 4}}{(2K + 4)!} \sum_{j=0} \frac{|x|^{2j}}{(2j)!} \frac{(2K + 4)!(2j)!}{(2j + 2K + 5)!} \\
&\leq \frac{2|x|^{2K + 4}e^{|x|}}{(2K + 4)!}.
\end{align*}
Let $R > 0, \beta \in (0,1]$. Now note that, by Stirling approximation, $c! \geq e^{c \log \frac{c}{e}}$, so by selecting $K = \frac{e^2}{2} R + \log(\frac{1}{\beta})$, we have for any $|x| \leq R$,
\begin{align*}
\log\left|\frac{2|x|^{2K + 4}e^{|x|}}{(2K + 4)!}\right| &\leq |x| + (2K+4)\log\frac{e|x|}{2K + 4} \\
&\leq R + (2K+4)\log\frac{eR}{2K + 4} \\
& \leq R - \left(e^2 R + 4 + \log\frac{1}{\beta}\right)\log\left(e + \frac{4+\log{1}{\beta}}{eR}\right) \\
& \leq R - \left(e^2 R + 4 + \log\frac{1}{\beta}\right) \\
& \leq -(e^2 - 1)R - 4 - \log \frac{1}{\beta} \\
& \leq -\log \frac{1}{\beta}.
\end{align*}
So, by choosing $K \geq \log\frac{1}{\beta} +e^2R/2$, we have $|a_K(x) - a(x)|,|b_K(x) - b(x)| \leq \beta$.
With this we finally obtain
\[\norm{\widehat{Z}_K(At,Ht) - \widehat{Z}(Ht, At)} \leq 2(1+t\norm{H})\beta,\]
when $K \geq e^2t\norm{A}/2 + \log \frac{1}{\beta}$, and therefore we have
\[\norm{\widehat{\psi}_{K,M}(t) - \psi(t)} \leq 2t^2(1+t\norm{H})\eta + 2(1 + t\norm{H}) \beta,\]
with probability at least $1-\delta$, when
\[M \geq 8\eta^{-2}\norm{H}^2_F\norm{H}^2\log\frac{4\norm{H}^2_F}{\norm{H}^2\delta},\quad K \geq 4t\norm{A} + \log \frac{1}{\beta}.\]
In particular, by choosing $\eta = \epsilon/(4t^2(1+t\norm{H})$ and $\beta = \epsilon/(4(1+t\norm{H}))$,
we have
\[\norm{\widehat{\psi}_{K,M}(t) - \psi(t)} \leq \epsilon,\]
with probability at least $1-\delta$.

With this result in mind note that, in the event where $\norm{AA^H - H^2} \leq \epsilon$, we have that
\[|\norm{AA^H} - \norm{H}^2| \leq \norm{AA^H - H^2} \leq \epsilon,\]
and therefore, $\norm{A} \leq \sqrt{\norm{H}^2 + \epsilon}$.
\end{proof}

\subsection{Beyond the Nystr\"om method}
While the \Nystrom{} method has been applied very successful over the past decade or
so, in recent years, it also has been improved upon using better approximations
for symmetric matrices.
The results of \cite{yang2012nystrom} indicated that the \Nystrom{} method has advantages
over the random feature method~\cite{rahimi2009weighted}, the closest competitor, both theoretically and empirically.
However, as it has recently been established, even the \Nystrom{} method
does not attain high accuracy in general.
A model which improves the accuracy of the \Nystrom{} method, is the
so-called prototype model~\cite{halko2011finding,wang2013improving}.
The prototype model performs first a random sketch on the input matrix $H$, i.e.,
$C = HS$, where $S$ is a sketching or sampling matrix which samples $M$ rows of $H$, and then computes the
intersection matrix $U^*$ as
\[
 U^* := \mathrm{argmin}_U \norm{H - C U C^H}_F^2 = C^+ H (C^+)^H \in \mathbb{R}^{M \times M}.
\]
The model then approximates $H$ by $CU^*C^H$.
While the intial versions of the prototype model were not efficient due to the
cost of calculating $U^*$, \cite{wang2016towards} improved these results by approximately
calculating the optimal $U^*$, which led to a higher accuracy compared to the
\Nystrom{} method, but also to an improved runtime.
As future  work, we leave it open, whether these methods can be used to obtain
improved classical algorithms for Hamiltonian simulation and quantum machine learning
in general.

\subsection{Conclusion}
\label{ssec:dequantisation}

As we mentioned in the introduction of the chapter, our results are closely related
to the so-called 'quantum-inspired' or 'dequantisation' results, which followed the work by Tang~\cite{tang2018quantum}.
It is obvious that our \textit{row-searchable} condition is fundamentally equivalent to the
\textit{sample and query} access requirement, and both our and other results indicate that
the state preparation condition requires a careful assessment in order to determine whether a
quantum algorithm provides an actual advantage over its classical equivalent.
In particular, our sampling scheme for Hermitian matrices is equivalent to the
sampling scheme based on the classical memory structure which we introduced earlier,
although for PSD matrices we use a more computationally efficient variant.
While we also provide a method based on binary-trees to compute the sampling probabilities,
the tree structure of the memory immediately allows us to execute such a sampling
process in practice.
The main difference is therefore that we use a traditional memory structure and provide a
fast way to calculate the marginals using a binary tree (see Sec.~\ref{par:sampling}),
while \cite{tang2018quantum} assumes a memory structure, which allows one to sample efficiently according to this distribution.

Notably, we believe our results can be further improved by using the rejection sampling
methods as is done in all quantum-inspired approaches.
The rejection sampling is required in order to achieve faster algorithms as they
allow us to sample from the output distribution rather than putting out the
entirety of it -- as is the case with our algorithm.

Based on our results for Hamiltonian simulation, and
supported by the large amount of recent quantum-inspired results for QML applications,
we believe it less likely that QML algorithms will admit any exponential speedups
compared to classical algorithms.
The main reason is because, unlike their classical counterparts, most QML
algorithms require strong input assumptions.
As we have seen, these caveats arise due to two reasons.
First, the fast loading of large input data into a quantum computer is a very powerful
assumption, which indeed leads to similarly powerful classical algorithms.
Second, extracting the results from an output quantum state is hard in general.
Tang's~\cite{tang2018quantum} breakthrough result in 2018 for the quantum recommendation systems algorithm,
which was previously believed to be one of the strongest candidates for a practical exponential speedup in QML,
indeed implied that the quantum algorithm does not give an exponential speedup.
Tang's algorithm solves the same problem as the quantum algorithm and just
incurs a polynomial slow-down.
To do so, it relies on similar methods to the ones we described for Hamiltonian simulation
with the \Nystrom{} method, and combines these with classical rejection sampling.

While exponential speedups therefore appear not to be possible in general,
in specific cases, such as problems with sparsity assumptions,
these dequantisation techniques cannot be applied and QML algorithms
still might offer the long-sought exponential advantage.

The most well known quantum algorithm of this resistant type is the HHL algorithm
for sparse matrix inversion~\cite{harrow2009quantum}, which indeed is BQP-complete.
Although HHL has a range of caveats, see e.g.\ \cite{aaronson2015read}, perhaps
it can be useful in certain instances.
Works that build on top of the HHL, such as Zhao et al.\ on Gaussian process regression
\cite{zhao2015quantum} and Lloyd et al. on topological data analysis \cite{lloyd2014topological}
might indeed overcome these drawbacks and achieve a super-polynomial quantum speedup.

The question whether such quantum or quantum-inspired algorithms will be useful
in practice is, however, even harder to answer.
Indeed, for algorithms with a theoretical polynomial speedup an actual advantage can only be established
through proper benchmarks and performance analysis such as a scaling analysis.
The recent work by Arrazola et al.~\cite{arrazola2019quantum}, for example, implemented and tested the quantum-inspired algorithms
for regression and recommendation systems and compared these against existing state-of-the-art implementations.
While giving various insights, however, it is unclear whether the analysis in question is valid.
In order to assure that the algorithms are tested properly, more modern sketching algorithms would need to
be used, since the ones that we and also other authors rely on do not necessarily
provide performance close to the current state-of-the-art.
The reason for this is that they were chosen to obtain complexity theoretic results,
i.e., to demonstrate that from a theoretical perspective the quantum algorithms do not
have an exponential advantage.
As was also pointed out in~\cite{chia2020sampling} Dahiya, Konomis, and Woodruff \cite{dahiya2018empirical}
already conducted an empirical study of sketching algorithms for low-rank approximation
on both synthetic datasets and the movielens dataset, and reported that
their implementation ``finds a solution with cost at most 10 times the optimal one
[...] but does so 10 times faster'', which is in contrast to the results of
Arrazola et al.~\cite{arrazola2019quantum}.
Additionally, using self-implemented algorithms and comparing these in a
non-optimised way to highly optimised numerical packages such as LAPACK, BLAS, or others,
is usually not helpful when evaluating the performance of algorithms.

We therefore conclude that quantum machine learning algorithms still might have a
polynomial advantage over classical ones, but in practice many challenges exist
and need to be overcome. It is hence unclear as of today whether they will be
able to provide any meaningful advantage over classical machine learning algorithms.
In the future we hope to see results that implement the quantum algorithms
and analyse the resulting overhead, and ultimately benchmark these against their
quantum-inspired counterparts.
However, we believe that such benchmarks will not be possible in the near future,
as the requirements on the quantum hardware are far beyond what is possible today.

\chapter{Promising avenues for QML}
\label{chap:quantumQML}
In the previous chapters, we have tried to answer the main questions of this thesis,
i.e., whether (a) quantum machine learning algorithms can offer an advantage over
their classical counterparts from a statistical perspective, and (b) whether
randomised approaches can be used to achieve similarly powerful classical algorithms.
We answered both of these questions in the context of supervised quantum machine
learning algorithms for classical data, i.e., data that is stored in some
form of a classical memory.
However, data could also be obtained as the immediate output of a quantum
process, and the QML algorithm could therefore also directly use the resulting
quantum states as input that could in principle resolve qRAM-related challenges.
Indeed, in this regime (QQ in Fig.~\ref{fig:4quadrants}), it is very likely
that classical machine learning algorithms will have difficulties, since they
are fundamentally not able to use the input except through prior sampling
(measurement) of the state, which implies a potential loss of information.
The ability to prepare, i.e., generate arbitrary input states with a polynomially
sized circuit could therefore enable useful quantum machine learning algorithms.

In this chapter, we now briefly discuss a possible future avenue for QML,
namely the generative learning of quantum distributions and quantum states.
In particular, we provide a method for fully quantum generative training of
quantum Boltzmann machines with both visible and hidden units.
To do this, we rely on the quantum relative entropy as an objective function,
which is significant since prior methods cannot do so due to mathematical challenges.
The mathematical challenges are a result of the gradient evaluation which is
required in the training process.
Our method is highly relevant, as it allows us to efficiently estimate gradients
even for nearly parallel states, which has been impossible for all previous methods.
We can therefore use our algorithm even to approximate cloning and state preparation
for arbitrary input states.
We present in the following two novel methods for solving this problem.
The first method given an efficient algorithm for a class of restricted quantum
Boltzmann machines with mutually commuting Hamiltonians on the hidden units.
In order to train it with gradient descent and the quantum relative entropy
as an objective function, we use a variational upper bound.
The second one generalises the first result to generic quantum Boltzmann machines
by using high-order divided difference methods and linear-combinations
of unitaries to approximate the exact gradient of the relative entropy.
Both methods are efficient under the assumption that Gibbs state preparation is
efficient and assuming that the Hamiltonian is a sparse, row-computable matrix.

\section{Generative quantum machine learning}

One objective of QML is to design models that can learn in quantum mechanical
settings~\cite{biamonte2017quantum,ciliberto2018quantum,servedio2004equivalences,arunachalam2018optimal,lloyd2013quantum,wiebe2019language},
i.e., models which are able to quickly identify patterns in data,
i.e., quantum state vectors, which inhabit an exponentially
large vector space~\cite{mcclean2018barren,schuld2018circuit}.
The sheer size of these vectors therefore restricts the computations that can be performed efficiently.
As we have seen above, one of the main bottlenecks for many QML applications
(e.g.\ supervised learning) is the input data preparation.
While qRAM appears not to be a solution to the common data read-in problem,
other methods for the fast preparation of quantum states might enable useful
QML applications.

The clearest cases where quantum machine learning can therefore provide
an advantage is when the input is already in the form of quantum data, i.e., quantum states.
An alternative way to qRAM to efficiently generate quantum states
according to a desired distribution are generative models.
Generative models can derive concise descriptions of large quantum states~\cite{kieferova2016tomography,schuld2019quantum,romero2017quantum,benedetti2019adversarial},
and be able to prepare states with in principle polynomially sized circuits.

The most general representation for the input, which is the natural analog of a quantum training set,
would be a density operator $\rho$,
which is a positive semi-definite trace-$1$ Hermitian matrix that,
roughly speaking, describes a probability distribution over the input quantum state vectors.
The goal in quantum generative training is to learn a unitary operator $V: \ket{0} \mapsto \sigma$,
which takes as input a quantum state $\ket{0}$, and prepares the density matrix $\sigma$,
by taking a small (polynomial) number of samples from $\rho$,
with the condition that under some chosen distance measure $D$, $D(\rho, \sigma)$, is small.
For example, for the $L_1$ norm this could be $D(\rho,\sigma)= \norm{\rho -\sigma}_1$.
In the quantum information community this task is known as partial tomography~\cite{kieferova2016tomography}
or approximate cloning~\cite{chefles1999strategies}.
A similar task is to replicate a conditional probability distribution over a label subspace.
Such approaches play a crucial role in QAOA-based quantum neural networks.

\subsection{Related work}
\label{ssec:related_work_QBMs}

Various approaches have been put forward to solve the challenge of
generative learning in the quantum domain~\cite{kieferova2016tomography,romero2017quantum,kappen2018learning,amin2018quantum,crawford2016reinforcement},
but to date, all proposed solutions suffer from drawbacks due to data input requirements, vanishing gradients,
or an inability to learn with hidden units.
In this chapter, we present two novel approaches for training quantum Boltzmann machines~\cite{kieferova2016tomography,benedetti2017quantum,amin2018quantum}.
Our approaches resolve all restrictions of prior art, and therefore address
a major open problem in generative quantum machine learning.
Figure~\ref{fig:tabular_results} summarises the key results and prior art.

\begin{figure}
  \resizebox{\textwidth}{!}{
  \begin{tabular}{c|c|c|c|c}
    \hline
    \rowcolor{Gray}
    Work & Data & Algorithm & Objective & Hidden Units \\
     \hline
    Wiebe et al. \cite{wiebe2015quantum} & Classical & Quantum (Classical model) & Maximum likelihood & Yes \\
    Amin et al. \cite{amin2018quantum} & Classical & Quantum (Quantum model) & Maximum likelihood & Yes (But cannot be trained) \\
    Kieferova et al. \cite{kieferova2016tomography} & Classical (Tomography) & Quantum & Maximum likelihood (KL-divergence) & Yes \\
    Kieferova et al. \cite{kieferova2016tomography} & Quantum & Quantum  & Relative entropy & No \\
    \rowcolor{LightCyan}
    Our work & Quantum & Quantum (Restricted H) & Relative entropy (Variational bound) & Yes \\
    \rowcolor{LightCyan}
    Our work & Quantum & Quantum & Relative entropy & Yes \\
  \end{tabular}
  }
  \caption{Comparison of previous training algorithms for quantum Boltzmann machines.
  The models have a varying cost function (objective),
  contain (are able to be trained with) hidden units, and have different input data (classical or quantum).}
  \label{fig:tabular_results}
\end{figure}

\section{Boltzmann and quantum Boltzmann machines}
\label{sec:bm_qbm_intro}

Boltzmann machines are a physics inspired class of neural network~\cite{aarts1988simulated,salakhutdinov2007restricted,tieleman2008training,le2008representational,salakhutdinov2009deep,salakhutdinov2010efficient,lee2009convolutional,hinton2012practical},
which have gained increasing popularity and found numerous applications over the last decade~\cite{lee2009unsupervised,lee2009convolutional,mohamed2011acoustic,srivastava2012multimodal}.
More recently, they have been used in the generative description of complex quantum systems~\cite{carleo2017solving,torlai2016learning,nomura2017restricted}.
Boltzmann machines are an immediate approach when we want to perform generative learning,
since they are a physically inspired model which resembles many common quantum systems.
Through this similarity, they are also highly suitable to be implemented on quantum computers.
To be more precise, a Boltzmann machine is defined by the energy which arises from
the interactions in the physical system.
As such, it prescribes an energy to every configuration of this system,
and then generates samples from a distribution which assigns probabilities
to the states according to the exponential of the states energy.
This distribution is in classical statistical physics known as the canonical ensemble.
The explicit model is given by
\begin{equation}
  \label{eq:hidden_unit_QBM}
\sigma_v(H) = {\rm Tr}_h \left(\frac{e^{-H}}{Z} \right)= \frac{{\rm Tr}_h~ e^{-H}}{{\rm Tr}~ e^{-H}},
\end{equation}
where ${\rm Tr}_h(\cdot)$ is the partial trace over the so-called
\textit{Hidden} subsystem, an auxillary sub-system which allows the model to
build correlations between different nodes in the \textit{Visible} subsystem.
In the case of classical Boltzmann machines, the Hamiltonian $H$ is an energy function
that assigns to each state an energy, resulting in a diagonal matrix.
In the quantum mechanical case, however, we can obtain superpositions (i.e., linear combinations)
of different states, and the energy function therefore turns into the Hamiltonian
of the quantum Boltzmann, i.e., a general Hermitian matrix which has off-diagonal entries.

As mentioned earlier, for a quantum system the dimension of the Hamiltonian
grows exponentially with the number of units such as qubits or orbitals, denoted $n$,
of the system, i.e., $H \in \mathbb{C}^{2^n \times 2^n}$.

Generative quantum Boltzmann training is then defined as the task of finding the
parameters or so-called weights $\theta$, which parameterise the Hamiltonian $H=H(\theta)$,
such that
\[
H = {\rm argmin}_{H(\theta)} \left({\rm D}(\rho,\sigma_v(H))\right),
\]
where $D$ is again an appropriately chosen distance, or divergence, function.
For example, the quantum analogue of an all-visible Boltzmann machine with $n_v$ units would then take the form
\begin{equation}
    H(\theta) = \sum_{n=1}^{n_v} \theta_{2n-1} {\sigma_x}^{(n)} + \theta_{2n} \sigma_z^{(n)} + \sum_{n>n'} \theta_{(n,n')} \sigma_z^{(n)} \sigma_z^{(n')}.
\end{equation}
Here $\sigma_z^{(n)}$ and $\sigma_x^{(n)}$ are Pauli matrices acting on qubit (unit) $n$.

We now provide a formal definition of the quantum Boltzmann machine.
\begin{definition}
A quantum Boltzmann machine to be a quantum mechanical system that acts on a tensor product of Hilbert spaces $\mathcal{H}_v\otimes \mathcal{H}_h \in \mathbb{C}^{2^n}$
that correspond to the visible and hidden subsystems of the Boltzmann machine.  It further has a Hamiltonian of the form
$H \in \mathbb{C}^{2^n \times 2^n}$ such that $\norm{H - \text{diag}(H)} > 0$.
The quantum Boltzmann machine takes these parameters and then outputs a state of the form ${\rm Tr}_h \left(\frac{e^{-H}}{{\rm Tr}(e^{-H})} \right)$.
\end{definition}

For classical data we hence want to minimise the distance between two different
distributions, namely the input and output distribution.
The natural way to measure such a distance is the Kullback-Leibler (KL) divergence.
In the case of the quantum Boltzmann machine, in contrast, we want to
minimise the distance between two quantum states (density matrices),
and the natural notion of distance changes to the quantum relative entropy:
\begin{equation}
  \label{eq:quant_rel_ent}
S(\rho | \sigma_v) = {\rm Tr}\left( \rho \log \rho \right) - {\rm Tr}\left( \rho \log \sigma_v \right).
\end{equation}
For diagonal $\rho$ and $\sigma_v$ this indeed reduces to the (classical)
KL divergence, and it is zero if and only if $\rho = \sigma_v$.

For Boltzmann machines with visible units, the gradient of the relative entropy
can be computed in a straightforward manner because here
$\sigma_v = e^{-H}/Z$ and since $\log(e^{-H}/Z) = -H -\log(Z)$,
we can easily compute the matrix derivatives.
This is however not the case in general, and no methods are known for the generative
training of Boltzmann machines for the quantum relative entropy loss function, if hidden units are present.
The main challenge which restricts an easy solution in this case is
the evaluation of the partial trace in $\log({\rm Tr}_h e^{-H} /Z)$,
which prevents us from simplifying the logarithm term which is required for the gradient.

In the following, we will provide two practical methods for training quantum Boltzmann machines
with both hidden and visible units.
We provide two different approaches for achieving this.
The first method only applies to a restricted case,
however it allows us to propose a more efficient scheme that
relies on a variational upper bound on the quantum relative entropy
in order to calculate the derivatives.
The second approach is more general method, but requires more resources.
For the second approach we rely on recent techniques from quantum simulation
to approximate the exact expression for the gradient.
For this we rely on a Fourier series approximation and high-order divided
difference formulas in place of the analytic derivative.
Under the assumption that Gibbs state preparation is efficient,
which we expect to hold in most practical cases, both methods are efficient.

We note however, that this assumption is indeed not valid in general,
in particular, since the possibility of efficient Gibbs state preparation
would imply $\mathrm{QMA}\subseteq \mathrm{BQP}$ which is unlikely to hold.

While the here presented results might be interesting for several readers,
in order to preserve the flow and keep the presentation simple, we will
include the proofs of technical lemmas in the appendix of the thesis.

\section{Training quantum Boltzmann machines}
We now present methods for training quantum Boltzmann machines.\\
We begin by defining the quantum relative entropy,
which we use as cost function for our quantum Boltzmann machine (QBM) with hidden units.
Recall that $\rho$ is our input distribution,
and $\sigma = \partr{h}{e^{-H}/ \tr{e^{-H}}}$ is the output distribution,
where we perform the partial trace over the hidden units. With this,
the cost function is hence given by
\begin{equation}
  \label{eq:obj_QBM_hidden}
 \mathcal{O}_{\rho}(H) = S \left(\rho \Big| \partr{h}{ e^{-H}/\tr{e^{-H}}} \right),
\end{equation}
where $S(\rho | \sigma_v)$ is the quantum relative entropy as defined in eq.~\ref{eq:quant_rel_ent}.
A few remarks.
First, note that it is possible to add a regularisation term
in order to penalise undesired quantum correlations in the model~\cite{kieferova2016tomography}.
Second, since we will only derive gradient based algorithms for the training,
we need to evaluate the gradient of the cost function,
which requires the gradient of the quantum relative entropy.

As previously mentioned, this is possible to do in a closed-form expression
for the case of an all-visible Boltzmann machine, which
corresponds to $\text{dim}(\mathcal{H}_h)=1$.
The gradient in this case takes the form
\begin{equation}
 \frac{\partial \mathcal{O}_{\rho}(H)}{\partial \theta} = -\tr{ \frac{\partial}{\partial \theta} \rho \log \sigma},
\end{equation}
which can be simplified using $\log(\exp(-H)) = -H$ and Duhamels formula
to obtain the following equation for the gradient, denoting with $\partial_{\theta} := \partial/\partial \theta$,
\begin{equation}
  \tr{\rho \partial_{\theta} H} - \tr{e^{-H}\partial_{\theta} H}/\tr{e^{-H}}.
\end{equation}
However, this gradient formula is does not hold any longer if we include hidden units.
If we hence want to include hidden units, then we need to additionally
trace out the subsystem.
Doing so results in the majorised distribution from eq.~\ref{eq:hidden_unit_QBM},
which also changes the cost function into the form described in eq.~\ref{eq:obj_QBM_hidden}.
Note that $H=H(\theta)$ is depending on the training weights.
We will adjust these in the training process with the goal to match $\sigma$ to
$\rho$ -- the target density matrix.
Omitting the $\tr{\rho \log \rho}$ since it is a constant, we therefore obtain
\begin{equation}
  \label{eq:grad_QBM_hidden_units}
 \frac{\partial \mathcal{O}_{\rho}(H)}{\partial \theta} = -\tr{ \frac{\partial}{\partial \theta} \rho \log \sigma_v},
\end{equation}
for the gradient of the objective function.

In the following sections, we now present our two approaches for evaluating
the gradient in eq.~\ref{eq:grad_QBM_hidden_units}.
The first one is less general, but gives us an easy implementable algorithm
with strong bounds.
The second one, on the hand, can be applied to arbitrary problem instances,
and therefore presents a general purpose gradient based algorithm
for training quantum Boltzmann machines with the relative entropy objective.
The results are to be expected, since the no-free-lunch theorem suggests
that no good bounds can be obtained without assumptions on the problem instance.
Indeed, our general algorithm can in principle result in an exponentially worse complexity
compared to the specialised one.
However, we are expecting that for most practical applications
this will not be the case, and our algorithm will in hence be applicable for most
cases of practical interest.
To our knowledge, we hence present the first algorithm of this kind,
which is able to train arbitrary arbitrary quantum Boltzmann machines efficiently.

\subsection{Variational training for restricted Hamiltonians}

Our first approach optimises a variational upper bound of the objective function,
for a restricted but still practical setting.
This approach results in a fast and easy way to implement quantum algorithms which,
however, is less general due to the input assumptions we are making.
These assumptions are required in order to overcome challenges related the
evaluation of matrix functions, for which classical calculus fails.

Concretely, we express the Hamiltonian in this case as
\begin{equation}
H = H_v + H_h + H_{\rm int},
\end{equation}
i.e., a decomposition of the Hamiltonian into a part acting on the visible layers,
the hidden layers and a third interaction Hamiltonian
that creates correlations between the two.
In particular, we further assume for simplicity that there are two sets of
operators $\{v_k\}$ and $\{h_k\}$ composed of $D=W_v+ W_h+ W_{\rm int}$ terms such that
\begin{align}
H_v &= \sum_{k=1}^{W_v} \theta_k v_k \otimes I,\qquad &H_h = \sum_{k=W_v+1}^{W_v +W_h} \theta_k I \otimes h_k \nonumber\\
H_{\rm int} &= \sum_{k=W_v+W_h +1}^{W_v +W_h + W_{\rm int}} \theta_k v_k \otimes h_k,&[h_k,h_j]  = 0~\forall~j,k,\label{eq:assumptions}
\end{align}
which implies that the Hamiltonian can in general be expressed as
\begin{equation}
\label{eq:ham_form}
H=\sum_{k=1}^D \theta_k v_k \otimes h_k.
\end{equation}
We break up the Hamiltonian into this form to emphasise the qualitative
difference between the types of terms that can appear in this model.
Note that we generally assume throughout this article that $v_k,h_k$
are unitary operators, which is indeed given in most common instances.

By choosing this particular form of the Hamiltonian in~\eqref{eq:assumptions},
we want to force the non-commuting terms, i.e., terms for which it holds that
the commutator $[v_k,h_k]\neq 0$, to act only on the visible units of the model.
On the other hand, only commuting Hamiltonian terms act on the hidden register,
and we can therefore express the eigenvalues and eigenvectors for the Hamiltonian as
\begin{equation}
H \ket{v_h} \otimes \ket{h} = \lambda_{v_h, h} \ket{v_h} \otimes \ket{h}.
\end{equation}
Note that here both the conditional eigenvectors and eigenvalues for the visible subsystem
are functions of the eigenvector $\ket{h}$ in the hidden register,
and we hence denote these as $v_h,\lambda_{v_h,h}$ respectively.
This allows the hidden units to select between eigenbases to interpret the input
data while also penalising portions of the accessible Hilbert space that
are not supported by the training data.
However, since the hidden units commute they cannot be used to construct a non-diagonal eigenbasis.
While this division between the visible and hidden layers on the one hand helps
us to build an intuition about the model, on the other hand -- more importantly --
it allows us to derive a more efficient training algorithms that exploit this fact.

As previously mentioned, for the first result we rely on a variational bound
of the objective function in order to train the quantum Boltzmann machine weights
for a Hamiltonian $H$ of the form given in \eqref{eq:ham_form}.
We can express this variational bound compactly in terms of a thermal
expectation against a fictitious thermal probability distribution.
We define this expectation below.
\begin{definition}
\label{def:prob_distr}
Let $\tilde{H}_h = \sum_k \theta_k \tr{\rho v_k} h_k$ be the Hamiltonian acting conditioned on the visible subspace only on the hidden subsystem of the Hamiltonian $H:=\sum_k \theta_k v_k \otimes h_k$.
Then we define the expectation value over the marginal distribution over the hidden variables $h$ as
	\begin{equation}
	\label{eq:def_exp}
	\mathbb{E}_h(\cdot) =  \sum_h   \frac{(\cdot) e^{- \tr{\rho \tilde{H}_h }}}{\sum_h e^{- \tr{\rho \tilde{H}_h}}}.
	\end{equation}
\end{definition}

Using this we derive an upper bound on $S$ in section~\ref{sec:variational_bound_derivative} of the Appendix, which leads to the following lemma.

\begin{lemma}
\label{def:Stilde}
    Assume that the Hamiltonian $H$ of the quantum Boltzmann machine takes the
    form described in eq.~\ref{eq:ham_form}, where $\theta_k$ are the parameters
    which determine the interaction strength and $v_k,h_k$ are unitary operators.
    Furthermore, let $h_k \ket{h}=E_{h,k}\ket{h}$ be the eigenvalues of the hidden
    subsystem, and  $\mathbb{E}_h(\cdot)$ as given by Definition~\ref{def:prob_distr},
    i.e., the expectation value over the effective Boltzmann distribution of
    the visible layer with $\tilde{H}_h :=\sum_k E_{h,k} \theta_k v_k$.
    Then, a variational upper bound $\widetilde{S}$ of the objective function,
    meaning that $\widetilde{S}(\rho|H)\ge  S(\rho | e^{-H}/Z)$, is given by
    \begin{align}
        \label{eq:varbd}
  		 \widetilde{S}(\rho|H)  := \tr{\rho \log \rho} +  \tr{\rho { \sum_{k} \mathbb{E}_{h} \left[ E_{h,k} \theta_k v_k \right]
       +  \mathbb{E}_{h} \left[ \log \alpha_h\right] } } + \log Z,
  	\end{align}
    where
    \[
    \alpha_h =  \frac{e^{- \tr{\rho \widetilde{H}_h }}}{\sum_h e^{- \tr{\rho \widetilde{H}_h}}}
    \]
    is the corresponding Gibbs distribution for the visible units.
\end{lemma}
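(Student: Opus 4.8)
The plan is to exploit the mutual commutativity of the hidden-layer operators $\{h_k\}$ to collapse $\sigma_v$ into a classical mixture of visible-space Gibbs operators, and then to apply an operator Jensen inequality to the logarithm.

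First I would simultaneously diagonalise the $h_k$: fix a common eigenbasis $\{\ket{h}\}$ with $h_k\ket{h}=E_{h,k}\ket{h}$, so that $H=\sum_h \widetilde{H}_h\otimes\ket{h}\!\bra{h}$ with $\widetilde{H}_h:=\sum_k \theta_k E_{h,k} v_k$ acting only on the visible register. Exponentiating block-by-block gives $e^{-H}=\sum_h e^{-\widetilde{H}_h}\otimes\ket{h}\!\bra{h}$, hence $Z=\tr{e^{-H}}=\sum_h \tr{e^{-\widetilde{H}_h}}$ and the visible marginal is $\sigma_v=\partr{h}{e^{-H}}/Z=\tfrac1Z\sum_h e^{-\widetilde{H}_h}$, which is exactly the $\sigma_v$ appearing in the objective \eqref{eq:obj_QBM_hidden}. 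Taking logarithms, $\log\sigma_v=\log\big(\sum_h e^{-\widetilde{H}_h}\big)-(\log Z)\,I$, so the lemma reduces to a lower bound on the operator $\log\big(\sum_h e^{-\widetilde{H}_h}\big)$.

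Next I would introduce the fictitious probability weights $\alpha_h$ from the statement and write $\sum_h e^{-\widetilde{H}_h}=\sum_h \alpha_h\,\big(\alpha_h^{-1}e^{-\widetilde{H}_h}\big)$ as a convex combination of the positive-definite operators $X_h:=\alpha_h^{-1}e^{-\widetilde{H}_h}$ (each $\alpha_h>0$, each $e^{-\widetilde{H}_h}\succ0$). Since $x\mapsto\log x$ is operator concave on $(0,\infty)$ (L\"owner's theorem, matching the notion of operator concavity in the preliminaries), the finite-sum operator Jensen inequality yields $\log\big(\sum_h\alpha_h X_h\big)\succeq\sum_h\alpha_h\log X_h$. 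Using $\log X_h=-\widetilde{H}_h-(\log\alpha_h)\,I$ and the averaging notation $\mathbb{E}_h$ of Definition~\ref{def:prob_distr}, this reads $\log\big(\sum_h e^{-\widetilde{H}_h}\big)\succeq -\mathbb{E}_h\!\big[\widetilde{H}_h\big]-\mathbb{E}_h\!\big[\log\alpha_h\big]\,I$.

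Finally I would apply $-\tr{\rho\,\cdot}$ to both sides, which reverses the operator order since $\rho\succeq0$, and use $\tr{\rho I}=1$ together with $\log\sigma_v=\log\big(\sum_h e^{-\widetilde{H}_h}\big)-(\log Z) I$ to get $-\tr{\rho\log\sigma_v}\le \tr{\rho\,\mathbb{E}_h[\widetilde{H}_h]}+\mathbb{E}_h[\log\alpha_h]+\log Z$. Expanding $\widetilde{H}_h=\sum_k\theta_k E_{h,k} v_k$ and moving the (finite) $h$-average inside the sum over $k$ rewrites the first term as $\tr{\rho\sum_k\mathbb{E}_h[E_{h,k}\theta_k v_k]}$; adding the constant $\tr{\rho\log\rho}$ to both sides then gives exactly $S(\rho\,|\,\sigma_v)\le\widetilde{S}(\rho|H)$ as in \eqref{eq:varbd}. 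I expect the main obstacle to be this middle step: one must verify that operator Jensen is applied to genuinely positive-definite operators against a bona fide probability vector $\{\alpha_h\}$, invoke operator concavity of the logarithm correctly, and keep careful track of which quantities are scalars multiplying $I$ (namely $\log\alpha_h$ and $\log Z$) versus honest operators ($\widetilde{H}_h$); the remaining manipulations are just linearity of the trace and of $\mathbb{E}_h$.
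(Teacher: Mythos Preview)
Your proposal is correct and follows essentially the same route as the paper: both arguments use the commutativity of the $h_k$ to reduce $\mathrm{Tr}_h e^{-H}$ to $\sum_h e^{-\widetilde H_h}$, insert the weights $\alpha_h$ to write this as a convex combination of positive operators, and then invoke operator concavity of the logarithm (operator Jensen) before tracing against $\rho$. Your write-up is in fact a bit more explicit than the paper's about the block structure and about which terms are scalar multiples of $I$, but the underlying argument is identical.
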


The proof that~\eqref{eq:varbd} is a variational bound proceeds in two steps.
First, we note that for any probability distribution $\alpha_h$
\begin{equation}
\tr{\rho \log \left(\sum_{h=1}^N e^{-\sum_{k} E_{h,k} \theta_k v_k} \right)}
= \tr{\rho \log \left(\sum_{h=1}^N \alpha_h \frac{e^{-\sum_{k} E_{h,k} \theta_k v_k}/\alpha_h }{\sum_{h'} \alpha_{h'}} \right)}
\end{equation}
We then apply Jensen's inequality and minimise the result over all $\alpha_h$.
This not only verifies that $\widetilde{S}(\rho|H) \ge S(\rho|H)$ but also yields a variational bound.
The details of the proof can be found in Equation~\ref{eq:def_variational_bound} in section~\ref{sec:variational_bound_derivative} of the appendix.

Using the above assumptions we derive the gradient of the variational upper
bound of the relative entropy in the Section~\ref{sec:var_grad_estimation}
of the Appendix.
We summarise the result in Lemma~\ref{lem:gradient_visible_layer}.

\begin{lemma}
\label{lem:gradient_visible_layer}
  Assume that the Hamiltonian $H$ of the quantum Boltzmann machine takes the form
  described in eq.~\ref{eq:ham_form}, where $\theta_k$ are the parameters which
  determine the interaction strength and $v_k,h_k$ are unitary operators.
  Furthermore, let $h_k \ket{h}=E_{h,k}\ket{h}$ be the eigenvalues of the hidden subsystem,
  and $\mathbb{E}_h(\cdot)$ as given by Definition~\ref{def:prob_distr}, i.e.,
  the expectation value over the effective Boltzmann distribution of the visible layer with $\tilde{H}_h :=\sum_k E_{h,k} \theta_k v_k$.
  Then, the derivatives of $\widetilde{S}$ with respect to the parameters of the Boltzmann machine are given by
    \begin{align}
       \label{eq:gradient_var_bound}
      \frac{\partial \widetilde{S}(\rho|H)}{\partial_{\theta_p}}
      = \mathbb{E}_{h} \left[ \tr{\rho E_{h,p} v_p}\right] - \tr{\frac{\partial H}{\partial \theta_p} \frac{e^{-H}}{Z}}.
    \end{align}
\end{lemma}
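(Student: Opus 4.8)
The plan is to exploit the fact that the auxiliary distribution $\alpha_h$ appearing in \eqref{eq:varbd} is \emph{not} arbitrary but is precisely the minimiser of the variational bound over all probability distributions on the hidden configurations, so that its implicit dependence on $\theta$ drops out upon differentiation. Writing $c_h(\theta) := \tr{\rho\,\widetilde H_h} = \sum_k E_{h,k}\,\theta_k\,\tr{\rho v_k}$, the $\alpha_h$-dependent part of \eqref{eq:varbd} is $\sum_h \alpha_h c_h + \sum_h \alpha_h \log\alpha_h$, and a standard Gibbs variational argument (Lagrange multipliers on the simplex $\sum_h\alpha_h = 1$) shows that its minimum over $\{\alpha_h\}$ is attained at $\alpha_h = e^{-c_h}/\sum_{h'}e^{-c_{h'}}$ --- exactly the distribution in Definition~\ref{def:prob_distr} --- with minimal value $-\log\sum_h e^{-c_h}$. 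Hence $\widetilde S$ admits the closed form
\[
\widetilde S(\rho\,|\,H) \;=\; \tr{\rho\log\rho} \;+\; \log Z \;-\; \log\!\sum_h e^{-c_h(\theta)}, \qquad Z = \tr{e^{-H}}.
\]

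Next I would differentiate the three terms in turn. The first is constant in $\theta$. For $\log Z$ I would invoke Duhamel's formula from the preliminaries, $\partial_{\theta_p} e^{-H} = -\int_0^1 e^{-Hs}(\partial_{\theta_p}H)\,e^{-H(1-s)}\,\mathrm{d}s$, together with cyclicity of the trace: the $s$-integral collapses after a cyclic shift, giving $\partial_{\theta_p}\tr{e^{-H}} = -\tr{(\partial_{\theta_p}H)\,e^{-H}}$ and therefore $\partial_{\theta_p}\log Z = -\tr{\tfrac{\partial H}{\partial\theta_p}\,e^{-H}/Z}$, which is the second term of \eqref{eq:gradient_var_bound}. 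For the last term, the chain rule gives $\partial_{\theta_p}\log\sum_h e^{-c_h} = -\sum_h \alpha_h\,\partial_{\theta_p}c_h$, and since $c_h$ is affine in $\theta$ we have $\partial_{\theta_p}c_h = E_{h,p}\,\tr{\rho v_p}$. Thus $-\partial_{\theta_p}\log\sum_h e^{-c_h} = \sum_h \alpha_h E_{h,p}\tr{\rho v_p} = \mathbb E_h\!\big[\tr{\rho\,E_{h,p} v_p}\big]$, where linearity of the trace is used to pull the scalar $E_{h,p}$ inside. Summing the three contributions yields exactly \eqref{eq:gradient_var_bound}.

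Equivalently --- and this is the conceptual heart of the argument --- one can differentiate \eqref{eq:varbd} directly, without passing to the closed form, by invoking the envelope theorem: since $\alpha_h$ is a stationary point of the inner minimisation subject to $\sum_h\alpha_h = 1$, the total $\theta_p$-derivative of the bound equals its partial derivative with $\alpha_h$ held fixed. That partial derivative only hits the explicit factor $\sum_k \mathbb E_h[E_{h,k}\theta_k v_k]$, producing $\mathbb E_h[E_{h,p}\tr{\rho v_p}]$, while the $\mathbb E_h[\log\alpha_h]$ piece contributes nothing. I expect the only place requiring genuine care is precisely this step --- making rigorous that the implicit $\theta$-dependence carried by $\mathbb E_h$ through $\alpha_h$ may be ignored --- and the substitution of the explicit Gibbs form of $\alpha_h$ settles it cleanly; the remaining manipulations (the $\log Z$ term via Duhamel's formula, the affine dependence of $c_h$, and linearity of the trace) are routine.
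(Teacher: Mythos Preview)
Your proof is correct but takes a genuinely different route from the paper's. The paper differentiates the bound \eqref{eq:def_variational_bound} by treating $\alpha_h$ as an explicit function of $\theta$ (via the Gibbs form $\alpha_h = e^{-c_h}/\sum_{h'}e^{-c_{h'}}$) and applying the product rule separately to the energy term $\sum_h \alpha_h\,c_h$ and the entropy term $\sum_h \alpha_h\log\alpha_h$; this produces several cross terms of the type $\mathbb{E}_h\!\big[(\mathbb{E}_{h'}[\tr{\rho E_{h',p}v_p}] - \tr{\rho E_{h,p}v_p})\,\tr{\rho\tilde H_h}\big]$ (and an analogous piece carrying $\log\sum_{h'}e^{-c_{h'}}$), which are then shown to cancel between the two contributions, leaving only $\mathbb{E}_h[\tr{\rho E_{h,p}v_p}]$. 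You instead invoke the Gibbs variational principle to collapse the $\alpha_h$-dependent part to the closed form $-\log\sum_h e^{-c_h(\theta)}$ \emph{before} differentiating (equivalently, the envelope theorem), so no cross terms ever appear and the computation reduces to a one-line chain rule. Both arguments use Duhamel plus cyclicity for the $\log Z$ piece. Your route is shorter and more conceptual; the paper's brute-force computation has the minor virtue of making the cancellation visible rather than implicit, and does not require recognising in advance that the specific $\alpha_h$ chosen in Definition~\ref{def:prob_distr} is a stationary point.
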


As a sanity check of our results, we can consider the case of no interactions
between the visible and the hidden layer.
Doing so, we observe that the gradient above reduces to the case of the visible Boltzmann machine,
which was treated in \cite{kieferova2016tomography}, resulting in the gradient
\begin{equation}
  \tr{\rho \partial_{\theta_p}H} - \tr{\frac{e^{-H}}{Z} \partial_{\theta_p} H},
\end{equation}
under our assumption on the form of $H$, $\partial_{\theta_p}H = v_p$.\\

From Lemma~\ref{lem:gradient_visible_layer}, we know the form of the derivatives
of the relative entropy w.r.t.\ any parameter $\theta_p$ via Eq.~\ref{eq:gradient_var_bound}.
To understand the complexity of evaluating this gradient, we approach each term separately.
The second term can easily be evaluated by preparing the Gibbs state $\sigma_{Gibbs} := e^{-H}/Z$,
and then evaluating the expectation value of the operator $\partial_{\theta_j}H$ w.r.t. this Gibbs state.
In practice we can do so using amplitude estimation for the Hadamard test~\cite{aharonov2009polynomial},
which is a standard procedure and we describe it in algorithm~\ref{alg:algo0}
in section~\ref{sec:eval_dm_operator} of the appendix.
The computational complexity of this procedure is easy to evaluate.
If $T_{Gibbs}$ is the query complexity for the Gibbs state preparation,
the query complexity of the whole algorithm including the phase estimation step
is then given by $O(T_{Gibbs}/\epsilon)$ for an $\epsilon$-accurate estimate of phase estimation.

Next, we derive an algorithm to evaluate the first term, which requires a more involved process.
For this, note first that we can evaluate each term $\tr{\rho v_k}$ independently
from $\mathbb{E}_{h} \left[ E_{h,p}\right]$, and individually for all $k \in [D]$,
i.e., all $D$ dimensions of the gradient.
This can be done via the Hadamard test for $v_k$ which we recapitulate in
section~\ref{sec:eval_dm_operator} of the appendix, assuming $v_k$ is unitary.
More generally, for non-unitary $v_k$ we could evaluate this term using a linear combination of unitary operations.

Therefore, the remaining task is to evaluate the terms $\mathbb{E}_{h} \left[ E_{h,p}\right]$ in \eqref{eq:gradient_var_bound},
which reduces to sampling elements according to the distribution $\{\alpha_h\}$,
recalling that $h_p$ applied to the subsystem has eigenvalues $E_{h,p}$.
For this we need to be able to create a Gibbs distribution for the effective
Hamiltonian $\tilde{H}_h = \sum_k \theta_k \tr{\rho v_k} h_k$ which contains
only $D$ terms and can hence be evaluated efficiently as long as $D$ is small,
which we can generally assume to be true.
In order to sample according to the distribution $\{\alpha_h\}$,
we first evaluate the factors $\theta_k \tr{\rho v_k}$ in the sum over $k$ via the Hadamard test,
and then use these in order to implement the Gibbs distribution $\exp{(-\tilde{H_h})}/\tilde{Z}$
for the Hamiltonian $$\tilde{H}_h = \sum_k \theta_k \tr{\rho v_k} h_k.$$
The algorithm is summarised in Algorithm~\ref{alg:algo1}.

\begin{algorithm}[tb!]
\caption{Variational gradient estimation - term 1}
\label{alg:algo1}
\begin{algorithmic}
  \STATE {\bfseries Input:} An upper bound $\tilde{S}(\rho | H )$ on the quantum relative entropy, density matrix $\rho \in \mathbb{C}^{2^n\times 2^n}$, and Hamiltonian $H \in \mathbb{C}^{2^n\times 2^n}$.
  \STATE {\bfseries Output:} Estimate $\mathcal{S}$ of the gradient $\nabla \tilde{S}$ which fulfills Thm.~\ref{thm:gradient_results}.
  \STATE {\bfseries 1.} Use Gibbs state preparation to create the Gibbs distribution for the effective Hamiltonian $\tilde{H}_h = \sum_k \theta_k \tr{\rho v_k} h_k$ with sparsity $d$.
  \STATE {\bfseries 2.} Prepare a Hadamard test state, i.e., prepare an ancilla qubit in the $\ket{+}$-state and apply a controlled-$h_k$ conditioned on the ancilla register, followed by a Hadamard gate, i.e.,
  \begin{align}
    \ket{\phi} := \frac{1}{2} \left( \ket{0} \left(\ket{\psi}_{Gibbs} + (h_k \otimes I) \ket{\psi}_{Gibbs} \right) + \ket{1} \left( \ket{\psi}_{Gibbs} -  (h_k \otimes I) \ket{\psi}_{Gibbs} \right) \right)
  \end{align}
  where $\ket{\psi}_{Gibbs} := \sum_{h} \frac{e^{-E_h/2}}{\sqrt{Z}} \ket{h}_A \ket{\phi_h}_B$ is the purified Gibbs state.
  \STATE {\bfseries 3.} Perform amplitude estimation on the $\ket{0}$ state,we need to implement the amplitude estimation with reflector $P:= -2 \ket{0}\bra{0} +I$,
    and operator $G:= \left( 2 \ket{\phi}\bra{\phi} - I \right) (P \otimes I)$.
  \STATE {\bfseries 4.} Measure now the phase estimation register which returns an $\tilde \epsilon$-estimate of the probability $\frac{1}{2} \left( 1 + \mathbb{E}_h[E_{h,k}] \right)$ of the Hadamard test to return $0$
  \STATE {\bfseries 6.} Repeat the procedure for all $D$ terms and output the first term of $\nabla \tilde{S}$.
\end{algorithmic}
\end{algorithm}

The algorithm is build on three main subroutines.
The first one is Gibbs state preparation, which is a known routine which we recapitulate in Theorem~\ref{thm:Gibbs_state_prep} in the appendix.
The two remaining routines are the Hadamard test and amplitude estimation, both are well established quantum algorithms.
The Hadamard test, will allow us to estimate the probability of the outcome.
This is concretely given by
\begin{equation}
          \mathrm{Pr}(0) = \frac{1}{2} \left(1 + \mathrm{Re}\bra{\psi}_{Gibbs} (h_k \otimes I) \ket{\psi}_{Gibbs} \right) =  \frac{1}{2} \left(1 + \sum_h \frac{e^{-E_h} E_{h,k}}{Z} \right),
\end{equation}
i.e., from $\mathrm{Pr}(0)$ we can easily infer the estimate of $\mathbb{E}_h\left[ E_{h,k} \right]$ up to precision $\epsilon$ for all the $k$ terms,
since the last part is equivalent to $\frac{1}{2} \left( 1 + \mathbb{E}_h[E_{h,k}] \right)$.
To speed up the time for the evaluation of the probability $\mathrm{Pr}(0)$, we use amplitude estimation.
We recapitulate this procedure in detail in the suppemental material in section~\ref{sec:amplitude_estimation}.
In this case, we let $P:= -2 \ket{0}\bra{0} +I$ be the reflector, where $I$ is the identity which is just the Pauli $z$ matrix up to a global phase,
and let $G:= \left( 2 \ket{\phi}\bra{\phi} - I \right) (P \otimes I)$, for $\ket{\phi}$ being the state after the Hadamard test prior to the measurement.
The operator $G$ has then the eigenvalue $\mu_{\pm}= \pm e^{\pm i 2 \theta}$ , where $2 \theta = \arcsin{\sqrt{\mathrm{Pr}(0)}}$,
and $\mathrm{Pr}(0)$ is the probability to measure the ancilla qubit in the $\ket{0}$ state.
Let now $T_{Gibbs}$ be the query complexity for preparing the purified Gibbs state (c.f. eq~\eqref{eq:gibbs_state_query_complexity} in the appendix).
We can then perform phase estimation with precision $\epsilon$ for the operator $G$ requiring $O(T_{Gibbs}/\tilde \epsilon)$ queries to the oracle of $H$.

In section~\ref{app:proof_thm_gradient_results} of the appendix we analyse the runtime and error of the above algorithm.
The result is summarised in Theorem~\ref{thm:gradient_results}.

\begin{theorem}
\label{thm:gradient_results}
Assume that the Hamiltonian $H$ of the quantum Boltzmann machine takes the form
described in eq.~\ref{eq:ham_form}, where $\theta_k$ are the parameters which
determine the interaction strength and $v_k,h_k$ are unitary operators.
Furthermore, let $h_k \ket{h}=E_{h,k}\ket{h}$ be the eigenvalues of the hidden subsystem,
and  $\mathbb{E}_h(\cdot)$ as given by Definition~\ref{def:prob_distr},
i.e., the expectation value over the effective Boltzmann distribution of the visible layer with $\tilde{H}_h :=\sum_k E_{h,k} \theta_k v_k$,
and suppose that $I \preceq \tilde{H}_h$ with bounded spectral norm $\norm{\tilde{H}_h(\theta)} \leq \norm{\theta}_1$, and let $\tilde{H}_h$ be $d$-sparse.
Then $\mathcal{S}\in \mathbb{R}^D$ can be computed for any $\epsilon \in (0,\max\{1/3, 4\max_{h,p} |E_{h,p}|\})$ such that
\begin{equation}
	\norm{\mathcal{S} - \nabla \tilde{S}}_{max}  \leq \epsilon,
\end{equation}
with
\begin{equation}
\label{eq:hidden_gradient_query_complexity}
  \widetilde{\mathcal{O}} \left(\sqrt{\xi}\frac{D \norm{\theta}_1 dn^2 }{\epsilon} \right),
\end{equation}
queries to the oracle $O_H$ and $O_{\rho}$ with probability at least $2/3$,
where $\norm{\theta}_1$ is the sum of absolute values of the parameters of the Hamiltonian,
 $\xi := \max[N/z, N_h/z_h]$, $N=2^n$, $N_h=2^{n_h}$, and $z,z_h$ are known lower bounds
 on the partition functions for the Gibbs state of $H$ and $\tilde{H}_h$ respectively.
\end{theorem}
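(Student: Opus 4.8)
The plan is to estimate each of the $D$ coordinates of the gradient separately, using the closed form from Lemma~\ref{lem:gradient_visible_layer}. Writing $\partial_{\theta_p}H = v_p\otimes h_p$ and noting that $\tr{\rho v_p}$ does not depend on $h$, the $p$-th gradient entry is
\[
\frac{\partial \widetilde{S}(\rho|H)}{\partial\theta_p} = \mathbb{E}_{h}\!\left[E_{h,p}\right]\,\tr{\rho v_p} \;-\; \tr{(v_p\otimes h_p)\,\frac{e^{-H}}{Z}} .
\]
For the second term I would follow Algorithm~\ref{alg:algo1}: prepare a purification of the Gibbs state $e^{-H}/Z$ via Theorem~\ref{thm:Gibbs_state_prep}, which costs $T_{Gibbs}=\widetilde{O}\big(\sqrt{N/z}\,d\,\norm{\theta}_1 n^2\big)$ queries to $O_H$ under the sparsity and norm assumptions, and then run the Hadamard test for the unitary $v_p\otimes h_p$ followed by amplitude estimation to obtain an estimate to additive accuracy $\epsilon'$ with $O(T_{Gibbs}/\epsilon')$ queries. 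For the first term I would estimate the two scalar factors independently: $\tr{\rho v_p}$ by a Hadamard test on a purification of $\rho$ with amplitude estimation, costing $O(1/\epsilon')$ queries to $O_\rho$; and $\mathbb{E}_h[E_{h,p}]$ by first estimating all the coefficients $c_k:=\theta_k\tr{\rho v_k}$ of the effective Hamiltonian $\tilde H_h=\sum_k c_k h_k$, then preparing its purified Gibbs state (cost $\widetilde{O}\big(\sqrt{N_h/z_h}\,d\,\norm{\theta}_1 n^2\big)$ queries, where $I\preceq\tilde H_h$ and $\norm{\tilde H_h}\le\norm{\theta}_1$ supply the spectral control and the partition-function lower bound $z_h$), and finally running the Hadamard test for $h_p$ and amplitude estimation.

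Next comes the error bookkeeping. Since $|\tr{\rho v_p}|\le1$ and $|\mathbb{E}_h[E_{h,p}]|\le\max_{h,p}|E_{h,p}|$, the elementary bound $|\hat a\hat b-ab|\le|b|\delta_a+|a|\delta_b+\delta_a\delta_b$ shows that taking each scalar accuracy of order $\epsilon/\poly(\max_{h,p}|E_{h,p}|)$ makes the first term $O(\epsilon)$-accurate; this is exactly the regime in which the hypothesis $\epsilon\in(0,\max\{1/3,4\max_{h,p}|E_{h,p}|\})$ keeps the required accuracies consistent. The second term is made $O(\epsilon)$-accurate by choosing $\epsilon'=\Theta(\epsilon)$, and the triangle inequality then gives $\big|\widehat{\partial_{\theta_p}\tilde S}-\partial_{\theta_p}\tilde S\big|\le\epsilon$, i.e.\ the claimed $\max$-norm bound once we run over all $p$. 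Amplitude estimation succeeds with constant probability per invocation, so I would boost each scalar estimate by $O(\log D)$ repetitions with a median, and union-bound over the $D$ coordinates to reach overall success probability $\ge2/3$, the extra factor being logarithmic and hence absorbed in $\widetilde{\mathcal{O}}(\cdot)$. Summing the per-coordinate cost $\widetilde{O}\big(\sqrt{\xi}\,d\,\norm{\theta}_1 n^2/\epsilon\big)$, with $\sqrt{\xi}=\max\{\sqrt{N/z},\sqrt{N_h/z_h}\}$ arising from the two Gibbs preparations, over the $D$ coordinates yields the stated query complexity $\widetilde{\mathcal{O}}\big(\sqrt{\xi}\,D\,\norm{\theta}_1 d n^2/\epsilon\big)$.

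The step I expect to be the main obstacle is controlling how errors in the estimated coefficients $c_k=\theta_k\tr{\rho v_k}$ propagate into the estimate of $\mathbb{E}_h[E_{h,p}]$, since a perturbation of $c_k$ perturbs $\tilde H_h$, hence its Gibbs state, hence the expectation. I would handle this with a Lipschitz bound for thermal expectation values under Hamiltonian perturbations — via Duhamel's formula (as recalled in the preliminaries) together with the lower bound $z_h$ on the partition function — to show that an $\eta$-perturbation of each $c_k$ shifts $\mathbb{E}_h[E_{h,p}]$ by $O(\poly(\norm{\theta}_1)\,\eta)$, so that $\eta=\widetilde{O}\big(\epsilon/\poly(\norm{\theta}_1)\big)$ suffices and only inflates the count of Hadamard tests used to assemble $\tilde H_h$ by logarithmic and $\poly(\norm{\theta}_1)$ factors, consistent with the stated bound. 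A secondary technicality is checking that the finite-precision purified Gibbs states still let the Hadamard-test-plus-amplitude-estimation analysis of Section~\ref{sec:amplitude_estimation} go through at the claimed accuracy, which follows by standard error-composition.
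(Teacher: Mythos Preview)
Your proposal is correct and matches the paper's proof essentially step for step: the same decomposition of $\partial_{\theta_p}\tilde S$ into the model term and the data term, the same use of Gibbs state preparation plus Hadamard test plus amplitude estimation for each, the same $O(\log D)$ median boost with a union bound, and the same identification of the propagation of errors in the estimated coefficients $c_k$ into $\mathbb{E}_h[E_{h,p}]$ as the main technical point. The one place where the paper differs slightly is that, because the $h_k$ mutually commute, the distribution $\alpha_h\propto e^{-\tr{\rho\tilde H_h}}$ is a \emph{classical} Boltzmann distribution in the scalar energies $\tr{\rho\tilde H_h}$; the paper therefore bounds $|\alpha_h-\tilde\alpha_h|$ by an elementary scalar computation (yielding $|\mathbb{E}_h[E_{h,p}]-\tilde{\mathbb{E}}_h[E_{h,p}]|\le 8\delta\max_{h,p}|E_{h,p}|$, whence the choice $\delta=\epsilon/(16\max_{h,p}|E_{h,p}|)$ and the constraint $\epsilon\le 4\max_{h,p}|E_{h,p}|$), rather than invoking Duhamel as you suggest. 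Your Duhamel-based Lipschitz argument would also work, but is more machinery than needed here and gives a slightly looser constant with a spurious $\poly(\norm{\theta}_1)$ factor that the paper avoids.
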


Theorem~\ref{thm:gradient_results} shows that the computational complexity of
estimating the gradient grows the closer we get to a pure state,
since for a pure state the inverse temperature $\beta\rightarrow \infty$,
and therefore the norm $\norm{H(\theta)}\rightarrow \infty$,
as the Hamiltonian is depending on the parameters, and hence the type of state we describe.
In such cases we typically would rely on alternative techniques.
However, this cannot be generically improved because otherwise we would be able
to find minimum energy configurations using a number of queries in $o(\sqrt{N})$,
which would violate lower bounds for Grover's search.
Therefore more precise statements of the complexity will require further
restrictions on the classes of problem Hamiltonians to avoid lower bounds imposed by Grover's search and similar algorithms.

\subsection{Gradient based training for general Hamiltonians}

While our first scheme for training quantum Boltzmann machines is only applicable
in case of a restricted Hamiltonian, the second scheme, which we will present
next, holds for arbitrary Hamiltonians.
In order to calculate the gradient, we use higher order divided difference estimates
for the relative entropy objective based on function approximation schemes.
For this we generate differentiation formulas by differentiating an interpolant.
The main ideas are simple:
First we construct an interpolating polynomial from the data.
Second, an approximation of the derivative at any point is obtained via a direct differentiation of the interpolant.

Concretely we perform the following steps.

We first approximate the logarithm via a Fourier-like approximation, i.e.,
$\log \sigma_v \rightarrow \log_{K,M}\sigma_v,$ where the subscripts $K,M$
indicate the level of truncation similar to~\cite{van2017quantum}.
This will yield a Fourier-like series in terms of $\sigma_v$, i.e., $\sum_m c_m \exp{(im\pi \sigma_v)}$.

Next, we need to evaluate the gradient of the function
\[
\tr{ \frac{\partial}{\partial \theta} \rho \log_{K,M}(\sigma_v)}.
\]
Taking the derivative yields many terms of the form
\begin{equation}
  \label{eq:integral_partial_derivatives}
  \int_0^1 ds e^{(ism\pi \sigma_v)} \frac{\partial \sigma_v}{\partial \theta} e^{(i(1-s)m\pi \sigma_v)},
\end{equation}
as a result of the Duhamel's formula for the derivative of exponentials
of operators (c.f., Sec.~\ref{eq:operator_log_ineq} in the mathematical preliminaries).
Each term in this expansion can furthermore be evaluated separately via a sampling procedure,
since the terms in Eq.~\ref{eq:integral_partial_derivatives} can be approximated by
\[
\mathbb{E}_s \left[ e^{(ism\pi \sigma_v)} \frac{\partial \sigma_v}{\partial \theta} e^{(i(1-s)m\pi \sigma_v)} \right].
\]
Furthermore, since we only have a logarithmic number of terms,
we can combine the results of the individual terms via classical postprocessing once we have evaluated the trace.

Now, we apply a divided difference scheme to approximate the gradient term $\frac{\partial \sigma_v}{\partial \theta}$ which results
in an interpolation polynomial $\mathcal{L}_{\mu,j}$ of order $l$ (for $l$ being
the number of points at which we evaluate the function) in $\sigma_v$ which we can efficiently evaluate.

However, evaluating these terms is still not trivial.
The final step consists hence of implementing a routine which allows us to evaluate
these terms on a quantum device. In order to do so, we once again make use of the Fourier series approach.
This time we take the simple idea of approximating the density operator $\sigma_v$ by the series of itself,
i.e., $\sigma_v \approx F(\sigma_v) := \sum_{m'} c_{m'} \exp{(im \pi m' \sigma_v)}$,
which we can implement conveniently via sample based Hamiltonian simulation~\cite{lloyd2014quantum,kimmel2017hamiltonian}.

Following these steps we obtain the expression in Eq.~\ref{eq:full_approximation}.
The real part of
\begin{equation}
  \label{eq:full_approximation}
  \sum_{m=-M_1}^{M_1} \sum_{m'=-M_2}^{M_2} \frac{i c_m \tilde{c}_{m'} m \pi}{2} \sum_{j=0}^{\mu} \mathcal{L}'_{\mu,j}(\theta) \mathbb{E}_{s \in[0,1]} \left[\tr{ \rho e^{\frac{i s \pi m}{2}\sigma_v} e^{\frac{i \pi m'}{2} \sigma_v(\theta_j)} e^{\frac{i (1-s) \pi m}{2}\sigma_v}} \right].
\end{equation}
 then approximates $\partial_{\theta} \tr{\rho \log \sigma_v}$ with at most $\epsilon$ error,
 where $\mathcal{L}'_{\mu,j}$ is the derivative of the interpolation polynomial which we obtain using divided differences,
and $\{c_i\}_{i},\{\tilde{c}_j\}_{j}$ are coefficients of the approximation polynomials,
which can efficiently be evaluated classically.
We can evaluate each term in the sum separately and combine the results then via classical post-processing,
i.e., by using the quantum computer to evaluate terms containing the trace.

\begin{algorithm}[tb!]
\caption{Gradient estimation via series approximations}
\label{alg:algo2}
\begin{algorithmic}
  \STATE {\bfseries Input:} Density matrices $\rho \in \mathbb{C}^{2^n \times 2^n}$ and $\sigma_v \in \mathbb{C}^{2^{n_v}\times 2^{n_v}}$, precalculated parameters $K,M$ and Fourier-like series for the gradient as described in eq.~\ref{eq:full_approximation}.
  \STATE {\bfseries Output:} Estimate $\mathcal{G}$ of the gradient $\nabla_{\theta}\tr{\rho \log \sigma_v}$ with guarantees in Thm.~\ref{thm:gradient_results}.
  \STATE {\bfseries 1.} Prepare the $\ket{+} \otimes \rho$ state for the Hadamard test.
  \STATE {\bfseries 2.} Conditionally on the first qubit apply sample based Hamiltonian simulation to $\rho$, i.e.,
  for $U:=e^{\frac{i s \pi m}{2}\sigma_v} e^{\frac{i \pi m'}{2} \sigma_v(\theta_j)} e^{\frac{i (1-s) \pi m}{2}\sigma_v}$, apply $\ket{0}\bra{0} \otimes I + \ket{1}{1} \otimes U$.
  \STATE {\bfseries 3.} Apply another Hadamard gate to the first qubit.
  \STATE {\bfseries 4.} Repeat the above procedure and measure the final state each time and return the averaged output.
\end{algorithmic}
\end{algorithm}

The main challenge for the algorithmic evaluation hence to compute the terms
\begin{equation}
  \label{eq:error_formula_sample_based_ham_sim}
	\tr{ \rho e^{\frac{i s \pi m}{2}\sigma_v} e^{\frac{i \pi m'}{2} \sigma_v(\theta_j)} e^{\frac{i (1-s) \pi m}{2}\sigma_v}}.
\end{equation}
Evaluating this expression is done through Algorithm~\ref{alg:algo2},
relies on two established subroutines, namely sample based Hamiltonian simulation~\cite{lloyd2014quantum,kimmel2017hamiltonian},
and the Hadamard test which we discussed earlier.
Note that the sample based Hamiltonian simulation approach introduces an additional
$\epsilon_h$-error in trace norm, which we also need to take into account in the analysis.
In section~\ref{app:proof_general_algo} of the appendix we derive the following guarantees for Algorithm.~\ref{alg:algo2}.

\begin{theorem}
  \label{thm:complexity_general_algo}
  Let $\rho, \sigma_v$ being two density matrices, $\norm{\sigma_v} <1/\pi$,
  and we have access to an oracle $O_{H}$ that computes the locations of non-zero
  matrix elements in each row and their values for the $d$-sparse
  Hamiltonian $H(\theta)$ (as per~\cite{berry2007efficient}) and an oracle $O_{\rho}$
  which returns copies of  purified density matrix of the data $\rho$, and $\epsilon \in (0,1/6)$ an error parameter.
  With probability at least $2/3$ we can obtain an estimate $\mathcal{G}$ of
  the gradient w.r.t. $\theta \in \mathbb{R}^D$ of the relative entropy
  $\nabla_{\theta} \tr{\rho \log \sigma_v}$ such that
  \begin{equation}
    \norm{\nabla_{\theta}\tr{\rho \log \sigma_v} - \mathcal{G}}_{max} \leq \epsilon,
  \end{equation}
  with
\begin{align}
\label{eq:final_query_complexity}
   \tilde{O} \left(   \sqrt{\frac{N}{z}}
    \frac{D \norm{H(\theta)}
      d   \mu^5
    \gamma
    }
    {\epsilon^3}
  \right),
\end{align}
queries to $O_H$ and $O_{\rho}$, where $\mu\in O(n_h + \log(1/\epsilon))$, $\|\partial_{\theta} \sigma_v\| \le e^\gamma$, $\norm{\sigma_v}\geq 2^{-n_v}$ for $n_v$
being the number of visible units and $n_h$ being the number of hidden units, and $$\tilde{O}\left(\text{poly}\left(\gamma, n_v, n_h,\log(1/\epsilon)\right)\right)$$
classical precomputation.
\end{theorem}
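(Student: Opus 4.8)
The plan is to prove Theorem~\ref{thm:complexity_general_algo} by composing the four nested approximations sketched above (Fourier truncation of $\log$, Duhamel expansion, divided-difference interpolation of $\partial_\theta\sigma_v$, and a second Fourier expansion implemented by sample-based Hamiltonian simulation), tracking how each contributes to the total error budget $\epsilon$ and to the query cost, and then feeding the resulting circuit into amplitude estimation layered on top of Gibbs-state preparation.

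First I would fix the outer Fourier-type approximation of the matrix logarithm. Following the construction of~\cite{van2017quantum}, one writes $\log\sigma_v$ — legitimate since $\|\sigma_v\|<1/\pi$ keeps us in the analyticity region after rescaling — as $\log_{K,M}\sigma_v=\sum_{m=-M_1}^{M_1}c_m\,e^{im\pi\sigma_v}$ with truncation error controlled by $K$ and $M_1$; the quantitative input is that, because $\sigma_v\succeq 2^{-n_v}I$, the number of modes needed for accuracy $\epsilon'$ is $M_1=\widetilde{O}(\poly(n_v,\log(1/\epsilon')))$, with $\mu\in O(n_h+\log(1/\epsilon))$ governing the interpolation order. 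Differentiating term by term and invoking Duhamel's formula together with Lemma~\ref{eq:operator_log_ineq} turns $\partial_\theta\tr{\rho\log_{K,M}\sigma_v}$ into a logarithmic-size sum of terms $\frac{i c_m m\pi}{2}\int_0^1 ds\,\tr{\rho\,e^{ism\pi\sigma_v}\,(\partial_\theta\sigma_v)\,e^{i(1-s)m\pi\sigma_v}}$, each of which I would rewrite as an expectation $\mathbb{E}_{s\sim\mathrm{Unif}[0,1]}[\cdot]$ so it can be estimated by sampling $s$.

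Next I would replace the inaccessible derivative $\partial_\theta\sigma_v$ by a divided-difference interpolant: building the interpolating polynomial $\mathcal{L}_{\mu,j}$ through $\mu+1$ nearby parameter values $\theta_j$ and differentiating it yields the coefficients $\mathcal{L}'_{\mu,j}(\theta)$ together with the values $\sigma_v(\theta_j)$. The interpolation error scales as $O(h^{\mu}\,\|\partial_\theta^{\mu+1}\sigma_v\|)$ in the node spacing $h$, and using $\|\partial_\theta\sigma_v\|\le e^{\gamma}$ (plus the analogous higher-derivative bounds that follow from it via Duhamel) one picks $h$ and $\mu$ so this contributes at most a constant fraction of $\epsilon$; this is where the $\mu^5$ factor and part of the $1/\epsilon$ dependence enter, since a small $h$ forces amplitude estimation to resolve differences of nearly equal traces. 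I would then apply the \emph{second} Fourier-type approximation, writing $\sigma_v(\theta_j)\approx\sum_{m'=-M_2}^{M_2}\tilde c_{m'}\,e^{im'\pi\sigma_v(\theta_j)}$, so that every operator inside a trace becomes a product of three unitaries $e^{is\pi m\sigma_v}e^{i\pi m'\sigma_v(\theta_j)}e^{i(1-s)\pi m\sigma_v}$; each such unitary is realised by sample-based Hamiltonian simulation of $\sigma_v$~\cite{lloyd2014quantum,kimmel2017hamiltonian}, whose $\epsilon_h$ trace-norm error enters the budget and whose sample cost scales like $t^2/\epsilon_h$ with $t=O(\pi(|m|+|m'|))$. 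Collecting everything reproduces Eq.~\ref{eq:full_approximation}.

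Finally I would bound the cost of Algorithm~\ref{alg:algo2}: each term $\tr{\rho\,U}$ with $U$ a product of three such unitaries is evaluated by a Hadamard test on $\ket{+}\otimes\rho$ (using $O_\rho$ to supply copies of the purified data state), its real part obtained to additive precision $\epsilon''$ by amplitude estimation at cost $O(1/\epsilon'')$ controlled-$U$ calls. Since $\sigma_v$ arises from $e^{-H}/Z$ by tracing out the hidden register, preparing the input to the sample-based simulation invokes Gibbs-state preparation (Theorem~\ref{thm:Gibbs_state_prep}), which supplies the $\sqrt{N/z}$ factor and the $d$-sparsity and $\|H(\theta)\|$ dependence through simulating $e^{-iH\cdot}$, with the oracle $O_H$ as in~\cite{berry2007efficient}. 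Multiplying $D$ coordinates by $\widetilde{O}(1)$ modes $m$, by $\widetilde{O}(1)$ modes $m'$, by $\mu+1$ interpolation points, by the per-estimate cost $\widetilde{O}(\sqrt{N/z}\,\|H(\theta)\|\,d/\epsilon'')$, and substituting the choices $\epsilon'',\epsilon_h=\widetilde{\Theta}(\epsilon/\poly(\ldots))$ forced by the error analysis, gives the claimed $\widetilde{O}\!\left(\sqrt{N/z}\,D\,\|H(\theta)\|\,d\,\mu^5\,\gamma/\epsilon^3\right)$ query complexity, with the classical $\poly(\gamma,n_v,n_h,\log(1/\epsilon))$ precomputation accounting for computing all $c_m,\tilde c_{m'},\mathcal{L}'_{\mu,j}$. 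The main obstacle — and the part I expect to demand the most care — is the bookkeeping of the four compounded error sources so that exactly $1/\epsilon^3$ (and not a worse power) emerges: in particular, controlling how the shrinking divided-difference spacing $h$ interacts with the amplitude-estimation precision and with the $t^2$-type cost of sample-based Hamiltonian simulation, and verifying that both layers of Fourier modes remain polylogarithmic so they can be absorbed into the $\widetilde{O}$.
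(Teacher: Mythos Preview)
Your proposal is correct and tracks the paper's argument closely: the same four-layer approximation scheme culminating in Eq.~\eqref{eq:full_approximation}, the same reliance on sample-based Hamiltonian simulation (Theorem~\ref{thm:sample_based_ham_sim}) layered on Gibbs-state preparation (Theorem~\ref{thm:Gibbs_state_prep}), and the same identification of the divided-difference/precision interaction as the crux of the error bookkeeping, which the paper handles via Lemma~\ref{lem:proof_error_bound_grads} and the parameter choices in Eqs.~\eqref{eq:bound_M1}--\eqref{eq:bounds_epss}. The one substantive deviation is your use of amplitude estimation to read out the Hadamard test, whereas the paper (Algorithm~\ref{alg:algo2}, step~4, and the lemma on $\tilde{\mathbb{E}}_s$ inside the proof of Lemma~\ref{lem:proof_error_bound_grads}) uses plain Chebyshev sampling with a median boost: there the $1/\epsilon^3$ arises as $1/\epsilon_h$ (LMR precision) times $1/\epsilon_s^2$ (sampling), each forced to scale like $\epsilon\Delta/\mu^2$ by the Lagrange-coefficient blowup $\|\mathcal{L}'_{\mu,j}\|_\infty\in O(\mu\log\mu/\Delta)$. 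Your amplitude-estimation route is not wrong, but be aware that the $O(1/\epsilon'')$ Grover iterations amplify the LMR error in the controlled unitary, so you must tighten $\epsilon_h$ accordingly and the cube reappears; your quick accounting (``$1/\epsilon''$ per estimate with $\epsilon''\sim\epsilon/\poly$ gives $1/\epsilon^3$'') does not by itself produce three powers of $\epsilon$ and this is exactly the delicate interaction you correctly flag in your last paragraph.
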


In order to obtain the bounds in Theorem~\ref{thm:complexity_general_algo},
we decompose the total error into the errors that we incur at each step of the approximation scheme,
\begin{align}
\label{eq:bounds_approximation_error}
  &\left\lvert \partial_{\theta}\tr{\rho \log \sigma_v} - \partial_{\theta} \tr{\rho \log_{K_1,M_1}^s \tilde{\sigma}_v} \right\rvert \leq \sum_i \sigma_i(\rho) \cdot \left\lVert \partial_{\theta} [\log \sigma_v - \log_{K_1,M_1}^s \tilde{\sigma}_v] \right\rVert \nonumber \\
    &\leq \sum_i \sigma_i(\rho) \cdot \left( \left\lVert \partial_{\theta} [\log \sigma_v - \log_{K_1,M_1} \sigma_v] \right\rVert \right. \nonumber \\
    &+ \left.
    \left\lVert \partial_{\theta} [\log_{K_1,M_1} \sigma_v - \log_{K_1,M_1} \tilde{\sigma}_v] \right\rVert + \left\lVert \partial_{\theta} [\log_{K_1,M_1} \tilde{\sigma_v} - \log^s_{K_1,M_1} \tilde{\sigma}_v] \right\rVert \right).
\end{align}
Then bounding each term separately and adjusting the parameters to obtain an overall error of $\epsilon$ allows us to obtain the above result.
We are hence able to use this procedure to efficiently obtain gradient estimates for a QBM with hidden units, while making minimal assumptions on the input data.

\section{Conclusion}

Generative models may play a crucial role in making quantum machine learning
algorithms practical, as they yield concise models for complex quantum states that have no known a priori structure.
In this chapter, we solved an outstanding problem in the field of generative quantum modelling:
We overcame the previous inability to train quantum generative models
with a quantum relative entropy objective for a QBM with hidden units.
In particular, the inability to train models with hidden units was a substantial drawback.

Our results show that, given an efficient subroutine for preparing Gibbs states and
an efficient algorithm for computing the matrix elements of the Hamiltonian,
one can efficiently train a quantum Boltzmann machine with the quantum relative entropy.

While a number of problems remain with our algorithms,
namely (1) that we have not given a lower bound for the query complexity, and
hence cannot answer the question whether a linear scaling in $\|\theta\|_1$ is optimal,
(2) that our results are only efficient if the complexity of performing the Gibbs state preparation is low,
we still show a first efficient algorithm for training such generative quantum models.
This is important, as such models might in the future open up avenues for fast
state preparation, and could therefore overcome some of the drawbacks related
to the data access, since the here presented model could also be trained in an online manner,
i.e., whenever a data point arrives it could be used to evaluate and update the gradient.
Once the model is trained, we could then use it to generate states from the input
distribution and use these in a larger QML routine.
With further optimisation, it is our hope that quantum Boltzmann machines may
not be just a theoretical tool that can one day be used to model quantum states,
but that they can be used to show an early use of quantum computers for
modelling quantum or classical data sets in near term quantum hardware.

The here discussed results for learning with quantum data lead also to a more philosophical question, which was
raised to us by Iordanis Kerenidis:
Is the learning of quantum distributions from polynomially-sized circuits truly a quantum problem,
or is it rather classical machine learning?

In the following we now briefly attempt to answer this question as well.

Much of the attention of the QML literature for learning of quantum data has been
focused on learning a circuit that can generate a given input quantum distribution.
However, such distributions may be assumed to be derived from a polynomially-sized
quantum circuit, which implies that the data has an efficient (polynomially-sized)
classical description, namely the quantum circuit that generated it (or the circuits).

Under the assumption that there exist polynomially-sized
quantum circuits, which generate the quantum data that we aim to learn,
the problem can, however, be seen as a classical machine learning problem.
This is because the learning can happen entirely with classical means by
learning the gates of the circuit with the slight addition that one can run the quantum circuits as well.
Note that running the quantum circuit can be seen as the query of a cost
function or a sample generation, depending on the context.
In this context, we indeed do not anticipate that a differentiation can be made between
classical or quantum data and hence the learning.

The difference might still occur when we are dealing with quantum states
that are derived from inaccessible (black-box) quantum processes.
In this case, we cannot guarantee that there exists a quantum circuit that has
a polynomially sized circuit, which is able to generate the data.
However, we are still able to learn an approximation to these states, see e.g.~\cite{aaronson2007learnability}.
In this scenario, we can describe the problem again through a two-step process
as a classical machine learning problem.
First, we can for each quantum state from the data set learn a circuit that
approximates the input state.
Then, we take the set of circuits as an input model to the classical machine learning
routine, and learn a classification of these.
In this sense, the problem is entirely classical, with the difference that
we rely on a quantum circuit for the evaluation of the cost function (e.g., the
distance of the quantum states).

Given this, we anticipate that for certain tasks, such as classification tasks,
no exponential advantage should be possible for quantum data.

\chapter{Conclusions}
\label{chap:conclusions}

In this thesis, we have investigated the area of quantum machine learning,
the combination of machine learning and quantum mechanics (in general),
and in particular here the use of quantum computers to perform machine learning.
We have particularly investigated supervised quantum machine learning methods,
and answered two important questions:

The first research question~\ref{rq:slt1} asked what is the performance
of supervised quantum machine learning algorithms
under the common assumptions of statistical learning theory.
Our analysis indicated that most of the established and existing quantum machine
learning algorithms, which are all of a theoretical nature, do not offer any
advantage over their classical counterparts under common assumptions.
In particular, we gave the first results that rigorously study the performance
of quantum machine learning algorithms in light of the target generalisation accuracy
of the learning algorithm, which is of a statistical nature.
We show that any quantum machine learning algorithm will exhibit a polynomial scaling in the number of
training samples -- immediately excluding any exponential speedup.
A more in-depth analysis of the algorithms demonstrated that indeed, taking
this into account, many of the quantum machine learning algorithms do not offer any advantage over
their classical counterparts.

This insight also allows us to immediately understand that noise in current
generations of quantum computers significantly limits our ability to obtain
good solutions, since it would immediately introduce an additional linear term
in error decomposition of Eq.~\ref{eq:totalerrordec}.

A range of questions crucially remain open.
First, while we better understand the limitations of quantum algorithms due to the
generalisation ability, we are unable to investigate these bounds with
respect to the complexity of the Hypothesis space.
A core open question and possible avenue forward to better understand the
discrepancy between classical and quantum methods is the study of the
complexity $c\,(\mathcal{H})$, i.e., the measure of the complexity of $\mathcal{H}$
(such as the VC dimension, covering numbers, or the Rademacher complexity~\cite{cucker2002mathematical,shalev2014understanding}).
This might indeed clarify whether quantum kernels or other approaches could be
used to improve the performance of quantum algorithms.
First attempts in this direction exist in the works of \cite{liu2020rigorous}
and \cite{huang2020power}. In particular, the latter identified with the tools of statistical
learning theory under which circumstances a quantum kernel can give an advantage.
While these results are very encouraging, \cite{huang2020power} also demonstrates that for most problems a quantum advantage
for learning problems is highly unlikely. In particular, the questions remain
for what type of practical problems such a quantum kernel would give an advantage.

Another open question of this thesis is the effect of the condition number.
In practice, conditioning is a crucial aspect of many algorithms, and while we give
here an indication about possible outcomes, a clear quantum advantage would need
to take the condition number also into account.
In particular, since existing algorithms, such as the one by Clader et al.~\cite{clader2013preconditioned},
have been shown not to work~\cite{harrow2017limitations}, new approaches are
needed to overcome the drawbacks of current quantum algorithms compared to their classical counterparts.

The second research question~\ref{rq:randNLA1} on the other hand investigated
the challenge posed by the oracles that are used to establish exponentially
faster (theoretical) computational complexities, i.e., run times.
Most supervised quantum machine learning methods assume the existence of a logarithmic state
preparation process, i.e., a quantum equivalent of a random access memory called qRAM~\cite{giovannetti2008qram1}.
We inspected this integral part of quantum machine learning under the lens of
randomised numerical linear algebra, and asked what is the comparative advantage
of quantum algorithms over classical ones if we can efficiently sample
from the data in both cases.

We established on the example of Hamiltonian simulation, that given the ability
to sample efficiently from the input data -- according to some form of leverage
-- we are able to obtain computational complexities for Hamiltonian simulation
that are independent of the dimension of the input matrix.
To do this, we first derived a quantum algorithm for Hamiltonian simulation,
which is based on a quantum random memory.
Next, we developed a classical algorithm that performed a similar though harder task, but which also did not have any explicit dependency on the dimension.
Our results were some of the first that introduced randomised methods into
the classical simulation of quantum dynamics and one of the first
to study the comparative advantage of classical randomised algorithms over
quantum ones.

While we didn't manage to extend our algorithms immediately to any application
in quantum machine learning, shortly after we published our results, Ewin Tang
derived the now famous quantum-inspired algorithm for recommendation systems~\cite{tang2018quantum},
which uses randomised methods similar to the ones we introduced for Hamiltonian
simulation in the context of quantum machine learning.
We hence also put our results in context by adding a discussion
regarding the literature of quantum-inspired algorithms.

Notably, the larger question remains if such memory structures such as qRAM
will ever be practically feasible.
Older results~\cite{regev2008impossibility} indicate that even small errors
can diminish a quantum advantage for search, and therefore could also crucially
affect quantum machine learning algortihms.
While recent research~\cite{arunachalam2015robustness} claims that such errors could be sustained by quantum machine
learning algorithms, we believe this question is far from clear today.

A further constraint arises from the actual costs -- end-to-end -- to run a quantum
algorithm using such a memory structure.
In reality, the cost to build and run such devices still need to be significantly reduced
in order to make any advantage economic.

In the final chapter of the thesis, we then turned to future avenues for quantum
machine learning. Since we established that the state preparation is one of the
biggest bottlenecks for successful applications and a possible advantage of quantum machine learning
over classical methods, we hence investigate generative quantum models.
We believe that these might hold the key to making quantum machine learning
models useful in the intermediate future by replacing potentially infeasible quantum memory
models such as qRAM.
We resolve a major open question in the area of generative quantum modelling.
Namely, we derive efficient algorithms for the gradient based training of
Quantum Boltzmann Machines with hidden layers using the quantum relative entropy
as an objective function.
We present in particular two algorithms.
One that applies in a more restricted
case and is more efficient.
And a second one that applies to arbitrary instances
but is less efficient.
Our results hold under the assumption that we can efficiently prepare Gibbs distributions,
which is not true in general, but to be expected to hold for most practical applications.

A key question that is not treated in this thesis is how can we establish good
benchmarks for new quantum algorithms such as in quantum machine learning.
This slightly extends the more general questions on how to judge normal optimisation
algorithms for exactly the statistical guarantees that we aim to achieve.

In practice, it might be required to perform the following steps to evaluate a
new algorithm:
\begin{enumerate}
  \item Obtain the theoretical time complexity of an algorithm (asymptotic).
  \item Determine the dependencies (error, input dimensions, condition number, etc.) that
  are in common with the classical algorithm and identify how these compare.
  \item Use known bounds on the target generalisation accuracy and see if the advantage remains.
  \item Estimate overheads (pre-factors) for the algorithm using the best known results to go from the asymptotic scaling to a concrete cost.
  \item Use estimates of condition number and other properties for some common machine learning datasets to see if a practical advantage is feasible
  (since the overhead).
  \item Take other error sources into account and check if the advantage still remains (i.e., take account error rates in the device and how
  they affect the bounds). This must also include overheads from error correction.
\end{enumerate}

Benchmarks as the above one, which are of a theoretical nature, are the only
option to establish a quantum advantage unless fully error-corrected devices are available for testing.
However, defining a common way to perform such benchmarks remains an open question in
the quantum algorithms community.

More recently, the community started to investigate the area of so-called quantum kernel methods.
These might enable us to find new ways to leverage quantum computing to
better identify patterns in certain types of data.

From a practical point of view, however, the ultimate test for any method is
the demonstrated performance on a benchmark data set, such as MNIST.
While today's quantum computers appear to be far away from this, it will be
the only true way to establish a performance advantage for quantum machine learning
over classical methods.

One interesting aspect that we have not considered in this thesis is the
reconciliation of the research questions one and two, namely how classical randomised algorithms
and quantum algortihms fit into the picture of statistical learning theory.
It appears that a much fairer comparison is possible between quantum and classical
randomised methods. Indeed, there exist classical algorithms for machine learning
which leverage randomisation, optimal rates (generalisation accuracy), and
other aspects such as preconditioning to obtain performance improvements.
Such methods typically lead to a better time-scaling while also
resulting in optimal learning rates - take for example FALCON~\cite{rudi2017falkon}.

Whether quantum machine learning will become useful in the future using
generative models for the data access, or whether the polynomial speedup of
quantum machine learning algorithms over their classical (quantum-inspired)
versions will hold remains an open question.

Ultimately, we believe that these questions can only truly be verified through
implementation and benchmarking.
Due to the current state of quantum computing hardware, we however believe that
such benchmarks will unlikely be possible in the next years, and that
researchers will need to continue to make theoretical progress long before
we will be able to resolve the questions of a `quantum advantage' through
an actual benchmark.

\addcontentsline{toc}{chapter}{Appendices}

\appendix
\chapter{Appendix 1: Quantum Subroutines}
\label{sec:appendix}

\section{Amplitude estimation}
\label{sec:amplitude_estimation}
In the following we describe the established \textit{amplitude estimation algorithm}~\cite{brassard2002quantum}:

\begin{algorithm}[tb!]
\caption{Amplitude estimation}
\label{alg:algo3}
\begin{algorithmic}
  \STATE {\bfseries Input:} Density matrix $\rho$, unitary operator $U:\mathbb{C}^{2^n} \rightarrow \mathbb{C}^{2^n}$, qubit registers $\ket{0}\otimes \ket{0}^{\otimes n}$.
  \STATE {\bfseries Output:} An $\tilde \epsilon$ close estimate of $\tr{U \rho}$.
  \STATE {\bfseries 1.} Initialize two registers of appropriate sizes to the state $\ket{0} \mathcal{A} \ket{0}$, where $\mathcal{A}$ is a unitary transformation which prepares the input state, i.e., $\ket{\psi}=\mathcal{A}\ket{0}$.
  \STATE {\bfseries 2.} Apply the quantum Fourier transform $\mathrm{QFT}_N: \ket{x}\rightarrow \frac{1}{\sqrt{N}} \sum_{y=0}^{N-1} e^{2 \pi i x y/N} \ket{y}$ for $0 \leq x < N$, to the first register.
  \STATE {\bfseries 3.} Apply $\Lambda_N(Q)$ to the second register, i.e., let $\Lambda_N(U): \ket{j}\ket{y}\rightarrow \ket{j}(U^j \ket{y})$ for $0\leq j < N$,
                      then we apply $\Lambda_N(Q)$ where $Q:= -\mathcal{A}S_0 \mathcal{A}^{\dagger}S_t$ is the Grover's operator.
  \STATE {\bfseries 4.} Apply $\mathrm{QFT}^{\dagger}_N$ to the first register.
  \STATE {\bfseries 5.} Return $\tilde{a} = \sin^2(\pi \frac{\tilde \theta}{N})$.
\end{algorithmic}
\end{algorithm}

Algorithm~\ref{alg:algo3} describes the amplitude estimation algorithm.
The output is an $\epsilon$-close estimate of the target amplitude.
Note that in step (3), $S_0$ changes the sign of the amplitude if and only if the state is the zero state $\ket{0}$,
and $S_t$ is the sign-flip operator for the target state, i.e., if $\ket{x}$ is the desired outcome, then $S_t := I - 2\ket{x}\bra{x}$.

The algorithm can be summarized as the unitary transformation $$ \left( (\mathrm{QFT}^{\dagger} \otimes I) \Lambda_N(Q) (\mathrm{QFT}_N \otimes I)\right)$$
applied to the state $\ket{0}\mathcal{A}\ket{0}$, followed by a measurement of the first register and classical post-processing returns an estimate $\tilde \theta$ of the amplitude of the desired outcome such that $\lvert \theta - \tilde \theta \rvert \leq \epsilon$ with probability at least $8/\pi^2$.
The result is summarized in the following theorem, which states a slightly more general version.
\begin{theorem}[Amplitude Estimation~\cite{brassard2002quantum}]
\label{thm:amplitude_estimation}
For any positive integer $k$, the Amplitude Estimation Algorithm returns an estimate $\tilde a$ ($0\leq \tilde a \leq 1$) such that $$\lvert \tilde a - a \rvert \leq 2 \pi k \frac{\sqrt{a(1-a)}}{N} + k^2 \frac{\pi^2}{N^2}$$
with probability at least $\frac{8}{\pi^2}\approx 0.81$ for $k=1$ and with probability greater than $1-\frac{1}{2(k-1)}$ for $k\geq 2$. If $a=0$ then $\tilde a=0$ with certainty, and and if $a=1$ and $N$ is even, then $\tilde{a}=1$ with certainty.
\end{theorem}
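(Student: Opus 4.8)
The plan is to recognise Algorithm~\ref{alg:algo3} as quantum phase estimation applied to the Grover iterate $Q = -\mathcal{A}S_0\mathcal{A}^{\dagger}S_t$, and then to convert the standard phase-estimation accuracy guarantee into the stated bound on the amplitude via an elementary trigonometric identity. First I would fix the two-dimensional invariant subspace: writing $\mathcal{A}\ket{0} = \sin\theta_a\,\ket{\psi_{\mathrm{good}}} + \cos\theta_a\,\ket{\psi_{\mathrm{bad}}}$ with $a=\sin^2\theta_a$ and $\theta_a\in[0,\pi/2]$, the operator $Q$ restricted to $\mathrm{span}\{\ket{\psi_{\mathrm{good}}},\ket{\psi_{\mathrm{bad}}}\}$ is a rotation by angle $2\theta_a$, hence has eigenvalues $e^{\pm 2i\theta_a}$ with eigenvectors $\ket{\psi_\pm}$, and $\mathcal{A}\ket{0}$ decomposes as the equal-weight superposition $\tfrac{-i}{\sqrt2}\bigl(e^{i\theta_a}\ket{\psi_+} - e^{-i\theta_a}\ket{\psi_-}\bigr)$ up to a global phase.

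Second, steps 1--4 of the algorithm---prepare $\ket{0}\,\mathcal{A}\ket{0}$, apply $\mathrm{QFT}_N$ to the control register, apply $\Lambda_N(Q)$, apply $\mathrm{QFT}_N^{\dagger}$, measure---are exactly phase estimation of $Q$ with an $N$-dimensional control register. On the eigenstate $\ket{\psi_+}$ this returns an integer $y$ estimating $\omega_a := N\theta_a/\pi$, and on $\ket{\psi_-}$ it returns $y$ estimating $N-\omega_a$; since $\sin^2\!\bigl(\pi(N-y)/N\bigr)=\sin^2(\pi y/N)$, both branches yield the same estimate $\tilde a = \sin^2(\pi y/N)$ of $a = \sin^2(\pi\omega_a/N)$, so the measurement outcome is well defined irrespective of which branch is sampled.

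Third---the main technical content---I would establish the phase-estimation tail bound $\Prob[\,\abs{y-\omega_a}\le k\,] \ge 8/\pi^2$ for $k=1$ and $\ge 1-\tfrac{1}{2(k-1)}$ for $k\ge2$. This is the classical Fourier-analytic estimate: after $\mathrm{QFT}_N^{\dagger}$ the amplitude on outcome $y$ is $\tfrac1N\sum_{j=0}^{N-1}e^{2\pi i j(\omega_a-y)/N}$, a geometric sum whose squared modulus is a Dirichlet/Fej\'er-type kernel; summing $\tfrac{\sin^2(\pi(\omega_a-y))}{N^2\sin^2(\pi(\omega_a-y)/N)}$ over all $y$ with $\abs{y-\omega_a}>k$ and using $\sin(\pi x/N)\ge 2x/N$ for $0\le x\le N/2$ gives the claimed tail probabilities. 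When $\omega_a$ is an integer---i.e.\ $a=0$, or $a=1$ with $N$ even---the sum collapses and $y=\omega_a$ with certainty, which disposes of the two edge cases.

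Finally I would convert the phase error into the amplitude error using $\sin^2 u-\sin^2 v = \sin(u+v)\sin(u-v)$ with $u=\pi y/N$ and $v=\theta_a$. From $\abs{u-v}=\pi\abs{y-\omega_a}/N\le\pi k/N$ we get $\abs{\sin(u-v)}\le\pi k/N$, and $\abs{\sin(u+v)}=\abs{\sin\!\bigl((u-v)+2\theta_a\bigr)}\le\abs{\sin(u-v)}+\abs{\sin 2\theta_a}\le \pi k/N + 2\sqrt{a(1-a)}$, since $\sin 2\theta_a = 2\sin\theta_a\cos\theta_a = 2\sqrt{a(1-a)}$ on $[0,\pi/2]$. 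Multiplying, $\abs{\tilde a - a}\le (\pi k/N)\bigl(2\sqrt{a(1-a)}+\pi k/N\bigr) = 2\pi k\sqrt{a(1-a)}/N + \pi^2k^2/N^2$, as required. The hard part is the third step: recovering the exact constants $8/\pi^2$ and $1-\tfrac{1}{2(k-1)}$ requires a careful, tight bound on the kernel tail sum rather than a crude order estimate; everything else is bookkeeping on the two-dimensional subspace and elementary trigonometry.
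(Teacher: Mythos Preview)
The paper does not prove this theorem at all: it is stated as a citation of the original result of Brassard, H{\o}yer, Mosca, and Tapp, with the surrounding text simply describing the algorithm and pointing to Lemma~7 of that reference for the angle--amplitude conversion. Your proposal is a faithful sketch of the original Brassard et al.\ argument---the two-dimensional invariant subspace, recognition of Algorithm~\ref{alg:algo3} as phase estimation on $Q$, the Dirichlet-kernel tail bound for the phase register, and the $\sin^2 u - \sin^2 v = \sin(u+v)\sin(u-v)$ conversion---so there is nothing to compare against here beyond noting that you have supplied what the thesis deliberately omits as background.
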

Notice that the amplitude $\theta$ can hence be recovered via the relation $\theta = \arcsin{\sqrt{\theta_a}}$ as described above which incurs an $\epsilon$-error for $\theta$ (c.f., Lemma 7, \cite{brassard2002quantum}).

\section{The Hadamard test}
\label{sec:eval_dm_operator}

Here we present an easy subroutine to evaluate the trace of products of unitary operators $U$ with a density matrix $\rho$, which is known as the Hadamard test.

\begin{algorithm}[tb!]
\caption{Variational gradient estimation - term 2}
\label{alg:algo0}
\begin{algorithmic}
  \STATE {\bfseries Input:} Density matrix $\rho$, unitary operator $U:\mathbb{C}^{2^n} \rightarrow \mathbb{C}^{2^n}$, qubit registers $\ket{0}\otimes \ket{0}^{\otimes n}$.
  \STATE {\bfseries Output:} An $\tilde \epsilon$ close estimate of $\tr{U \rho}$.
  \STATE {\bfseries 1.} Prepare the first qubits $\ket{+}$ state and initialize the second register to $0$.
  \STATE {\bfseries 2.} Use an appropriate subroutine to prepare the density matrix $\rho$ on the second register to obtain the state $\ket{+}\bra{+}\otimes \rho$.
  \STATE {\bfseries 3.} Apply a controlled operation $\ket{0}\bra{0} \otimes I_{2^n} + \ket{1}\bra{1} \otimes U$, followed by a Hadamard gate.
  \STATE {\bfseries 4.} Perform amplitude estimation on the $\ket{0}$ state, via the reflector $P:= -2 \ket{0}\bra{0} +I$, and operator $G:= \left( 2 \rho - I \right) (P \otimes I)$.
  \STATE {\bfseries 5.} Measure now the phase estimation register which returns an $\tilde \epsilon$-estimate of the probability $\frac{1}{2} \left( 1 + \mathrm{Re}\left[\tr{U\rho}\right] \right) $ of the Hadamard test to return $0$.
  \STATE {\bfseries 6.} Repeat the procedure for an additional controled application of $\exp(i\pi/2)$ in step (3) to recover also the imaginary part of the result.
  \STATE {\bfseries 7.} Return the real and imaginary part of the probability estimates.
\end{algorithmic}
\end{algorithm}

Note that this procedure can easily be adapted to be used for $\rho$ being some Gibbs distribution.
We then would use a Gibbs state preparation routine in step (2).
For example for the evaluation of the gradient of the variational bound, we require this subroutine to evaluate $U=\partial_{\theta}H$ for $\rho$ being the Gibbs distribution corresponding to the Hamiltonian $H$.

\chapter{Appendix 2: Deferred proofs}

First for convenience, we formally define quantum Boltzmann machines below.

\begin{definition}
A quantum Boltzmann machine to be a quantum mechanical system that acts on a tensor product of Hilbert spaces $\mathcal{H}_v\otimes \mathcal{H}_h \in \mathbb{C}^{2^n}$ that correspond to the visible and hidden subsystems of the Boltzmann machine.  It further has a Hamiltonian of the form $H \in \mathbb{C}^{2^n \times 2^n}$ such that $\norm{H - \text{diag}(H)} > 0$.  The quantum Boltzmann machine takes these parameters and then outputs a state of the form ${\rm Tr}_h \left(\frac{e^{-H}}{{\rm Tr}(e^{-H})} \right)$.
\end{definition}
Given this definition, we are then able to discuss the gradient of the relative entropy between the output of a quantum Boltzmann machines and the input data that it is trained with.

\section{Derivation of the variational bound}
\label{sec:variational_bound_derivative}
\begin{proof}[Proof of Lemma~\ref{def:Stilde}]
  Recall that we assume that the Hamiltonian $H$ takes the form $$H:=\sum_k \theta_k v_k \otimes h_k,$$ where $v_k$ and $h_k$ are operators acting on the visible and hidden units respectively and we can assume $h_k=d_k$ to be diagonal in the chosen basis.
  Under the assumption that $[h_i,h_j]=0, \forall i,j$, c.f. the assumptions in \eqref{eq:assumptions}, there exists a basis $\{\ket{h}\}$ for the hidden subspace such that $h_k \ket{h}=E_{h,k}\ket{h}$.
  With these assumptions we can hence reformulate the logarithm as
  \begin{align}
    &\log\trh{e^{-H}} = \log \left( \sum_{v,v',h}  \bra{v,h} e^{-\sum_{k} \theta_k v_k \otimes h_k} \ket{v',h} \ket{v}\bra{v'}\right) \\
    &= \log \left( \sum_{v,v',h}  \bra{v} e^{-\sum_{k} E_{h,k} \theta_k v_k} \ket{v'} \ket{v}\bra{v'} \right) \\
    &= \log \left(\sum_h e^{-\sum_{k} E_{h,k} \theta_k v_k} \right),
  \end{align}
  where it is important to note that $v_k$ are operators and we hence just used the matrix representation of these in the last step.
  In order to further simplify this expression, first note that each term in the sum is a positive semi-definite operator.
  In particularly, note that the matrix logarithm is operator concave and operator monotone, and hence by Jensen's inequality, for any sequence of non-negative number $\{\alpha_i\}: \sum_i \alpha_i =1$ we have that $$\log \left( \frac{\sum_{i=1}^N \alpha_i U_i}{\sum_j \alpha_j} \right) \geq \frac{\sum_{i=1}^N \alpha_i\log \left(  U_i \right)}{\sum_j \alpha_j}.$$
  and since we are optimizing $\tr{\rho \log \rho} -\tr{\rho \log \sigma_v}$ we hence obtain for arbitrary choice of $\{\alpha_i\}_i$ under the above constraints,
  \begin{align}
  \tr{\rho \log \left(\sum_{h=1}^N e^{-\sum_{k} E_{h,k} \theta_k v_k} \right)} &= \tr{\rho \log \left(\sum_{h=1}^N \alpha_h \frac{e^{-\sum_{k} E_{h,k} \theta_k v_k}/\alpha_h }{\sum_{h'} \alpha_{h'}} \right)} \nonumber \\
  & \geq -\tr{\rho \frac{\sum_h \alpha_h \sum_{k} E_{h,k} \theta_k v_k + \sum_h \alpha_h \log \alpha_h}{\sum_{h'} \alpha_{h'}}}.
  \end{align}
  Hence, the variational bound on the objective function for any $\{\alpha_i\}_i$ is
  \begin{align}
  \label{eq:def_variational_bound}
  \mathcal{O}_{\rho}(H) =& \tr{\rho \log \rho} - \tr{\rho \log \sigma_v} \nonumber \\
            &\leq \tr{\rho \log \rho} +  \tr{\rho \frac{\sum_h \alpha_h \sum_{k} E_{h,k} \theta_k v_k + \sum_h \alpha_h \log \alpha_h}{\sum_{h'} \alpha_{h'}}} + \log Z  =: \tilde{S}
  \end{align}
\end{proof}

\section{Gradient estimation}
\label{sec:var_grad_estimation}
For the following result we will rely on a variational bound in order to train the quantum Boltzmann machine weights for a Hamiltonian $H$ of the form given in \eqref{eq:ham_form}.
We begin by proving Lemma~\ref{lem:gradient_visible_layer} in the main work, which will give us an upper bound for the gradient of the relative entropy.

\begin{proof}[Proof of Lemma~\ref{lem:gradient_visible_layer}]
	We first derive the gradient of the normalization term ($Z$) in the relative entropy, which can be trivially evaluated using Duhamels formula to obtain $$\frac{\partial}{\partial \theta_p} \log \tr{e^{-H}} = - \tr{\frac{\partial H}{\partial \theta_p} \frac{e^{-H}}{Z}} =- \tr{\sigma \partial_{\theta_p} H}.$$
	Note that we can easily evaluate this term by first preparing the Gibbs state $\sigma_{Gibbs} := e^{-H}/Z$ and then evaluating the expectation value of the operator $\partial_{\theta_p}H$ w.r.t. the Gibbs state, using amplitude estimation for the Hadamard test.
	If $T_{Gibbs}$ is the query complexity for the Gibbs state preparation, the query complexity of the whole algorithm including the phase estimation step is then given by $O(T_{Gibbs}/\tilde{\epsilon})$ for an $\tilde{\epsilon}$-accurate estimate of phase estimation.
Taking into account the desired accuracy and the error propagation will hence straight forward give the computational complexity to evaluate this part.\\
	We now proceed with the gradient evaluations for the model term.
    Using the variational bound on the objective function for any $\{\alpha_i\}_i$, given in eq.~\ref{eq:def_variational_bound}, we obtain the gradient
    \begin{align}
      \frac{\partial \tilde{S}}{\partial_{\theta_p}} &=- \tr{\frac{\partial H}{\partial \theta_p} \frac{e^{-H}}{Z}} + \tr{ \frac{\partial }{\partial \theta_p} \rho \sum_h \alpha_h \sum_{k} E_{h,k} \theta_k v_k }+   \frac{\partial }{\partial \theta_p} \sum_h \alpha_h \log \alpha_h \\
&= - \tr{\frac{\partial H}{\partial \theta_p} \frac{e^{-H}}{Z}} +  \frac{\partial }{\partial \theta_p} \left( \sum_h \alpha_h \tr{ \rho \sum_{k} E_{h,k} \theta_k v_k }+  \sum_h \alpha_h \log \alpha_h \right)
    \end{align}
    where the first term results from the partition sum.
The latter term can be seen as a new effective Hamiltonian, while the latter term is the entropy.
The latter term hence resembles the free energy $F(h)=E(h)-TS(h)$, where $E(h)$ is the mean energy of the effective system with energies $E(h):=\tr{ \rho \sum_{k} E_{h,k} \theta_k v_k }$, $T$ the temperature and $S(h)$ the Shannon entropy of the $\alpha_h$ distribution. We now want to choose these $\alpha_h$ terms to minimize this variational upper bound.
It is well-established in statistical physics, see for example~\cite{landau1980statistical}, that the distribution which maximizes the free energy is the Boltzmann (or Gibbs) distribution, i.e., $$\alpha_h =  \frac{e^{- \tr{\rho \tilde{H}_h }}}{\sum_h e^{- \tr{\rho \tilde{H}_h}}},$$
where $\tilde{H}_h :=\sum_k E_{h,k} \theta_k v_k$ is a new effective Hamiltonian on the visible units, and the $\{\alpha_i\}$ are given by the corresponding Gibbs distribution for the visible units.

Therefore, our gradients can be taken with respect to this distribution and the bound above, where $\tr{\rho \tilde{H}_h}$ is the mean energy of the the effective visible system w.r.t. the data-distribution.
For the derivative of the energy term we obtain
\begin{align}
  &\frac{\partial }{\partial \theta_p}  \sum_h \alpha_h \tr{ \rho \sum_{k} E_{h,k} \theta_k v_k } = \\
  &= \sum_h \left( \alpha_h \left( \mathbb{E}_{h'}\left[\tr{ \rho E_{h',p} v_p }\right] - \tr{\rho E_{h,p} v_p} \right) \tr{\rho \tilde{H}_h} + \alpha_h \tr{\rho E_{h,p} v_p}\right) \\
  &= \mathbb{E}_{h} \left[\left( \mathbb{E}_{h'}\left[\tr{ \rho E_{h',p} v_p }\right] - \tr{\rho E_{h,p} v_p} \right) \tr{\rho \tilde{H}_h} + \tr{\rho E_{h,p} v_p}\right],
\end{align}
while the entropy term yields
\begin{align}
  \frac{\partial}{\partial \theta_p} \sum_h \alpha_h \log \alpha_h &=  \sum_h \alpha_h \left( \left[ \tr{\rho E_{h,p}v_p} - \mathbb{E}_{h'} \left[\tr{\rho E_{h',p}v_p} \right] \right] \tr{\rho \tilde{H}_h} - \tr{\rho E_{h,p} v_p} \right) \nonumber \\
  &+ \sum_h \alpha_h \left( \tr{\rho E_{h,p}} - \mathbb{E}_{h'} \left[\tr{\rho E_{h',p}v_p} \right] \right) \log  \tr{e^{-\tilde{H}_h}} \nonumber \\
  &+ \mathbb{E}_{h'} \left[\tr{\rho E_{h',p} v_p} \right].
\end{align}
This can be further simplified to
\begin{align}
  &\sum_h \alpha_h \left( \tr{\rho E_{h,p}v_p} - \mathbb{E}_{h'} \left[\tr{\rho E_{h',p}v_p} \right] \right) \tr{\rho \tilde{H}_h} \\
  =&\mathbb{E}_h \left[ \left( \tr{\rho E_{h,p}v_p} - \mathbb{E}_{h'} \left[\tr{\rho E_{h',p}v_p} \right] \right) \tr{\rho \tilde{H}_h} \right].
\end{align}
Te resulting gradient for the variational bound for the visible terms is hence given by
  \begin{align}
        \frac{\partial \tilde{S}}{\partial_{\theta_p}} &= \mathbb{E}_{h} \left[ \tr{\rho E_{h,p} v_p}\right] - \tr{\frac{\partial H}{\partial \theta_p} \frac{e^{-H}}{Z}}
  \end{align}
\end{proof}
Notably, if we consider no interactions between the visible and the hidden layer, then indeed the gradient above reduces recovers the gradient for the visible Boltzmann machine, which was treated in \cite{kieferova2016tomography}, resulting in the gradient
\begin{equation*}
  \tr{\rho \partial_{\theta_p}H} - \tr{\frac{e^{-H}}{Z} \partial_{\theta_p} H},
\end{equation*}
under our assumption on the form of $H$, $\partial_{\theta_p}H = v_p$.

\subsection{Operationalizing the gradient based training}
\label{app:proof_thm_gradient_results}
From Lemma~\ref{lem:gradient_visible_layer}, we know that the derivative of the relative entropy w.r.t.\ any parameter $\theta_p$ can be stated as
    \begin{align}
	\label{eq:gradient_approach_1_recap}
        \frac{\partial \tilde{S}}{\partial_{\theta_p}} =   \mathbb{E}_h \left[ E_{h,p} \right] \tr{\rho v_p} - \tr{\frac{\partial H}{\partial \theta_p} \frac{e^{-H}}{Z}}.
    \end{align}
    Since evaluating the latter part is, as mentioned above, straight forward, we give here an algorithm for evaluating the first part.\\
	Now note that we can evaluate each term $\tr{\rho v_k}$ individually for all $k \in [D]$, i.e., all $D$ dimensions of the gradient via the Hadamard test for $v_k$, assuming $v_k$ is unitary.
	More generally, for non-unitary $v_k$ we could evaluate this term using a linear combination of unitary operations.
	Therefore, the remaining task is to evaluate the terms $\mathbb{E}_{h} \left[ E_{h,p}\right]$ in \eqref{eq:gradient_approach_1_recap}, which reduces to sampling according to the distribution $\{\alpha_h\}$.\\
	For this we need to be able to create a Gibbs distribution for the effective Hamiltonian $\tilde{H}_h = \sum_k \theta_k \tr{\rho v_k} h_k$ which contains only $D$ terms and can hence be evaluated efficiently as long as $D$ is small, which we can generally assume to be true.   	In order to sample according to the distribution $\{\alpha_h\}$, we first evaluate the factors $\theta_k \tr{\rho v_k}$ in the sum over $k$ via the Hadamard test, and then use these in order to implement the Gibbs distribution $\exp{(-\tilde{H_h})}/\tilde{Z}$ for the Hamiltonian $$\tilde{H}_h = \sum_k \theta_k \tr{\rho v_k} h_k.$$
In order to do so, we adapt the results of \cite{van2017quantum} in order to prepare the corresponding Gibbs state (although alternative methods can also be used~\cite{poulin2009sampling,chowdhury2016quantum,yung2012quantum}).

 \begin{theorem}[Gibbs state preparation~\cite{van2017quantum}]
	\label{thm:Gibbs_state_prep}
    Suppose that $I \preceq H$ and we are given $K\in \mathbb{R}_+$ such that $\norm{H}\leq 2K$, and let $H \in \mathbb{C}^{N\times N}$ be a $d$-sparse Hamiltonian, and we know a lower bound $z\leq Z=\tr{e^{-H}}$.
    If $\epsilon \in (0,1/3)$, then we can prepare a purified Gibbs state $\ket{\gamma}_{AB}$ such that
    \begin{equation}
      \norm{\mathrm{Tr}_B \left[\ket{\gamma}\bra{\gamma}_{AB} \right] - \frac{e^{-H}}{Z}} \leq \epsilon
    \end{equation}
    using
    \begin{equation}
	\label{eq:gibbs_state_query_complexity}
      \tilde{\mathcal{O}} \left(\sqrt{\frac{N}{z}} Kd \log \left( \frac{K}{\epsilon}\right) \log \left( \frac{1}{\epsilon} \right) \right)
    \end{equation}
    queries, and
    \begin{equation}
      \tilde{\mathcal{O}} \left(\sqrt{\frac{N}{z}}  Kd \log \left( \frac{K}{\epsilon}\right) \log \left( \frac{1}{\epsilon} \right)  \left[ \log(N) + \log^{5/2} \left( \frac{K}{\epsilon} \right) \right] \right)
    \end{equation}
    gates.
  \end{theorem}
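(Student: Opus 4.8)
The plan is to follow the standard recipe for preparing a \emph{purified} Gibbs state: realise $(e^{-H/2}\otimes I)$ applied to a maximally entangled state, observe that the target component carries amplitude $\sqrt{Z/N}$, and then amplitude-amplify, using the lower bound $z\le Z$ to control the number of rounds. Concretely, I would first note that if $\ket{\phi}=\frac{1}{\sqrt{N}}\sum_{i=1}^{N}\ket{i}_A\ket{i}_B$, then $(e^{-H/2}\otimes I)\ket{\phi}$ has squared norm $\frac{1}{N}\tr{e^{-H}}=Z/N$, and after normalisation it is a purification of $e^{-H}/Z$, since tracing out $B$ gives $\frac{1}{Z}(e^{-H/2})(e^{-H/2})^{\dagger}=e^{-H}/Z$ (using that $H$ is Hermitian). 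So the problem reduces to (i) implementing a block-encoding of a good approximation $\widetilde{M}$ of the subnormalised operator $e^{-H/2}$ (note $\norm{e^{-H/2}}=e^{-\lambda_{\min}(H)/2}\le e^{-1/2}<1$ because $I\preceq H$, so no large subnormalisation is needed), and (ii) amplifying the flag marking the $\sqrt{Z/N}$-amplitude branch to $\Theta(1)$.

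For step (i): from the $d$-sparse oracles $O_S,O_H$ build, by the standard construction, a block-encoding of $H/(2K)$, which has operator norm at most $1$ since $\norm{H}\le 2K$; this costs $\Ord{1}$ oracle queries and $\tOrd{\log N}$ gates, where the $\log^{5/2}(\cdot)$-type gate overhead comes from the arithmetic converting oracle outputs into rotation angles (as in Berry et al.). Then approximate $x\mapsto e^{-Kx}$ on the rescaled spectrum $[1/(2K),1]$ by a truncated series — a truncated Taylor/Chebyshev expansion, or equivalently a Jacobi–Anger/Fourier expansion composed with Hamiltonian simulation — of degree $L=\tOrd{K\log(1/\epsilon')}$, and implement it as a linear combination of unitaries over the $L$ powers of (or simulation times against) the block-encoding, using the compression-gadget trick so that the total cost is $\Ord{L}$ applications of the block-encoding rather than $\Ord{L^2}$; oblivious amplitude amplification then turns the LCU into a clean (sub)isometry. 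This produces $\widetilde{M}$ with $\norm{\widetilde{M}-c\,e^{-H/2}}\le\epsilon'$ for a known constant $c=\Theta(1)$, at a cost of $\tOrd{Kd\log(K/\epsilon')}$ queries.

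For step (ii): applying $\widetilde{M}\otimes I$ to $\ket{\phi}$ leaves the flag in the good subspace with probability $\approx c^{2}Z/N\ge c^{2}z/N$, so $\Ord{\sqrt{N/z}}$ rounds of amplitude amplification suffice; since $Z$ is only known up to the bound $z$, I would use the fixed-point variant (or exponential search over the number of rounds), which handles this uncertainty at logarithmic extra cost. Multiplying the per-round cost from step (i) by the $\tOrd{\sqrt{N/z}}$ rounds gives the claimed query complexity $\tOrd{\sqrt{N/z}\,Kd\log(K/\epsilon)\log(1/\epsilon)}$, and the gate bound follows by appending the $[\log N+\log^{5/2}(K/\epsilon)]$ per-query arithmetic overhead.

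The hard part will be the error propagation through the amplification: because the good component has amplitude only $\sqrt{Z/N}$, possibly exponentially small in the qubit count, an $\epsilon'$-error in $\widetilde{M}$ can be blown up by roughly $\sqrt{N/Z}$ in the normalised output. The fix is to set the internal precision to $\epsilon'=\Theta(\epsilon\sqrt{z/N})$ and verify, via the standard robustness/hybrid argument for amplitude amplification, that this yields trace-norm error $\le\epsilon$ from $e^{-H}/Z$; since $\epsilon'$ enters the series degree only through $\log(1/\epsilon')=\Ord{\log(N/(z\epsilon))}=\tOrd{1}$, the overhead is merely logarithmic — which is exactly why $\sqrt{N/z}$ sits outside and $\log(1/\epsilon)$ inside. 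A secondary point is nailing down the approximation-theoretic degree bound for $e^{-Kx}$ on the rescaled interval, where the hypotheses $I\preceq H\preceq 2KI$ are used to confine the spectrum to a bounded region (and incidentally give the convenient estimate $Z=\tr{e^{-H}}\le N/e$).
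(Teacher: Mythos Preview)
The paper does not prove this theorem at all: it is quoted verbatim as a black-box result from~\cite{van2017quantum} (van Apeldoorn et al.) and used only as a subroutine inside the proofs of Theorems~\ref{thm:gradient_results} and~\ref{thm:complexity_general_algo}. There is therefore nothing in the paper to compare your proposal against. Your sketch is a reasonable reconstruction of the argument in the cited reference---the maximally-entangled-state trick, polynomial approximation of $e^{-Kx}$ via LCU, and amplitude amplification with the $\sqrt{N/z}$ overhead---but as far as this paper is concerned the statement is simply imported, not proved.
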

  Note that by using the above algorithm with $\tilde{H}_{sim}/2$, the preparation of the purified Gibbs state will leave us in the state
  \begin{equation}
    \ket{\psi}_{Gibbs} := \sum_{h} \frac{e^{-E_h/2}}{\sqrt{Z}} \ket{h}_A \ket{\phi_h}_B,
  \end{equation}
where $\ket{\phi_j}_B$ are mutually orthogonal trash states, which can typically be chosen to be $\ket{h}$, i.e., a copy of the first register, which is irrelevant for our computation, and $\ket{h}_A$ are the eigenstates of $\tilde{H}$.
Tracing out the second register will hence leave us in the corresponding Gibbs state $$\sigma_h := \sum_h \frac{e^{-E_h}}{Z} \ket{h}\bra{h}_A,$$
and we can hence now use the Hadamard test with input $h_k$ and $\sigma_h$, i.e., the operators on the hidden units and the Gibbs state, and estimate the expectation value $\mathbb{E}_h \left[E_{h,k}\right]$.
We provide such a method below.

\begin{proof}[Proof of Theorem~\ref{thm:gradient_results}]
Conceptually, we perform the following steps, starting with Gibbs state preparation followed by a Hadamard test coupled with amplitude estimation to obtain estimates of the probability of a $0$ measurement.
The proof follows straight from the algorithm described in~\ref{alg:algo1}.

From this we see that the runtime constitutes the query complexity of preparing the Gibbs state $$T_{Gibbs}^V= \tilde{\mathcal{O}} \left(\sqrt{\frac{2^n}{Z}}\frac{\norm{H(\theta)}d}{\epsilon} \log \left( \frac{\norm{H(\theta)}}{\tilde \epsilon}\right) \log \left( \frac{1}{\tilde \epsilon} \right) \right),$$ where $2^n$ is the dimension of the Hamiltonian, as given  in Theorem~\ref{thm:gradient_results} and combining it with the query complexity of the amplitude estimation procedure, i.e., $1/\epsilon$.
However, in order to obtain a final error of $\epsilon$, we will also need to account for the error in the Gibbs state preparation.
For this, note that we estimate terms of the form $\mathrm{Tr}_{AB} \left[\bra{\psi}_{Gibbs} (h_k \otimes I) \ket{\psi}_{Gibbs}^V\right] = \mathrm{Tr}_{AB} \left[ (h_k \otimes I) \ket{\psi}_{Gibbs}^V\bra{\psi}_{Gibbs}^V\right]$.
We can hence estimate the error w.r.t. the true Gibbs state $\sigma_{Gibbs}$ as
\begin{align}
\label{eq:error_prop_Gibbs_state_prep}
&\mathrm{Tr}_{AB} \left[ (h_k \otimes I) \ket{\psi}_{Gibbs}^V\bra{\psi}_{Gibbs}^V\right] -\mathrm{Tr}_{A} \left[ h_k \sigma_{Gibbs}\right]  \nonumber \\
&\qquad= \mathrm{Tr}_{A} \left[ h_k \mathrm{Tr}_{B}\left[\ket{\psi}_{Gibbs}^V\bra{\psi}_{Gibbs}^V\right] - h_k \sigma_{Gibbs}\right] \nonumber \\
&\qquad\leq \sum_i \sigma_i(h_k) \norm{\mathrm{Tr}_{B}\left[\ket{\psi}_{Gibbs}^V\bra{\psi}_{Gibbs}^V\right] - \sigma_{Gibbs}} \nonumber \\
&\qquad \leq \tilde \epsilon \sum_i \sigma_i(h_k).
\end{align}
For the final error being less then $\epsilon$, the precision we use in the phase estimation procedure, we hence need to set $\tilde \epsilon = \epsilon/(2 \sum_i \sigma_i(h_k)) \leq 2^{-n-1} \epsilon$, reminding that $h_k$ is unitary, and similarly precision $\epsilon/2$ for the amplitude estimation, which yields the query complexity of
\begin{align}
 &\mathcal{O} \left(\sqrt{\frac{N_h}{z_h}}\frac{\norm{H(\theta)}d}{\epsilon} \left(n^2+ n \log \left( \frac{\norm{H(\theta)}}{ \epsilon}\right)  + n \log \left( \frac{1}{  \epsilon}\right) + \log \left( \frac{\norm{H(\theta)}}{ \epsilon}\right) \log \left( \frac{1}{  \epsilon}\right) \right) \right),
 \nonumber\\
 &\qquad \in \widetilde{O} \left(\sqrt{\frac{N_h}{z_h}}\left(\frac{n^2 \|\theta\|_1 d}{\epsilon} \right) \right).
\end{align}
where we denote with $A$ the hidden subsystem with dimensionality $2^{n_h} \leq 2^N$, on which we want to prepare the Gibbs state and with $B$ the subsystem for the trash state.

Similarly, for the evaluation of the second part in \eqref{eq:gradient_approach_1_recap} requires the Gibbs state preparation for $H$, the Hadamard test and phase estimation.
Similar as above we meed to take into account the error.
Letting the purified version of the Gibbs state for $H$ be given by $\ket{\psi}_{Gibbs}$, which we obtain using Theorem~\ref{thm:Gibbs_state_prep}, and $\sigma_{Gibbs}$ be the perfect state, then the error is given by
\begin{align}
 &\mathrm{Tr}_{AB} \left[ (v_k \otimes h_k \otimes I) \ket{\psi}_{Gibbs} \bra{\psi}_{Gibbs} \right] -\mathrm{Tr}_{A} \left[ (v_k \otimes h_k)\sigma_{Gibbs}\right]  \nonumber \\
&\qquad= \mathrm{Tr}_{A} \left[ (v_k \otimes h_k) \mathrm{Tr}_{B}\left[\ket{\psi}_{Gibbs}^V\bra{\psi}_{Gibbs}^V\right] - (v_k \otimes h_k) \sigma_{Gibbs}\right] \nonumber \\
&\qquad\leq \sum_i \sigma_i(v_k \otimes h_k) \norm{\mathrm{Tr}_{B}\left[\ket{\psi}_{Gibbs}^V\bra{\psi}_{Gibbs}^V\right] - h_k \sigma_{Gibbs}} \nonumber \\
&\qquad \leq \tilde \epsilon \sum_i \sigma_i(v_k \otimes h_k),
\end{align}
where in this case $A$ is the subsystem of the visible and hidden subspace and $B$ the trash system.
We hence upper bound the error similar as above and introducing $\xi := \max[N/z, N_h/z_h]$ we can find a uniform bound on the query complexity for evaluating a single entry of the $D$-dimensional gradient is in
$$
 \widetilde{O} \left(\sqrt{\zeta}\left(\frac{n^2 \|\theta\|_1 d}{\epsilon} \right) \right),
$$
thus we attain the claimed query complexity by repeating this procedure for each of the $D$ components of the estimated gradient vector $\mathcal{S}$.

Note that we also need to evaluate the terms $\tr{\rho v_k}$ to precision $\hat{\epsilon} \leq \epsilon$, which though only incurs an additive cost of $D/\epsilon$ to the total query complexity, since this step is required to be performed once. Note that $|\mathbb{E}_h(h_p)|\le 1$ because $h_p$ is assumed to be unitary.
To complete the proof we only need to take the success probability of the amplitude estimation process into account.
For completeness we state the algorithm in the appendix and here refer only to Theorem~\ref{thm:amplitude_estimation}, from which we have that the procedure succeeds with probability at least $8/\pi^2$.
In order to have a failure probability of the final algorithm of less than $1/3$, we need to repeat the procedure for all $d$ dimensions of the gradient and take the median.
We can bound the number of repetitions in the following way.

Let $n_f$ be the number of instances of the gradient estimate such that the error is larger than $\epsilon$ and $n_s$ be the number of instances with an error $\leq \epsilon$ for one dimension of the gradient, and the result that we take is the median of the estimates, where we take $n=n_s+n_f$ samples.
The algorithm gives a wrong answer for each dimension if $n_s \leq \left\lfloor \frac{n}{2} \right\rfloor$, since then the median is a sample such that the error is not bound by $\epsilon$.
Let $p=8/\pi^2$ be the success probability to draw a positive sample, as is the case of the amplitude estimation procedure.
Since each instance of the phase estimation algorithm will independently return an estimate, the total failure probability is given by the union bound, i.e.,
\begin{equation}
\label{eq:probability_boost}
\pr_{fail} \leq D \cdot \pr \left[n_s \leq  \left\lfloor \frac{n}{2} \right\rfloor \right] \leq D \cdot e^{- \frac{n}{2p} \left(p - \frac{1}{2} \right)^2} \leq \frac{1}{3},
\end{equation}
which follows from the Chernoff inequality for a binomial variable with $p>1/2$, which is given in our case.
Therefore, by taking $n \geq \frac{2p}{(p-1/2)^2} \log(3D) = \frac{16}{(8-\pi^2/2)^2} \log(3D) = O(\log(3D))$, we achieve a total failure probability of at most $1/3$.

This is sufficient to demonstrate the validity of the algorithm if
\begin{equation}
  \label{eq:trace_est}
  \tr{\rho \tilde H_h}
\end{equation}
is known exactly.  This is difficult to do because the probability distribution $\alpha_h$ is not usually known apriori.  As a result, we assume that
the distribution will be learned empirically and to do so we will need to draw samples from the purified Gibbs states used as input.
This sampling procedure will incur errors.
To take such errors into account assume that we can obtain estimates $T_h$ of \ref{eq:trace_est} with precision $\delta_t$, i.e.,
\begin{equation}
  \left\lvert T_h - \tr{\rho \tilde H_h} \right\rvert \leq \delta_t.
\end{equation}
Under this assumption we can now bound the distance $\lvert \alpha_h - \tilde{\alpha}_h \rvert$ in the following way.
Observe that
\begin{align}
  \label{eq:error_step1}
  \left\lvert  \alpha_h - \tilde{\alpha}_h\right\rvert &= \left\lvert  \frac{e^{-\tr{\rho \tilde{H}_h}}}{\sum_h e^{-\tr{\rho \tilde{H}_h}}} - \frac{T_h}{\sum_h T_h} \right\rvert \nonumber \\
    &\leq \left\lvert \frac{e^{-\tr{\rho \tilde{H}_h}}}{\sum_h e^{-\tr{\rho \tilde{H}_h}}} - \frac{T_h}{\sum_h e^{-\tr{\rho \tilde{H}_h}}} \right\rvert + \left\lvert \frac{T_h}{\sum_h e^{-\tr{\rho \tilde{H}_h}}} - \frac{T_h}{\sum_h T_h} \right\rvert,
\end{align}
and we hence need to bound the following two quantities in order to bound the error.
First, we need a bound on
\begin{align}
  \label{eq:error_first_part_1}
  \left\lvert e^{-\tr{\rho \tilde H_h}} - e^{-T_h} \right\rvert.
\end{align}
For this, let $f(s):= T_h\ (1-s) + \tr{\rho \tilde H_h}\ s$, such that eq.~\ref{eq:error_first_part_1} can be rewritten as
\begin{align}
  \label{eq:error_first_part_2}
  \left\lvert e^{-f(1)} - e^{-f(0)} \right\rvert &= \left\lvert \int_0^1 \frac{d}{ds} e^{-f(s)} ds \right\rvert \nonumber \\
        &= \left\lvert \int_0^1 \dot{f}(s) e^{-f(s)} ds \right\rvert \nonumber \\
        &= \left\lvert \int_0^1 \left(\tr{\rho \tilde H_h} - T_h \right) e^{-f(s)} ds \right\rvert \nonumber \\
        &\leq \delta e^{-\min_s f(s)} \nonumber \\
        &\leq \delta e^{-\tr{\rho
         H_h} + \delta}
\end{align}
and assuming $\delta \leq \log(2)$, this reduces to
\begin{align}
  \left\lvert e^{-f(1)} - e^{-f(0)} \right\rvert \leq 2 \delta e^{-\tr{\rho \tilde H_h}}.
\end{align}
Second, we need the fact that
\begin{align}
  \left\lvert \sum_h e^{-\tr{\rho \tilde H_h}} - \sum_h T_h \right\rvert \leq 2 \delta \sum_h e^{-\tr{\rho \tilde H_h}}.
\end{align}
Using this, eq.~\ref{eq:error_step1} can be upper bound by
\begin{align}
  &\frac{2 \delta e^{-\tr{\rho \tilde H_h}}}{\sum_h e^{-\tr{\rho \tilde H_h}}} + \lvert T_h\rvert \left\lvert \frac{1}{\sum_h e^{-\tr{\rho \tilde H_h}}} - \frac{1}{(1-2\delta)\sum_h e^{-\tr{\rho \tilde H_h}}} \right\rvert \nonumber \\
      &\qquad\leq \frac{2 \delta e^{-\tr{\rho \tilde H_h}}}{\sum_h e^{-\tr{\rho \tilde H_h}}} + \frac{4 \delta \lvert T_h\rvert}{\sum_h e^{-\tr{\rho \tilde H_h}}},
\end{align}
where we used that $\delta \leq 1/4$.
Note that
\begin{align}
  4 \delta \lvert T_h \rvert &\leq 4 \delta \left(e^{-\tr{\rho \tilde H_h}} + 2 \delta e^{-\tr{\rho \tilde H_h}} \right) \nonumber \\
            &= e^{-\tr{\rho \tilde H_h}} \left(4 \delta + 8 \delta^2 \right) \nonumber \\
            &\leq e^{-\tr{\rho \tilde H_h}}(4 \delta + 2 \delta) \nonumber \\
            &\leq 6 \delta e^{-\tr{\rho \tilde H_h}},
\end{align}
which leads to a final error of
\begin{equation}
  \lvert \alpha_h - \tilde{\alpha}_h \rvert \leq 8 \delta \frac{e^{-\tr{\rho \tilde H_h}}}{\sum_h e^{-\tr{\rho \tilde H_h}}}.
\end{equation}
With this we can now bound the error in the expectation w.r.t. the faulty distribution for some function $f(h)$ to be
\begin{align}
  \left\lvert \mathbb{E}_h(f(h)) - \tilde{\mathbb{E}}_h(f(h)) \right\rvert
  &\leq 8 \delta \sum_h \frac{ f(h) e^{-\tr{\rho \tilde H_h}} }{\sum_h e^{-\tr{\rho \tilde H_h}}} \nonumber \\
        &\leq 8 \delta \max_h f(h).
\end{align}
We can hence use this in order to estimate the error introduced in the first term of eq.~\ref{eq:gradient_approach_1_recap} through errors in the distribution $\{ \alpha_h\}$ as
\begin{align}
\left\lvert \mathbb{E}_h[E_{h,p}]\tr{\rho v_p} - \tilde{\mathbb{E}}[E_{h,p}\tr{\rho v_p}] \right\rvert &\leq 8 \delta \max_{h} \lvert E_{h,p} \tr{\rho v_p} \rvert \nonumber \\
  &\leq 8 \delta \max_{h,p} \lvert E_{h,p} \rvert,
\end{align}
where we used in the last step the unitarity of $v_k$ and the Von-Neumann trace inequality.
For an final error of $\epsilon$, we hence choose $\delta_t = \epsilon/[16\max_{h,p}|E_{h,p}|]$ to ensure that this sampling error incurrs at most
half the error budget of $\epsilon$.
Thus we ensure $\delta \le 1/4$ if $\epsilon \le 4 \max_{h,p} |E_{h,p}|$.

We can improve the query complexity of estimating the above expectation by values by using amplitude amplification, sice we obtain the measurement via a Hadamard test.
For this case we require only $O(\max_{h,p}|E_{h,p}|/\epsilon)$ samples in order to achieve the desired accuracy from the sampling.
Noting that we might not be able to even access $\tilde{H}_h$ without any error, we can deduce that the error of the individual terms of $\tilde{H}_h$ for an $\epsilon$-error in the final estimate must be bounded by $\delta_tv\norm{\theta}_1$,
where with abuse of notation, $\delta_t$ now denotes the error in the estimates of $E_{h,k}$.
Even taking this into account, the evaluation of this contribution is however dominated by the second term, and hence can be neglected in the analysis.

\end{proof}

\section{Approach 2: Divided Differences}

In this section we develop a scheme to train a quantum Boltzmann machine using divided difference estimates for the relative entropy error.
The idea for this is straightforward: First we construct an interpolating polynomial from the data.
Second, an approximation of the derivative at tany point can be then obtained by a direct differentiation of the interpolant.
We assume in the following that we can simulate and evaluate $\tr{\rho \log \sigma_v}$.
As this is generally non-trivial, and the error is typically large, we propose in the next section a different more specialised approach which, however, still allows us to train arbitrary models with the relative entropy objective.

In order to proof the error of the gradient estimation via interpolation, we first need to establish error bounds on the interpolating polynomial which can be obtained via the remainder of the Lagrange interpolation polynomial.
The gradient error for our objective can then be obtained by as a combination of this error with a bound on the $n+1$-st order derivative of the objective.
We start by bounding the error in the polynomial approximation.
\begin{lemma}
  \label{lem:remainder}
  Let $f(\theta)$ be the $n+1$ times differentiable function for which we want to approximate the gradient and let $p_n(\theta)$ be the degree $n$ Lagrange interpolation polynomial for points $\{\theta_1, \theta_2, \ldots, \theta_k, \ldots, \theta_n\}$.
  The gradient evaluated at point $\theta_k$ is then given by the interpolation polynomial
  \begin{equation}
    \frac{\partial p(\theta_k)}{\partial \theta} = \sum_{j=0}^n f(\theta_j) \mathcal{L}_{n,j}'(\theta_k),
  \end{equation}
  where $\mathcal{L}_{n,j}'$ is the derivative of the Lagrange interpolation polynomials $\mathcal{L}_{\mu,j}(\theta):= \prod_{\substack{k=0\\ k\neq j}}^{\mu} \frac{\theta - \theta_k}{\theta_j - \theta_k}$, and the error is given by
  \begin{equation}
    \left\lvert \frac{\partial f(\theta_k)}{\partial \theta} - \frac{\partial p_n(\theta_k)}{\partial \theta} \right\rvert \leq  \frac{1}{(n+1)!} \left\lvert f^{(n+1)}(\xi(\theta_k)) \prod\limits_{\substack{j=0 \\ j\neq k}}^n (\theta_j - \theta_k) \right\rvert,
  \end{equation}
  where $\xi(\theta_k)$ is a constant depending on the point $\theta_k$ at which we evaluated the gradient, and $f^{(i)}$ denotes the $i$-th derivative of $f$.
\end{lemma}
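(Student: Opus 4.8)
The plan is to prove the two assertions separately, the first being purely formal and the second a standard Rolle-type argument. For the differentiation formula I would begin from the Lagrange representation of the interpolant, $p_n(\theta) = \sum_{j=0}^{n} f(\theta_j)\,\mathcal{L}_{n,j}(\theta)$, which is the unique polynomial of degree at most $n$ that agrees with $f$ at the nodes $\theta_0,\dots,\theta_n$. Differentiating this identity term by term and evaluating at $\theta_k$ gives $\partial_\theta p_n(\theta_k) = \sum_{j=0}^{n} f(\theta_j)\,\mathcal{L}_{n,j}'(\theta_k)$, which is the stated formula.

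For the error bound I would set $R_n := f - p_n$ and $w(\theta) := \prod_{j=0}^{n}(\theta-\theta_j)$, so that $R_n(\theta_j) = 0$ for all $j$ and $w$ is monic of degree $n+1$. Fix the evaluation node $\theta_k$; since the nodes are pairwise distinct we have $w'(\theta_k) = \prod_{j \neq k}(\theta_k - \theta_j) \neq 0$, so I may define $C := R_n'(\theta_k)/w'(\theta_k)$ and the auxiliary function $\psi := R_n - C\,w$. By construction $\psi$ vanishes at each of the $n+1$ nodes $\theta_0,\dots,\theta_n$, and in addition $\psi'(\theta_k) = R_n'(\theta_k) - C\,w'(\theta_k) = 0$, so $\theta_k$ is at least a double zero. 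Counting multiplicities, $\psi$ has at least $n+2$ zeros in the smallest interval $I$ spanned by the nodes, hence by iterated application of Rolle's theorem $\psi^{(n+1)}$ has a zero $\xi = \xi(\theta_k) \in I$. Using that $p_n$ has degree at most $n$, so $p_n^{(n+1)} \equiv 0$, and that $w^{(n+1)} \equiv (n+1)!$, one obtains $\psi^{(n+1)}(\theta) = f^{(n+1)}(\theta) - C\,(n+1)!$; evaluating at $\xi$ yields $C = f^{(n+1)}(\xi(\theta_k))/(n+1)!$. Substituting back into $R_n'(\theta_k) = C\,w'(\theta_k)$ gives
\[
\frac{\partial f(\theta_k)}{\partial \theta} - \frac{\partial p_n(\theta_k)}{\partial \theta} = \frac{f^{(n+1)}(\xi(\theta_k))}{(n+1)!}\prod_{\substack{j=0\\ j \neq k}}^{n}(\theta_k - \theta_j),
\]
and taking absolute values, together with $\bigl|\prod_{j\neq k}(\theta_k-\theta_j)\bigr| = \bigl|\prod_{j\neq k}(\theta_j-\theta_k)\bigr|$, produces the claimed estimate.

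The only points that need a little care are the multiplicity bookkeeping that feeds Rolle's theorem (the double zero of $\psi$ at $\theta_k$, which uses that $f$ is $n+1$ times differentiable, as assumed) and the non-vanishing of $w'(\theta_k)$, which rests on the nodes being distinct; I do not expect a genuine obstacle here. An alternative derivation uses the Newton form of the interpolation error, $R_n(\theta) = f[\theta_0,\dots,\theta_n,\theta]\,w(\theta)$, and differentiates at $\theta_k$ noting $w(\theta_k)=0$, but that route requires separately establishing differentiability of the divided difference $f[\theta_0,\dots,\theta_n,\cdot]$ in its last argument, so the Rolle argument above is the cleaner path.
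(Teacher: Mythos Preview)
Your proof is correct and is in fact the standard textbook argument, but it differs from the paper's route. The paper does not build an auxiliary function and count zeros; instead it quotes the classical pointwise interpolation error $f(\theta)-p_n(\theta)=\frac{1}{(n+1)!}f^{(n+1)}(\xi_\theta)\,w(\theta)$ with $w(\theta)=\prod_j(\theta-\theta_j)$, forms the difference quotient $\bigl(R_n(\theta_k+\Delta)-R_n(\theta_k)\bigr)/\Delta$, uses $w(\theta_k)=0$ to kill one term, and then lets $\Delta\to 0$ while factoring $w(\theta_k+\Delta)=\Delta\cdot\prod_{j\neq k}(\theta_k+\Delta-\theta_j)$. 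What your Rolle-based construction buys is rigor and self-containment: you never need to worry about how $\xi_\theta$ varies with $\theta$, whereas the paper's limit step tacitly assumes that $f^{(n+1)}(\xi_{\theta_k+\Delta})$ converges as $\Delta\to 0$, which requires an extra continuity/compactness argument that is not spelled out there. The paper's approach, on the other hand, is shorter once the standard error formula is taken for granted and makes the mechanism---that the product $w$ loses exactly the factor $(\theta-\theta_k)$ upon differentiation at a node---very transparent.
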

Note that $\theta$ is a point within the set of points at which we evaluate.
\begin{proof}
  Recall that the error for the degree $n$ Lagrange interpolation polynomial is given by
  \begin{equation}
    f(\theta) -  p_n(\theta) \leq  \frac{1}{(n+1)!} f^{(n+1)}(\xi_{\theta}) w(\theta),
  \end{equation}
  where $w(\theta) := \prod\limits_{j=1}^n(\theta - \theta_j)$.
  We want to estimate the gradient of this, and hence need to evaluate
  \begin{equation}
    \gradtheta{f(\theta)} -  \gradtheta{p_n(\theta)} \leq \lim\limits_{\Delta \rightarrow 0} \left( \frac{\frac{1}{(n+1)!} f^{(n+1)}(\xi_{\theta + \Delta}) w(\theta + \Delta) - \frac{1}{(n+1)!} f^{(n+1)}(\xi_{\theta}) w(\theta)}{\Delta} \right).
  \end{equation}
  Now, since we do not necessarily want to estimate the gradient at an arbitrary point $\theta$ but indeed have the freedom to choose the point, we can set $\theta$ to be one of the points at which we evaluate the function $f(\theta)$, i.e., $\theta \in \{ \theta_i \}_{i=1}^n$.
  Let this choice be given by $\theta_k$, arbitrarily chosen. Then we see that the latter term vanishes since $w(\theta_k)=0$. Therefore we have
  \begin{equation}
    \gradtheta{f(\theta_k)} -  \gradtheta{p_n(\theta_k)} \leq \lim\limits_{\Delta \rightarrow 0} \left( \frac{\frac{1}{(n+1)!} f^{(n+1)}(\xi_{\theta_k + \Delta}) w(\theta_k + \Delta)}{\Delta} \right),
  \end{equation}
  and noting that $w(\theta_k)$ contains one term $(\theta_k + \Delta - \theta_k) = \Delta$ achieves the claimed result.
\end{proof}

We will perform a number of approximation steps in order to obtain a form which can be simulated on a quantum computer more efficiently, and only then resolve to divided differences at this ``lower level".
In detail we will perform the following steps.
As described in the body of the thesis, we perform the following steps in order to obtain the gradient.
\begin{enumerate}
  \item Approximate the logarithm via a Fourier-like approximation
  \begin{equation}
    \log \sigma_v \rightarrow \log_{K,M}\sigma_v,
  \end{equation}
  which yields a Fourier-like series $\sum_m c_m \exp{(im\pi \sigma_v)}$.
  \item Evaluate the gradient of $\tr{ \frac{\partial}{\partial \theta} \rho \log_{K,M}(\sigma_v)}$,
  yielding terms of the form
  \begin{equation}
    \int_0^1 ds e^{(ism\pi \sigma_v)} \frac{\partial \sigma_v}{\partial \theta} e^{(i(1-s)m\pi \sigma_v)}.
  \end{equation}
  \item Each term in this expansion can be evaluated separately via a sampling procedure, i.e.,
  \begin{equation}
    \int_0^1 ds e^{(ism\pi \sigma_v)} \frac{\partial \sigma_v}{\partial \theta} e^{(i(1-s) m\pi \sigma_v)} \approx \mathbb{E}_s
    \left[ e^{(ism\pi \sigma_v)} \frac{\partial \sigma_v}{\partial \theta} e^{(i(1-s)m\pi \sigma_v)} \right].
  \end{equation}
  \item Apply a divided difference scheme to approximate the gradient $\frac{\partial \sigma_v}{\partial \theta}$.
  \item Use the Fourier series approach to aproximate the density operator $\sigma_v$ by the series of itself,
  i.e., $\sigma_v \approx F(\sigma_v) := \sum_{m'} c_{m'} \exp{(im \pi m' \sigma_v)}$.
  \item Evaluate these terms conveniently via sample based Hamiltonian simulation and the Hadamard test.
\end{enumerate}
In the following we will give concrete bounds on the error introduced by the approximations and details of the implementation.
The final result is then stated in Theorem~\ref{thm:complexity_general_algo}.
We first bound the error in the approximation of the logarithm and then use Lemma~37 of \cite{van2017quantum} to obtain a Fouries series approximation which is close to $\log(z)$.
  The Taylor series of $$\log(x) = \sum_{k=1}^{\infty} (-1)^{k+1}\frac{(x-1)^k}{k} = \sum_{k=1}^{K_1} (-1)^{k+1}\frac{(x-1)^k}{k} + R_{K_1+1}(x-1),$$ for $x \in (0,1)$ and where $R_{K+1}(z)= \frac{f^{K_1+1}(c)}{K!}(z-c)^{K_1}z$ is the Cauchy remainder of the Taylor series, for $-1<z<0$.
  The error can hence be bounded as $$\lvert R_{K_1+1}(z) \rvert = \left\lvert (-1)^{K_1}\frac{z^{K_1+1}(1-\alpha)^{K_1}}{(1+\alpha z)^{K_1+1}}\right\rvert,$$
  where we evaluated the derivatives of the logarithm and $0\leq \alpha \leq 1$ is a parameter. Using that $1+\alpha z \geq 1+z$ (since $z \leq 0$) and hence $0 \leq \frac{1-\alpha}{1+\alpha z} \leq 1$, we obtain the error bound
  \begin{equation}
    \lvert R_{K_1+1}(z) \rvert \leq  \frac{\left\lvert z \right\rvert^{K_1+1}}{1+z}
  \end{equation}
  Reversing to the variable $x$ the error bound for the Taylor series, and assuming that $0<\delta_l <z$ and  $0<|1-z|\leq \delta_u <1$, which is justified if we are dealing with sufficiently mixed states, then the approximation error is given by
  \begin{equation}
    \lvert R_{K_1+1}(z) \rvert \leq  \frac{(\delta_l) ^{K_1+1}}{\delta_u} \overset{!}{\leq} \epsilon_1.
  \end{equation}
  Hence in order to achieve the desired error $\epsilon_1$ we need $$K_1 \geq \frac{\log \left( (\epsilon_1 \delta_u)^{-1} \right)}{\log \left((\delta_l)^{-1}\right)}.$$
  We hence can chose $K_1$ such that the error in the approximation of the Taylor series is $\leq \epsilon_1/4$.
  This implies we can make use of Lemma~37 of \cite{van2017quantum}, and therefore obtain a Fourier series approximation for the logarithm.
  We will restate this Lemma here for completeness:

  \begin{lemma}[Lemma 37, \cite{van2017quantum}]
    \label{lem:vanApeldoorn}
    Let $f:\mathbb{R} \rightarrow \mathbb{C}$ and $\delta,\epsilon \in (0,1)$, and $T(f):= \sum_{k=0}^K a_k x^k$ be a polynomial such that $\left\lvert f(x) - T(f) \right\rvert \leq \epsilon/4$ for all $x \in [-1+\delta, 1-\delta]$.
    Then $\exists c \in \mathbb{C}^{2M+1}:$
    \begin{equation}
      \left\lvert f(x) - \sum_{m=-M}^{M} c_m e^{\frac{i \pi m}{2}x} \right\rvert \leq \epsilon
    \end{equation}
    for all $x \in [-1+\delta, 1- \delta]$, where $M= \max \left(2 \left\lceil \ln \left( \frac{4 \norm{a}_1}{\epsilon} \right) \frac{1}{\delta} \right\rceil, 0\right)$ and $\norm{c}_1 \leq \norm{a}_1$.
    Moreover, $c$ can be efficiently calculated on a classical computer in time $\mathrm{poly}(K,M,\log(1/\epsilon))$.
  \end{lemma}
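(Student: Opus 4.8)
The plan is to prove this by linearity, reducing to a single--monomial sub--problem while tracking $\ell_1$ coefficient norms. Fix $k\in\{0,\dots,K\}$ and a target accuracy $\eta$, and aim to construct coefficients $c^{(k)}_m$ supported on $|m|\le M$ with $\sum_m|c^{(k)}_m|\le 1$ and $\big|x^k-\sum_m c^{(k)}_m e^{i\pi m x/2}\big|\le\eta$ on $[-1+\delta,1-\delta]$. Summing these approximants against the $a_k$ and choosing $\eta=\Theta(\epsilon/\|a\|_1)$ then yields a Fourier series $\sum_m c_m e^{i\pi m x/2}$ with $\|c\|_1\le\sum_k|a_k|\,\|c^{(k)}\|_1\le\|a\|_1$, and by the triangle inequality $|f-\sum_m c_m e^{i\pi m x/2}|\le|f-T(f)|+|T(f)-\sum_m c_m e^{i\pi m x/2}|\le \epsilon/4+3\epsilon/4=\epsilon$ on $[-1+\delta,1-\delta]$, with $M=O(\delta^{-1}\log(\|a\|_1/\epsilon))$ as claimed.

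For the monomial step I would change variables to $\theta=\pi x/2$, so the desired exponentials become the trigonometric modes $e^{im\theta}$ and the task is to approximate $\theta\mapsto(2\theta/\pi)^k$ on $\theta\in[-\tfrac{\pi}{2}(1-\delta),\tfrac{\pi}{2}(1-\delta)]$, an interval inside $(-\pi/2,\pi/2)$ with margin of order $\delta$ to the endpoints $\pm\pi/2$. Substituting $u=\sin\theta$ turns this into polynomial (Chebyshev) approximation of $g_k(u):=(\tfrac{2}{\pi}\arcsin u)^k$ on $u\in[-1+\delta',1-\delta']$ with $\delta'=1-\cos(\pi\delta/2)=\Theta(\delta^2)$. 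Since $g_k$ is analytic on $\mathbb{C}$ minus the branch cuts $(-\infty,-1]\cup[1,\infty)$, it is analytic on a Bernstein ellipse whose eccentricity is controlled by the $\Theta(\delta')$ gap from the shrunk interval to $\pm1$, so its Chebyshev coefficients $b_j$ decay like $\rho^{-j}$ with $\log\rho=\Theta(\sqrt{\delta'})=\Theta(\delta)$; truncating at order $J=O(\delta^{-1}\log(1/\eta))$ costs at most $\eta$. The algebraic point that closes the reduction is that $T_j(\sin\theta)=T_j(\cos(\tfrac{\pi}{2}-\theta))=\cos\!\big(j(\tfrac{\pi}{2}-\theta)\big)=\tfrac12 e^{ij\pi/2}e^{-ij\theta}+\tfrac12 e^{-ij\pi/2}e^{ij\theta}$ is a unit--$\ell_1$ combination of $e^{\pm ij\theta}$, equivalently of $e^{\pm ij\pi x/2}$; hence $\sum_{j\le J}b_j T_j(\sin\theta)$ is a combination of $\{e^{im\pi x/2}:|m|\le J\}$ whose $\ell_1$ coefficient norm is at most $\sum_j|b_j|$. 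So everything hinges on controlling $\sum_j|b_j|$.

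The main obstacle will be exactly this $\ell_1$ bookkeeping: a crude estimate gives only $|b_j|\le 2\max_{E_\rho}|g_k|$, and on the (small) ellipse $\max_{E_\rho}|g_k|$ is $(1+O(\delta))^k$, so summing yields $\sum_j|b_j|=e^{O(k\delta)}/\!\sqrt{\delta'}$ rather than $\le 1$; getting the constant down to a clean $\|c\|_1\le\|a\|_1$ (and keeping the $K$--free dependence $M=O(\delta^{-1}\log(\|a\|_1/\epsilon))$) requires more care --- e.g. working with scale--normalised Chebyshev expansions, peeling off the low--order part of $g_k$ exactly (those terms are ordinary monomials $x,x^2,\dots$, already in the span of a few exponentials with small $\ell_1$ weight), or arguing directly that a sup--norm--$1$ analytic function admits a unit--$\ell_1$ Fourier approximant on the shrunk interval. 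I would also note in passing that the cheaper ``smooth window'' route (multiply $x^k$ by a Gevrey bump, take its Fourier series) only gives stretched--exponential decay $\eta\sim e^{-c\sqrt{M\delta}}$ and hence an extra logarithmic factor, which is why the change--of--variables route is preferable. The remaining ingredients are routine: the classical efficiency claim follows because the $b_j$ are explicit integrals/closed forms for powers of $\arcsin$ and the conversions to the $e^{im\pi x/2}$ basis are finite binomial/trigonometric sums, all computable in $\mathrm{poly}(K,M,\log(1/\epsilon))$ time.
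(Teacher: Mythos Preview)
The paper does not prove this lemma; it is quoted verbatim from \cite{van2017quantum}. However, the paper does exhibit the underlying technique in its proof of Lemma~\ref{lem:bounds_z_approx}, and that technique is quite different from your Chebyshev/Bernstein--ellipse route. The intended argument is: on $[-1+\delta,1-\delta]$ one has $x=\tfrac{2}{\pi}\arcsin(\sin(\tfrac{\pi}{2}x))$, and the Maclaurin series $\arcsin(z)=\sum_{k'\ge 0}2^{-2k'}\binom{2k'}{k'}\frac{z^{2k'+1}}{2k'+1}$ has \emph{all nonnegative coefficients} summing to $\arcsin(1)=\pi/2$. Hence $x^k$ is a nonnegative-coefficient power series in $\sin(\tfrac{\pi}{2}x)$ whose coefficient sum is exactly $1$. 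Truncating this series costs at most a geometric tail controlled by $(1-\delta)^{L}$, and then each $\sin^l(\tfrac{\pi}{2}x)=\big(\tfrac{i}{2}\big)^l\sum_{m'=0}^l(-1)^{m'}\binom{l}{m'}e^{i(2m'-l)\pi x/2}$ is already an $\ell_1$-norm-$1$ combination of the target exponentials; a Chernoff bound on the binomial tail truncates this to $|m|\le M$ with $M=O(\delta^{-1}\log(\|a\|_1/\epsilon))$. The $\ell_1$ bound $\|c\|_1\le\|a\|_1$ falls out \emph{for free} from positivity at every stage, with no analytic-continuation estimates needed.

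Your approach has a genuine gap precisely where you flag it. The Chebyshev-coefficient estimate $\sum_j|b_j|\lesssim \delta^{-1}(1+O(\delta))^k$ is the best you will get from Bernstein-ellipse bounds, and none of your proposed fixes close it: ``peeling off low-order parts'' does not help because the $\ell_1$ blow-up is spread across all scales, and the claim that a sup-norm-$1$ analytic function on a subinterval admits a unit-$\ell_1$ trigonometric approximant is false in general (think of a narrow bump). The positivity trick above is exactly what makes the $\|c\|_1\le\|a\|_1$ bound work, and your route bypasses it. If you want to salvage your argument, replace the Chebyshev expansion of $(\tfrac{2}{\pi}\arcsin u)^k$ by its Maclaurin expansion in $u$ (which is nonnegative, since powers of $\arcsin$ have nonnegative Taylor coefficients) and then use $u^l=\sin^l\theta$ expanded binomially --- at which point you have reproduced the original proof.
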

  In order to apply this lemma to our case, we restrict the approximation rate to the range $(\delta_l, \delta_u)$, where $0 < \delta_l \leq \delta_u <1$.
  Therefore we obtain over this range a approximation of the following form.
  \begin{corollary}
    Let $f:\mathbb{R} \rightarrow \mathbb{C}$ be defined as $f(x)=\log(x)$, $\delta,\epsilon_1 \in (0,1)$, and $\log_{K}(1-x):= \sum_{k=1}^{K_1} \frac{(-1)^{k-1}}{k} x^k$ such that $a_k:=\frac{(-1)^{k-1}}{k}$ and $\norm{a}_1 = \sum_{k=1}^{K_1} \frac{1}{k}$ with $K_1 \geq \frac{\log \left(4(\epsilon_1 \delta_1^u)^{-1} \right)}{\log \left((\delta_l)^{-1}\right)}$ such that $\left\lvert \log(x) -\log_{K}(x)\right\rvert \leq \epsilon_1/4$ for all $x \in [\delta_l, \delta_u]$.
    Then $\exists c \in \mathbb{C}^{2M+1}:$
    \begin{equation}
      \left\lvert f(x) - \sum_{m=-M_1}^{M_1} c_m e^{\frac{i \pi m}{2}x} \right\rvert \leq \epsilon_1
    \end{equation}
    for all $x \in [\delta_l, \delta_u]$, where $M_1= \max \left(2 \left\lceil \ln \left( \frac{4 \norm{a}_1}{\epsilon_1} \right) \frac{1}{1-\delta_u} \right\rceil, 0\right)$ and $\norm{c}_1 \leq \norm{a}_1$.
    Moreover, $c$ can be efficiently calculated on a classical computer in time $\mathrm{poly}(K_1,M_1,\log(1/\epsilon_1))$.
  \end{corollary}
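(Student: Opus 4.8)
The plan is to obtain the corollary as a direct specialization of Lemma~\ref{lem:vanApeldoorn} once the underlying polynomial approximant is set up. First I would invoke the Taylor‑remainder estimate derived immediately above the corollary: choosing $K_1 \ge \log\!\big(4(\epsilon_1\delta_u)^{-1}\big)/\log\!\big(\delta_l^{-1}\big)$ forces the degree‑$K_1$ truncation $\log_{K_1}$ to satisfy $\lvert\log(x)-\log_{K_1}(x)\rvert\le\epsilon_1/4$ uniformly on $[\delta_l,\delta_u]$, with coefficient sequence $a_k=(-1)^{k-1}/k$ and hence $\norm{a}_1=\sum_{k=1}^{K_1}1/k=O(\log K_1)$. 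This identifies the polynomial $T(f)$ and the quantity $\norm{a}_1$ that appear in the hypotheses of Lemma~\ref{lem:vanApeldoorn}, so no new approximation work is needed at this stage.

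Second I would carry out the geometric bookkeeping needed to feed this into Lemma~\ref{lem:vanApeldoorn}, whose approximation window is the symmetric interval $[-1+\delta,\,1-\delta]$. Since $0<\delta_l\le\delta_u<1$, the choice $\delta:=1-\delta_u$ gives $[-1+\delta,1-\delta]=[-\delta_u,\delta_u]\supseteq[\delta_l,\delta_u]$, so the binding constraint comes only from the upper endpoint; this is exactly what produces the factor $1/(1-\delta_u)$ in the asserted value of $M_1$. Applying Lemma~\ref{lem:vanApeldoorn} with this $\delta$ and with $T(f)=\log_{K_1}$ then yields $c\in\mathbb{C}^{2M_1+1}$ with $M_1=\max\!\big(2\lceil\ln(4\norm{a}_1/\epsilon_1)/(1-\delta_u)\rceil,0\big)$ and $\norm{c}_1\le\norm{a}_1$, such that $\big\lvert f(x)-\sum_{m=-M_1}^{M_1}c_m e^{i\pi m x/2}\big\rvert\le\epsilon_1$ on $[-\delta_u,\delta_u]$, hence a fortiori on $[\delta_l,\delta_u]$. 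The classical‑computation claim — time $\mathrm{poly}(K_1,M_1,\log(1/\epsilon_1))$ to produce $c$ from $a$ — is inherited verbatim from the corresponding clause of Lemma~\ref{lem:vanApeldoorn}.

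The one point that needs genuine care, and which I expect to be the main obstacle, is that Lemma~\ref{lem:vanApeldoorn} requires the polynomial bound $\lvert f-T(f)\rvert\le\epsilon_1/4$ on the \emph{whole} symmetric window $[-\delta_u,\delta_u]$, whereas both the Taylor‑remainder estimate and, more fundamentally, $\log(x)$ itself are naturally controlled only on the positive interval $[\delta_l,\delta_u]\subset(0,1)$. The clean way to close this gap is to read $f$ as the shifted branch $\tilde f(x):=\log(1+x)$, whose power series $\sum_{k\ge1}(-1)^{k-1}x^k/k$ has precisely the coefficients $a_k$ above and converges on $(-1,1)$; one then checks that its degree‑$K_1$ truncation stays within $\epsilon_1/4$ of $\tilde f$ at the slowest‑converging endpoint $x=-\delta_u$ (a tail bound of the form $\delta_u^{\,K_1+1}/((K_1+1)(1-\delta_u))$), which is what actually pins down the stated lower bound on $K_1$, and finally translates the resulting Fourier‑type series back by the affine change of argument relating $\tilde f$ to $\log$, absorbing the shift into the phases of the $c_m$. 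Once this identification of branch and interval is made consistently, the remainder of the proof is a pure substitution of the values of $K_1$, $\norm{a}_1$, and $\delta$ into the conclusion of Lemma~\ref{lem:vanApeldoorn}.
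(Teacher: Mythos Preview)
Your proposal is correct and follows essentially the same route as the paper: invoke the Taylor-remainder bound for the truncated logarithm to supply the polynomial $T(f)$, then feed it into Lemma~\ref{lem:vanApeldoorn} with $\delta=1-\delta_u$ so that the interval $[\delta_l,\delta_u]$ sits inside the lemma's approximation window, and read off $M_1$, $\norm{c}_1\le\norm{a}_1$, and the classical-computation clause directly. The paper's proof is a single sentence to this effect; your extra paragraph on the branch/domain mismatch (working with $\tilde f(x)=\log(1+x)$ and translating back) is a genuine clarification of a point the paper leaves implicit, but it does not change the underlying argument.
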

  \begin{proof}
    The proof follows straight forward by combining Lemma~\ref{lem:vanApeldoorn} with the approximation of the logarithm and the range over which we want to approximate the function.
  \end{proof}
  In the following we denote with $\log_{K,M}(x):=\sum_{m=-M_1}^{M_1} c_m e^{\frac{i \pi m}{2}x}$, where we keep the $K$-subscript to denote that classical computation of this approximation is $\mathrm{poly}(K)$-dependent.
  We can now express the gradient of the objective via this approximation as
  \begin{equation}
    \label{eq:exact_gradient_log_approx}
    \tr{\frac{\partial}{\partial \theta} \rho \log_{K,M} \sigma_v} \approx \sum_{m=-M_1}^{M_1} \frac{i c_m m \pi}{2} \int_0^1 ds\ \tr{ \rho e^{\frac{i s \pi m}{2}\sigma_v} \frac{\partial \sigma_v}{\partial \theta} e^{\frac{i (1-s) \pi m}{2}\sigma_v}}.
  \end{equation}
  where we can evaluate each term in the sum individually and then classically post process the results, i.e., sum these up. In particular the latter can be evaluated as the expectation value over $s$, i.e.,
  \begin{equation}
    \int_0^1 ds\ \tr{ \rho e^{\frac{i s \pi m}{2}\sigma_v} \frac{\partial \sigma_v}{\partial \theta} e^{\frac{i (1-s) \pi m}{2}\sigma_v}} = \mathbb{E}_{s \in[0,1]} \left[\tr{ \rho e^{\frac{i s \pi m}{2}\sigma_v} \frac{\partial \sigma_v}{\partial \theta} e^{\frac{i (1-s) \pi m}{2}\sigma_v}} \right],
  \end{equation}
 which we can evaluate separately on a quantum device.
 In the following we hence need to device a method to evaluate this expectation value.\\
 First, we will expand the gradient using a divided difference formula such that $\frac{\partial \sigma_v}{\partial \theta}$ is approximated by the Lagrange interpolation polynomial of degree $\mu-1$, i.e., $$\frac{\partial \sigma_v}{\partial \theta}(\theta) \approx \sum_{j=0}^{\mu}\sigma_v(\theta_j) \mathcal{L'}_{\mu,j}(\theta),$$ where $$\mathcal{L}_{\mu,j}(\theta):= \prod_{\substack{k=0\\ k\neq j}}^{\mu} \frac{\theta - \theta_k}{\theta_j - \theta_k}.$$
 Note that the order $\mu$ is free to chose, and will guarantee a different error in the solution of the gradient estimate as described prior in Lemma~\ref{lem:remainder}. Using this in the gradient estimation, we obtain a polynomial of the form (evaluated at $\theta_j$, i.e., the chosen points)
\begin{equation}
 \sum_{m=-M_1}^{M_1} \frac{i c_m m \pi}{2} \sum_{j=0}^{\mu} \mathcal{L'}_{\mu,j}(\theta_j ) \mathbb{E}_{s \in[0,1]} \left[\tr{ \rho e^{\frac{i s \pi m}{2}\sigma_v} \sigma_v(\theta_j) e^{\frac{i (1-s) \pi m}{2}\sigma_v}} \right],
\end{equation}
where each term again can be evaluated separately, and efficiently combined via classical post processing. Note that the error in the Lagrange interpolation polynomial decreases exponentially fast, and therefore the number of terms we use is sufficiently small to do so.
Next, we need to deploy a method to evaluate the above expressions.
In order to do so, we implement $\sigma_v$ as a Fourier series of itself, i.e., $\sigma_v = \arcsin(\sin(\sigma_v \pi/2)/(\pi/2))$, which we will then approximate similar to the approach taken in Lemma~\ref{lem:vanApeldoorn}. With this we obtain the following result.

\begin{lemma}\label{lem:bounds_z_approx}
  Let $\delta,\epsilon_2 \in (0,1)$, and $\tilde x:= \sum_{m'=-M_2}^{M_2} \tilde{c}_{m'} e^{i \pi m'x/2}$ with  $K_2 \geq \frac{\log(4/\epsilon_2)}{\log(\delta_u^{-1})}$ and $M_2 \geq \left\lceil \log \left(\frac{4}{\epsilon_2} \right) \sqrt{(2\log{\delta_u^{-1}})^{-1}} \right\rceil$ and $x \in [\delta_l, \delta_u]$.
  Then $\exists \tilde c \in \mathbb{C}^{2M+1}:$
  \begin{equation}
    \left\lvert x - \tilde x \right\rvert \leq \epsilon_2
  \end{equation}
  for all $x \in [\delta_l, \delta_u]$, and $\norm{c}_1 \leq 1$.
  Moreover, $\tilde c$ can be efficiently calculated on a classical computer in time $\mathrm{poly}(K_2,M_2,\log(1/\epsilon_2))$.
\end{lemma}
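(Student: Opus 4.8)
\noindent The plan is to realise the identity map $x\mapsto x$ as a rapidly converging Fourier‑like series by composing the Taylor expansion of $\arcsin$ with $\sin$, mirroring the structure used for the logarithm in the Corollary following Lemma~\ref{lem:vanApeldoorn}. The starting point is the exact identity
\[
x \;=\; \frac{2}{\pi}\,\arcsin\!\big(\sin(\pi x/2)\big),\qquad x\in[-1,1],
\]
together with $\arcsin(y)=\sum_{k\ge 0} a_k\,y^{2k+1}$, $a_k=\binom{2k}{k}/(4^k(2k+1))\ge 0$, which converges for $|y|<1$ and has $\sum_{k\ge 0}a_k=\arcsin(1)=\pi/2$. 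Since $\delta_u<1$, on $[\delta_l,\delta_u]$ one has $0<\sin(\pi x/2)\le r<1$ with $r=\sin(\pi\delta_u/2)$, so the series converges geometrically there. First I would truncate $\arcsin$ at order $2K_2+1$: using $a_k\le 1/(2k+1)\le 1$ gives a tail bound $\big|\arcsin(y)-\sum_{k\le K_2}a_ky^{2k+1}\big|\le \sum_{k>K_2}r^{2k+1}=r^{2K_2+3}/(1-r^2)$ for $|y|\le r$, and choosing $K_2$ of the stated order (controlled by an elementary estimate of $\log(1/r)$ in terms of $\log(\delta_u^{-1})$) forces this to contribute at most $\epsilon_2/2$ to $|x-\tilde x|$.

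Next I would expand each power $\sin^{2k+1}(\pi x/2)$ into exponential modes by de Moivre,
\[
\sin^{2k+1}(\pi x/2)=\frac{1}{(2i)^{2k+1}}\sum_{j=0}^{2k+1}\binom{2k+1}{j}(-1)^j e^{\,i(2k+1-2j)\pi x/2},
\]
so that every mode is of the form $e^{i\pi m'x/2}$ with $|m'|\le 2k+1\le 2K_2+1$. Collecting the contributions over $k\le K_2$ and $j$ defines the coefficients $\tilde c_{m'}$. The $\ell_1$‑norm of the de Moivre expansion of $\sin^{2k+1}$ is exactly $1$ (the prefactor $2^{-(2k+1)}$ cancels $\sum_j\binom{2k+1}{j}$), and the $a_k$ are nonnegative, so $\norm{\tilde c}_1\le \tfrac{2}{\pi}\sum_{k\ge 0}a_k=1$, and dropping modes can only decrease this. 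The point that yields the improved count $M_2\sim \log(4/\epsilon_2)/\sqrt{2\log(\delta_u^{-1})}$ rather than the naive $M_2=2K_2+1$ is that the coefficient attached to mode $m'=2k+1-2j$ is proportional to the binomial weight $\binom{2k+1}{j}2^{-(2k+1)}$, which concentrates within $O(\sqrt{K_2})$ of the centre $j\approx k$; a Chernoff/sub‑Gaussian tail bound for the binomial distribution shows that restricting to $|m'|\le M_2$ with $M_2\sim\sqrt{K_2\log(K_2/\epsilon_2)}$ perturbs the sum by at most $\epsilon_2/2$ uniformly on $[\delta_l,\delta_u]$, and substituting the value of $K_2$ from the first step produces precisely the claimed $M_2$.

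Classical efficiency is then immediate: the $a_k$ and the binomial coefficients are explicit rationals, so assembling $\tilde c\in\mathbb{C}^{2M_2+1}$ takes $O(K_2M_2)$ arithmetic operations on $\poly(\log(1/\epsilon_2))$‑bit numbers, i.e.\ $\poly(K_2,M_2,\log(1/\epsilon_2))$ time. The hard part will be the mode‑truncation estimate of the second paragraph — making the sub‑Gaussian concentration of the binomial weights quantitative and uniform over all $k\le K_2$, and then bookkeeping the several $\epsilon_2$‑budgets (arcsin truncation, mode truncation, and the constant from $\arcsin(1)=\pi/2$) so that they add up to the stated bounds with the stated constants. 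A coarser fallback, should those constants prove awkward, is to observe that $f(x)=x$ is its own degree‑one polynomial with $\norm{a}_1=1$ and to invoke Lemma~\ref{lem:vanApeldoorn} directly with gap parameter $\delta=1-\delta_u$; this already gives an admissible $\tilde c$ with $\norm{\tilde c}_1\le 1$, at the cost of the weaker degree bound $M\sim\log(1/\epsilon_2)/(1-\delta_u)$ in place of the $1/\sqrt{\log(\delta_u^{-1})}$ scaling.
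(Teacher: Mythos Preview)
Your approach is essentially the same as the paper's: it too writes $x=\arcsin(\sin(\cdot))$, truncates the $\arcsin$ Taylor series to obtain the $K_2$ bound, expands $\sin^l$ via the binomial/de~Moivre formula, and then invokes the Chernoff tail bound $\sum_{m'\ge \lceil l/2\rceil+M_2}2^{-l}\binom{l}{m'}\le e^{-2M_2^2/l}$ to justify the mode cutoff $M_2$. The only cosmetic difference is that the paper bounds the $\arcsin$ remainder directly by $|\delta_u|^{K_2+1}/(1/2)$ (tacitly using $\delta_u\le 1/2$) rather than going through $r=\sin(\pi\delta_u/2)$, which spares you the extra ``elementary estimate of $\log(1/r)$ in terms of $\log(\delta_u^{-1})$'' you flagged; your $\|\tilde c\|_1\le 1$ argument via $\tfrac{2}{\pi}\sum_k a_k=1$ is a clean addition that the paper merely asserts.
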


\begin{proof}
Invoking the technique used in~\cite{van2017quantum}, we expand $$\arcsin(z) = \sum_{k'=0}^{K_2} 2^{-2k'} {2k' \choose k'} \frac{z^{2k'+1}}{2k'+1} + R_{K_2+1}(z),$$ wher $R_{K_2+1}$ is the remainder as before.
For $0<z\leq \delta_u \leq 1/2$, remainder can be bound by $ \left\lvert R_{K_2+1} \right\rvert \leq \frac{\left\lvert \delta_u \right\rvert^{K_2+1}}{1/2} \overset{!}{\leq} \epsilon_2/2$, which gives the bound $$K_2 \geq \frac{\log(4/\epsilon_2)}{\log(\delta_u^{-1})}.$$
We then approximate
\begin{equation}
\sin^l(x)=\left(\frac{i}{2}\right)^l \sum_{m'=0}^l (-1)^{m'} {l \choose m'} e^{ix(2m'-l)}
\end{equation} by
\begin{equation}
\sin^l(x)\approx \left(\frac{i}{2}\right)^l \sum_{m'=\lceil l/2\rceil - M_2}^{\lfloor l/2\rfloor + M_2} (-1)^{m'} {l \choose m'} e^{ix(2m'-l)},
\end{equation}
which induces an error of $\epsilon_2/2$ for the choice $$M_2 \geq \left\lceil \log \left(\frac{4}{\epsilon_2} \right) \sqrt{(2\log{\delta_u^{-1}})^{-1}} \right\rceil.$$
This can be seen by using Chernoff's inequality for sums of binomial coefficients, i.e., $$\sum_{m'=\lceil l/2+M_2\rceil}^l 2^{-l} {l \choose m'} \leq e^{-\frac{2M_2^2}{l}},$$ and chosing $M$ appropriately.
Finally, defining $f(z):= \arcsin(\sin(z \pi/2)/(\pi/2))$, as well as $\tilde{f}_1 := \sum_{k'=0}^{K_2} b_{k'} \sin^{2k'+1}(z \pi/2)$ and
\begin{equation}
\tilde{f}_2(z) := \sum_{k'=0}^{K_2} b_{k'}\left(\frac{i}{2}\right)^l \sum_{m'=\lceil l/2\rceil - M_2}^{\lfloor l/2\rfloor + M_2} (-1)^{m'} {l \choose m'} e^{ix(2m'-l)},
\end{equation}
 and observing that
 $$\norm{f -\tilde{f}_2}_{\infty} \leq \norm{f - \tilde{f}_1}_{\infty} + \norm{\tilde{f}_1 - \tilde{f}_2}_{\infty},$$ yields the final error of $\epsilon_2$ for the approximation
$z \approx \tilde z = \sum_{m'} \tilde{c}_{m'} e^{i \pi m' z/2}$.
\end{proof}
Note that this immediately leads to an $\epsilon_2$ error in the spectral norm for the approximation
\begin{equation}
\left\lVert \sigma_v-\sum_{m'=-M_2}^{M_2} \tilde{c}_{m'} e^{i \pi m' \sigma_v/2}\right\rVert_2 \leq \epsilon_2,
\end{equation}
where $\sigma_v$ is the reduced density matrix.

Since our final goal is to estimate $\tr{\partial_{\theta} \rho \log \sigma_v}$, with a variety of $\sigma_v(\theta_j)$ using the divided difference approach, we also need to bound the error in this estimate which we introduce with the above approximations.
Bounding the derivative with respect to the remainder can be done by using the truncated series expansion and bounding the gradient of the remainder. This yields the following result.

\begin{lemma}
    \label{lem:proof_error_bound_grads}
    For the of the parameters $M_1,M_2, K_1,L,\mu,\Delta,s$ given in eq.~(\ref{eq:bound_M1}-\ref{eq:bounds_epss}), and $\rho, \sigma_v$ being two density matrices, we can estimate the gradient of the relative entropy such that
    \begin{equation}
      \left\lvert \partial_{\theta}\tr{\rho \log \sigma_v} - \partial_{\theta} \tr{\rho \log_{K_1,M_1} \tilde{\sigma}_v} \right\rvert \leq \epsilon,
    \end{equation}
    where the function $\partial_{\theta} \tr{\rho \log_{K_1,M_1} \tilde{\sigma}_v}$ evaluated at $\theta$ is defined as
    \begin{equation}
      \label{eq:full_approximation}
    \mathrm{Re}\left[ \sum_{m=-M_1}^{M_1} \sum_{m'=-M_2}^{M_2} \frac{i c_m \tilde{c}_{m'} m \pi}{2} \sum_{j=0}^{\mu} \mathcal{L}'_{\mu,j}(\theta) \mathbb{E}_{s \in[0,1]} \left[\tr{ \rho e^{\frac{i s \pi m}{2}\sigma_v} e^{\frac{i \pi m'}{2} \sigma_v(\theta_j)} e^{\frac{i (1-s) \pi m}{2}\sigma_v}} \right] \right]
    \end{equation}
    The gradient can hence be approximated to error $\epsilon$ with $O(\text{poly}(M_1,M_2,K_1,L,s,\Delta,\mu))$ computation on a classical computer and using only the Hadamard test, Gibbs state preparation and LCU on a quantum device.
\end{lemma}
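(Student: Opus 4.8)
The plan is to prove the bound by a chained triangle inequality that mirrors the decomposition \eqref{eq:bounds_approximation_error}, introducing one approximation at a time and then choosing the truncation/sampling parameters $M_1,M_2,K_1,L,\mu,\Delta,s$ so that each stage contributes at most a constant fraction of $\epsilon$. Concretely I would insert the intermediate objects $\log_{K_1,M_1}\sigma_v$ (Fourier truncation of $\log$ applied to the exact $\sigma_v=\partr{h}{e^{-H}/Z}$), then $\log_{K_1,M_1}\widetilde\sigma_v$ (same truncation but with $\sigma_v$ replaced by its self-Fourier approximation of Lemma~\ref{lem:bounds_z_approx}), and finally the divided-difference object of \eqref{eq:full_approximation}, and bound the three resulting differences of $\theta$-derivatives separately. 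Using $\sum_i\sigma_i(\rho)=\|\rho\|_1=1$ one passes from the trace against $\rho$ to the operator norm, so it suffices to bound $\|\partial_\theta[\log\sigma_v-\log_{K_1,M_1}\sigma_v]\|$, $\|\partial_\theta[\log_{K_1,M_1}\sigma_v-\log_{K_1,M_1}\widetilde\sigma_v]\|$, and $\|\partial_\theta[\log_{K_1,M_1}\widetilde\sigma_v-\log^s_{K_1,M_1}\widetilde\sigma_v]\|$.

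For the first difference I would use Duhamel's formula to write $\partial_\theta f(\sigma_v)=\int_0^1 e^{s\,\mathrm{ad}}$-type integrals and reduce the operator-norm bound to the scalar estimate $\sup_{x\in[\delta_l,\delta_u]}|\partial_x(\log x-\log_{K_1,M_1}x)|\cdot\|\partial_\theta\sigma_v\|$, with $\|\partial_\theta\sigma_v\|\le e^\gamma$ and $\delta_l\ge 2^{-n_v}$; the scalar bound follows from the Taylor-remainder estimate derived before Lemma~\ref{lem:vanApeldoorn} together with the Corollary instantiating Lemma~\ref{lem:vanApeldoorn} for $\log$, forcing $K_1,M_1=\polylog(1/\epsilon)$ (with the spectral-window dependence). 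For the third difference I would invoke Lemma~\ref{lem:remainder}: replacing $\partial_\theta\widetilde\sigma_v$ by the derivative of the degree-$\mu$ Lagrange interpolant on a grid of spacing $\Delta$ costs at most $\tfrac{1}{(\mu+1)!}\|\partial_\theta^{\mu+1}\widetilde\sigma_v\|\,\Delta^{\mu}$ per term; summing the $O(M_1)$ outer Fourier terms (each weighted by $|c_m||m|\pi/2$, with $\|c\|_1=O(\log K_1)$ and $|m|\le M_1$) this is made $\le\epsilon/3$ by taking $\mu=O(\log(1/\epsilon))$ and $\Delta$ a small constant, since the factorial dominates the geometric growth; the Monte-Carlo error of estimating $\mathbb{E}_{s\in[0,1]}[\cdot]$ with $s$ samples, of order $O(1/\sqrt{s})$, is folded into the same budget.

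The main obstacle is the middle difference, $\|\partial_\theta[\log_{K_1,M_1}\sigma_v-\log_{K_1,M_1}\widetilde\sigma_v]\|$, which requires propagating the operator-norm perturbation $\|\sigma_v-\widetilde\sigma_v\|\le\epsilon_2$ through the $\theta$-\emph{derivative} of a matrix function, i.e.\ bounding $\partial_\theta f(A)-\partial_\theta f(B)$ in terms of $\|A-B\|$ for $f=\log_{K_1,M_1}$. Writing $f(x)=\sum_m c_m e^{im\pi x/2}$ and applying Duhamel to each exponential turns $\partial_\theta f(\sigma_v)$ into a sum of double integrals $\int\!\!\int e^{i\alpha\pi m\sigma_v/2}(\partial_\theta\sigma_v)\,e^{i\beta\pi m\sigma_v/2}$, and the difference between $\sigma_v$ and $\widetilde\sigma_v$ splits into (i) the change of the sandwiching exponentials, controlled by the operator-Lipschitz bound $\|e^{i\gamma\pi m\sigma_v/2}-e^{i\gamma\pi m\widetilde\sigma_v/2}\|\le\tfrac{\pi|m|}{2}\|\sigma_v-\widetilde\sigma_v\|$ (the same kind of estimate used in Lemma~\ref{lemma:oaa}), and (ii) the change of $\partial_\theta\sigma_v$ itself, which needs $\|\partial_\theta(\sigma_v-\widetilde\sigma_v)\|$ small. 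The last point forces showing that the self-Fourier approximation of Lemma~\ref{lem:bounds_z_approx} is stable under differentiation; I would obtain this by differentiating the series termwise and re-running the remainder estimate of Lemma~\ref{lem:bounds_z_approx} on the derivative, at the price of enlarging $M_2,K_2$ by $\poly(\gamma,\log(1/\epsilon))$ factors and of carrying the $e^\gamma$ bound on $\|\partial_\theta\sigma_v\|$.

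Once the three contributions are each $\le\epsilon/3$, collecting the parameter constraints yields $M_1,M_2,K_1=\polylog(1/\epsilon)$, $\mu,L=O(\log(1/\epsilon))$, $\Delta=\Theta(1)$, while the classical assembly of \eqref{eq:full_approximation} (computing the $c_m,\tilde c_{m'}$ via Lemma~\ref{lem:vanApeldoorn} and Lemma~\ref{lem:bounds_z_approx}, and the $\mathcal{L}'_{\mu,j}$ by finite differences) costs $\poly(M_1,M_2,K_1,\mu,1/\Delta)=\poly(\gamma,n_v,n_h,\log(1/\epsilon))$; each quantum evaluation is a Hadamard test on the product of three density-matrix-exponentiation unitaries $e^{is\pi m\sigma_v/2}e^{i\pi m'\sigma_v(\theta_j)/2}e^{i(1-s)\pi m\sigma_v/2}$ applied to $\rho$, preceded by a Gibbs-state preparation of $\sigma_v$ (Theorem~\ref{thm:Gibbs_state_prep}) and organised as an LCU over the $m$-sum, which is exactly the resource count asserted in the statement.
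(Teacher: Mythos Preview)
Your triangle-inequality skeleton matches the paper's decomposition \eqref{eq:bounds_approximation_error}, but you have the two inner approximations in the wrong order, and this is not cosmetic: it both manufactures the ``main obstacle'' you flag and fails to land on the target formula. In \eqref{eq:full_approximation} the outer exponentials $e^{is\pi m\sigma_v/2}$ carry the \emph{exact} $\sigma_v$ (they are implemented directly by sample-based Hamiltonian simulation); the self-Fourier approximation of Lemma~\ref{lem:bounds_z_approx} is applied \emph{only} to the middle factor $\sigma_v(\theta_j)$, and only \emph{after} the divided-difference step has replaced $\partial_\theta\sigma_v$ by $\sum_j\mathcal L'_{\mu,j}(\theta)\,\sigma_v(\theta_j)$. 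Its sole role is to turn that non-unitary factor into an LCU $\sum_{m'}\tilde c_{m'}e^{i\pi m'\sigma_v(\theta_j)/2}$. With this ordering the paper's second difference splits as
\[
\Bigl\|\partial_\theta\sigma_v-\sum_j\mathcal L'_{\mu,j}(\theta)\,\tilde\sigma_v(\theta_j)\Bigr\|
\le\Bigl\|\partial_\theta\sigma_v-\sum_j\mathcal L'_{\mu,j}(\theta)\,\sigma_v(\theta_j)\Bigr\|
+\sum_j|\mathcal L'_{\mu,j}(\theta)|\,\|\sigma_v-\tilde\sigma_v\|,
\]
so the Fourier error $\epsilon_2$ is multiplied only by $\mu\,\|\mathcal L'_{\mu,j}\|_\infty\le(5\mu^2/\Delta)\log(\mu/2)$ (which the paper bounds on a centered stencil), and no estimate on $\|\partial_\theta(\sigma_v-\tilde\sigma_v)\|$ is ever needed. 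Your ordering would instead put $\tilde\sigma_v$ into the outer exponentials, which is neither what \eqref{eq:full_approximation} says nor something one can simulate directly.

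The other place your sketch is too light is the divided-difference remainder. Lemma~\ref{lem:remainder} leaves you with $\|\partial_\theta^{\mu+1}\sigma_v\|$, and bounding this is where most of the paper's effort on the middle term goes: a Leibniz expansion of $\partial_\theta^{\mu+1}\bigl(\trh{e^{-H}}/Z\bigr)$ together with a Trotter-limit Duhamel argument for $\partial_\theta^{q}e^{-H}$ yields a bound of order $2^{\mu+1+n_h}\lambda_{\max}^{\mu+1}\|\trh{e^{-H}}\|$. Balancing this against $(\Delta/(\mu-1))^{\mu}$ and the outer weight $\tfrac{M_1\|a\|_1\pi}{2}$ is what forces $\mu\ge n_h+\log(M_1\Lambda/\epsilon)$, not the $O(\log(1/\epsilon))$ you claim. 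Your scalar reduction for the first term via $\sup_x|\partial_x(\log x-\log_{K_1,M_1}x)|\cdot\|\partial_\theta\sigma_v\|$ is legitimate, but note that the paper instead unpacks this into three tails (Taylor in $K_1$, $\arcsin$ in $L$, Chernoff in $M_1$), which is where the parameter $L$ in the statement actually enters.
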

Notably the expression in \eqref{eq:full_approximation} can now be evaluated with a quantum-classical hybrid device by evaluating each term in the trace separately via a Hadamard test and, since the number of terms is only polynomial, and then evaluating the whole sum efficiently on a classical device.\\

\begin{proof}
  For the proof we perform the following steps.
  Let $\sigma_i(\rho)$ be the singular values of $\rho$, which are equivalently the eigenvalues since $\rho$ is Hermitian.
  Then observe that the gradient can be separated in different terms, i.e., let $\log_{K_1,M_1}^s\sigma_v$ be the approximation as given in \eqref{eq:full_approximation} for a finite sample of the expectation values $\mathbb{E}_{s}$, then we have
  \begin{align}
	\label{eq:bounds_approximation_error}
    &\left\lvert \partial_{\theta}\tr{\rho \log \sigma_v} - \partial_{\theta} \tr{\rho \log_{K_1,M_1}^s \tilde{\sigma}_v} \right\rvert \leq \nonumber \\
      &\leq \sum_i \sigma_i(\rho) \cdot \left\lVert \partial_{\theta} [\log \sigma_v - \log_{K_1,M_1}^s \tilde{\sigma}_v] \right\rVert \nonumber \\
      &\leq \sum_i \sigma_i(\rho) \cdot \left( \left\lVert \partial_{\theta} [\log \sigma_v - \log_{K_1,M_1} \sigma_v] \right\rVert \right. \nonumber \\
      &+ \left. \left\lVert \partial_{\theta} [\log_{K_1,M_1} \sigma_v - \log_{K_1,M_1} \tilde{\sigma}_v] \right\rVert + \left\lVert \partial_{\theta} [\log_{K_1,M_1} \tilde{\sigma_v} - \log^s_{K_1,M_1} \tilde{\sigma}_v] \right\rVert \right)
  \end{align}
  where the second step follows from the Von-Neumann trace inequality and the terms are (1) the error in approximating the logarithm, (2) the error introduced by the divided difference and the approximation of $\sigma_v$ as a Fourier-like series, and (3) is the finite sampling approximation error.
  We can now bound the different term separately, and start with the first part which is in general harder to estimate.
  We partition the bound in three terms, corresponding to the three different approximations taken above.
  \begin{align*}
      &\left\lVert \partial_{\theta} [\log \sigma_v - \log_{K_1,M_1} \sigma_v] \right\rVert \leq \\
      &\leq \left\lVert \partial_{\theta} \sum_{k=K_1+1}^{\infty} \frac{(-1)^k}{k} \sigma_v^k \right\rVert + \left\lVert \partial_{\theta} \sum_{k=1}^{K_1} \frac{(-1)^k}{k} \sum_{l=L}^{\infty} b_l^{(k)} \sin^l(\sigma_v \pi/2) \right\rVert \\
      & +\left\lVert \partial_{\theta} \sum_{k=1}^{K_1} \frac{(-1)^k}{k} \sum_{l=L}^{\infty} b_l^{(k)}  \left(\frac{i}{2} \right)^l \sum_{m \in [0, \lceil l/2\rceil -M_1] \cup [\lfloor l/2 \rfloor +M_1, l]} (-1)^m e^{i (2m-l)\sigma_v \pi/2} \right\rVert
  \end{align*}
  The first term can be bound in the following way:
  \begin{align}
    \leq \sum_{k=K_1+1}^{\infty} \lVert \sigma_v \rVert^{k-1} = \frac{\lVert \sigma_v\rVert^{K_1}}{1-\lVert \sigma_v \rVert},
  \end{align}
  and, assuming $\norm{\sigma_v}<1$, we hence can set
\begin{equation}
	K_1\geq \log((1-\norm{\sigma_v})\epsilon/9)/\log(\norm{\sigma_v})
\end{equation}
 appropriately in order to achieve an $\epsilon/9$ error. The second term can be bound by assuming that $\norm{\sigma_v \pi}<1$, and chosing $$L\geq\log \frac{\left(\frac{\epsilon}{9\pi K \norm{\frac{\partial \sigma_v}{\partial\theta}}}\right)}{\log(\norm{\sigma_v} \pi)},$$ which we derive by observing that
  \begin{align}
    &\leq \sum_{k=1}^K \frac{1}{k} \sum_{l=L}^{\infty} b_l^{(k)} l \left\lVert \sin^{l-1} (\sigma_v \pi/2)\right\rVert \cdot \left\lVert \frac{\pi}{2} \frac{\partial \sigma_v}{\partial \theta} \right\rVert \\
    & < \sum_{k=1}^K \frac{1}{k} \sum_{l=L+1}^{\infty} b_l^{(k)} \pi \left\lVert \sigma_v \pi\right\rVert^{l-1} \cdot \left\lVert \frac{\partial \sigma_v}{\partial \theta} \right\rVert \\
    &\leq \sum_{k=1}^K \frac{1}{k}  \pi \left\lVert \sigma_v \pi\right\rVert^{L} \cdot \left\lVert \frac{\partial \sigma_v}{\partial \theta} \right\rVert,
  \end{align}
where we used in the second step that $l < 2^l$.
  Finally, the last term can be bound similarly, which yields
  \begin{align}
   &\leq \sum_{k=1}^K \frac{1}{k} \sum_{l=1}^{L} b_l^{(k)} e^{-2(M_1)^2/l} \cdot l \cdot \frac{\pi}{2} \norm{\frac{\partial \sigma_v}{\partial \theta}}\\
   &\leq \sum_{k=1}^K \frac{L}{k} \sum_{l=1}^{L} b_l^{(k)} e^{-2(M_1)^2/L} \frac{\pi}{2} \norm{\frac{\partial \sigma_v}{\partial \theta}} \\
   &\leq \sum_{k=1}^K \frac{L}{k} e^{-2(M_1)^2/L} \frac{\pi}{2} \norm{\frac{\partial \sigma_v}{\partial \theta}} \leq \frac{KL\pi}{2}e^{-2(M_1)^2/L} \norm{\frac{\partial \sigma_v}{\partial \theta}},
  \end{align}
  and we can hence chose $$M_1\geq \sqrt{L \log\left(\frac{9 \norm{\frac{\partial \sigma_v}{\partial \theta}} K_1 L \pi}{2\epsilon}\right)} $$ in order to decrease the error to $\epsilon/3$ for the first term in \eqref{eq:bounds_approximation_error}.\\
  For the second term, first note that with the notation we chose, $\left\lVert \partial_{\theta} [\log_{K_1,M_1} \sigma_v - \log_{K_1,M_1} \tilde{\sigma}_v] \right\rVert$ is the difference between the log-approximation where the gradient of $\sigma_v$ is still exact, i.e., \eqref{eq:exact_gradient_log_approx}, and the version where we approximate the gradient via divided differences and the linear combination of unitaries, given in \eqref{eq:full_approximation}.
  Recall that the first level approximation was given by
  \begin{equation*}
    \sum_{m=-M_1}^{M_1} \frac{i c_m m \pi}{2} \int_0^1 ds\ \tr{ \rho e^{\frac{i s \pi m}{2}\sigma_v} \frac{\partial \sigma_v}{\partial \theta} e^{\frac{i (1-s) \pi m}{2}\sigma_v}},
  \end{equation*}
  where we went from the expectation value formulation back to the integral formulation to avoid consideration of potential errors due to sampling.

  Bounding the difference hence yields one term from the divided difference approximation of the gradient and an error from the Fourier series, which we can both bound separately. Denoting $\partial \tilde p(\theta_k)/\partial \theta$ as the divided difference and the LCU approximation of the Fourier series\footnote{which effectively means that we approximate the coefficients of the interpolation polynomial}, and with $\partial p(\theta_k)/\partial \theta$ the divided difference without approximation via the Fourier series,  we hence have
  \begin{align}
    &\left\lVert \partial_{\theta} [\log_{K_1,M_1} \sigma_v - \log_{K_1,M_1} \tilde{\sigma}_v] \right\rVert \leq \\
    &\leq \left\lvert \sum_{m=-M_1}^{M_1} \frac{i c_m m \pi}{2} \int_0^1 ds\ \tr{ \rho e^{\frac{i s \pi m}{2}\sigma_v} \left( \frac{\partial \sigma_v}{\partial \theta} - \frac{\partial \tilde p(\theta_k)}{\partial \theta} \right) e^{\frac{i (1-s) \pi m}{2}\sigma_v}} \right\rvert \\
    &\leq \frac{M_1 \pi \norm{a}_1}{2}  \int_0^1 ds\ \sum_i \sigma_i(\rho) \left\lVert \frac{\partial \sigma_v}{\partial \theta} -  \frac{\partial \tilde p(\theta_k)}{\partial \theta}\right\rVert \\
&\leq \frac{M_1 \pi \norm{a}_1}{2}  \int_0^1 ds\ \sum_i \sigma_i(\rho) \left( \left\lVert \frac{\partial \sigma_v}{\partial \theta} - \frac{\partial p(\theta_k)}{\partial \theta}\right\rVert  +  \left\lVert \frac{\partial p(\theta_k)}{\partial \theta} -  \frac{\partial \tilde p(\theta_k)}{\partial \theta}\right\rVert \right) \\
&\leq \frac{M_1 \norm{a}_1 \pi}{2} \int_0^1 ds\ \sum_i \sigma_i(\rho) \left(\norm{\frac{\partial^{\mu+1} \sigma_v}{\partial \theta^{\mu+1}}} \left( \frac{\Delta}{\mu-1}\right)^{\mu}\frac{ \max_k (\mu-k)!}{(\mu+1)!} + \sum_{j=0}^{\mu} \lvert \mathcal{L}'_{\mu,j}(\theta_j) \rvert \norm{\sigma_v - \tilde{\sigma}_v} \right) \\
    &\leq \frac{M_1 \norm{a}_1 \pi}{2} \left( \norm{\frac{\partial^{\mu+1} \sigma_v}{\partial \theta^{\mu+1}}} \left( \frac{\Delta}{\mu-1}\right)^{\mu} \frac{\mu!}{(\mu+1)!} +\mu \norm{\mathcal{L}'_{\mu,j}(\theta_j)}_{\infty} \epsilon_2 \right) , \label{eq:bounds_ddfs}
  \end{align}
  where $\norm{a}_1 = \sum_{k=1}^{K_1} 1/k$, and we used in the last step the results of Lemma~\ref{lem:bounds_z_approx}. Under appropriate assumptions on the grid-spacing for the divided difference scheme $\Delta$ and the number of evaluated points $\mu$ as well as a bound on the $\mu+1$-st derivative of $\sigma_v$ w.r.t. $\theta$, we can hence also bound this error.
In order to do so, we need to analyze the $\mu+1$-st derivative of $\sigma_v = \trh{e^{-H}}/Z$ with $Z=\tr{e^{-H}}$.
For this we have
  \begin{align}
\label{eq:bound_mu_derivative}
  \norm{\frac{\partial^{\mu+1} \sigma_v}{\partial \theta^{\mu+1}}} &\leq \sum_{p=1}^{\mu+1} {\mu+1 \choose p}  \norm{\frac{\partial^{p} \trh{e^{-H}}}{\partial \theta^{p}}}  \norm{\frac{\partial^{\mu+1-p} Z^{-1}}{\partial \theta^{\mu+1-p}}} \nonumber\\
&\leq  2^{\mu+1} \max_p \norm{\frac{\partial^{p} \trh{e^{-H}}}{\partial \theta^{p}}}  \norm{\frac{\partial^{\mu+1-p} Z^{-1}}{\partial \theta^{\mu+1-p}}}
  \end{align}
We have that
\begin{align}
\norm{\frac{\partial^{p} \trh{e^{-H}}}{\partial \theta^{p}}} &\leq  \mathrm{dim}(H_h) \norm{\frac{\partial^q e^{-H}}{\partial \theta^q}}
\end{align}
where $\mathrm{dim}(H_h) = 2^{n_h}$. In order to bound this, we take advantage of the infinitesimal expansion of the exponent, i.e.,
\begin{align}
	\norm{\frac{\partial^q e^{-H}}{\partial \theta^q}} &= \norm{\frac{\partial^q }{\partial \theta^q} \lim_{r\rightarrow \infty} \prod_{j=1}^{r} e^{-H/r}} \nonumber \\
	&= \norm{\lim_{r \rightarrow \infty} \left(  \frac{\partial^q e^{-H/r}}{\partial \theta^q} \prod_{j=2}^{r} e^{-H/r} +  \frac{\partial^{q-1} e^{-H/r}}{\partial \theta^{q-1}} \frac{\partial e^{-H/r}}{\partial \theta} \prod_{j=3}^{r} e^{-H/r} + \ldots \right)} \nonumber \\
	&\leq \lim_{r \rightarrow \infty} \left(\norm{\frac{\partial H/r}{\partial \theta}}^q \cdot r^q + O\left(\frac{1}{r}\right) \right) \norm{e^{-H}} = \norm{\frac{\partial H}{\partial \theta}}^q\norm{e^{-H}}  ,
\end{align}
where the last step follows from the fact that we have $r^q$ terms and that we used that the error introduced by the commutations above will be of $O(1/r)$.
Observing that $\partial_{\theta_i}H = \partial_{\theta_i}\sum_j \theta_j H_j = H_i$ and assuming that $\lambda_{max}$ is the largest singular eigenvalue of $H$, we can hence bound this by $\lambda_{max}^q\norm{e^{-H}}$.

\begin{align}
\norm{\frac{\partial^{p} \trh{e^{-H}}}{\partial \theta^{p}}} &\leq  \mathrm{dim}(H_h) \norm{\frac{\partial^q e^{-H}}{\partial \theta^q}} \nonumber \\
	&\leq  \lambda_{max}^{p}  \mathrm{dim}(H_h) \norm{ \trh{e^{-H}}},
\end{align}

\begin{align}
\norm{\frac{\partial^{\mu+1-p} Z^{-1}}{\partial \theta^{\mu+1-p}}} &\leq  \frac{(\mu+1-p)! \lvert \lambda_{max} \rvert^{\mu+1-p}}{Z^{\mu+2-p}} \tr{e^{-H}}\nonumber\\
	&\leq \left(\frac{\mu+1-p}{eZ}\right)^{\mu+1-p} \frac{e}{Z} \lvert \lambda_{max} \rvert^{\mu+1-p}\tr{e^{-H}} \nonumber\\
	&= \left(\frac{\mu+1-p}{eZ}\right)^{\mu+1-p} e \lvert \lambda_{max} \rvert^{\mu+1-p}
\end{align}
We can therefore find a bound for \eqref{eq:bound_mu_derivative} as
\begin{align}
      \norm{\frac{\partial^{\mu+1} \sigma_v}{\partial \theta^{\mu+1}}} &\leq  e 2^{\mu+1+n_h} \lambda_{max}^{\mu+1} \norm{\trh{e^{-H}}}  \max_p \left(\frac{\mu+1-p}{eZ}\right)^{\mu+1-p}.
\end{align}
Plugging this result into the bound from above yields
  \begin{align}
    &\left\lVert \partial_{\theta} [\log_{K_1,M_1} \sigma_v - \log_{K_1,M_1} \tilde{\sigma}_v] \right\rVert  \nonumber \\
    &\leq \frac{M_1 \norm{a}_1 \pi}{2} \left(e2^{\mu+1+n_h} \lambda_{max}^{\mu+1} \norm{\trh{e^{-H}}}  \max_p \left(\frac{\mu+1-p}{eZ}\right)^{\mu+1-p} \left( \frac{\Delta}{\mu-1}\right)^{\mu} \frac{1}{\mu+1}  \right) \nonumber \\
 	&+  \frac{M_1 \norm{a}_1 \pi}{2} \left( \mu \norm{\mathcal{L}'_{\mu,j}(\theta_j)}_{\infty}  \epsilon_2 \right) ,
  \end{align}
Note that under the reasonable assumption that $2 \leq \mu \ll Z$, the maximum is achieved for $p=\mu+1$, and we hence obtain the upper bound
 \begin{align}
	\label{eq:error_ddfs}
  &\frac{M_1 \norm{a}_1 \pi}{2} \left( 2^{n_h} e ( 2 \lvert \lambda_{max} \rvert)^{\mu+1}  \norm{\trh{e^{-H}}} \left( \frac{\Delta}{\mu-1}\right)^{\mu} \frac{1}{\mu+1}  + \mu  \norm{\mathcal{L}'_{\mu,j}(\theta_j)}_{\infty}   \epsilon_2 \right) \nonumber \\
 &\leq  \frac{M_1 \norm{a}_1 \pi}{2} \left( 2^{n_h}e ( 2 \lvert \lambda_{max} \rvert)^{\mu+1}  \norm{\trh{e^{-H}}} \left( \frac{\Delta}{\mu-1}\right)^{\mu} + \mu  \norm{\mathcal{L}'_{\mu,j}(\theta_j)}_{\infty}  \epsilon_2 \right) ,
  \end{align}
and we can hence obtain a bound on $\mu$, the grid point number, in order to achieve an error of $\epsilon/6 > 0$ for the former term, which is given by
\begin{equation}
	\mu \geq (\lvert \lambda_{max} \rvert \Delta)  \exp \left( W \left(
			 \frac{\log \left( 2^{n_h}\frac{
			6  M_1 \norm{a}_1 e^2 \lvert \lambda_{max}\rvert \pi  \norm{\trh{e^{-H}}}}{\epsilon}
			\right)}{2 \lambda_{max} \Delta}
	\right)\right),
\end{equation}
where $W$ is the Lambert function, also known as product-log function, which generally grows slower than the logarithm in the asymptotic limit.
Note that $\mu$ can hence be lower bounded by

\begin{equation}
\label{eq:bound_mu}
	\mu \geq n_h + \log \left( \frac{6 M_1 \norm{a}_1 e^2 \lvert \lambda_{max}\rvert \pi  \norm{\trh{e^{-H}}}}{\epsilon} \right):= n_h + \log \left(\frac{M_1\Lambda}{\epsilon} \right).
\end{equation}
For convenience, let us choose $\epsilon$ such that $n_h + \log(M_1 \Lambda/\epsilon)$ is an integer.  We do this simply to avoid having to keep track of ceiling or floor functions in the following discussion where we will choose $\mu = n_h + \log(M_1\Lambda /\epsilon)$.

For the second part, we will bound the derivative of the Lagrangian interpolation polynomial.
First, note that $\mathcal{L}'_{\mu,j}(\theta)  = \sum_{l=0; l \neq j}^{\mu} \left( \prod_{k=0; k \neq j,l} \frac{\theta - \theta_k}{\theta_j - \theta_k} \right)\frac{1}{\theta_j-\theta_l}$ for a chosen discretization of the space such that $\theta_k -\theta_j= (k-j)\Delta/\mu$ can be bound by using a central difference formula, such that we use an uneven number of points (i.e. we take $\mu = 2\kappa +1$ for positive integer $\kappa$) and chose the point $m$ at which we evaluate the gradient as the central point of the mesh.
Note that in this case he have that for $\mu \ge 5$ and $\theta_m$ being the parameters at the midpoint of the stencil
\begin{align}
  \norm{\mathcal{L}_{\mu,j}'}_{\infty}
  &\leq \sum_{\substack{l\neq j}} \prod_{\substack{k \neq j,l}} \frac{|\theta_m - \theta_k|}{|\theta_j - \theta_k|} \frac{1}{\lvert \theta_l - \theta_j \rvert}
  \leq \frac{(\kappa!)^2}{(\kappa!)^2} \frac{\mu}{\Delta} \sum_{l \ne j} \frac{1}{|l -j|}\nonumber\\
  &\leq  \frac{2\mu}{\Delta} \sum_{l=1}^\kappa \frac{1}{l}\nonumber\\
  &\leq \frac{2\mu}{\Delta} \left( 1+ \int_{1}^{\kappa-1} \frac{1}{\ell} \mathrm{d}\ell\right) = \frac{2\mu}{\Delta}\left(1+\log((\mu-3)/2)\right)\le \frac{5\mu}{\Delta} \log(\mu/2),
\end{align}
where the last inequality follows from the fact that $\mu \ge 5$ and $1+\ln(5/2) < (5/2)\ln(5/2)$.
Now, plugging in the $\mu$ from \eqref{eq:bound_mu}, we find that this error is bound by
\begin{equation}
	\norm{\mathcal{L}_{\mu,j}'}_{\infty} \leq \frac{5 n_h + 5\log \left(\frac{M_1\Lambda}{\epsilon} \right)}{\Delta} \log(n_h/2 + \log \left(\frac{M_1\Lambda}{\epsilon} \right)/2) = \tilde{O}\left(\frac{n_h+\log\left(\frac{M_1\Lambda}{\epsilon} \right)}{\Delta}\right),
\end{equation}
If we want an upper bound of $\epsilon/6$ for the second term of the error in \eqref{eq:error_ddfs}, we hence require

\begin{align}
\epsilon_2 &\leq \frac{\epsilon}{15  M_1 \norm{a}_1 \pi \mu   \norm{\mathcal{L}'_{\mu,j}(\theta_j)}_{\infty} }\nonumber\\
	&\leq \frac{\epsilon \Delta}{15 M_1 \|a\|_1 \pi \left(n_h+\log(M_1 \Lambda/\epsilon)\right)^2 \log((n_h/2) + \log(M_1\Lambda/\epsilon)/2)}\nonumber \\
 	&\leq \frac{\epsilon \Delta}{15 M_1 \|a\|_1 \pi \mu^2 \log((\mu-1)/2)}
\end{align}
We hence obtain that the approximation error due to the divided differences and Fourier series approximation of $\sigma_v$ is bounded by $\epsilon/3$ for the above choice of $\epsilon_2$ and $\mu$. This bounds the second term in  \eqref{eq:bounds_ddfs} by $\epsilon/3$.\\

  Finally, we need to take into account the error $ \left\lVert \partial_{\theta} [\log_{K_1,M_1} \tilde{\sigma_v} - \log^s_{K_1,M_1} \tilde{\sigma}_v] \right\rVert$ which we introduce through the sampling process, i.e., through the finite sample estimate of $\mathbb{E}_s[\cdot]$ here indicated with the superscript $s$ over the logarithm.
Note that this error can be bound straight forward by \eqref{eq:full_approximation}.
We only need to bound the error introduced via the finite amount of samples we take, which is a well-known procedure.
The concrete bounds for the sample error when estimating the expectation value are stated in the following lemma.
\begin{lemma}
	Let $\sigma_{m}$ be the sample standard deviation of the random variable
	\begin{equation}
	\label{eq:random_variable_for_trace}
	\tilde{\mathbb{E}}_{s \in[0,1]} \left[\tr{ \rho e^{\frac{i s \pi m}{2}\sigma_v} e^{\frac{i \pi m'}{2} \sigma_v(\theta_j)} e^{\frac{i (1-s) \pi m}{2}\sigma_v} } \right],
	\end{equation}
	 such that the sample standard deviation is given by $\sigma_k = \frac{\sigma_m}{\sqrt{k}}$.
	Then with probability at least $1-\delta_s$, we can obtain an estimate which is within $\epsilon_s \sigma_m$ of the mean by taking $k=\frac{4}{\epsilon_s^2}$ samples for each sample estimate and taking the median of $O(\log(1/\delta_s))$ such samples.
\end{lemma}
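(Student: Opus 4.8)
The plan is to recognize the statement as the standard \emph{median-of-means} (powering) amplification and to split the argument into a second-moment (Chebyshev) estimate for a single batch average together with a Chernoff estimate for the median across independent batches. First I would pin down the underlying single-shot random variable: running the circuit of Algorithm~\ref{alg:algo2} with $s$ drawn uniformly from $[0,1]$ and performing the Hadamard test followed by a single measurement produces an outcome whose expectation, over both the choice of $s$ and the measurement, equals the real part of the quantity in Eq.~\eqref{eq:random_variable_for_trace}. Hence this single-shot estimator is unbiased for the target mean $\mu$, and it is bounded because $\left\lvert \tr{\rho\, U} \right\rvert \le \|\rho\|_1 \|U\| = 1$ for the unitary $U = e^{\frac{is\pi m}{2}\sigma_v} e^{\frac{i\pi m'}{2}\sigma_v(\theta_j)} e^{\frac{i(1-s)\pi m}{2}\sigma_v}$; let $\sigma_m^2$ be its variance.

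For the batch estimate I would average $k$ i.i.d.\ copies of this single-shot estimator into $\widehat{\mu}_k$, so that $\mathbb{E}[\widehat{\mu}_k] = \mu$ and $\mathrm{Var}[\widehat{\mu}_k] = \sigma_m^2/k = \sigma_k^2$. Chebyshev's inequality then gives $\pr\!\left[\,\lvert \widehat{\mu}_k - \mu\rvert \ge \epsilon_s \sigma_m\,\right] \le \sigma_k^2/(\epsilon_s^2 \sigma_m^2) = 1/(k\epsilon_s^2)$, which is at most $1/4$ as soon as $k \ge 4/\epsilon_s^2$. So each batch average lands within $\epsilon_s\sigma_m$ of $\mu$ with probability at least $3/4$.

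Finally I would run $L$ independent batches, each of size $k = 4/\epsilon_s^2$, and output the median $\widetilde{\mu}$ of the $L$ batch averages. The median can deviate from $\mu$ by more than $\epsilon_s\sigma_m$ only if at least $L/2$ of the batches are ``bad'' in the above sense; since batches are independent and each is bad with probability $p \le 1/4 < 1/2$, a Chernoff bound on the binomial tail gives $\pr[\text{at least }L/2\text{ bad}] \le e^{-2L(1/2 - p)^2} \le e^{-L/8}$, which is at most $\delta_s$ once $L \ge 8\log(1/\delta_s)$, i.e.\ $L = O(\log(1/\delta_s))$. Combining the two steps yields $\pr[\,\lvert \widetilde{\mu} - \mu\rvert \le \epsilon_s\sigma_m\,] \ge 1 - \delta_s$, which is the claim.

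I do not expect a genuine obstacle: the computation is routine and the only point requiring care is bookkeeping of the \emph{bias} rather than the variance. The sample-based Hamiltonian simulation used to realize $e^{i\tau\sigma_v}$ perturbs $U$ by $O(\epsilon_h)$ in trace norm, so the single-shot estimator is strictly unbiased only for the perturbed unitary; this residual bias is controlled separately in Theorem~\ref{thm:complexity_general_algo} by choosing $\epsilon_h$ small relative to the overall error budget $\epsilon$. For the present lemma it therefore suffices to analyze the idealized, bias-free estimator and fold the simulation error into the error decomposition of Eq.~\eqref{eq:bounds_approximation_error}.
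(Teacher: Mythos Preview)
Your proposal is correct and follows essentially the same approach as the paper: Chebyshev's inequality on a batch of $k=4/\epsilon_s^2$ samples to get success probability at least $3/4$, then the median trick over $O(\log(1/\delta_s))$ independent batches to boost to $1-\delta_s$. Your write-up is in fact more detailed than the paper's, which simply invokes Chebyshev and then ``standard techniques'' for the median step without spelling out the Chernoff bound; your additional remark about the $O(\epsilon_h)$ bias from sample-based Hamiltonian simulation being handled separately is also consistent with how the paper treats that error elsewhere.
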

\begin{proof}
From Chebyshev's inequality taking $k=\frac{4}{\epsilon_s^2}$ samples implies that with probability of at least $p=3/4$ each of the mean estimates is within $2\sigma_k = \epsilon_s \sigma_m$ from the true mean.
Therefore, using standard techniques, we take the median of $O( \log(1/\delta_s))$ such estimates which gives us with probability $1-\delta_s$ an estimate of the mean with error at most $\epsilon_s \sigma_m$, which implies that we need to repeat the procedure $O \left(\frac{1}{\epsilon_s^2}\log \left(\frac{1}{\delta_s}\right) \right)$ times.
\end{proof}
We can then bound the error of the sampling step in the final estimate, denoting with $\epsilon_s$ the sample error, as
\begin{align}
	\sum_{m=-M_1}^{M_1} \sum_{m'=-M_2}^{M_2}  \left\lvert \frac{i c_m c_{m'} m \pi}{2}  \right\rvert \sum_{j=0}^{\mu} \lvert \mathcal{L}'_{\mu,j}(\theta) \rvert \epsilon_s \sigma_m \nonumber\\
	\leq \frac{5 \norm{a}_1 M_1  \epsilon_s \sigma_m \pi \mu^2\log\left(\frac{\mu}{2}\right) }{\Delta} \leq \frac{\epsilon}{3},
\end{align}

We hence find that for
\begin{align}
	\epsilon_s &\leq \frac{\epsilon \Delta }{15 \norm{a}_1 M_1  \sigma_m \pi \mu^2\log\left(\frac{\mu}{2}\right) } \nonumber \\
	&\leq \frac{\epsilon \Delta}{15 M_1 \|a\|_1  \sigma_m \pi \left( n_h + \log(M_1 \Lambda/\epsilon)\right)^2 \log((n_n/2) + \log(M_1\Lambda/\epsilon)/2)}\nonumber \\
 	&\leq \frac{\epsilon \Delta}{15 M_1 \|a\|_1  \sigma_m \pi \mu^2 \log(\mu/2)}
\end{align}
also the last term in \eqref{eq:bounds_approximation_error} can be bounded by $\epsilon/3$, which together results in an overall error of $\epsilon$ for the various approximation steps, which concludes the proof.
\end{proof}
Notably all quantities which occure in our bounds are only polynomial in the number of the qubits.
The lower bounds for the choice of parameters are summarized in the following.
\begin{align}
M_1 &\geq \sqrt{L \log\left(\frac{9 \norm{\frac{\partial \sigma_v}{\partial \theta}} K_1 L \pi}{2\epsilon}\right)} \label{eq:bound_M1} \\
M_2 &\geq \left\lceil \log \left(\frac{4}{\epsilon_2} \right) \sqrt{(2\log{\delta_u^{-1}})^{-1}} \right\rceil \label{eq:bound_M2} \\
K_1 &\geq \log((1-\norm{\sigma_v})\epsilon/9)/\log(\norm{\sigma_v}) \label{eq:bound_K1} \\
K_2 &\geq \frac{\log(4/\epsilon_2)}{\log(\delta_u^{-1})} \label{eq:bound_K2} \\
L &\geq \frac{\log \left(\frac{\epsilon}{9\pi K_1 \norm{\frac{\partial \sigma_v}{\partial\theta}}}\right)}{\log(\norm{\sigma_v} \pi)} \label{eq:bound_L} \\
\mu &\geq n_h + \log \left( \frac{6 M_1 \norm{a}_1 e^2 \lvert \lambda_{max}\rvert \pi  \norm{\trh{e^{-H}}}}{\epsilon} \right):= n_h + \log(M_1 \Lambda/\epsilon) \label{eq:bound_mu} \\
\epsilon_2 &\le \frac{\epsilon \Delta}{15 M_1 \|a\|_1 \pi \left(n_h+\log(M_1 \Lambda/\epsilon)\right)^2 \log((n_h/2) + \log(M_1\Lambda/\epsilon)/2)}  \nonumber \\
	 &\leq \frac{\epsilon \Delta}{15 M_1 \|a\|_1 \pi \mu^2 \log((\mu-1)/2)} \label{eq:bounds_eps2}\\
\epsilon_s &\leq  \frac{\epsilon \Delta}{15 M_1 \|a\|_1  \sigma_m \pi \left( n_h + \log(M_1 \Lambda/\epsilon)\right)^2 \log((n_n/2) + \log(M_1\Lambda/\epsilon)/2)} \nonumber \\
		&\leq \frac{\epsilon \Delta}{15 M_1 \|a\|_1  \sigma_m \pi \mu^2 \log(\mu/2)} \label{eq:bounds_epss}
\end{align}

\subsection{Operationalising}
\label{app:proof_general_algo}
In the following we will make use of two established subroutines, namely sample based Hamiltonian simulation (aka the LMR protocol)~\cite{lloyd2014quantum}, as well as the Hadamard test, in order to evaluate the gradient approximation as defined in \eqref{eq:full_approximation}.
In order to hence derive the query complexity for this algorithm, we only need to multiply the cost of the number of factors we need to evaluate with the query complexity of these routines.
For this we will rely on the following result.
\begin{theorem}[Sample based Hamiltonian simulation~\cite{kimmel2017hamiltonian}]
  \label{thm:sample_based_ham_sim}
	Let $0 \leq \epsilon_h \leq 1/6$ be an error parameter and let $\rho$ be a density for which we can obtain multiple copies through queries to a oracle $O_{\rho}$.
	We can then simulate the time evolution $e^{-i\rho t}$ up to error $\epsilon_h$ in trace norm as long as $\epsilon_h/t \leq 1/(6 \pi)$ with $\Theta(t^2/\epsilon_h)$ copies of $\rho$ and hence queries to $O_{\rho}$.
\end{theorem}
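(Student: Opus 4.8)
The plan is to reconstruct the density-matrix exponentiation protocol of \cite{lloyd2014quantum}, together with the tightened error analysis of \cite{kimmel2017hamiltonian}, and to verify that it attains the stated copy complexity. The central object is the swap operator $S$ acting on two $n$-qubit registers, $S\ket{a}\ket{b}=\ket{b}\ket{a}$. The key algebraic identity I would establish first is that conjugating an arbitrary target state $\sigma$ by an infinitesimal swap against a fresh copy of $\rho$, and then discarding the $\rho$-register, reproduces one infinitesimal step of the desired evolution. Concretely, for a small step size $\delta>0$,
\begin{equation}
\partr{1}{e^{-iS\delta}(\rho\otimes\sigma)e^{iS\delta}} = \sigma - i\delta[\rho,\sigma] + \Ord{\delta^2},
\end{equation}
which is to be compared against the Taylor expansion $e^{-i\rho\delta}\sigma e^{i\rho\delta} = \sigma - i\delta[\rho,\sigma]+\Ord{\delta^2}$. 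The identity follows from expanding $e^{-iS\delta}=I-iS\delta-\tfrac{1}{2}S^2\delta^2+\cdots$ and using the partial-trace identities $\partr{1}{S(\rho\otimes\sigma)}=\rho\sigma$ and $\partr{1}{(\rho\otimes\sigma)S}=\sigma\rho$, so that the linear term yields exactly $-i\delta[\rho,\sigma]$. The upshot is that a single partial-swap step, consuming one copy of $\rho$, implements the channel $\sigma\mapsto e^{-i\rho\delta}\sigma e^{i\rho\delta}$ up to an error that is quadratic in $\delta$.

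Having isolated the per-step error, the second step is to bound it in trace norm and then compose steps. I would quantify the quadratic remainder: the difference between the partial-swap channel and the exact conjugation channel is bounded in trace norm by $c\,\delta^2$ for an absolute constant $c$, using that $\norm{S}=1$, $\norm{\rho}\le 1$, and $\norm{\sigma}\le1$, so that all discarded Taylor terms are uniformly controlled. To simulate $e^{-i\rho t}$ for finite $t$ I would concatenate $n=t/\delta$ such steps, each drawing an independent fresh copy of $\rho$; because trace distance is contractive under CPTP maps and subadditive under composition, the per-step errors accumulate at most linearly, giving a total trace-norm error of at most $n\cdot c\,\delta^2 = c\,t\delta = c\,t^2/n$.

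Setting $n=\lceil c\,t^2/\epsilon_h\rceil=\Theta(t^2/\epsilon_h)$ then drives the accumulated error below $\epsilon_h$, which establishes the upper bound on the number of copies. The side condition $\epsilon_h/t\le 1/(6\pi)$ is precisely what guarantees that the step size $\delta = t/n$ is small enough for the per-step remainder estimate to hold (so that the truncated exponential series is dominated by its leading terms), and I would track the constant $c$ carefully so that the $1/(6\pi)$ threshold emerges directly from this remainder bound. The main obstacle will be the trace-norm control of the remainder and verifying that the composition is genuinely additive in error when the target is itself the (noisy) output of earlier steps rather than a fixed $\sigma$; this requires applying the contractivity argument to the worst-case intermediate state, uniformly over the whole trajectory. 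Finally, to upgrade the count from $\Ord{t^2/\epsilon_h}$ to the matching $\Theta(t^2/\epsilon_h)$, I would invoke the information-theoretic lower bound of \cite{kimmel2017hamiltonian}: an accurate simulation of $e^{-i\rho t}$ can be converted into an estimator for expectation values of $\rho$, and a hybrid / polynomial-method argument shows that $\Omega(t^2/\epsilon_h)$ copies are necessary, matching the protocol above up to constants.
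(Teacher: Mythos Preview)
The paper does not prove this theorem; it is quoted as a black-box result from \cite{kimmel2017hamiltonian} and then used directly in the subsequent complexity analysis. Your proposal is a faithful reconstruction of the argument in that reference (which in turn sharpens the density-matrix exponentiation protocol of \cite{lloyd2014quantum}): the partial-swap identity $\partr{1}{e^{-iS\delta}(\rho\otimes\sigma)e^{iS\delta}}=\sigma-i\delta[\rho,\sigma]+O(\delta^2)$, the $O(\delta^2)$ per-step trace-norm error, the linear accumulation over $n=t/\delta$ steps via contractivity, and the information-theoretic lower bound are precisely the ingredients Kimmel et al.\ assemble to obtain the tight $\Theta(t^2/\epsilon_h)$ scaling. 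So there is nothing in the present paper to compare against, but your outline is correct and matches the cited source.
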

We in particularly need to evaluate terms of the form
\begin{equation}
  \label{eq:error_formula_sample_based_ham_sim}
	\tr{ \rho e^{\frac{i s \pi m}{2}\sigma_v} e^{\frac{i \pi m'}{2} \sigma_v(\theta_j)} e^{\frac{i (1-s) \pi m}{2}\sigma_v}}
\end{equation}
Note that we can simulate every term in the trace (except $\rho$) via the sample based Hamiltonian simulation approach to error $\epsilon_h$ in trace norm.
This will introduce a additional error which we need to take into account for the analysis.
Let $\tilde{U}_{i}, i \in \{1,2,3\}$ be the unitaries such that $\norm{U_i - \tilde{U}_i}_* \leq \epsilon_h$ where the $U_i$ are corresponding to the factors in \eqref{eq:error_formula_sample_based_ham_sim}, i.e.,
$U_1 := e^{\frac{i s \pi m}{2}\sigma_v}$, $U_2 := e^{\frac{i \pi m'}{2} \sigma_v(\theta_j)}$, and $U_3 := e^{\frac{i (1-s) \pi m}{2}\sigma_v}$.
We can then bound the error as follows. First note that $\norm{\tilde U_i} \leq \norm{\tilde U_i - U_i} + \norm{U_i} \leq 1+\epsilon_h$, using Theorem~\ref{thm:sample_based_ham_sim} and the fact that the spectral norm is upper bounded by the trace norm.
\begin{align}
\label{eq:error_propagation_ham_sim}
  \tr{ \rho U_1 U_2 U_3 } &-	\tr{ \rho \tilde U_1 \tilde U_2 \tilde U_3 }  \leq \nonumber \\
  &= \tr{\rho U_1 U_2 U_3 -	\rho \tilde U_1 \tilde U_2 \tilde U_3} \nonumber \\
  &\leq \norm{U_1 U_2 U_3 -	\tilde U_1 \tilde U_2 \tilde U_3} \nonumber \\
  &\leq \norm{U_1 -	\tilde U_1} \norm{\tilde U_2} \norm{\tilde U_3}+ \norm{U_2 -	\tilde U_2} \tilde{U_3} + \norm{U_3 -	\tilde U_3} \nonumber \\
  &\leq \norm{U_1 -	\tilde U_1}_* (1+\epsilon_h)^2 + \norm{U_2 -	\tilde U_2}_* (1+\epsilon_h)+ \norm{U_3 -	\tilde U_3}_* \nonumber \\
  &\leq  \epsilon_h (1+\epsilon_h)^2 + \epsilon_h (1+\epsilon_h) + \epsilon_h = O(\epsilon_h),
\end{align}
neglecting higher orders of $\epsilon_h$, and where in the first step we applied the Von-Neumann trace inequality and the fact that $\rho$ is Hermitian, and in the last step we used the results of Theorem~\ref{thm:sample_based_ham_sim}.
We hence require $O((\max \{ M_1, M_2\} \pi)^2/\epsilon_h)$ queries to the oracles for $\sigma_v$ for the evaluation of each term in the multi sum in \eqref{eq:full_approximation}.
Note that the Hadamard test has a query cost of $O(1)$.
In order to hence achieve an overall error of $\epsilon$ in the gradient estimation we require the error introduced by the sample based Hamiltonian simulation also to be of $O(\epsilon)$.
In order to do so we require $\epsilon_h \leq O(\frac{\epsilon \Delta }{5 \norm{a}_1 M_1 \pi \mu^2 \log(\mu/2)})$, similar to the sample based error which yield the query complexity of
\begin{equation}
  \label{eq:final_comp_complexity_grad_entry_est}
  O\left( \frac{\max \{ M_1, M_2\}^2 \norm{a}_1 M_1  \pi^3 \mu^2 \log(\mu/2)}{\epsilon \Delta}\right)
\end{equation}
Adjusting the constants gives then the required bound of $\epsilon$ of the total error and the query complexity for the algorithm to the Gibbs state preparation procedure is consequentially given by the number of terms in \eqref{eq:full_approximation} times the query complexity for the individual term, yielding
\begin{align}
    O\left(\frac{M_1^2 M_2 \max \{ M_1, M_2\}^2 \norm{a}_1 \sigma_m \pi^3 \mu^3 \log \left(\frac{\mu}{2}\right)}{\epsilon\ \epsilon_s^2 \Delta} \right),
\end{align}
and classical precomputation polynomial in $M_1,M_2,K_1,L,s,\Delta,\mu$, where the different quantities are defines in eq.~(\ref{eq:bound_M1}-\ref{eq:bounds_epss}).

  Taking into account the query complexity of the individual steps then results in Theorem~\ref{thm:complexity_general_algo}.
  We proceed by proving this theorem next.
  \begin{proof}[Proof of Theorem~\ref{thm:complexity_general_algo}]
    The runtime follows straight forward by using the bounds derived in \eqref{eq:final_comp_complexity_grad_entry_est} and Lemma~\ref{lem:proof_error_bound_grads}, and by using the bounds for the parameters $M_1,M_2, K_1,L,\mu,\Delta,s$ given in eq.~(\ref{eq:bound_M1}-\ref{eq:bounds_epss}).
For the success probability for estimating the whole gradient with dimensionality $d$, we can now again make use of the boosting scheme used in \eqref{eq:probability_boost} to be
\begin{align}
\label{eq:final_query_complexity_Gibbs_exact}
    \tilde{O} \left(
      \frac{
        d \norm{a}_1^3 \sigma_m^3 \mu^5 \log^3(\mu/2)
        \mathrm{polylog} \left( \frac{\norm{\frac{\partial \sigma_v}{\partial \theta}}}{\epsilon}, \,
         \frac{n_h^2 \norm{a}_1 \sigma_m}{\epsilon \Delta}\right)
      }
      {\epsilon^3 \Delta^3}
      \log \left( d \right)
    \right),
\end{align}
where $\mu=n_h + \log(M_1 \Lambda/\epsilon)$.\\
Next we need to take into account the errors from the Gibbs state preparation given in Lemma~\ref{thm:sample_based_ham_sim}.
For this note that the error between the perfect Hamiltonian simulation of $\sigma_v$ and the sample based Hamiltonian simulation with an erroneous density matrix denoted by $\tilde{U}$, i.e., including the error from the Gibbs state preparation procedure, is given by
\begin{align}
	\norm{\tilde U - e^{-i\sigma_v t}} &\leq \norm{\tilde U - e^{-i \tilde \sigma_v t}} +  \norm{e^{-i \tilde \sigma_v t} - e^{-i\sigma_v t}} \nonumber \\
	&\leq \epsilon_h  + \epsilon_G t
\end{align}
where $\epsilon_h$ is the error of the sample based Hamiltonian simulation, which holds since the trace norm is an upper bound for the spectral norm, and $\norm{\sigma_v - \tilde{\sigma}_v} \leq \epsilon_G$ is the error for the Gibbs state preparation from Theorem~\ref{thm:Gibbs_state_prep} for a $d$-sparse Hamiltonian, for a cost  $$\tilde{\mathcal{O}} \left(\sqrt{\frac{N}{z}} \norm{H}d \log \left( \frac{\norm{H}}{\epsilon_G}\right) \log \left( \frac{1}{\epsilon_G} \right) \right).$$
From \eqref{eq:error_propagation_ham_sim} we know that the error $\epsilon_h$ propagates nearly linear, and hence it suffices for us to take $\epsilon_G \leq \epsilon_h/t$ where $t = O(\max\{M_1,M_2\})$ and adjust the constants $\epsilon_h \leftarrow \epsilon_h/2$ in order to achieve the same precision $\epsilon$ in the final result.
We hence require
\begin{equation}
	\tilde{\mathcal{O}} \left( \sqrt{\frac{N}{z}} \norm{H(\theta)} \log \left( \frac{\norm{H(\theta)} \max \lbrace M_1,M_2\rbrace }{\epsilon_h}\right) \log \left( \frac{\max \lbrace M_1,M_2\rbrace }{\epsilon_h} \right) \right)
\end{equation}
and using the $\epsilon_h$ from before we hence find that this s bound by
\begin{equation}
	\tilde{\mathcal{O}} \left( \sqrt{\frac{N}{z}} \norm{H(\theta)} \log \left( \frac{\norm{H(\theta)} n_h^2}{\epsilon \Delta}\right) \log \left( \frac{n_h^2}{\epsilon \Delta} \right) \right)
\end{equation}
query complexity to the oracle of $H$ for the Gibbs state preparation. \\
The procedure succeeds with probability at least $1-\delta_s$ for a single repetition for each entry of the gradient.
In order to have a failure probability of the final algorithm of less than $1/3$, we need to repeat the procedure for all $D$ dimensions of the gradient and take for each the median over a number of samples.
Let $n_f$ be as previously the number of instances of the one component of the gradient such that the error is larger than $\epsilon_s \sigma_m$ and $n_s$ be the number of instances with an error $\leq \epsilon_s \sigma_m$ , and the result that we take is the median of the estimates, where we take $n=n_s+n_f$ samples.
The algorithm gives a wrong answer for each dimension if $n_s \leq \left\lfloor \frac{n}{2} \right\rfloor$, since then the median is a sample such that the error is larger than $\epsilon_s \sigma_m$.
Let $p=1-\delta_s$ be the success probability to draw a positive sample, as is the case of our algorithm.
Since each instance of (recall that each sample here consists of a number of samples itself) from the algorithm will independently return an estimate for the entry of the gradient, the total failure probability is bounded by the union bound, i.e.,
\begin{equation}
\pr_{fail} \leq D \cdot \pr \left[n_s \leq  \left\lfloor \frac{n}{2} \right\rfloor \right] \leq D \cdot e^{- \frac{n}{2(1-\delta_s)} \left((1-\delta_s) - \frac{1}{2} \right)^2} \leq \frac{1}{3},
\end{equation}
which follows from the Chernoff inequality for a binomial variable with $1-\delta_s>1/2$, which is given in our case for a proper choice of $\delta_s <1/2$.
Therefore, by taking $n \geq \frac{2-2\delta_s}{(1/2-\delta_s)^2} \log(3D) = O(\log(3D))$, we achieve a total failure probability of at least $1/3$ for a constant, fixed $\delta_s$.
Note that this hence results in an multiplicative factor of $O(\log(D))$ in the query complexity of \eqref{eq:final_query_complexity}.\\
The total query complexity to the oracle $O_{\rho}$ for a purified density matrix of the data $\rho$ and the Hamiltonian oracle $O_H$ is then given by
 \begin{equation}
     \tilde{O} \left(
       \sqrt{\frac{N}{z}}
	\frac{
        d  \log \left( d \right) \norm{H(\theta)}  \norm{a}_1^3 \sigma_m^3 \mu^5 \log^3(\mu/2)
        \mathrm{polylog} \left( \frac{\norm{\frac{\partial \sigma_v}{\partial \theta}}}{\epsilon}, \,
         \frac{n_h^2 \norm{a}_1 \sigma_m}{\epsilon \Delta}, \, \norm{H(\theta)} \right)
      }
      {\epsilon^3 \Delta^3}
    \right),
\end{equation}
which reduces to
  \begin{align}
    \tilde{O} \left(   \sqrt{\frac{N}{z}}
      \frac{D \norm{H(\theta)}
        d   \mu^5
      \alpha
      }
      {\epsilon^3}
    \right),
  \end{align}
hiding the logarithmic factors in the $\tilde{O}$ notation.
\end{proof}


\addcontentsline{toc}{chapter}{Bibliography}

\bibliography{example}

\end{document}